\documentclass[reqno]{amsart}

\usepackage[english]{babel}
\usepackage{mathtools,amsfonts,amsthm,amssymb,dsfont,mathrsfs,hyperref} 
\usepackage[margin=3cm]{geometry}
\usepackage{csquotes}
\usepackage{verbatim}
\usepackage{enumitem}
\usepackage{xcolor}

\newtheorem{thm}{Theorem}
\newtheorem{lem}[thm]{Lemma}
\newtheorem{prop}[thm]{Proposition}
\newtheorem{cor}[thm]{Corollary}
\newtheorem{df}[thm]{Definition}

\newtheorem{remark}[thm]{Remark}

\renewcommand{\a}{a^{\vphantom{*}}}
\renewcommand{\b}{b^{\vphantom{*}}}
\newcommand{\id}{\mathrm{id}}
\newcommand{\Bog}{\mathrm{Bog}}
\newcommand{\BEC}{\mathrm{BEC}}

\def\Tr{{\mathrm{Tr}}}

\def\d{{\mathrm{d}}}

\begin{document}

\title[Random field approximation and local counting statistics for the Bose Gas]{Random field approximation and local counting statistics for the weakly interacting thermal Bose gas}

\author{Andreas Deuchert}
\address{Virginia Tech}
\email{andreas.deuchert@vt.edu} 

\author{Marcin Napiórkowski}
\address{ University of Warsaw}
\email{marcin.napiorkowski@fuw.edu.pl} 

\author{Błażej Ruba}
\address{ University of Warsaw}
\email{bruba@fuw.edu.pl} 
\date{\today}

\begin{abstract} We study weakly interacting bosons on the three-dimensional torus at temperatures proportional to the critical temperature for Bose--Einstein condensation. We show that, as the expected particle number tends to infinity, the grand canonical Gibbs state is asymptotically described by the coherent state quantization of a random field that converges to a novel Bogoliubov random field. Moreover, the point process associated with the Gibbs state can be approximated by a Cox process. These results are based on and extend the approximation of the Gibbs state recently obtained by the first two authors and Nam. Using these approximations, we compute limiting distributions of the number of particles outside the condensate, the joint occupation statistics of finitely many momentum modes, and particle-number fluctuations in macroscopic subsets of the torus, all governed by the Bogoliubov field. We further prove convergence of the empirical measure of the associated point process to the uniform measure and convergence of the microscopic particle-number statistics to the boson point process. The latter result establishes the universality of particle-number statistics on microscopic length scales.
\end{abstract}

\maketitle

\tableofcontents

\section{Introduction}
The wave function in quantum mechanics is a solution of the deterministic Schrödinger equation. However, predictions derived from the wave function are probabilistic in nature. An additional layer of randomness arises when one considers quantum systems at positive temperature. The Gibbs state of such a system is a mixed state, that is, a statistical mixture of rank-one projections (pure states). Understanding the resulting probability distributions in various  scaling limits leads to rich and typically very challenging mathematical problems. 

In this article, we consider a weakly interacting Bose gas on the unit torus at temperatures comparable to the critical temperature for the Bose--Einstein condensation (BEC) phase transition. Our goal is to determine the asymptotic behavior of several physically relevant probability distributions associated with the system's grand canonical Gibbs state. Our first main result is a trace-norm approximation of the Gibbs state by the coherent-state quantization of a certain random field with two contributions, one arising from the condensate and the other from the thermal cloud. The condensate contribution is described in terms of the Gibbs distribution of a one-mode $\Phi^4$ theory, which depends on the interaction between the particles, whereas the thermal-cloud contribution converges, as the expected particle number $N \to \infty$, to a Gaussian random field that we call the Bogoliubov field. The latter depends on both the condensate fraction and the interaction between the particles, and reduces to the Gaussian free field when either of them vanishes. This approximation is semiclassical in nature and builds on and extends the Gibbs-state approximation established by the first two authors and Nam in \cite[Theorem~1]{DeuNamNap-25}.

Associated with the Gibbs state is a point process describing the statistics of particle numbers in subsets of the torus. Our approximation of the Gibbs state allows us to approximate this point process by a Cox process -- a generalization of a Poisson point process in which the intensity measure is itself random. 

These approximations allow us to derive limiting distributions for random variables associated with field operators, the number of particles outside the condensate, the joint occupation statistics of finitely many momentum modes, and the particle numbers in macroscopic subsets of the torus. All these limiting distributions admit a natural description in terms of the Bogoliubov field and a contribution arising from the BEC. 

We also show that the empirical measure of the point process associated with the Gibbs state converges to the uniform measure on the torus, and that the particle-number statistics on the microscopic length scale $N^{-1/3}$ converge to the boson point process. Since the boson point process is independent of the interaction potential, this establishes the universality of microscopic particle-number statistics. This answers in the affirmative a question raised in the physics literature \cite[Section~6.2.3]{BardanetEtal2022}, namely whether there exists a parameter regime for an interacting quantum many-particle system in which the local particle-number statistics are universal and can be approximated by a boson point process. 

The above results hold for interacting systems with a non-zero interaction between the particles. For the non-interacting system, large fluctuations in the number of condensed particles lead to different behavior. We emphasize that our results provide quantitative control of the corresponding probability laws, including convergence in total variation and Wasserstein distances.

The level of accuracy achieved in our work, in particular the control of particle-number statistics down to the microscopic length scale, should be viewed in comparison with related developments in the statistical mechanics of classical particle systems. We refer to \cite{BauerBourgNikulaYau2017,LeblSerf-2017,Serfaty-2018,LeblSerf2018,BauerBourgNikulaYau2019,Chat2019,ArmstFerf2021,Gau-2021,Lewin2022,Seraty-2024} for results on classical particle systems, including Coulomb and Riesz gases, and to \cite{Johansson1998,RidVirag-2007,Forrester-2010,AmHedenMaka-2011,BorGui2013,BourgErdYau-2014,BekFigGui2015,AmHedenMaka-2015,BekLebSerf-2018,LamLedWebb-2019} for results on $\beta$-ensembles. Related results for non-interacting quantum systems have been obtained for fermions (determinantal point processes) in \cite{Eisler2013,DelLam-2024,DelLam-2024b,DelLam-2025} and for bosons (permanental point processes) in \cite{ShiTaka-2004,TamuraIto2006,TamuraIto2007,TamuraZagrebnov2012}. 

Probabilistic questions for the Bose gas have previously been studied in the context of ground states (the zero temperature case) and the dynamics of systems exhibiting complete BEC (almost all particles in the condensate). We refer to \cite{ArousKirkSchl2013,BuchSaffSchlein2014,RaSchl-2019,Radem2020} for central limit theorems for quantum observables, to \cite{KirkRadeSchlein2021,RaSei-2022,NamRade2025,Radem2025,BehrBreRade2025} for tail bounds for BECs, to \cite{BossPetrat2023} for the justification of an Edgeworth expansion, and to \cite{Radem2025GP} for the computation of the limiting moment generating function of the number of particles outside the condensate in the Gross--Pitaevskii regime. The convergence and fluctuations of empirical measures associated with general quantum observables were recently studied in the Gross--Pitaevskii limit in \cite{PortRadeViro2023}, where the authors also established convergence of the laws of the relevant random variables in a Wasserstein distance.

The results in the previous paragraph are based on earlier works concerned with the justification of Bogoliubov theory \cite{Bogoliubov-47} for the Bose gas in different parameter regimes. Bogoliubov theory is an approximation scheme that replaces the full Hamiltonian by one whose spectrum can be computed explicitly (the Bogoliubov Hamiltonian). This allows one to compute physical observables, such as the low-lying excitation spectrum and the condensate depletion, for weakly interacting and dilute Bose gases with high accuracy. Due to impressive progress over the last two decades, Bogoliubov theory for the ground state is now rather well understood from a mathematical perspective, see \cite{Seiringer-11,GreSei-13,LewNamSerSol-15,DerNap-13,BroSei-2022} for mean-field systems, \cite{BocBreCenSch-20,BocBreCenSch-19,NamNapRicTri-21,NamTri-23,BreSchSch-22,BreSchSch-22a,HaiSchTri-22,CarOlgSAubSchl-2024} for Bose gases in the Gross--Pitaevskii limit, and \cite{YauYin-09,BasCenSch-21,BroOldSAubSchl2026,BaBroCenaOlgSchl2026,FouSol-20,FouSol-23} for dilute Bose gases in the thermodynamic limit. We also refer to \cite{HHNST-23,HHST-24} for a free energy expansion for the latter system at very low temperatures. For a review of Bogoliubov theory in the context of quantum dynamics, see \cite{Napiorkowski-23}.

The problem of understanding bosonic many-particle systems and their statistical properties becomes substantially more difficult if a system at a positive temperature is considered. In this case the condensate depletion is not small and a macroscopic fraction of the particles resides outside the condensate. As suggested by Lee and Yang in \cite{LeeYan-58} a c-number substitution is still possible. However, the temperature dependence of the condensate fraction has to be taken into account, which leads to a~temperature-dependent Bogoliubov Hamiltonian. In the physics literature this approximation is called the Popov approximation \cite{Andersen2004,Popov1983}. As is pointed out in \cite{Andersen2004} it is expected to break down when the temperature approaches the critical point of the BEC phase transition.

In \cite{DeuNamNap-25}, it was shown that the Popov approximation is valid for a system of weakly interacting bosons in the mean-field limit at all temperatures comparable to the critical temperature of the BEC phase transition. This includes temperatures below, above, and at the critical point. More precisely, it was proved that the Gibbs state of the interacting model can be approximated in trace norm by a convex combination of products of coherent states and Gibbs states associated with temperature-dependent Bogoliubov Hamiltonians. The convex combination is expressed as an integral with respect to the Gibbs distribution of a one-mode $\Phi^4$ theory, which depends on the interaction between the particles. This Gibbs distribution accurately describes the interacting condensate and was recently introduced in \cite{BocDeuSto-24,CapDeu-23} in the construction of trial states for the free energy of systems in the Gross--Pitaevskii limit. Because of the convex combination, the state approximating the Gibbs state is not quasi-free. Nevertheless, all its correlation functions can be computed explicitly. As an application of the Gibbs-state approximation, the limiting distribution of the number of particles in the condensate was derived.
 
In this paper we extend the results described above to obtain a description of the Bose gas in terms of the Gaussian Bogoliubov field for all temperatures that are proportial to the critical temperature of its BEC phase transition. Recently, a slightly different scaling regime has been considered in \cite{LewNamRou-21,FroKnoSchSoh-22}, where effective descriptions in terms of non-linear field theories are obtained for a weakly interacting system at the critical point (cf. Remark~\ref{rem:discussionMainResults} (i)). These techniques have been extended to derive $\Phi^4$ theories in two and three dimensions  \cite{FroKnoSchSoh-25,NamZhuZHu2025,CarKnoRanaGies2026,JouRou-2026} (see also \cite{FroKnoSchSoh-19,RoutSoh-2025,NamYangZhu-2026,LuNamZhu-2026,NamZhuZHu2026}). 
 
For other studies of the Bose gas at positive temperature, we refer to the following works, beginning with the weakly interacting regime. In \cite{DeuSei-21}, the dependence of the critical temperature on the interaction is analyzed in a semiclassical mean-field limit. Spontaneous symmetry breaking for a mean-field system is established in \cite{DeuNapNam-25}. The dynamics of a mean-field Bose gas initially prepared in a positive-temperature state is studied in \cite{DeuCapSchlein-24}. In the singular Gross--Pitaevskii limit, the BEC phase transition was established for trapped systems in \cite{DeuSeiYng-19} and for homogeneous systems in \cite{DeuSei-20}. Free-energy expansions for dilute Bose gases in the thermodynamic limit in two and three spatial dimensions were derived in \cite{Seiringer-08,Yin-10,BaBocCenaDeu2025,DeuMaySei-20,MaySei-20}. 

Finally, we refer to \cite{Hepp1974,GinVelo1979,AmmNie-08,AmmCorFalcGau2025} for a selection of works on the semiclassical analysis of bosonic systems.

In the next section, we present the precise mathematical setup, define the point process associated with the Gibbs state, introduce Cox processes, the boson point process, and the Bogoliubov field, and state and discuss our main results.

\section{Setup and Main Results}
\label{sec:setupAndMainResults}

\subsection{Fock space and Hamiltonian}

We consider a system of bosonic particles captured in the three-dimensional flat torus $\Lambda = (\mathbb R / \mathbb Z)^3$. The one-particle Hilbert space is $L^2(\Lambda)$. We are interested in a~gas with a fluctuating particle number, and therefore choose the bosonic Fock space
\begin{equation}
    \mathscr{F}(L^2(\Lambda)) = \bigoplus_{n=0}^{\infty} L^2_{\mathrm{sym}}(\Lambda^n)
    \label{eq:FockSpace}
\end{equation}
as the Hilbert space of the system. Here $L^2_{\mathrm{sym}}(\Lambda^n)$ denotes the closed linear subspace of $L^2(\Lambda^n)$ consisting of those functions $\psi(x_1,...,x_n)$ that are invariant under any permutation of the particle coordinates $x_1, ..., x_n \in \Lambda$. As usual, we set $L^2_{\mathrm{sym}}(\Lambda^0) = \mathbb{C}$. We also let $\mathscr{F}_n = L^2_{\mathrm{sym}}(\Lambda^n)$.

On the bosonic Fock space we denote by $a_p^*$ and $a_p$ the usual creation and annihilation operators of a~particle with momentum $p \in \Lambda^* = 2 \pi \mathbb{Z}^3$, which satisfy the canonical commutation relations
\begin{equation}
	[a_p,a_q^*] = \delta_{p,q}, \quad \quad [a_p,a_q] = [a^*_p,a^*_q]=0.
	\label{eq:CCR}
\end{equation}

The Hamiltonian of the system is given by 
\begin{equation}
    \mathcal H_N = \sum_{p \in \Lambda^*} p^2 a_p^* \a_p + \frac{1}{2N} \sum_{p,u,v \in \Lambda^*} \widehat v(p) a^*_{u-p} a^*_{v+p} \a_u \a_v.
    \label{eq:Hamiltonian}
\end{equation}
Here $\widehat{v}(p) = \int_{\Lambda} e^{-\mathrm{i}p  \cdot x} v(x) \d x$ denotes the Fourier transform of the interaction potential $v : \Lambda \to \mathbb R_+$. We~assume that $v$ is an even real-valued function and that its Fourier transform satisfies $\widehat v(p) \geq 0 $ and $\sum_{p \in \Lambda^*}(1+|p|) \widehat v(p) < \infty$. The parameter $N > 0$ in $\mathcal H_N$ will later be chosen as the expected number of particles in the system, and therefore implements a mean-field scaling. The operator $\mathcal{H}$ acts on a~suitable dense domain in $\mathscr{F}$ and is self-adjoint. Its restriction to $\mathscr{F}_n$ is given by
\begin{equation}
  \left.  \mathcal{H}_N \right|_{\mathscr{F}_n} = \sum_{i=1}^n -\Delta_i + \frac{1}{N} \sum_{1 \leq i < j \leq n} v(x_i - x_j),
    \label{eq:nParticleHamiltonian}
\end{equation}
where $\Delta_i$ is the Laplacian on $\Lambda$ acting on the coordinate $x_i$ of the $i$-th particle. 

\subsection{Grand canonical Gibbs state}
We are interested in the grand canonical Gibbs state
\begin{equation} 
    G_{\beta, N} = \frac{e^{- \beta (\mathcal H_N - \mu_{\beta,N} \mathcal N)}}{\mathrm{Tr}[ e^{- \beta (\mathcal H_N - \mu_{\beta,N} \mathcal N) } ]}
    \label{eq:GibbsState}
\end{equation}
defined on $\mathscr{F}(L^2(\Lambda))$. Here $\mathcal{N} = \sum_{p \in \Lambda^*} a_p^* \a_p$ denotes the particle number operator, $\beta > 0$ is the inverse temperature, and $\mu_{\beta,N} \in \mathbb{R}$ the chemical potential. We prefer to use the expected particle number as the parameter instead of the chemical potential, and therefore we choose $\mu_{\beta,N}$ uniquely determined by the condition $\Tr[\mathcal{N} G_{\beta,N}] = N$. 

The weakly interacting Bose gas, described by $G_{\beta,N}$, exhibits a BEC phase transition. To~state this precisely, we introduce the one-particle density matrix (1-pdm) $\gamma_{\beta,N}$ of $G_{\beta,N}$. The 1-pdm is a~positive operator on $L^2(\Lambda)$ with trace $N$, and can be defined via its integral kernel in Fourier space: $\widehat{\gamma}_{\beta,N}(p,q) = \Tr[a_q^* \a_p G_{\beta,N}]$. We consider the limit $N \to \infty$ such that
\begin{equation}
  \lim_{N \to \infty}  \frac{\beta(N)}{\beta_c(N)} = \kappa \in (0,\infty), \qquad \text{with} \qquad  \beta_c(N) = \frac{1}{4 \pi} \left( \frac{\zeta(3/2)}{N} \right)^{2/3},
    \label{eq:beta_asymp}
\end{equation}
where $\zeta$ is the Riemann zeta function. As has been shown in \cite[Theorem~4]{DeuNamNap-25}, we have
\begin{equation}
    \lim_{N \to \infty} \frac{\sup_{\Vert \psi \Vert = 1} \langle \psi, \gamma_{\beta,N} \psi \rangle}{N} = \lim_{N \to \infty} \frac{ \langle \varphi_0, \gamma_{\beta,N} \varphi_0 \rangle}{N}  = \left[ 1 - \kappa^{-3/2} \right]_+ \eqqcolon g(\kappa).
    \label{eq:condensateFraction}
\end{equation}
Here $\varphi_0(x)=1$ denotes the $L^2$-normalized constant function on $\Lambda$ and $[x]_+ = \max\{0,x\}$. Thus, the largest eigenvalue of $\gamma_{\beta,N}$ grows linearly in $N$
 if and only if $\kappa > 1$. In particular, a BEC phase transition occurs at $\kappa = 1$. The condensate wave function, i.e., the eigenfunction corresponding to the largest eigenvalue of $\gamma_{\beta,N}$, is the constant function. The quantity $g(\kappa)$ is the (limiting) condensate fraction. Note that $\beta_c(N)$ and $g(\kappa)$ do not depend on the interaction potential. In particular, they coincide with the corresponding quantities for the non-interacting system.

\subsection{Point process associated with the Gibbs state} \label{sec:Gibbs_point_process}
To the Gibbs state $G_{\beta,N}$ we associate a~point process $\Theta_{\beta,N}$ on $\Lambda$, which describes the spatial distribution of the particles. A point process is a random variable whose values are locally finite point configurations. Equivalently, it may be viewed as a random locally finite counting measure, assigning to each bounded Borel set the number of particles it contains. Here we will take the latter perspective. For an introduction to point processes we refer to \cite{DaleyVere-Jones2003,DaleyVere-Jones2008}. 

Since $[G_{\beta,N},\mathcal{N}] = 0$, the Gibbs state admits the decomposition $G_{\beta,N} = \sum_{n=0}^{\infty} p_n G_n$, where $G_n$ is a positive operator on $\mathscr{F}_n$ satisfying $\Tr[G_n] = 1$, and $p_n$ are probabilities: $p_n \geq 0$, and $\sum_{n=0}^{\infty} p_n = 1$. Let $M$ be a random variable with distribution $\mathbf P(M=n)=p_n$ for all $n \in \mathbb{N}_0$. For each $n \ge 0$, let 
\begin{equation}
    (X_1^{(n)},\ldots,X_n^{(n)})
    \label{eq:definitionX}
\end{equation}
be a random vector in $\Lambda^n$ whose distribution is given by
\begin{equation}
\mathbf P\bigl( (X_1^{(n)},\ldots,X_n^{(n)})\in B_n \bigr)
= \int_{B_n} G_n(x_1,\ldots,x_n;x_1,\ldots,x_n) \d (x_1, \cdots, x_n),
\end{equation}
where $B_n \subseteq \Lambda^n$ is a Borel set and $G_n(x_1,\ldots,x_n;x_1,\ldots,x_n)$ is the diagonal of the integral kernel of~$G_n$. The \textbf{point process} $\Theta_{\beta,N}$ is then defined by 
\begin{equation}
\mathcal{L}\left( \Theta_{\beta,N} | M = n \right) = \mathcal{L}\left( \sum_{j=1}^{n} \delta_{X_j^{(n)}} \right),
\label{eq:GibbsPointProcess}
\end{equation}
where $ \mathcal{L}$ denotes the law the point process and $\delta_x$ is the Dirac measure at $x\in\Lambda$. Properties of $\Theta_{\beta,N}$ and further details on its relation to the Gibbs state $G_{\beta,N}$ are discussed in Appendix~\ref{app:GibbsPointProcess}. Among other things, we show there that the two objects have the same correlation functions.

\subsection{Cox point processes}
\label{sec:CoxPointProcesses}
In Theorem~\ref{thm:semiclassicalApproximation}~(c) below, we show that the point process $\Theta_{\beta,N}$ in \eqref{eq:GibbsPointProcess} can be approximated by a Cox point process. A \textbf{Cox process directed by a random measure $\lambda$} is a point process $\Gamma$ whose law, conditional on $\lambda$, is that of a Poisson point process with intensity measure $\lambda$. For introductions to Poisson point processes and Cox point processes we refer to \cite[Chapters~2.4~and~6.2]{DaleyVere-Jones2003}.

\subsection{Boson point process}
\label{sec:bosonPointProcess}
In Theorem~\ref{thm:localcounting}~(c) below, we show that at the microscopic length scale $N^{-1/3}$, the point process $\Theta_{\beta,N}$ associated with the Gibbs state converges, as $N \to \infty$, to the boson point process \cite{Macchi1971,Macchi1975,ShiTaka-2003a,TamuraIto2006,TamuraIto2007,TamuraZagrebnov2012,DaleyVere-Jones2003}, which we discuss next.

Let $\beta_{\infty}(\kappa) = \frac{\kappa}{4 \pi} \zeta^{2/3}(3/2)$, and let $H(x)$ be a centered complex-valued Gaussian random field on $\mathbb{R}^3$ with covariance
\begin{equation}
          \mathbf{E}[ \overline{H(x)} H(y) ] =  \int_{\mathbb{R}^3} \frac{e^{\mathrm{i} p \cdot (x-y)}}{e^{\beta_{\infty}(\kappa)p^2}-1} \frac{\d p}{(2\pi)^3},
\label{eq:covarianceH}
\end{equation}
and with vanishing pseudo-covariance: $\mathbf{E}[ H(x) H(y) ]=0$. Since the kernel in \eqref{eq:covarianceH} is smooth, the random field $H$ satisfies $H \in \mathcal{C}^{\infty}(\mathbb{R}^3)$ almost surely, see e.g.~\cite[Theorem~1.4.2]{Adler-2007}.

We also define the random measure
\begin{equation}
    I(D) = \int_{D} | \sqrt{g(\kappa)} + H(x) |^2 \d x,
\end{equation}
where $g(\kappa)$ is the condensate fraction, defined in \eqref{eq:condensateFraction}. The \textbf{boson point process} $\Xi$ is a Cox process directed by the measure $I$. Since $I$ is given in terms of the non-centered Gaussian random field $\sqrt{g(\kappa)} + H(x)$, the boson point process is also a non-centered permanental point process. 

\subsection{Bogoliubov field}
\label{sec:BogoliubovFreeField}
Theorem~\ref{thm:semiclassicalApproximation} below establishes a relation between the grand canonical Gibbs state $G_{\beta,N}$ and a~complex-valued Gaussian random field, which we now define. We choose a subset $A \subset \Lambda^*_+ = \Lambda^* \backslash \{ 0 \}$ such that $A \cup A^{\mathrm{c}} = \Lambda_+^*$ and $p \in A$ implies $-p \notin A$. That is, $A$ contains one element from each pair $p,-p$. Let $\{ (\widehat \Psi(p), \widehat \Psi(-p)) \}_{p \in A}$ be a family of independent, centered complex Gaussian random vectors with covariance and pseudo-covariance matrices 
\begin{align}
	 & \mathbf{E} \begin{pmatrix}
		|\widehat \Psi(p)|^2 & \widehat \Psi(p) \overline{\widehat \Psi(-p)} \\ \overline{\widehat \Psi(p)} \widehat \Psi(-p) & |\widehat \Psi(-p)|^2 		\end{pmatrix} = \frac{p^2 + g(\kappa) \widehat v(p)}{p^2 (p^2 + 2 g(\kappa) \widehat v(p))}  \begin{pmatrix}
	1 & 0 \\ 0 & 1 
	\end{pmatrix},  \nonumber \\
	 & \mathbf{E} \begin{pmatrix}
	\widehat \Psi(p)^2 & \widehat \Psi(p) \widehat \Psi(-p) \\ \widehat \Psi(p) \widehat \Psi(-p) & \widehat \Psi(-p)^2 		
	\end{pmatrix} = -\frac{g(\kappa)\widehat v(p)}{p^2 (p^2 + 2g(\kappa) \widehat v(p))}  \begin{pmatrix}
	0 & 1 \\ 1 & 0 
	\end{pmatrix},
\label{eq:familyOfGaussianRandomVariables}
\end{align}
where $g(\kappa)$ is defined in \eqref{eq:condensateFraction}.

We define the \textbf{Bogoliubov field} as the random Fourier series (note the absence of the $p=0$ mode)
\begin{equation}
    \Psi(x) = \sum_{p \in \Lambda_+^*} \widehat \Psi(p) e^{\mathrm{i} p \cdot x}.
    \label{eq:BogoliubovFreeField}
\end{equation}
As in the case of the Gaussian free field, a typical realization of the Bogoliubov field is not a function but only a distribution. More precisely, for every $s < -1/2$ we have $\Psi \in H^{s}(\Lambda)$ almost surely, where $H^{s}(\Lambda)$ denotes the Sobolev space of distributions $f$ such that $(1-\Delta)^{-\frac{s}{2}} f \in L^2(\Lambda)$. 

The Bogoliubov field $\Psi(x)$ is a natural generalization of the complex-valued Gaussian free field, to~which it reduces if either $g(\kappa)=0$ (non-condensed phase) or $v = 0$ (non-interacting system). As its name and the formulas above suggest, it is closely related to Bogoliubov theory for the Bose gas. 

One of our results is formulated in terms of the \textbf{Wick product} $: | \Psi(x)|^2 :$ of the fields $\overline{\Psi(x)}$ and~$\Psi(x)$, which is a random field defined as follows. For $q>0$, let $\Psi^{< q}(x) = \sum_{p \in \Lambda_+} \mathds{1}(|p| < q) \widehat \Psi(p) e^{\mathrm{i} p \cdot x}$, and~define
\begin{equation}
    : | \Psi(x)|^2 : = \lim_{q \to \infty} \left( | \Psi^{< q}(x) |^2 - \mathbf{E}[ | \Psi^{< q}(x) |^2 ] \right).
    \label{eq:WickProduct}
\end{equation}
The right-hand side tested against any $f \in H^{-\frac12}(\Lambda)$ converges as $q \to \infty$. Wick products and Wick ordering play an important role in constructive quantum field theory, stochastic PDEs, and Gaussian multiplicative chaos \cite{GlimmJaffe1987,Hairer2014,Janson2009,RhodesVargas2014}.

\subsection{Main results part I: the condensed phase}
\label{sec:mainResults}
Our first main result is an approximation of the Gibbs state $G_{\beta,N}$ in terms of the coherent state quantization of an $N$-dependent random field that converges to the Bogoliubov field in the limit $N \to \infty$. Before giving the precise statement, we introduce the following notation.

We define the probability distribution on $\mathbb C$ with density
\begin{equation}
    g^{\mathrm{BEC}}(z) = \frac{\exp\left( - \beta \left( \frac{\widehat{v}(0)}{2N} |z|^4 - \mu^{\mathrm{BEC}} |z|^2 \right) \right)}{\int_{\mathbb{C}} \exp\left( - \beta \left( \frac{\widehat{v}(0)}{2N} |w|^4 - \mu^{\mathrm{BEC}} |w|^2 \right) \right) \d w}.
    \label{eq:gBEC}
\end{equation}
Here $\d w = \d x \d y/\pi$, where $w = x + \mathrm{i}y$, denotes two-dimensional Lebesgue measure divided by $\pi$, and $\mu^{\mathrm{BEC}} \in \mathbb{R}$. The function $g^{\mathrm{BEC}}$ can be interpreted as the Gibbs distribution of a one-mode $\Phi^4$ theory. It has been used in \cite{BocDeuSto-24,CapDeu-23, DeuNamNap-25} to approximate interacting BECs in the Gross--Pitaevskii and mean-field limits. 

To every $\phi \in L^2(\Lambda)$ we associate the coherent state
\begin{equation}
    \Omega_{\phi} =  \exp\left(  \sum_{p \in \Lambda^*} \left( \widehat \phi(p)  a_p^* - \overline{\widehat \phi(p)} \a_p \right) \right) \Omega, 
    \label{eq:coherentState}
\end{equation}
where $\Omega \in \mathscr{F}$ denotes the Fock vacuum. Moreover, we set 
\begin{equation}
    L^2_+(\Lambda) = \mathds{1}(-\Delta \neq 0) L^2(\Lambda),
    \label{eq:L2plus}
\end{equation}
where $\mathds{1}(-\Delta \neq 0)$ denotes the spectral projection of the Laplacian onto the non-zero modes. 

For simplicity, we first state all theorems in the condensed phase ($\kappa>1$), where our results take their most interesting form. Results for $\kappa\in(0,1]$ are presented in Section~\ref{sec:mainNonCondensed}.

\begin{thm}[\textbf{Semiclassical approximation of Gibbs state and point process}]
\label{thm:semiclassicalApproximation}
    Let the interaction potential $v $ be a~nonnegative, even, $1$-periodic function in $L^1(\Lambda)$ whose Fourier coefficients $\widehat{v}(p)$ are nonnegative and satisfy $\sum_{p \in \Lambda^*} (1+|p|) \widehat{v}(p) <  \infty$ and $\hat{v}(0) > 0$. We consider the limit $N \to \infty$, $\beta/ \beta_{\mathrm{c}}\to \kappa \in (1,\infty)$, see \eqref{eq:beta_asymp}. 
    \begin{enumerate}[label=(\alph*)]
    \item \textbf{Approximation of Gibbs state.} There exists a Gaussian random field $\Phi_N(x)$ with law $\mathbf{P}_{\Phi_N}$, explicitly defined in \eqref{eq:product_measure} below,  such that $\Phi_N \in L^2_+(\Lambda)$ almost surely and 
    \begin{equation}
        \left\Vert G_{\beta,N} - \int | \Omega_{\phi_z} \rangle \langle \Omega_{\phi_z} | g^{\mathrm{BEC}}(z) \d \mathbf{P}_{\Phi_N}(\phi) \d z \right\Vert_1 \lesssim N^{-\frac{1}{48}},
        \label{eq:approximate_Gibbs}
    \end{equation}
    where $\Vert \cdot \Vert_1$ denotes the trace norm. Here $\phi_z(x) = z + \frac{z}{|z|} \phi(x)$, and 
    $\mu^{\mathrm{BEC}}$ in the definition \eqref{eq:gBEC} of $g^{\mathrm{BEC}}$ is chosen such that 
    \begin{equation}
        \int_{\mathbb{C}} |z|^2 g^{\mathrm{BEC}}(z) \d z + \int \Vert \phi \Vert_{L^2(\Lambda)}^2 \mathrm{d} \mathbf{P}_{\Phi_N}(\phi) = N .
    \end{equation}
    \item \textbf{Convergence to the Bogoliubov field.} The random field $\Phi_N$ and the Bogoliubov field $\Psi$ can be realized on a common probability space such that, for all $s < -\frac12$ and $p \in [1,\infty)$, 
    \begin{equation}
        \lim_{N \to \infty} \mathbf{E}\!\left[
            \left\Vert \beta^{1/2} \Phi_N - \Psi \right\Vert_{H^{s}(\Lambda)}^p \right] = 0.
    \end{equation}
    Explicit convergence rates and other modes of convergence are given in Section \ref{sec:semiclassicalApproximation}. 
    \item \textbf{Approximation of point process.} Let $\widetilde{\Theta}_{\beta,N}$ be the \textbf{Cox point process} on $\Lambda$ directed by the random measure 
    \begin{equation}
        \d \lambda(x) = \left| \ |Z| + \Phi_{N}(x) \right|^2 \d x,
    \end{equation}
    where the random variable $Z$ has distribution $g^{\mathrm{BEC}}$ defined in \eqref{eq:gBEC}, and $\Phi_N$ is the random field introduced in parts~(a),~(b). Then, for all bounded Borel-measurable functions $F: \mathbb{R} \to \mathbb{R}$ and $f:\Lambda \to \mathbb{R}$, we have
    \begin{equation}
        \left| \mathbf{E}[ F( \Theta_{\beta,N}(f) ) ] - \mathbf{E}[ F( \widetilde{\Theta}_{\beta,N}(f) ) ] \right| \lesssim N^{-1/48} \Vert F \Vert_{\infty},
        \label{eq:point_process_comparison}
    \end{equation}
    where $\Theta_{\beta,N}(f) = \int_{\Lambda} f(x) \mathrm{d} \Theta_{\beta,N}(x)$.
    \end{enumerate}
\end{thm}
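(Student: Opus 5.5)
The plan is to start from the trace-norm approximation in \cite[Theorem~1]{DeuNamNap-25}. There the Gibbs state is approximated by
\[
\int_{\mathbb{C}} g^{\mathrm{BEC}}(z)\, W(z)\, |\Omega_z\rangle\langle \Omega_z| \otimes G^{\mathrm{Bog}}_z\, W(z)^* \,\d z ,
\]
where $\Omega_z$ is the coherent state of the zero mode and $G^{\mathrm{Bog}}_z$ is the Gibbs state, on $\mathscr{F}(L^2_+(\Lambda))$, of a temperature-dependent Bogoliubov Hamiltonian. Its pairing terms carry the phase $z^2/|z|^2$. Any phase-dependent unitary rotation must be tracked carefully, which is why $\phi_z = z + \tfrac{z}{|z|}\phi$ appears.

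\textbf{Part (a).} Fix $z$. The Bogoliubov Hamiltonian is diagonalized by a Bogoliubov transformation $U_z$ acting on each pair of modes $(p,-p)$, so
\[
G^{\mathrm{Bog}}_z = U_z \Big(\bigotimes_p \text{thermal state}\Big) U_z^* .
\]
I would then replace this quasi-free, non-gauge-invariant state by its classical counterpart, namely the coherent-state (Glauber--Sudarshan / upper-symbol) representation $\int |\Omega_\phi\rangle\langle\Omega_\phi| \,\d\mathbf{P}(\phi)$ with $\mathbf{P}$ Gaussian. For a gauge-invariant thermal state this representation is exact, with $P$-function covariance $(e^{\beta\epsilon(p)}-1)^{-1}$. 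A squeezed thermal state has a positive $P$-function only if the thermal noise dominates the squeezing.

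My proposed construction is to define $\mathbf{P}_{\Phi_N}$ directly as the product Gaussian measure over the pairs $(p,-p)$, with covariance and pseudo-covariance taken as the classical Bogoliubov expressions, i.e. with $e^{\beta\epsilon}-1$ replaced by $\beta\epsilon$ on low modes. I would then bound the trace-norm error mode by mode:
\begin{itemize}[label={}]
\item for low momenta $|p|\le N^{\alpha}$, the relevant quantity $\beta\epsilon(p)$ is small, so the quantum and classical Gaussians differ by a relative entropy or fidelity of order $\beta\epsilon(p)$ per mode;
\item for high momenta, both states are close to the vacuum and the error is controlled by the expected occupation.
\end{itemize}
The errors combine through a subadditivity bound for tensor products (Pinsker combined with additivity of relative entropy). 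Summing over $p$ and optimizing $\alpha$, then adding the error $N^{-c}$ from \cite{DeuNamNap-25}, should give a power $N^{-1/48}$. The normalization of $\mu^{\mathrm{BEC}}$ follows because $\mu^{\mathrm{BEC}}$ is a free parameter and both sides have expected particle number $N$. The final step is to combine the zero-mode coherent state with the rotated field using $\Omega_z\otimes\Omega_{\frac{z}{|z|}\phi} = \Omega_{\phi_z}$.

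\textbf{Part (b).} I would couple the fields by writing $\widehat{\Phi}_N(p) = C_N(p)^{1/2}\xi_p$ and $\widehat{\Psi}(p) = C(p)^{1/2}\xi_p$ with common standard Gaussians $\xi_p$. The resulting bound is
\[
\mathbf{E}\,\|\beta^{1/2}\Phi_N-\Psi\|_{H^s}^2 \lesssim \sum_p (1+p^2)^{s}\,\big|\beta^{1/2} C_N(p)^{1/2} - C(p)^{1/2}\big|^2 .
\]
Two inputs make this tend to zero. First, $\beta C_N(p)\to C(p)$ pointwise, because $\beta\to0$, $\beta/\beta_c\to\kappa$ and $|z|^2/N\to g(\kappa)$ in $g^{\mathrm{BEC}}$-probability. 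Second, the terms are uniformly bounded by $(1+p^2)^{s}p^{-2}$, which is summable for $s<-1/2$, so dominated convergence applies. Higher moments follow from Gaussian hypercontractivity.

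\textbf{Part (c).} Here I would use the fact that the point process of a coherent state $\Omega_\phi$ is exactly a Poisson process with intensity $|\phi|^2\,\d x$. By the correspondence in Appendix~\ref{app:GibbsPointProcess}, the law of $\Theta(f)$ for a state $\rho$ is obtained from the spectral measure of the second quantization $\d\Gamma(f)$ under $\rho$. This law is linear in $\rho$, and
\[
|\Tr[F(\d\Gamma(f))(\rho-\rho')]| \le \|F\|_\infty\,\|\rho-\rho'\|_1 .
\]
Applying (a), the approximating state yields the mixture of Poisson processes directed by $|\phi_z|^2 = \big||z|+\phi\big|^2$, where the modulus removes the phase. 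This is exactly $\widetilde\Theta_{\beta,N}$.

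The main obstacle is the quantum-to-classical step in (a). The squeezed Bogoliubov thermal states must admit a positive Gaussian $P$-representation, or be close in trace norm to one that does. For low momenta the pseudo-covariance $g\widehat v/(p^2(p^2+2g\widehat v))$ is comparable to the covariance, so I would first check positivity of the covariance matrix after subtracting the vacuum contribution. Failing that, I would adjust the covariance at high momenta, where the cutoff costs only a small error.
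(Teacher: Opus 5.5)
\textbf{Parts (b) and (c).} These are essentially the paper's arguments: a Gaussian coupling through a common random seed with dominated convergence and Gaussian moment bounds, and the Poisson law of coherent states combined with $|\Tr[F(\d\Gamma(f))(\rho-\rho')]|\le\|F\|_\infty\|\rho-\rho'\|_1$. Two small corrections are needed in (b). First, $\widehat\Phi_N(p)$ and $\widehat\Phi_N(-p)$ are coupled by the pseudo-covariance, so the seed must enter through a joint square root on each pair $(p,-p)$; the paper diagonalizes with eigenvalues $\widetilde\gamma^{\Bog}(p)\pm\widetilde\alpha^{\Bog}(p)$. Second, $\Phi_N$ does not depend on $|z|^2/N$, because the Bogoliubov Hamiltonian already has $g^{\id}$ in its place. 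The inputs are therefore $g^{\id}\to g(\kappa)$, $\mu^{\id}\to0$ and $p_c\to\infty$.

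\textbf{The gap is in part (a): your mode-by-mode scheme, as described, cannot close.} Replacing $(e^{\beta\epsilon(p)}-1)^{-1}$ by $(\beta\epsilon(p))^{-1}$ changes each occupation by a relative amount of order $\beta\epsilon(p)$. That is a relative entropy of order $(\beta p^2)^2$ per mode, and its sum over $|p|\le N^{\alpha}$ is of order $\beta^2N^{7\alpha}$, which is small only for $\alpha<4/21$. But the modes with $N^{4/21}\lesssim|p|\lesssim\beta^{-1/2}\sim N^{1/3}$ are not close to the vacuum. Their occupations are $\gtrsim1$, and together they carry a positive fraction of the $N$ particles. So the high-momentum estimate via expected occupation fails there, and no choice of $\alpha$ works.

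You have also put the positivity obstacle in the wrong regime. The condition $|\widetilde\alpha^{\Bog}(p)|<\widetilde\gamma^{\Bog}(p)$ is equivalent to $s_p/c_p<e^{-\beta\epsilon(p)}$. At low momenta it holds, since $e^{-\beta\epsilon(p)}\to1$ while $s_p/c_p$ stays bounded away from $1$. It can fail only for $|p|\gg\beta^{-1/2}$, where $e^{-\beta p^2}$ drops below the squeezing $\sim\widehat v(p)/p^2$.

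\textbf{The missing idea is that no classical approximation of the covariance is needed.} The paper first replaces $\widehat v$ by $\widehat u=\widehat v\,\mathds 1(|p|\le p_c)$, with $N^{1/144}\lesssim p_c\lesssim N^{1/3}$. This removes all squeezing above $p_c$ at a trace-norm cost $\lesssim p_c^{-3}$. The bound uses quantum Pinsker for the Bogoliubov states and classical Pinsker for $g^{\BEC}$ versus $\widetilde g^{\BEC}$. The second step is needed because $\mu^{\BEC}$ must be retuned when the covariance changes, so the normalization is not ``free'' as you assert.

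After the cutoff, every factor $\widetilde G^{\Bog}_{\pm p}$ has an \emph{exact} Gaussian upper symbol with covariance $\widetilde\gamma^{\Bog}(p)$ and pseudo-covariance $\widetilde\alpha^{\Bog}(p)$. This holds because $p_c=O(N^{1/3})$ keeps $e^{-\beta\epsilon(p)}$ bounded below wherever $s_p\neq0$. Consequently $\mathbf P_{\Phi_N}$ is built from the exact quantum covariances rather than the classical ones. The rate $N^{-1/48}$ is then just the error from \cite{DeuNamNap-25}, not the outcome of any optimization over modes.
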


The above theorem establishes a connection between the Gibbs state $G_{\beta,N}$ and the random field $Z + (Z/|Z|) \Phi_N(x)$, where $\Phi_N$ is related to the Bogoliubov field $\Psi$ and the distribution of $Z$ is given by $g^{\mathrm{BEC}}$ in \eqref{eq:gBEC}. Moreover, the point process associated with the Gibbs state can be approximated by a Cox process whose random intensity measure is determined by the same random field. This connection allows us to derive limiting distributions for several physically relevant observables. In Theorem~\ref{thm:thermaldistr}, we study the limiting distributions for observables related to different non-zero momentum modes, while in Theorem~\ref{thm:localcounting}, we determine the limiting distributions of the number of particles contained in microscopic and macroscopic subsets of the torus. 

To make this precise, we introduce the following notation. Let $B_1,\dots,B_n \subset \mathbb{R}$ be Borel sets and let $f_1,\dots,f_n \in L^2_+(\Lambda)$ be an orthonormal family. We define the random vector $(A_{f_1},\dots,A_{f_n})$ associated with the field operators $a(f_j) + a^*(f_j)$, $j=1,\dots,n$, and the Gibbs state $G_{\beta,N}$ by
\begin{equation}
    \mathbf{P}(A_{f_j} \in B_j \text{ for } j=1,\dots,n)
    = \Tr\!\left[ \prod_{j=1}^n \mathds{1}\big( a(f_j) + a^*(f_j) \in B_j \big)\, G_{\beta,N} \right].
\end{equation}
This is well defined because $[a(f_i),a^*(f_j)] = \langle f_i, f_j \rangle = 0$ for $i \neq j$. For $p_1,\dots,p_n \in \Lambda^*_+$, we similarly define the random vector $(N_{p_1},\dots,N_{p_n})$ by
\begin{equation}
    \mathbf{P}( N_{p_j} \in B_j \text{ for } j=1,\dots,n )
    = \Tr\!\left[ \prod_{j=1}^n \mathds{1}\big(a_{p_j}^* \a_{p_j} \in B_j\big)\, G_{\beta,N} \right].
\end{equation}
Finally, the random variable $N_+$ (the number of particles outside the BEC), associated with the operator $\mathcal{N}_+ = \sum_{p\in\Lambda_+} a_p^* \a_p$, is defined by
\begin{equation}
    \mathbf{P}( N_+ \in B )
    = \Tr\!\left[ \mathds{1}(\mathcal{N}_+ \in B)\, G_{\beta,N} \right],
\end{equation}
for any Borel set $B \subset \mathbb{R}$. 

\begin{thm}[\textbf{Limiting distributions for the thermal cloud}] \label{thm:thermaldistr}
    Let the interaction potential $v$ satisfy the assumptions of Theorem~\ref{thm:semiclassicalApproximation}. We consider the limit $N \to \infty$, $\beta/ \beta_{\mathrm{c}}\to \kappa \in (1,\infty)$, see \eqref{eq:beta_asymp}.
    \begin{enumerate}[label=(\alph*)]
    \item \textbf{Distribution of the quantum field.} Let $\theta$ be a random variable, uniformly distributed on $[-\pi,\pi]$ and independent of 
    $\Psi$.
    Let $f_1,...,f_n \in L^2_+(\Lambda)$ be an orthonormal family. As $N \to \infty$, we have 
    \begin{align}
         ( \beta^{\frac12}  A_{f_i})_{i=1}^n \xrightarrow{\mathrm{TV}} \left( 2 \mathrm{Re} \left[ e^{\mathrm{i}\theta}  \langle f_i, \Psi \rangle_{L^2(\Lambda)}  \right] \right)_{i=1}^n.
    \end{align}
    Here $\Psi$ is the Bogoliubov field in \eqref{eq:BogoliubovFreeField} and $\xrightarrow{\mathrm{TV}}$ stands for convergence of the laws of the random variables in total variation norm.
    \item \textbf{Joint distribution of occupation numbers.} As $N \to \infty$, we have convergence of the laws in Wasserstein distance $W_q$ with $q \in [1,4)$:
    \begin{equation}
         ( \beta N_{p_i})_{i=1}^n \xrightarrow{W_q} ( |\widehat \Psi(p_i)|^2)_{i=1}^n,
        \label{eq:NpN-p}
    \end{equation} 
    where $\widehat \Psi(p)$ are the Fourier modes of $\Psi$, see \eqref{eq:familyOfGaussianRandomVariables}. 
    
    Observe that $|\widehat \Psi(p_i)|^2$ and $|\widehat \Psi(p_j)|^2$ 
    are independent unless $p_j = \pm p_i$. 
    The distribution of $( |\widehat \Psi(p)|^2, |\widehat \Psi(-p)|^2)$ has the following density 
    on $[0,\infty)^2$:
    \begin{equation}
        f(x,y) = \left( p^4 + 2 p^2 g(\kappa) \widehat{v}(p) \right) e^{- (p^2 + g(\kappa) \widehat{v}(p))(x+y)} I_0(2 g(\kappa) \widehat{v}(p) \sqrt{xy}),
        \label{eq:Bessel_density}
    \end{equation}
    where $I_0$ is the modified Bessel function of the first kind and $g(\kappa)$ is defined in \eqref{eq:condensateFraction}.
    \item \textbf{Distribution of the number of excited particles.} As $N \to \infty$, we have 
    \begin{equation}
        \beta ( N_+ - \mathbf E[N_+] )  \xrightarrow{W_2} \int_{\Lambda} : |\Psi(x)|^2 : \d x, 
        \label{eq:limitOfN+}
    \end{equation}
    where $: |\Psi(x)|^2 :$ is the Wick square of $\Psi$, see \eqref{eq:WickProduct}. Moreover, the distribution of 
    the right-hand side of \eqref{eq:limitOfN+} coincides with that of
    \begin{equation}
        R = \sum_{p \in A} \left[ \frac{X_p-1}{p^2} + \frac{Y_p-1}{p^2 + 2 g(\kappa) \widehat v(p)} \right],
        \label{eq:W_def} 
    \end{equation}
    where $\{ X_p , Y_p \}_{p \in A}$ are independent exponential random variables with rate parameter one, and the set $A \subset \Lambda^*_+$ is defined above \eqref{eq:familyOfGaussianRandomVariables}.
    \end{enumerate}
\end{thm}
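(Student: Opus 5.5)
The plan is to use Theorem~\ref{thm:semiclassicalApproximation}~(a) to replace $G_{\beta,N}$ by the mixture
\[
\widetilde G = \int |\Omega_{\phi_z}\rangle\langle\Omega_{\phi_z}|\, g^{\mathrm{BEC}}(z)\, \d z\, \d\mathbf P_{\Phi_N}(\phi).
\]
For any bounded measurable function $F$ of commuting observables, $|\Tr[F(\cdot)(G_{\beta,N}-\widetilde G)]|\le \|F\|_\infty N^{-1/48}$. This already gives total variation control. Each coherent state $\Omega_{\phi_z}$ is then analysed mode by mode.

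For part (a), write $e^{\mathrm i\vartheta}=z/|z|$. Since $f_i\perp \varphi_0$, the operators $a(f_i)+a^*(f_i)$ act only on the excited modes. In $\Omega_{\phi_z}$ their joint law is that of independent real Gaussians with mean $2\mathrm{Re}[e^{\mathrm i\vartheta}\langle f_i,\Phi_N\rangle]$ and variance $1$. After multiplying by $\beta^{1/2}$, the quantum noise has variance $\beta\to 0$. Hence
\[
(\beta^{1/2}A_{f_i})_i \approx \bigl(2\mathrm{Re}[e^{\mathrm i\vartheta}\langle f_i,\beta^{1/2}\Phi_N\rangle] + \beta^{1/2}\xi_i\bigr)_i,
\]
with $\xi_i$ standard Gaussians. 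The phase $\vartheta$ is uniform and independent of $\Phi_N$, because $g^{\mathrm{BEC}}$ is radial.

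To pass to total variation, two facts are needed. First, $\beta^{1/2}\langle f_i,\Phi_N\rangle$ is a finite-dimensional Gaussian vector whose covariance converges to that of $\langle f_i,\Psi\rangle$, via the explicit form \eqref{eq:product_measure} and part (b). Second, the limiting law has a density, since the covariance is nondegenerate and the angle is uniform. I would use Pinsker or an explicit Gaussian TV bound, conditionally on $\vartheta$, and then average over $\vartheta$. Adding a small independent Gaussian only moves the law by $o(1)$ in TV when the limit is absolutely continuous. The nondegeneracy check I expect to be routine.

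For part (b), in the coherent state the occupation $a_p^*a_p$ is Poisson with mean $|\widehat{\phi_z}(p)|^2=|\widehat\Phi_N(p)|^2$. So $\beta N_p$ is approximated by $\beta\,\mathrm{Poi}(|\widehat\Phi_N(p)|^2)$, jointly independent across modes given $\Phi_N$. Conditionally on $\Phi_N$, one couples $\beta\,\mathrm{Poi}(\lambda)$ with $\beta\lambda$ at $L^q$ distance $\lesssim \beta\sqrt{\lambda}+\beta$, which is $O(\beta^{1/2})$ since $\beta\lambda=O(1)$. Part (b) of Theorem~\ref{thm:semiclassicalApproximation} gives $\beta|\widehat\Phi_N(p)|^2\to|\widehat\Psi(p)|^2$ in every $L^q$.

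The TV error from (a) does not directly control $W_q$. I would interpolate: $W_q$ is bounded by TV to a power times a uniform higher moment. So I need uniform bounds on $\mathbf E_{G}[(\beta N_p)^{r}]$ for some $r>q$. This is where the restriction $q<4$ enters: presumably moments up to order $4$ (fourth-order correlations) are controlled from \cite{DeuNamNap-25}. The Bessel density \eqref{eq:Bessel_density} is a direct computation from the $2\times2$ complex Gaussian in \eqref{eq:familyOfGaussianRandomVariables}. Diagonalise via the Bogoliubov rotation $\widehat\Psi(p)\pm\overline{\widehat\Psi(-p)}$ and change to polar coordinates. The angular integral produces $I_0$.

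Part (c) is the main obstacle, because $\mathcal N_+$ involves infinitely many modes and the centering diverges. I would decompose
\[
\beta(\mathcal N_+ - \mathbf E\mathcal N_+) = \beta\sum_{|p|<K} (a_p^*a_p-\mathbf E) + \beta\sum_{|p|\ge K}(a_p^*a_p-\mathbf E).
\]
The low-mode part is handled by part (b) for finitely many modes. For the tail, I would show
\[
\limsup_N \beta^2 \operatorname{Var}_{G_{\beta,N}}\Bigl(\sum_{|p|\ge K} a_p^*a_p\Bigr) \to 0 \quad\text{as } K\to\infty.
\]
This requires a variance estimate for the true Gibbs state, not just the trace-norm approximation. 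I would compute it via the approximate state and its explicit correlation functions, and control the error using the bounds on truncated two-point functions from \cite{DeuNamNap-25}. Such a bound is presumably of the form $\sum_{|p|\ge K} (\beta p^2)^{-2}\beta^2 \sim \sum_{|p|\ge K} p^{-4}$, which is summable in $3$D.

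A similar tail bound holds for $\int :|\Psi|^2:$, with variance $\sum_{|p|\ge K}p^{-4}$. Combining the three pieces gives $W_2$ convergence. Finally, the identification with $R$ follows from the Bogoliubov diagonalisation. For each pair $\pm p$, $|\widehat\Psi(p)|^2+|\widehat\Psi(-p)|^2=|b_+|^2+|b_-|^2$, where $b_\pm$ are independent complex Gaussians with variances $1/p^2$ and $1/(p^2+2g\widehat v)$. Each $|b_\pm|^2$ is exponential with that mean. The Wick centering then gives \eqref{eq:W_def}.
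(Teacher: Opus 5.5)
Your parts (a) and (b) follow the paper. Trace-norm closeness gives TV closeness of the laws. The coherent-state computation makes everything Gaussian, respectively mixed Poisson with intensities $|\widehat\Phi_N(p)|^2$, conditionally on the uniform phase. The Gaussian TV bound is applied conditionally on that phase. $W_q$ then comes from a maximal coupling plus H\"older with $\mathbf E[N_p^4]\lesssim\beta^{-4}$ from \cite[Theorem~14]{DeuNamNap-25}, which is exactly where $q<4$ enters. Your identification of the limit in (c) with $R$ via $\widehat\Psi(p)\pm\overline{\widehat\Psi(-p)}$ is also correct.

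The gap is in (c). Everything there rests on $\limsup_N\beta^2\operatorname{Var}_{G_{\beta,N}}\bigl(\sum_{|p|\ge K}a_p^*a_p\bigr)\to0$ as $K\to\infty$, and the ingredients you name cannot produce it:
\begin{itemize}
\item The trace-norm error $N^{-1/48}$ says nothing about the variance of an unbounded observable fluctuating on the scale $\beta^{-1}$, unless you also have uniform moment bounds.
\item Two-point (one-particle) information does not determine a variance, which is a two-particle quantity.
\item Fixed-momentum asymptotics of the two-particle density matrix, as in Remark~\ref{rem:discussionMainResults}~(g), carry no uniformity over the infinitely many modes $|p|\ge K$, including $|p|\gtrsim N^{1/3}$.
\end{itemize}
The variance bounds for excited-mode observables actually used in the paper, e.g.\ \cite[Theorem~13~(b)]{DeuNamNap-25} in the proof of Theorem~\ref{thm:localcounting}~(a), are $O(\beta^{-2})$ upper bounds with no smallness. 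So you would be assuming an asymptotic statement about fluctuations of the interacting Gibbs state that is not available.

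The paper never splits modes on the quantum side. It treats $\mathcal N_+$ as a single observable and compares $G_{\beta,N}$ with $\widetilde\Gamma_{\beta,N}$ exactly as in (b): TV, maximal coupling, and H\"older with a uniform fourth-moment bound. That is a boundedness statement, not a smallness statement. It then fixes the centering with $\beta|\mathbf E[N_+]-\mathbf E[\widetilde N_+]|\lesssim N^{-1/48}$ from \cite[Theorem~4~(a)]{DeuNamNap-25}. Since $\widetilde N_+$ is mixed Poisson with intensity $\|\Phi_N\|^2$, Lemma~\ref{lem:mixed_Poisson} costs only $O(\beta N^{1/2})=O(N^{-1/6})$.

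The infinite sum over modes is handled in Proposition~\ref{prop:semiclassics_Wick_square}. There $\Phi_N$ and $\Psi$ are coupled through the same $\eta_{p,\pm}$. The difference $\sum_{p\in A}\sum_\pm c_{p,\pm}(|\eta_{p,\pm}|^2-1)$, with $|c_{p,\pm}|\le c_Np^{-2}$ and $c_N=o(1)$, is controlled by an explicit moment-generating-function bound. So the high-mode tail control you want is proved on the semiclassical side, where it is a Gaussian computation.

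You could rescue your decomposition by applying the same TV/moment transfer to the high-mode part. But that needs the same uniform fourth-moment input and makes the splitting unnecessary.
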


The above theorem shows that the limiting distribution of the quantum field is described by the Bogoliubov field $\Psi$. The random variable $e^{\mathrm{i}\theta}$ appears because the observable $a(f)+a^*(f)$ is not gauge invariant. It restores particle number conservation. This is necessary since, in contrast to the Gibbs state $G_{\beta,N}$ in \eqref{eq:GibbsState}, the field $\Psi$ is not gauge invariant. Rescaled versions of $N_+$ and $N_p$ are likewise described in terms of $\Psi$. The limiting distribution of the condensate occupation number $N_0$ was established in \cite[Theorem~7]{DeuNamNap-25}. Moreover, $N_0$ and $N_p$, $p \in \Lambda_+^*$, become independent as $N \to \infty$.

Next, we study the point process $\Theta_{\beta,N}$ in \eqref{eq:GibbsPointProcess} associated with the Gibbs state $G_{\beta,N}$. Let $D \subset \Lambda$ be Borel measurable and let $N_{D}$ denote the number of particles contained in $D$. Then
\begin{equation}
    N_{D} = \Theta_{\beta,N}(\mathbf{1}_{D}) = \sum_{m=1}^M \mathbf{1}_{D}(X_m^{(M)}),
\end{equation}
where $\mathbf{1}_{D}$ denotes the indicator function of $D$. The point process $\Theta_{\beta,N}$ is defined in \eqref{eq:GibbsPointProcess} and the random variables $X_j^{(M)}$, $j=1,\ldots,M$ and $M$ are defined in and above \eqref{eq:definitionX}, respectively.

\begin{thm}[\textbf{Local counting statistics}] 
\label{thm:localcounting}
    Let the interaction potential $v$ satisfy the assumption of Theorem~\ref{thm:semiclassicalApproximation}. We consider the limit $N \to \infty$, $\beta/ \beta_{\mathrm{c}}\to \kappa \in (1,\infty)$, see \eqref{eq:beta_asymp}. 
    \begin{enumerate}[label=(\alph*)]
        \item \textbf{Convergence of linear statistics on macroscopic length scale.} Let $f \in H^{\frac{1}{5}}(\Lambda)$ be real-valued. 
        As $N \to \infty$, the normalized \textbf{empirical measure} $N^{-1} \Theta_{\beta,N}$ satisfies
        \begin{equation}
         \mathbf E \left[ \left| \frac{1}{N} \Theta_{\beta,N}(f) - \int_\Lambda f(x) \d x \right|^2 \right] \lesssim_f N^{- \frac13}.
            \label{eq:convergenceOfEmprircalMeasure}
        \end{equation}
        If $f \in L^2(\Lambda)$, the statement is still true with the right-hand side replaced by $o(1)$. That is, $\frac{1}{N} \Theta_{\beta,N}$ converges to the uniform probability measure on $\Lambda$. 
        \item \textbf{Particle number fluctuations in macroscopic sets.}
        Let $X$ be a standard Gaussian random variable, independent of 
        $\Psi$,
        and define 
        \begin{equation}
            V_j = \frac{|D_j|}{\sqrt{\widehat{v}(0)}} X + 2 \sqrt{g(\kappa)} \mathrm{Re} \langle \Psi, \mathbf{1}_{D_j} \rangle_{L^2(\Lambda)}, \qquad j=1,...,n.
            \label{eq:defVj}
        \end{equation}
        Here $|D_j|$ and $\mathbf{1}_{D_j}$ are the Lebesgue measure and the indicator function of $D_j \subset \Lambda$, respectively. As $N \to \infty$, we have
        \begin{equation}
            \Big(  \sqrt{\beta/N} (N_{D_i} - N |D_i|)\Big)_{i=1}^n  \xrightarrow{d} \left( V_1, ..., V_n \right),
            \label{eq:macroscaleStats}
        \end{equation}
        where $\xrightarrow{d}$ denotes convergence in distribution. If the sets $D_i$ have smooth boundary, convergence in distribution can be replaced with convergence of the laws of the random variables in Wasserstein distance $W_2$. 
        
        Moreover, $(V_1,...,V_n)$ is a centered real Gaussian random vector with the covariance matrix 
        \begin{equation}
            \mathbf{E}[V_i V_j] = 2 g(\kappa) \sum_{p \in \Lambda^*} \frac{\overline{\widehat{\mathbf{1}}_{D_i}(p)} \widehat{\mathbf{1}}_{D_j}(p)}{p^2 + 2 g(\kappa) \widehat{v}(p)}. 
        \end{equation}
        Here $\widehat{\mathbf{1}}_D$ denotes the Fourier transform of $\mathbf{1}_D$.
        \item \textbf{Particle number fluctuations in microscopic sets.} 
        On the microscopic length scale $N^{-1/3}$, the point process $\Theta_{\beta,N}$ defined in \eqref{eq:GibbsPointProcess} converges weakly to the \textbf{boson point process} $\Xi$ defined in Section~\ref{sec:bosonPointProcess}. More precisely, for $f \in \mathcal{C}_{\mathrm{c}}(\mathbb{R}^3)$, where $\mathcal{C}_{\mathrm{c}}$ denotes continuous functions with compact support, let $f_N(x) = f(N^{\frac13}x)$. For all sufficiently large $N$, $f_N$ may be regarded as a function on $\Lambda$, and 
        \begin{equation}
            \lim_{N \to \infty} \mathbf E[e^{i \Theta_{\beta,N}(f_N)}]= \mathbf E[e^{i \Xi (f)}]. 
        \end{equation}
    \end{enumerate}
\end{thm}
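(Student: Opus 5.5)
The plan is to reduce all three parts to statements about the Cox process $\widetilde{\Theta}_{\beta,N}$ of Theorem~\ref{thm:semiclassicalApproximation}~(c), and then to statements about its random intensity $\lambda = ||Z|+\Phi_N|^2\,\d x$. Part (c) of that theorem gives total-variation control with error $N^{-1/48}$ for each fixed test function, so laws of single linear statistics transfer. For joint laws in (b), I would use the Cramér–Wold device: apply \eqref{eq:point_process_comparison} to $f=\sum_i t_i \mathbf 1_{D_i}$ with $F$ bounded, for instance $F=\cos,\sin$ after rescaling. Moment statements, namely the $L^2$ bound in (a) and the $W_2$ upgrade in (b), cannot come from total variation alone. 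For these I would use the correlation functions instead. By Appendix~\ref{app:GibbsPointProcess}, $\Theta_{\beta,N}$ has the same correlation functions as $G_{\beta,N}$, so first and second moments of $\Theta_{\beta,N}(f)$ are expressed through $\gamma_{\beta,N}$ and the two-body density matrix. These can be compared to the approximating state, where all correlation functions are explicit. Here I would use the bounds on the $1$- and $2$-pdm established in \cite{DeuNamNap-25}.

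\textbf{Part (a).} Conditionally on $\lambda$, a Poisson process satisfies
\[
\mathbf E[\Theta(f)\mid\lambda]=\lambda(f), \qquad \mathrm{Var}(\Theta(f)\mid\lambda)=\lambda(f^2).
\]
The Poisson part therefore contributes $N^{-2}\mathbf E\lambda(f^2)\lesssim N^{-1}\|f\|_\infty^2$; for $f\in L^2$ I would instead use the second-moment formula directly. The main term is
\[
N^{-1}\lambda(f)-\int f = \Bigl(N^{-1}|Z|^2-\tfrac{\int f}{\ldots}\Bigr)\ \text{plus}\ 2N^{-1}|Z|\,\mathrm{Re}\langle f,\Phi_N\rangle + N^{-1}\int f|\Phi_N|^2 .
\]
Each piece is controlled as follows.
\begin{itemize}
\item $\mathrm{Var}(|Z|^2)/N^2=O(N^{-1})$ holds by the explicit quartic density \eqref{eq:gBEC}.
\item The cross term has second moment of order $N^{-2}\cdot N\cdot\beta^{-1}\|f\|_{H^{-1}}^2\sim N^{-1/3}$, using that $\beta^{1/2}\Phi_N$ has $H^{-1}$-type covariance. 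This is the source of the rate $N^{-1/3}$.
\item The fluctuation of $\int f|\Phi_N|^2$ is a Wick-square term. Its variance is $\beta^{-2}\sum_{p,q}|\widehat f(p-q)|^2(p^2)^{-1}(q^2)^{-1}$ at leading order. The convolution sum converges exactly when $f$ has a little positive Sobolev regularity, which I expect is where the hypothesis $H^{1/5}$ enters. Combined with $\beta^{-2}N^{-2}\sim N^{-2/3}$, it is harmless.
\end{itemize}
The general $L^2$ case follows by density, approximating $f$ by smooth functions and using $\mathbf E[\Theta(|f-g|)^2]\lesssim N^2\|f-g\|_2^2+\ldots$ from the correlation-function bounds.

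\textbf{Part (b).} After normalizing, $\sqrt{\beta/N}(\lambda(\mathbf 1_D)-N|D|)$ splits into three contributions.
\begin{itemize}
\item $\sqrt{\beta/N}(|Z|^2-N_0^{\rm exp})|D|$ converges to $|D|X/\sqrt{\widehat v(0)}$. This uses the known Gaussian limit of the $\Phi^4$ one-mode distribution, \cite[Theorem~7]{DeuNamNap-25}.
\item $2\sqrt{\beta/N}|Z|\,\mathrm{Re}\langle \Phi_N,\mathbf 1_D\rangle\to 2\sqrt{g}\,\mathrm{Re}\langle\Psi,\mathbf 1_D\rangle$, using $|Z|/\sqrt N\to\sqrt g$ in probability together with Theorem~\ref{thm:semiclassicalApproximation}~(b). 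Testing against $\mathbf 1_D\in H^{1/2-}$ is admissible since $s<-1/2$.
\item $\sqrt{\beta/N}\int_D(|\Phi_N|^2-\mathbf E)\to 0$, and the Poisson noise ($\sqrt{\beta/N}\cdot\sqrt N\to0$) also vanishes.
\end{itemize}
Independence of $Z$ and $\Phi_N$ gives independence of $X$ and $\Psi$. The normalization must also be checked: the deterministic means $\mathbf E|Z|^2|D|+\mathbf E\int_D|\Phi_N|^2$ equal $N|D|$ up to $o(\sqrt{N/\beta})$, because $\Phi_N$ is translation invariant in law, so $\mathbf E|\Phi_N(x)|^2$ is constant. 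The covariance formula is then a direct computation from \eqref{eq:familyOfGaussianRandomVariables}, which gives $\mathbf E|\widehat\Psi(p)|^2 + \mathrm{Re}$-pseudo-covariance $=1/(p^2+2g\widehat v)$; the $X$-term supplies the $p=0$ summand $|D_i||D_j|/\widehat v(0)$ in the formula. For $W_2$ with smooth boundaries, I would add uniform integrability of the second moments, which comes again from the correlation-function computation.

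\textbf{Part (c).} This is the main obstacle, because \eqref{eq:point_process_comparison} has an $f$-independent error, so it transfers to $f_N$ for free. It thus suffices to show that the Cox process with intensity rescaled to length $N^{-1/3}$ converges. The Cox Laplace functional is $\mathbf E\exp(\int (e^{if}-1)\,\d\lambda_N)$ with
\[
\lambda_N(\d y)=N^{-1}\bigl||Z|+\Phi_N(N^{-1/3}y)\bigr|^2\d y .
\]
I would show that $N^{-1/2}(|Z|+\Phi_N(N^{-1/3}\cdot))\to\sqrt g+H$ locally uniformly in law after a random phase, which is irrelevant for $|\cdot|^2$. The ingredients are:
\begin{itemize}
\item $|Z|/\sqrt N\to\sqrt g$.
\item $N^{-1/2}\Phi_N(N^{-1/3}y)$ is Gaussian with covariance $N^{-1}\sum_p e^{ip\cdot N^{-1/3}(x-y)}\,c_N(p)$, where $c_N(p)\approx(e^{\beta p^2}-1)^{-1}$ for $|p|\gg1$. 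Riemann-sum convergence with $p=N^{1/3}k$ gives \eqref{eq:covarianceH}. Here the Bogoliubov corrections only affect $|p|=O(1)$, a set of vanishing weight after the rescaling $N^{-1}$. I must check this for the explicit $\Phi_N$ of \eqref{eq:product_measure}, including a UV cutoff if present.
\item Tightness in $\mathcal C(\mathrm{supp} f)$, via Kolmogorov's criterion, comes from uniform smoothness of the covariance. Convergence of the finite-dimensional Gaussian distributions plus tightness gives convergence of $\int(e^{if}-1)\,\d\lambda_N$ in law.
\end{itemize}
Since $\mathrm{Re}\int(e^{if}-1)\,\d\lambda\le0$, the exponential is bounded and bounded convergence finishes the proof.
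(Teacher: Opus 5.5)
Your reduction of (c), and of the distributional part of (b), to the Cox process via the $f$-uniform bound \eqref{eq:point_process_comparison} is the paper's. Your split of the intensity in (b) into the condensate fluctuation, the cross term $2|Z|\,\mathrm{Re}\langle \mathbf 1_D,\Phi_N\rangle$ and a negligible Wick/Poisson remainder is exactly the paper's $M_1+M_2$ decomposition.

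In (c) you argue by Kolmogorov tightness in $C(\mathrm{supp}\, f)$. The paper instead replaces $|Z|$ by $\sqrt{Ng(\kappa)}$ and $\Phi_N$ by a free field $\Phi_N'$ with covariance $(e^{\beta p^2}-1)^{-1}$. It then couples the rescaled fields in $L^2(D)$ using trace-class convergence of the covariance operators and the Powers--St{\o}rmer inequality. Both routes work, and the paper's needs no path regularity. In yours you must also check that the rescaled pseudo-covariance $N^{-1}\sum_p \widetilde{\alpha}^{\mathrm{Bog}}(p)e^{\mathrm{i}p\cdot N^{-1/3}(x-y)}=O((N\beta)^{-1})$ vanishes, since $H$ has none.

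\textbf{The gap: part (a), and through it the $W_2$ statement of (b).} These are moment statements about the true process. One needs $\mathrm{Tr}[G_{\beta,N}(\mathrm{d}\Gamma(f)-N\widehat f(0))^2]=O(N^{5/3})$, and for $W_2$ its value up to $o(N/\beta)$. As you note, total variation gives nothing here. But you then compute the variance only for $\widetilde\Theta_{\beta,N}$ and merely assert that it ``can be compared''.

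The paper instead computes the variance in $G_{\beta,N}$ directly. It writes $\mathrm{d}\Gamma(f)-N\widehat f(0)=\widehat f(0)(\mathcal N-N)+(a^*(f_+)a_0+a_0^*a(f_+))+\mathrm{d}\Gamma(Qf_+Q)$. It takes the second moments of the first two terms from \cite{DeuNamNap-25} (eq.~(11.27) and Theorem~5(a)). It bounds the last by $2\sum_p|\widehat f(p)|^2\,\mathrm{Tr}[G_{\beta,N}\,\mathrm{d}\Gamma(Q\cos(p\cdot x)Q)^2]$. That trace is $\lesssim N^{4/3}+N^{1/3}p^2$ (Theorem~13(b) there) and trivially $\lesssim N^2$, and the paper splits momenta at $N^{1/2}$ and $N^{5/6}$. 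This gives $O(N^{4/3})+o(N^{2-5s/3})$ for $f\in H^s$, and $2-\tfrac53 s\le\tfrac53$ is exactly $s\ge\tfrac15$.

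So the $H^{1/5}$ hypothesis comes from the $p^2$-growth of the only available quantum bound, not from the Wick-square sum you point to. In the Cox process that term is at most $\beta^{-2}\|f\|_2^2\sup_k\sum_p(p^2+1)^{-1}(|p+k|^2+1)^{-1}$, which is finite for every $f\in L^2$ in three dimensions. Your computation therefore yields $O(N^{5/3})$ for all $f\in L^2$ and cannot account for the hypothesis.

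Likewise, smooth boundaries are needed for $W_2$ only because they put $\mathbf 1_D\in H^{1/5}$, so that the variance asymptotics hold in $G_{\beta,N}$. Your plan does not say why they are needed. Once those asymptotics are in hand, your ``convergence in law plus convergence of second moments'' is a perfectly good substitute for the paper's maximal-coupling argument.

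\textbf{Two smaller corrections.} First, the law of $|Z|^2$ is a truncated Gaussian with variance $\frac{N}{\beta\widehat v(0)}(1+o(1))\sim N^{5/3}$. So $\mathrm{Var}(|Z|^2)/N^2\sim N^{-1/3}$, not $O(N^{-1})$, which is consistent with your own part (b); the rate is unaffected.

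Second, convergence in $H^s$ with $s<-\frac12$ (Theorem~\ref{thm:semiclassicalApproximation}(b)) cannot be paired with $\mathbf 1_D$. That function lies only in $H^{1/2-}$ for smooth $D$, and only in $L^2$ in general. Use instead that $\mathrm{Re}\langle f,\beta^{1/2}\Phi_N\rangle$ and $\mathrm{Re}\langle f,\Psi\rangle$ are Gaussians whose covariances converge for every $f\in H^{-1}$, as in Proposition~\ref{prop:linear_observables}.
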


An introduction to convergence of point processes can be found in \cite[Chaper~11]{DaleyVere-Jones2008}. The above theorem provides information about the distribution of particles in $\Lambda$. The empirical measure converges to the uniform measure, as expected from the translation invariance of the Gibbs state. On macroscopic sets, particle-number fluctuations are governed by two independent contributions: one arising from fluctuations of the condensate and the other depending on the condensate fraction and the Bogoliubov field. In contrast, for sets on the microscopic length scale $N^{-1/3} \sim \beta^{1/2}$ (the thermal wavelength), the~particle-number fluctuations are independent of the interaction potential and are described by the boson point process. The interaction becomes negligible because the spatial scale $N^{-1/3}$ corresponds to momentum scales $|p| \sim N^{1/3}$, for which $\hat{v}(p)$ is small. This, in particular, proves universality of the local particle number statistics. 

We emphasize that this result holds under the assumption $\hat{v}(0) > 0$. In the case of the ideal gas ($v=0$), the large condensate fluctuations of order $N$ lead to a different result. Repulsive interactions suppress these large condensate fluctuations, reducing their order to $N^{5/6}$. These fluctuations are accurately described by $g^{\mathrm{BEC}}$ in \eqref{eq:gBEC}. The derivations of the boson point process for the ideal gas in \cite{TamuraIto2006,TamuraIto2007,TamuraZagrebnov2012} were carried out in the canonical ensemble, where condensate fluctuations are of the order $N^{2/3}$ and this problem does not arise.

Regarding the above theorems, we have the following remarks. 

\begin{remark}
\label{rem:discussionMainResults}
    \begin{enumerate}[label=(\alph*)]
        \item For simplicity, we state Theorems~\ref{thm:semiclassicalApproximation}--\ref{thm:localcounting} in the condensed phase, i.e., for $\kappa>1$. Results for the case $\kappa\in(0,1]$ are discussed in Section~\ref{sec:mainNonCondensed} below.
        \item Theorem~\ref{thm:semiclassicalApproximation} provides an approximation of the Gibbs state $G_{\beta,N}$ in \eqref{eq:GibbsState} that differs from the one in \cite[Theorem~1]{DeuNamNap-25}, on which its proof relies. The formulation in Theorem~\ref{thm:semiclassicalApproximation} is needed for the proofs of Theorems~\ref{thm:thermaldistr} and~\ref{thm:localcounting}. Moreover, it offers a different perspective, as it is semiclassical in nature and establishes a connection to the Bogoliubov field $\Psi$ in \eqref{eq:BogoliubovFreeField}. This perspective also allows us to prove that the point process $\Theta_{\beta,N}$ in \eqref{eq:GibbsPointProcess} associated with the Gibbs state can be approximated by a Cox process. 
        \item To the best of our knowledge, the Bogoliubov field $\Psi$ in \eqref{eq:BogoliubovFreeField} has not previously appeared in either the mathematics or the physics literature on the Bose gas. The closest related work we found is \cite{SinatraCastin2009}, where a version of the Gaussian field $\Phi_N$ from Theorem~\ref{thm:semiclassicalApproximation} is used as part of a stochastic numerical method that implements Bogoliubov theory for Bose gases at a positive temperature.
        \item Part~(c) of Theorem~\ref{thm:localcounting} establishes boson bunching for a bosonic many-particle system with repulsive interactions, as this property is inherited from the boson point process. Boson bunching refers to the enhanced probability of finding one boson close to another compared with the probability in the absence of the other particle. In our system, this behavior survives despite the repulsive interactions between the particles because of a separation of length scales: the range of the interaction is much larger than both the thermal wavelength and the typical interparticle distance, which are of order $N^{-1/3}$. This phenomenon may be contrasted with the Pauli exclusion principle for fermions. 
        
        In the physics literature, boson bunching is also known as the Hanbury--Brown--Twiss effect, see \cite{HanburyBrownTwiss1956,YasShi1996,Schelletal2005,Jeltesetal2007}. The question of whether this effect can be proved to occur in an interacting bosonic many-particle system was raised in \cite[Section~6.2.3]{BardanetEtal2022}. 
        \item Some of the statements in Theorems~\ref{thm:thermaldistr} and \ref{thm:localcounting} establish convergence in total variation norm and Wasserstein distances. Their proofs rely on bounds for the second moments of quantum observables given by second quantizations of one-particle operators in the Gibbs state, obtained in \cite{DeuNamNap-25}. Simpler proofs are possible if these stronger notions of convergence are replaced by convergence in distribution.
        \item Theorems~\ref{thm:thermaldistr} and~\ref{thm:localcounting} are partly motivated by recent experiments on cold alkali gases \cite{SakKase2016,TenartEtAl2021,HerceEtAl2023,LamiraultEtAl2025}. In \cite{SakKase2016}, the authors demonstrate that current experiments are sensitive to the discrepancy between the empirical measure and the average density, thereby revealing finite-size effects. In \cite{TenartEtAl2021} and \cite{LamiraultEtAl2025}, correlations between occupation numbers of different momentum modes are measured. Finally, \cite{HerceEtAl2023} reports measurements of the distribution of particle numbers in spatial subsets of the configuration space for a gas of lattice bosons.
        \item Using Theorems~4 and~5 in \cite{DeuNamNap-25}, one readily obtains 
        \begin{align}
            \lim_{N \to \infty} \beta \Tr[ a_p^* \a_q G_{\beta,N} ]
            &= \mathbf{E}[ \overline{\widehat \Psi(p)} \widehat \Psi (q) ], \nonumber \\
            \lim_{N \to \infty} \beta^2 \Tr[ a_{p_1}^* a_{p_2}^* \a_{q_1} \a_{q_2} G_{\beta,N} ]
            &= \mathbf{E}[ \overline{\widehat \Psi (p_1)} \overline{\widehat \Psi (p_2)} \widehat \Psi (q_1) \widehat \Psi (q_2) ]
        \end{align}
        for non-zero momenta $p,q,p_1,p_2,q_1,q_2 \in \Lambda_+^*$, where $\widehat \Psi(p)$ denotes the Fourier coefficients of the Bogoliubov field defined in \eqref{eq:familyOfGaussianRandomVariables}. We expect analogous identities to hold for higher-order correlation functions. However, the analysis in \cite{DeuNamNap-25} provides control only over the quantities above and does not extend to higher orders. 
        \item In Theorem~\ref{thm:localcounting}, we discuss the particle-number statistics on the macroscopic length scale $1$ and the microscopic length scale $N^{-1/3}$. On mesoscopic scales $N^{-1/3} \ll \delta_N \ll 1$, one can obtain a result comparable to part~(b) of Theorem~\ref{thm:localcounting}. In this regime, the Bogoliubov field reduces to the Gaussian free field, and the limiting mesoscopic statistics are also universal. 
        \item The scaling factor $1/N$ in front of the interaction in \eqref{eq:Hamiltonian} is chosen so that Bogoliubov theory is visible in the condensed phase. This should be contrasted with the recent works \cite{LewNamRou-21,FroKnoSchSoh-22}, where the authors consider a mean-field Bose gas with an interaction strength of order $N^{-2/3}$ at the critical point ($\kappa =1$). Their scaling is designed to obtain the convergence of the partition and correlation functions, after suitably subtracting a contribution from Hartree theory, to those of a classical $\Phi^4$ theory with a nonlocal interaction. 
        
        With this interaction scaling, one formally obtains the Bogoliubov dispersion relation
        \begin{equation}
            \epsilon(p) = \sqrt{p^4 + 2 g(\kappa) p^2 N^{1/3} \widehat{v}(p)}
        \label{eq:dispersionRelationInOtherScaling}
        \end{equation}
        in the condensed phase. For momenta $|p|$ of order one as $N\to\infty$, it is dominated by the interaction term, while its behavior at large $|p|$ depends sensitively on the regularity of $v$. Therefore, our scaling, for which the $N^{1/3}$ factor in \eqref{eq:dispersionRelationInOtherScaling} is absent, is more natural when considering the full temperature range $\kappa\in(0,\infty)$ rather than focusing only on the critical point $\kappa = 1$.
        \item The random variable $R$ in \eqref{eq:W_def} describes the limiting fluctuations of the number of particles outside the BEC and depends explicitly on the interaction between the particles. In the special cases $v=0$ (non-interacting system) or $g(\kappa)=0$ (non-condensed phase), it reduces, up to a~multiplicative constant, to a random variable in \cite[Eq.~(3)]{ChatDiac2014}. In this work a very general class of one-particle Hamiltonians $h$ is considered, including 
        the Laplacian on the three-dimensional 
        torus (corresponding to 
        $\alpha = 3/2$ in \cite{{ChatDiac2014}}). The main result of \cite{ChatDiac2014} is that $R$ describes the limiting distribution of the number of particles in the BEC for a system of non-interacting bosons. The fact that the same random variable describes the fluctuations of the thermal cloud in our case and the fluctuations of the condensate in \cite{ChatDiac2014} is related to the use of the canonical ensemble in \cite{ChatDiac2014}. Since in this ensemble the total particle number is fixed, the particle numbers in the condensate and in the thermal cloud exhibit the same fluctuations with opposite signs. We conclude that $R$ in \eqref{eq:W_def} can be viewed as a generalization of \cite[Eq.~(3)]{ChatDiac2014} to systems of interacting bosons. 
        
        As Theorem~\ref{thm:thermaldistr} shows, $R$ is related to the Bogoliubov field $\Psi$ in \eqref{eq:BogoliubovFreeField}. Similarly, the random variable in \cite[Eq.~(3)]{ChatDiac2014} is related to the Gaussian free field $F$, defined by the series
        \begin{equation}
            F(x) = \sum_{j=1}^{\infty} \frac{F_j}{E_j - E_0} f_j(x),
        \end{equation}
        where $\{E_j\}_{j=0}^{\infty}$ and $\{ f_j \}_{j=0}^{\infty}$ denote the eigenvalues and eigenfunctions of the one particle Hamiltonian $h$, respectively, and $\{ F_j \}_{j=0}^{\infty}$ is a family of independent, centered, complex Gaussian random variables with covariance $\mathds{1}$ and zero pseudo-covariance. 

        Large-deviation estimates for the number of condensed particles in the canonical ideal gas have recently been established in \cite{DeuLi-26}.
    \end{enumerate}
\end{remark}
\subsection{Main results part II: the non-condensed phase}
\label{sec:mainNonCondensed}
In this section, we consider particle-number statistics in subsets of the torus for $\kappa \in (0,1]$. We first state a theorem for $\kappa \in (0,1)$, corresponding to the non-condensed phase, and then briefly discuss the critical case $\kappa = 1$.
 
\begin{thm} \label{thm:high_temperature}
Let the interaction potential $v$ satisfy the assumption of Theorem~\ref{thm:semiclassicalApproximation}. We consider the limit $N \to \infty$, $\beta/ \beta_{\mathrm{c}}\to \kappa \in (0,1)$, see \eqref{eq:beta_asymp}. 
\begin{enumerate}[label=(\alph*)]
    \item \textbf{Particle number fluctuations in macroscopic sets.} Let
    \begin{align}
        z_\infty = \lim_{N \to \infty} e^{\beta \mu^\id(\beta,N)}, \qquad \sigma_{\beta} = (4 \pi \beta)^{-\frac32} \mathrm{Li}_{\frac12}(z_\infty),
    \end{align}
    where $\mathrm{Li}_{s}(z)$ denotes the polylogarithm. Let $D_1,\ldots,D_n \subseteq \Lambda$ be measurable sets whose boundaries have Lebesgue measure zero and whose closures are disjoint. Define
    \begin{equation}
        V_j = \langle W , \mathbf{1}_{D_j} \rangle_{L^2(\Lambda)}, \qquad  j=1,\ldots,n,
    \end{equation}
    where $W$ denotes real-valued Gaussian white noise on $\Lambda$. Then
    \begin{equation} \left( \frac{N_{D_i} - N |D_i|}{\sqrt{\sigma_{\beta}}} \right)_{i=1}^n \xrightarrow{d} \left( V_1, ..., V_n \right) 
    \end{equation}
    as $N \to \infty$. In particular, $V_1,\dots,V_n$ are independent centered Gaussian random variables with $\mathbf{E}[V_j^2]= |D_j|$.
    \item \textbf{Particle number fluctuations in microscopic sets.} Let $\mu^{\mathrm{id}}(\beta,N) < 0$ be the unique solution to the equation
    \begin{equation}
        N = \sum_{p \in \Lambda^*} \frac{1}{e^{\beta(p^2-\mu^{\mathrm{id}}(\beta,N))}-1},
        \label{eq:id_number_of_part}
    \end{equation}
    and define the effective mass by
    \begin{equation}
        m_{\mathrm{eff}}(\kappa) = \lim_{N \to \infty} \sqrt{-N^{-2/3} \mu^{\mathrm{id}}(\beta,N)}.
    \end{equation}
    We also define the complex Gaussian random field $H_m$ on $\mathbb R^3$ with covariance
    \begin{equation}
        \mathbf E[\overline {H_m(x)} H_m(y) ] = \int_{\mathbb R^3} \frac{e^{ip \cdot (x-y)}}{e^{ \beta_{\infty}(\kappa) ( p^2 + m_{\mathrm{eff}}^2(\kappa) ) } -1} \frac{\d p}{(2 \pi)^3}, 
    \end{equation}
    where $\beta_{\infty}(\kappa) = \frac{\kappa}{4 \pi} \zeta^{2/3}(3/2)$, and vanishing pseudo-covariance. On the microscopic length scale $N^{-1/3}$, the point process $\Theta_{\beta,N}$ defined in \eqref{eq:GibbsPointProcess} converges, as $N \to \infty$, weakly to the boson point process associated with $H_m$ and $g(\kappa) = 0$. More precisely, the limiting process is the Cox process with random intensity measure
    \begin{equation}
        I(D) = \int_{D} \left| H_m(x) \right|^2 \mathrm{d}x.
    \end{equation}
\end{enumerate}
\end{thm}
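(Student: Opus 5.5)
Both parts rest on an approximation of $G_{\beta,N}$ in the non-condensed phase $\kappa\in(0,1)$. There the analysis of \cite{DeuNamNap-25} (its high-temperature counterpart of the Gibbs-state approximation) shows that $G_{\beta,N}$ is close in trace norm to the quasi-free ideal Gibbs state $G^{\mathrm{id}} \propto e^{-\beta(\mathrm{d}\Gamma(-\Delta) - \mu^{\mathrm{id}}\mathcal N)}$, up to a mean-field shift of the chemical potential. No condensate is present, so no $\Phi^4$ factor and no Bogoliubov pairing appears. Equivalently, the semiclassical approximation of Theorem~\ref{thm:semiclassicalApproximation}(a) degenerates to $\int |\Omega_\phi\rangle\langle\Omega_\phi|\,\d\mathbf P_{\Phi_N}(\phi)$ with $\Phi_N$ a centered complex Gaussian field with $\mathbf E|\widehat\Phi_N(p)|^2=(e^{\beta(p^2-\mu^{\mathrm{id}})}-1)^{-1}$ and no pseudo-covariance. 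As in part~(c) of that theorem, the point process $\Theta_{\beta,N}$ is then approximated, in the sense of \eqref{eq:point_process_comparison}, by the Cox process directed by $|\Phi_N(x)|^2\d x$. This process is a permanental point process with kernel $K_N(x,y)=\sum_p e^{\mathrm ip(x-y)}(e^{\beta(p^2-\mu^{\mathrm{id}})}-1)^{-1}$. Since total-variation-type errors are $o(1)$, it suffices to prove both statements for this Cox process.

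\textbf{Part (b).} I would compute the Laplace or characteristic functional of the Cox process. For real $f\in\mathcal C_{\mathrm c}$ it equals $\mathbf E\exp(\int (e^{\mathrm i f_N}-1)|\Phi_N|^2)=\det(1-(e^{\mathrm if_N}-1)K_N)^{-1}$ as a Fredholm determinant; more precisely, this is Gaussian integration of a quadratic form of a complex Gaussian. After rescaling $x\mapsto N^{-1/3}x$, the kernel $N^{-1}K_N(N^{-1/3}x,N^{-1/3}y)$ is a Riemann sum with mesh $2\pi N^{-1/3}$. Since $\beta N^{2/3}\to\beta_\infty(\kappa)$ and $-\mu^{\mathrm{id}}N^{-2/3}\to m_{\mathrm{eff}}^2>0$, this Riemann sum converges to the covariance of $H_m$. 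Because $m_{\mathrm{eff}}>0$, the summand is uniformly integrable near $p=0$, with no $p=0$ singularity. Convergence holds uniformly on compacts, and convergence of the restricted kernel in trace class on $L^2(\operatorname{supp} f)$ follows from smoothness and positivity. Continuity of Fredholm determinants then gives convergence of the characteristic functionals to that of the Cox process directed by $|H_m|^2$.

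\textbf{Part (a).} I would compute the cumulants of $N_{D}$ under the permanental process. The $k$-th cumulant of $\Theta(\mathbf 1_D)$ equals $(k-1)!\,\Tr[(\mathbf 1_D K_N)^k]$ plus lower-order combinatorial terms, via $\log\det(1-(e^{t}-1)\mathbf 1_D K_N)^{-1}$. The variance is $\Tr[\mathbf 1_D K_N]+\Tr[(\mathbf 1_DK_N)^2]$. The key observation is that $K_N(x,y)$ is concentrated on the scale $N^{-1/3}$ and has mass $\int|K_N(x,y)|^2\d y=\sum_p n_p^2$ with $n_p=(e^{\beta(p^2-\mu^{\mathrm{id}})}-1)^{-1}$. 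Hence $\mathrm{Var}(N_D)\approx |D|\sum_p(n_p+n_p^2)$, up to boundary errors of relative size $N^{-1/3}|\partial D|_{\epsilon}$. These errors vanish because $|\partial D|=0$, via a regularisation argument. The Riemann sum $\sum_p(n_p+n_p^2)=\sum_p\sum_{k\ge1}k z^k e^{-k\beta p^2}\approx(4\pi\beta)^{-3/2}\mathrm{Li}_{1/2}(z_\infty)=\sigma_\beta$. Covariances between $D_i,D_j$ with disjoint closures vanish to leading order by the same localisation. The higher cumulants are $O(N)=o(\sigma_\beta^{k/2})$ since $\sigma_\beta\sim N$. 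That bound is not enough on its own for normal convergence, because $k\ge3$ requires cumulant$/\sigma^{k/2}\to0$: the $k$-th cumulant is $O(|D|\sum_p n_p^{k})\cdot$const$=O(N)$, while $\sigma^{k/2}\sim N^{k/2}$, so the ratio does tend to $0$. This gives the Gaussian limit with covariance $\delta_{ij}|D_j|$.

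The main obstacle is the first step: justifying that the interacting Gibbs state is close enough to the quasi-free one at the level of $N_D$ fluctuations. The trace-norm approximation must hold with the correct chemical potential, so that the mean $N|D|$ is matched to accuracy $o(\sqrt N)$. I would handle this using translation invariance: the means are exactly $N|D|$ for both states after tuning the chemical potential. The trace-norm bound then transfers the distribution convergence directly.
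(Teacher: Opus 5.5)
Your proposal is correct and follows essentially the paper's route. You reduce to the ideal gas via the trace-norm bound \eqref{eq:norm_approx_ideal}. Part (b) then follows from uniform, hence (with positivity and convergence of traces) trace-class, convergence of the rescaled kernel $N^{-1}K_N(N^{-1/3}x,N^{-1/3}y)$ together with continuity of the Fredholm determinant. Part (a) follows from the variance identity $\Tr[\mathbf 1_D K_N]+\Tr[(\mathbf 1_D K_N)^2]$ with a boundary-layer argument that uses $|\partial D|=0$.

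The only cosmetic difference is how you get Gaussianity. You bound all cumulants of order $k\ge 3$ by $O(N)$, using $\|K_N\|\le n_0=O(1)$. The paper packages the same expansion as $\mathrm{det}_3(1-A)=1+o(1)$ via $\|A\|_3\to 0$. Your remarks on vanishing cross-covariances for sets with disjoint closures, and on general test functions $f$ in (b), cover cases the paper writes out only for a single set.
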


As stated in part~(a) of Theorem~\ref{thm:high_temperature}, the fluctuations of $N_D$ are Gaussian for $\kappa \in (0,1)$. Heuristically, this can be understood as a central-limit-type result: we may think of the set $D$ as being partitioned into an increasing number of subsets $D_j$, $j=1,\ldots,m$, with $m \gg 1$, such that the random variables $N_{D_j}$, $j=1,\ldots,m$, are essentially uncorrelated. Concerning part~(b), we remark that $m_{\mathrm{eff}}=0$ whenever $\kappa \geq 1$. This explains the absence of a mass term in part~(c) of Theorem~\ref{thm:localcounting}.

Let us now briefly discuss the case $\kappa = 1$ (critical point). In this case, the result is very sensitive to the relation between two different length scales, which we introduce next. We recall the definition of the chemical potential $\mu^{\mathrm{id}}(\beta,N)$ of the ideal gas in \eqref{eq:id_number_of_part}. It is related to the expected number of particles $N_0^{\mathrm{id}}(\beta,N)$ in the $p=0$ mode through
\begin{equation}
\mu^{\mathrm{id}}(\beta,N)
= - \frac{1}{\beta} \ln\left(1+\frac{1}{N_0^{\mathrm{id}}(\beta,N)}\right).
\end{equation}
For $\kappa>1$ (the condensed phase), we have $-\mu^{\mathrm{id}}(\beta,N) \sim N^{-1/3}$, whereas for $\kappa < 1$, $-\mu^{\mathrm{id}}(\beta,N) \sim N^{2/3}$. At the critical point $\kappa=1$, the intermediate regime
\begin{equation}
N^{-1/3} \ll -\mu^{\mathrm{id}}(\beta,N) \ll N^{2/3}
\end{equation}
occurs. The precise behavior is determined by how $\beta/\beta_{\mathrm{c}}$ approaches $\kappa$. Associated with $\mu^{\mathrm{id}}(\beta,N)$ is the length scale
\begin{equation}
\ell_{\mathrm{cor}}(\beta,N)
= \frac{1}{\sqrt{-\mu^{\mathrm{id}}(\beta,N)}}.
\end{equation}
We emphasize that $\ell_{\mathrm{cor}}(\beta,N) \gg \beta^{1/2} \sim N^{-1/3}$ for $\kappa = 1$.

The second length scale is associated with the set under consideration. More precisely, for a measurable set $D \subset \Lambda$ and a sequence $\delta_N$ satisfying $N^{-1/3} \leq \delta_N \leq 1$, we define $D_N = \delta_N D$.

We have the following two general remarks.
\begin{enumerate}[label=(\alph*)]
\item If $\ell_{\mathrm{cor}}(\beta,N) \sim \delta_N \gg N^{-1/3}$, the system admits an effective-field description with effective mass
\begin{equation}
m_{\mathrm{eff}}
= \lim_{N \to \infty}
\frac{\delta_N}{\ell_{\mathrm{cor}}(\beta,N)}.
\label{eq:meff}
\end{equation}
If the system is probed on length scales $\delta_N \gg \ell_{\mathrm{cor}}(\beta,N)$, the effective mass becomes infinite and the effective field reduces to Gaussian white noise. Consequently, the fluctuations of $N_{D_N}$ are Gaussian in the limit $N \to \infty$ and correlations between disjoint sets vanish. In particular, the same heuristic as in part~(a) of Theorem~\ref{thm:high_temperature} applies. On the other hand, if $\delta_N \ll \ell_{\mathrm{cor}}(\beta,N)$, the limiting field is massless.
\item Assume that $\delta_N \sim N^{-1/3}$. Since $\ell_{\mathrm{cor}}(\beta,N) \gg N^{-1/3}$ for $\kappa = 1$, the limiting point process is, by the same heuristic as in part~(b), always a boson point process whose intensity measure is defined in terms of a massless random field. Since the condensate fraction satisfies $g(1) = 0$, no separate contribution from the $p=0$ mode appears.
\end{enumerate}
Using our techniques, one can treat all of the above parameter regimes and compute the corresponding limiting distributions. For simplicity, we restrict our attention to the cases covered by the theorems above, which we believe to be the most interesting ones.
\subsection{Organization of the article} The article is organized as follows. In Section \ref{sec:effective} we discuss the effective theories that will play a central role throughout the work. Section \ref{sec:ideal} is devoted to the discussion of the  ideal Bose gas. In particular, Proposition \ref{prop:limits} provides a precise description of the relations between the temperature, chemical potential and condensate density in the non-interacting system. Section \ref{sec:effectiveBogo}  discusses Bogoliubov theory at positive temperatures. There, we introduce a~slightly modified approximate Gibbs state, compared with the one considered in \cite{DeuNamNap-25}. Several estimates needed to establish the rate at which this modified state approximates the Gibbs state of the interacting system are collected in Section \ref{app:cutoff}. In Section \ref{sec:gwasserstein}, we study properties of the condensate-mode random variable $Z$ introduced in Theorem \ref{thm:semiclassicalApproximation}. 

Section \ref{sec:semiclassicalApproximation} is devoted to the proof of Theorem \ref{thm:semiclassicalApproximation}. We first determine the semiclassical symbols of the approximate Gibbs state. This leads to the construction of the random field $\Phi_N (x)$ and the proof of part (a) of Theorem \ref{thm:semiclassicalApproximation}. We then construct the Bogoliubov field $\Psi (x)$ and establish its regularity properties which are used in the proof of Theorem \ref{thm:semiclassicalApproximation} (b). We conclude the section with a proof of Theorem \ref{thm:semiclassicalApproximation} (c) which relies on the previous statements and some properties of the point processes associated to the Gibbs state discussed in Appendix \ref{app:GibbsPointProcess}.

Section \ref{sec:thermaldistr} is devoted to the proof of Theorem \ref{thm:thermaldistr} and is rather technical. In order to get the quantitative estimates in total variation and Wasserstein metrics, we use both the field theoretic representations derived in the preceding section and the correlation inequalities established in \cite{DeuNamNap-25} for the full many-body problem. The same applies to the proof of Theorem \ref{thm:localcounting} given in Section \ref{sec:prooflocalcounting}. The proof of Theorem~\ref{thm:high_temperature}, presented in Section \ref{sec:noncond}, follows a somewhat different route. There, convergence in distribution is established by analyzing the asymptotic behavior of regularized Fredholm determinants. Basic properties of these determinants are presented in  Appendix \ref{app:fredholm}. On the way we use some technical lemmas presented in Appendix \ref{app:lemmas}. 

\section{Effective theories}\label{sec:effective}
\subsection{Ideal gas} 
\label{sec:ideal}
In this section we consider non-interacting bosons in the grand canonical ensemble, with inverse temperature $\beta >0$ and chemical potential $\mu^\id<0$. The expected number of particles is given by \eqref{eq:id_number_of_part}. This equation can be solved for $\mu^\id$ as a function of $\beta$ and~$N$. To describe this relation, it is convenient to use the polylogarithm function:
\begin{equation}
\mathrm{Li}_s(z) = \sum_{m=1}^\infty m^{-s} z^m.
\label{eq:polylog_def}
\end{equation}
We will also consider the condensate fraction, defined as the ratio of the expected number of particles with momentum zero and the total expected number of particles:
\begin{equation}
    g^{\id} = \frac{1}{N} \frac{1}{e^{- \beta \mu^\id}-1}.
    \label{eq:gid_def}
\end{equation}

\begin{prop} \label{prop:limits}
Consider the limit $N \to \infty$ with $N$-dependent $\beta$ as in \eqref{eq:beta_asymp}, and $\mu^\id$ determined by the relation \eqref{eq:id_number_of_part}. Let $g^\id$ be defined as in \eqref{eq:gid_def}.
\begin{enumerate}
    \item If $\kappa >1$, then $ g^\id \to 1 - \kappa^{-\frac32} $ and $\lim_{N \to \infty} N^{\frac13} \mu^\id$ exists in $(-\infty,0)$.    
    \item If $\kappa < 1$, then $e^{\beta \mu^\id}  \to \mathrm{Li}_{\frac{3}{2}}^{-1}(\kappa^{\frac32} \zeta(\tfrac32))$, where $\mathrm{Li}_{\frac32}^{-1} : [0, \zeta(\tfrac32) ] \to [0,1]$ is the the inverse of $\mathrm{Li}_{\frac32} $ defined in \eqref{eq:polylog_def}. The limits $\lim_{N \to \infty} N g^\id$ and $-\lim_{N \to \infty} N^{-\frac23} \mu^\id$ exist in $(0,\infty)$. 
    
    \item If $\kappa =1$, then
    \begin{equation}
        \lim_{N\to \infty} N g^\id = \infty, \qquad  \lim_{N\to \infty}g^\id = 0, \qquad \lim_{N\to \infty} N^{-\frac23} \mu^\id = 0, \qquad  \lim_{N\to \infty} N^{\frac13} \mu^\id =-\infty. 
    \end{equation}
    If $\mu_* : (0,\infty) \to (-\infty,0)$ is a function such that $ N^{-\frac23} \mu_*(N) \to 0$ and $N^{\frac13} \mu_*(N) \to -\infty$, then for some $\beta(N)$ satisfying \eqref{eq:beta_asymp} with $\kappa=1$ we have $\mu^\id = \mu_*(N)$ for sufficiently large $N$.
    \item There exists $c>0$ such that for all $\kappa \geq 1$ we have
\begin{align}
    1 - g^\id - \left( \tfrac{\beta_c}{\beta} \right)^{\frac32} = O(N^{-\frac13} e^{- c \sqrt{- \mu^{\id}}}) +  O((N g^\id)^{-\frac12} ),  \qquad 
    Ng^\id =(-\beta \mu^\id)^{-1} (1+o(1)). \label{eq:particle_partition}
\end{align}
In particular, $\mu^{\mathrm{id}}$ has a limit in $(-\infty,0)$ if and only if $N^{\frac{1}{3}} g^{\id}$ has a limit in $(0,\infty)$. 
    \item For every $\kappa \in (0, \infty)$ we have $\liminf_{N \to \infty} e^{\beta \mu^\id} >0$. Equivalently, $\beta \mu^\id$ is bounded. 
\end{enumerate}
\end{prop}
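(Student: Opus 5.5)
We will analyze the equation $N = \sum_{p\in\Lambda^*} (e^{\beta(p^2-\mu^\id)}-1)^{-1}$ by splitting off the $p=0$ term, which equals $N g^\id$, and comparing the remaining lattice sum with its integral. Writing $z = e^{\beta\mu^\id}\in(0,1)$ and expanding each term in a geometric series, the excited part is
\[
S(\beta,z) = \sum_{m\ge1} z^m \Big(\sum_{k\in\mathbb Z}e^{-4\pi^2 \beta m k^2}\Big)^3 - \frac{z}{1-z}.
\]
We then apply Poisson summation to the theta function,
\[
\sum_k e^{-4\pi^2\beta m k^2} = (4\pi\beta m)^{-1/2}\sum_j e^{-j^2/(4\beta m)} .
\]
The $j=0$ terms reproduce $(4\pi\beta)^{-3/2}\mathrm{Li}_{3/2}(z)$. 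Since $(4\pi\beta_c)^{-3/2} = N/\zeta(3/2)$, this equals $N\kappa^{-3/2}\mathrm{Li}_{3/2}(z)/\zeta(3/2)\,(1+o(1))$. The remaining terms with $j\neq 0$ form a correction that we must bound. For $m\lesssim\beta^{-1}$ they are exponentially small. For larger $m$, where $z^m$ must provide the decay, they are controlled by a Riemann-sum comparison of $\sum_{p\neq0}(e^{\beta(p^2-\mu)}-1)^{-1}$ with $(2\pi)^{-3}\int(e^{\beta(p^2-\mu)}-1)^{-1}\d p$. The cleanest way is to bound the difference by $\beta^{-1}\sum_{p\ne0}(p^2-\mu)^{-1}-\beta^{-1}\int(\ldots)$ plus lower-order terms. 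This gives an error $O(\beta^{-1}) = O(N^{2/3})$, improved to $O(\beta^{-1}e^{-c\sqrt{-\mu}})$ for the oscillating part when $-\mu$ is large. Establishing this uniform error estimate, namely
\[
\frac{S(\beta,z)}{N} = \kappa_N^{-3/2}\frac{\mathrm{Li}_{3/2}(z)}{\zeta(3/2)} + O(N^{-1/3}) \qquad \text{uniformly in } z\in(0,1),
\]
with the sharper exponential form needed in (4), is the main technical step, and it is where we expect the obstacle to lie. The difficulty is the region $z\to1$, where the integrand is singular. We would handle it by subtracting the $1/(\beta(p^2-\mu))$ singularity explicitly and using that $\sum_{p\neq0}p^{-2}-\int_{|p|>\cdots}$ converges in three dimensions.

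Given this estimate, the number equation becomes
\[
1 = g^\id + \kappa_N^{-3/2}\frac{\mathrm{Li}_{3/2}(z)}{\zeta(3/2)} + O(N^{-1/3}), \qquad N g^\id = \frac{z}{1-z}.
\]
This yields (4): expanding $\mathrm{Li}_{3/2}(z) = \zeta(3/2) - 2\sqrt{\pi}\sqrt{-\log z} + O(-\log z)$ near $z=1$, the deviation is $O(\sqrt{-\beta\mu})$, which is $O((Ng^\id)^{-1/2})$ because $Ng^\id = (e^{-\beta\mu}-1)^{-1} = (-\beta\mu)^{-1}(1+o(1))$ when $\beta\mu\to0$.

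For (5), suppose $z\to0$ along a subsequence. Then the left side tends to $0$ while the right side is $1$, a contradiction; hence $\beta\mu^\id$ is bounded.

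For (2), with $\kappa<1$: if $z\to1$ along a subsequence, the right side is at least $\kappa^{-3/2}>1$ up to $o(1)$, which is impossible. So $z$ stays away from $1$, $g^\id = O(1/N)\to0$, and therefore $\mathrm{Li}_{3/2}(z)\to\kappa^{3/2}\zeta(3/2)$. Continuity and monotonicity of $\mathrm{Li}_{3/2}^{-1}$ give the limit of $z$, hence $Ng^\id\to z_\infty/(1-z_\infty)$ and $\beta\mu\to\log z_\infty$. Since $\beta\sim N^{-2/3}$, the limit of $N^{-2/3}\mu$ exists in $(-\infty,0)$.

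For (1), with $\kappa>1$: $g^\id$ must be positive in the limit, so $z\to1$. Then (4) gives $g^\id\to1-\kappa^{-3/2}$, and $N^{1/3}\mu = N^{1/3}\beta^{-1}(-(Ng^\id)^{-1})(1+o(1))$ converges to $-[\lim N^{-2/3}\beta^{-1}]/(1-\kappa^{-3/2})\in(-\infty,0)$. Here we use that $\beta N^{2/3}$ has a limit; if only the ratio $\beta/\beta_c$ converges, this is still fine since $\beta_c N^{2/3}$ is constant.

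For (3), with $\kappa=1$: the same argument gives $g^\id\to0$, and $z\to1$ since otherwise $\mathrm{Li}_{3/2}(z)<\zeta(3/2)$ strictly. Hence $Ng^\id\to\infty$, $N^{-2/3}\mu\to0$, and $N^{1/3}\mu = -(\beta N^{2/3}\cdot N^{1/3}g^\id)^{-1}(1+o(1))\to-\infty$ because $g^\id\to0$.

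For the realizability claim in (3), we fix $\mu_*(N)$ and define $\beta$ by solving the number equation for $\beta$ at $\mu=\mu_*$. The right side is continuous and strictly decreasing in $\beta$, ranging over $(0,\infty)$, so a unique solution exists. Inserting $\mu_*$ into the expansion, the $g^\id$ term is $(N\beta|\mu_*|)^{-1}\to0$ and $\mathrm{Li}_{3/2}(e^{\beta\mu_*})\to\zeta(3/2)$ as $\beta\mu_*\to0$ (with $\beta\sim N^{-2/3}$ checked a priori), which forces $\beta/\beta_c\to1$.

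Finally, the last equivalence in (4) is immediate from $Ng^\id\approx(-\beta\mu)^{-1}$ together with $\beta\sim N^{-2/3}$.
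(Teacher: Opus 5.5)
Your overall structure is the paper's: you reduce the number equation to $1=g^{\id}+\kappa_N^{-3/2}\mathrm{Li}_{3/2}(z)/\zeta(\tfrac32)+O(N^{-1/3})$ uniformly in $z$ (the paper takes this from Lemma~\ref{lem:Nid_func}), argue along subsequences for (1), (2), (3), (5), and use the behaviour of $\mathrm{Li}_{3/2}$ near $1$ for (4). Those case arguments are sound; in (3) you mean $N^{1/3}\mu^{\id}=-(\beta N^{2/3}g^{\id})^{-1}(1+o(1))$.

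The gap is the step you flag yourself. The uniform estimate is never established, and your sketch misplaces where the error comes from:
\begin{itemize}
\item For $m\gg\beta^{-1}$ the $j\neq0$ Poisson terms are not a small correction. They rebuild $\vartheta(0;4\pi\mathrm{i}m\beta)^3\approx1$ and must cancel against the subtracted $z/(1-z)$.
\item The decay in $m$ there comes from the spectral gap, $\vartheta-1\lesssim e^{-4\pi^2m\beta}$, not from $z^m$, which gives nothing uniformly as $z\to1$.
\item Your proposed remedy, comparing with $\beta^{-1}\sum_{p\neq0}(p^2-\mu)^{-1}-\beta^{-1}\int$, subtracts two divergent quantities. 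Making it meaningful needs a common cutoff and lattice-point counting. The remainder $(e^x-1)^{-1}-x^{-1}\sim-x^{-1}$ at large $p$ is not lower order either.
\item For $m\le\beta^{-1}$ the $j\neq0$ terms are not exponentially small in aggregate; they contribute order $\beta^{-1}$.
\end{itemize}

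The fix is to Poisson-resum only where it helps, as in Lemmas~\ref{lem:gaussian} and~\ref{lem:Nid_func}. Set $M=\lfloor(4\pi\beta)^{-1}\rfloor$ and write $S=\sum_{m\ge1}z^m(\vartheta(0;4\pi\mathrm{i}m\beta)^3-1)$.
\begin{itemize}
\item For $m\le M$, use $\vartheta^3=(4\pi m\beta)^{-3/2}(1+O(e^{-1/(4m\beta)}))$. The correction is a bounded function of $m\beta$, so it sums to $O(M)=O(\beta^{-1})$; so does $\sum_{m\le M}z^m$.
\item For $m>M$, use $0\le\vartheta^3-1\lesssim e^{-4\pi^2m\beta}$, which sums to $O(\beta^{-1})$.
\item Completing $\sum_{m\le M}z^m m^{-3/2}$ to $\mathrm{Li}_{3/2}(z)$ costs $(4\pi\beta)^{-3/2}\sum_{m>M}m^{-3/2}=O(\beta^{-1})$.
\end{itemize}
Hence $S=(4\pi\beta)^{-3/2}\mathrm{Li}_{3/2}(z)+O(\beta^{-1})$ uniformly in $z\in(0,1)$. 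A midpoint Riemann-sum comparison also works: $|\nabla_p^2(e^{\beta(p^2-\mu)}-1)^{-1}|\lesssim\beta^{-1}|p|^{-4}$ uniformly in $\mu<0$, which is summable over $p\neq0$, and the central cell contributes $O(\beta^{-1})$.

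Three smaller points.
\begin{itemize}
\item No exponential sharpening is needed for (4). For $-\mu^{\id}<1$ the $O(N^{-1/3})$ is already $O(N^{-1/3}e^{-c\sqrt{-\mu^{\id}}})$. For $-\mu^{\id}\ge1$ it is dominated by $(Ng^{\id})^{-1/2}\ge(-\beta\mu^{\id})^{1/2}\gtrsim N^{-1/3}$.
\item Near $z=1$ you do not need the full expansion of $\mathrm{Li}_{3/2}$; the elementary bound $\zeta(\tfrac32)-\mathrm{Li}_{3/2}(z)\lesssim\sqrt{1-z}$ of Lemma~\ref{lem:Li_Holder} suffices.
\item In the realizability part of (3), your uniqueness-by-monotonicity argument is fine, but the a priori bound $\beta\asymp N^{-2/3}$ must actually be proved. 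If $\beta N^{2/3}\to\infty$ along a subsequence, the excited sum is $o(N)$, so $N\beta|\mu_*|\to1$; this contradicts $N\beta|\mu_*|=(\beta N^{2/3})(N^{1/3}|\mu_*|)\to\infty$. If $\beta N^{2/3}\to0$, then $\beta\mu_*\to0$ and the excited sum is $\gtrsim\beta^{-3/2}\gg N$, again a contradiction. The paper avoids this step by locating a solution in $[c^{-1}\beta_c,c\beta_c]$ via (1), (2) and the intermediate value theorem.
\end{itemize}
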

\begin{proof}
A simple manipulation of \eqref{eq:n_function} in Appendix \ref{app:lemmas} gives
\begin{equation}
    g^\id= 1 - \left( \frac{\beta_c}{\beta} \right)^{\frac32} \frac{\mathrm{Li}_{\frac32}(e^{\beta \mu^\id})}{\zeta(\tfrac32)} + O(N^{-\frac13} e^{- c \sqrt{- \mu^{\id}}}) =  1 - \kappa^{-\frac32} \frac{\mathrm{Li}_{\frac32}(e^{\beta \mu^\id})}{\zeta(\tfrac32)} + o(1).
    \label{eq:gbetamu}
\end{equation}

(1) Let $\kappa>1$. Since $\mathrm{Li}_{\frac32}(z) \leq \zeta(\tfrac32)$ for $z \in [0,1]$, we obtain $    \liminf_{N \to \infty} g^\id \geq 1 - \kappa^{-\frac32} > 0$.
By~inspection of \eqref{eq:gid_def}, this implies that $e^{\beta \mu^{\id}} \to 1$. Considering \eqref{eq:gbetamu} again, we find that $g^\id \to 1 - \kappa^{-\frac32}$. The asymptotic behavior of $ \mu^\id$ follows easily.

(2) Suppose that $\kappa<1$. The non-negativity of the right hand side of \eqref{eq:gbetamu} for large $N$ implies that
\begin{equation}
    \limsup_{N \to \infty} \mathrm{Li}_{\frac32}(e^{\beta\mu^\id}) \leq \kappa^{\frac32} \zeta(\tfrac32) < \zeta(\tfrac32).
\end{equation}
By the strict monotonicity of $\mathrm{Li}_{\frac32}$, this implies $\limsup_{N \to \infty} e^{\beta\mu^\id} < 1$, and hence $g^\id \to 0$. Considering \eqref{eq:gbetamu} again we find $  \mathrm{Li}_{\frac32}(e^{\beta\mu^\id}) \to \kappa^{\frac32} \zeta(\tfrac32)$. Now note that
$\mathrm{Li}_{\frac32} : [0,1] \to [0, \zeta(\tfrac32)]$ is a homeomorphism.

(3) Let $\kappa =1$. If along some subsequence $e^{\beta \mu^\id} \to z \in [0,1)$, then \eqref{eq:gbetamu} gives
\begin{equation}
    g^\id \to 1 - \frac{\mathrm{Li}_{\frac{3}{2}}(z)}{\zeta(\tfrac32)} >0.
\end{equation}
This implies that $e^{\beta \mu^\id} \to 1$, in contradiction with the choice of the subsequence. Hence $e^{\beta \mu^\id} \to 1$, which implies that $N g^\id \to \infty$ and (by \eqref{eq:gbetamu}) that $g^\id \to 0$. 

Now fix the function $\mu_*(N)$ and choose $c >1$. By the preceding, for sufficiently large $N$ we have:
\begin{align}
    \mu^\id & > \mu_*(N) \qquad \text{for } \beta = c \beta_c, \\
    \mu^\id & < \mu_*(N) \qquad \text{for } \beta = c^{-1} \beta_c. \nonumber
\end{align}
By continuity, there exists an intermediate value $\beta \in [c^{-1} \beta_c, c \beta_c]$ for which $\mu^\id = \mu_*(N)$. We claim that for $\beta$ chosen this way we have $\frac{\beta}{\beta_c} \to 1$. Since $\frac{\beta}{\beta_c}$ is valued in the compact interval $[c^{-1},c]$, it is enough to show that $\frac{\beta}{\beta_c}$ does not have subsequences which converge to a limit different than $1$. By the previous points, along every such subsequence we would have either $N^{-\frac23} \mu_*(N) \to c'$ for some $c' \neq 0$ or $N^{\frac13} \mu_*(N) \to c'$ for some $c' \neq 0$. This is impossible by the choice of $\mu_*(N)$. 

(4) We have seen that for all $\kappa \geq 1$ we have $e^{\beta \mu^\id} \to 1$. Hence, by Lemma \ref{lem:Li_Holder} in Appendix \ref{app:lemmas},
\begin{equation}
    \mathrm{Li}_{\frac32} (e^{\beta \mu^\id}) = \zeta(\tfrac32) - O (\sqrt{1-e^{\beta \mu^\id}}) = \zeta(\tfrac32) - O((Ng^\id)^{-\frac12}). 
\end{equation}
Combining this with \eqref{eq:gbetamu} we find \eqref{eq:particle_partition}.

(5) An immediate consequence of the previous points. 
\end{proof}

As we will see later, the quantitative behavior of many observables in the limit $N \to \infty$ depends on
\begin{equation}
  m^2 = - \lim_{N \to \infty} \mu^{\id}.
\label{eq:mu_infty_def}
\end{equation}
We have $m^2 = 0$ if $\kappa>1$ and $m^2=\infty$ if $\kappa <1$. Any value of $m^2$ in $[0,\infty]$ is possible in the critical case $\kappa=1$. We will always choose $\beta(N)$ in such a way that the limit \eqref{eq:mu_infty_def} exists in $[0,\infty]$. 

The quantity $m^2$ defined for the ideal gas turns out to be a useful parameter in the analysis of the interacting system with the same $\beta, N$. We emphasize that the corresponding chemical potentials $\mu$ and $\mu^\id$ are different, and we are not claiming that $-\mu \to m^2$ as $N \to \infty$. 

Note that, by Proposition \ref{prop:limits}.(4), the convergence $\mu^{\id}\to -\infty$ implies that $N g^{\id} \ll N^{2/3}$. As proven in \cite[Theorem~1]{DeuNamNap-25}, in this parameter regime the Gibbs state of the interacting system is well approximated by the one of the ideal gas. More precisely, we have 
\begin{equation}\label{eq:norm_approx_ideal}
 \|G_{\beta,N}-G^{\id}\|_1 \lesssim N^{-1/6}, \quad \text{ where } \quad G^{\id}=\frac{e^{- \beta (\mathcal H^{\id}_N - \mu^{\id} \mathcal N)}}{\mathrm{Tr}[ e^{- \beta (\mathcal H^{\id}_N - \mu^{\id} \mathcal N) } ]}
\end{equation}
is the Gibbs state of the ideal gas (here $\mathcal{H}^{\id}_N$ denotes the Hamiltonian of the ideal gas, i.e.\ the operator $\mathcal{H}_N$ as in \eqref{eq:Hamiltonian}, with $v\equiv 0$). Let us note that the exponent $-1/6$ in \eqref{eq:norm_approx_ideal} does not appear directly in the original formulation in \cite[Theorem~1]{DeuNamNap-25}, as the bound given there is stated uniformly across all parameter regimes; cf.\ \eqref{eq:norm-approximation-intro}.

\subsection{Bogoliubov approximation} \label{sec:effectiveBogo}
As mentioned in the introduction, the fraction of particles occupying the zero-momentum mode depends on the temperature. To obtain a good approximation to the Gibbs state, we account for this effect by using a temperature-dependent Bogoliubov Hamiltonian and an associated reference state. We will now introduce this construction.

By the exponential property of Fock spaces, we have
$$\mathscr F(L^2(\Lambda))\cong \mathscr F_0 \otimes \mathscr F_+$$
where $\mathscr F_0 = \mathscr F(\text{span}\{ 1 \})$ is the Fock space of the condensate, and $\mathscr F_+=\mathscr F( L_+^2(\Lambda) ) $ is the Fock space of excitations (cf. \eqref{eq:L2plus}). 
In the space $\mathscr F_0$, for $z\in \mathbb{C}$,  we introduce the coherent states 
\begin{equation}
	|z\rangle= \exp( z a_0^* - \overline{z} \a_0 )\Omega_0,  
	\label{eq:coherentstate}
\end{equation}
 with the vacuum vector $\Omega_0 \in \mathscr F_0$. Heuristically speaking, the coherent state $|z\rangle$ describes a BEC in the constant function $z/|z| \in L^2(\Lambda)$, with an expected number of $|z|^2$ particles. It is a normalized eigenvector of the annihilation operator $a_0$ to the eigenvalue $z$, i.e. 
\begin{equation}
    \a_0 |z\rangle = z |z\rangle.
    \label{eq:coherentstateEigenfunction}
\end{equation}
In particular, $\langle z| a_0^* \a_0 z \rangle=|z|^2$ and $\langle z| \a_0 \a_0 z \rangle=z^2$. This allows to implement the Bogoliubov c-number substitution by modeling the zero momentum particles with a coherent state. Doing this and truncating  higher order terms in the resulting Hamiltonian leads to the description of thermally excited particles by the Bogoliubov Hamiltonian depending on the temperature and $z$:
\begin{equation}
  \sum_{p \in \Lambda^*_+} (p^2-\mu^\id) a_p^* \a_p + \frac{|z|^2}{2N} \sum_{p \in \Lambda_+^*} \hat{v}(p) \left( 2 a_p^* \a_p  + \frac{z^2}{|z|^2} a_p^* a_{-p}^* + \frac{\overline{z}^2}{|z|^2} \a_p \a_{-p} \right) .
    \label{BogoliubovHamiltonianAndi}
\end{equation}
At zero temperature one would then set $|z|^2\approx N$. This is not the case at positive temperatures. Here, the condensate is described by a one-mode $\Phi^4$-theory with the Gibbs distribution
\begin{equation}\label{eq:GibbsDistributionDiscrete}
	g^{\mathrm{BEC}}(z) = \frac{\exp\left( -\beta\left( \frac{\hat{v}(0)}{2N} |z|^4 - \mu^{\mathrm{BEC}} |z|^2 \right) \right)}{\int_{\mathbb{C}} \exp\left( -\beta\left( \frac{\hat{v}(0)}{2N} |w|^4 - \mu^{\mathrm{BEC}} |w|^2 \right) \right) \d w} ,
\end{equation}
where  $\d w = \d x \d y/\pi$ with $w=x+iy$, and  $\mu^{\mathrm{BEC}} \in \mathbb{R}$ is an appropriately chosen chemical potential of the condensate. The function $g^{\mathrm{BEC}}$ appeared in a similar context in \cite{BocDeuSto-24} (see also \cite{CapDeu-23}), and is related to the classical field theory in \cite{LewNamRou-21,FroKnoSchSoh-22}. The induced probability distribution of $|z|^2$ is strongly peaked around its mean, which can be approximated to leading order by $N g^\id$.  Therefore, we replace $\frac{|z|^2}{N}$ in \eqref{BogoliubovHamiltonianAndi} with $g^{\id}$. This yields the Hamiltonian:
\begin{equation}
 \mathcal H^{\mathrm{Bog}}(z)= \sum_{p \in \Lambda^*_+} (p^2-\mu^\id) a_p^* \a_p + \frac{g^{\mathrm{id}}}{2} \sum_{p \in \Lambda_+^*} \hat{v}(p) \left( 2 a_p^* \a_p  + \frac{z^2}{|z|^2} a_p^* a_{-p}^* + \frac{\overline{z}^2}{|z|^2} \a_p \a_{-p} \right),
    \label{BogoliubovHamiltonianNamed}
\end{equation}
which depends on $\frac{z}{|z|}$, but not on $|z|$. Hamiltonians with different $z$ are unitarily equivalent:
\begin{equation} \label{eq:z=1notation}
     \mathcal H^{\mathrm{Bog}}(z) = \left( \frac{z}{|z|} \right)^{\mathcal N_+} \mathcal H^{\mathrm{Bog}} \left( \frac{z}{|z|} \right)^{-\mathcal N_+}, \qquad \text{where} \qquad \mathcal H^{\mathrm{Bog}} \coloneq \mathcal H^{\mathrm{Bog}}(1).
\end{equation}

The Gibbs state of the Bogoliubov Hamiltonian in \eqref{BogoliubovHamiltonianNamed} will be denoted by 
\begin{equation}
	G^{\mathrm{Bog}}(z) = \frac{\exp\left( -\beta \mathcal{H}^{\mathrm{Bog}}(z)  \right)}{\mathrm{Tr}_{\mathscr F_+} \left[ \exp\left( -\beta \mathcal{H}^{\mathrm{Bog}}(z) \right) \right]}, \qquad G^{\mathrm{Bog}} = G^{\mathrm{Bog}}(1). \label{eq:BogoliubovGibbsState}
\end{equation}
  
The main result of \cite[Theorem~1]{DeuNamNap-25} states that for $Ng^{\id}\geq CN^{2/3}$, the Gibbs state $G_{\beta,N}$ in \eqref{eq:GibbsState} can be approximated in trace norm
\begin{equation}
\| G_{\beta,N} - \Gamma_{\beta,N} \|_1  \lesssim N^{-1/48},
\label{eq:norm-approximation-intro}
\end{equation}
with the reference state $\Gamma_{\beta,N}$ defined as
\begin{equation}
    \Gamma_{\beta,N} = \int_{\mathbb{C}} | z \rangle \langle z | \otimes G^{\mathrm{Bog}}(z) g^{\mathrm{BEC}}(z) \d z .
    \label{eq:referenceState-intro1}
\end{equation}

For technical reasons, we will need a modified reference state, obtained by replacing the interaction potential $\widehat{v}(p)$ with a high-momentum cutoff version $\widehat{u}(p)$:
\begin{equation} \label{def:cutoffpotential}
    \widehat u(p) = \begin{cases}
        \widehat v (p), & |p| \leq p_c, \\
        0, & |p| > p_c.
    \end{cases}
\end{equation}
Here, $p_c$ is an $N$-dependent parameter satisfying $ p_c \to \infty$ as $N \to \infty$ and $p_c = O(N^{\frac13})$; for instance, one can choose $p_c = N^{\frac13}$. 

We introduce the modified Bogoliubov Hamiltonian
\begin{equation}
    \widetilde {\mathcal H}^\Bog(z) = \frac12 \sum_{p \in \Lambda^*} \left[ (p^2 - \mu^\id + g^\id \widehat u(p)) (a_p^* \a_p + a_{-p}^* \a_{-p}) + g^{\mathrm{id}} \widehat{u}(p) \left( \frac{z^2}{|z|^2} a_p^* a_{-p}^* + \frac{\overline{z}^2}{|z|^2} \a_p \a_{-p} \right) \right],
    \label{eq:HBog}
\end{equation}
and the corresponding Gibbs state:
\begin{equation} \label{def:cutoffGibbs}
    \widetilde G^\Bog(z) = \frac{\exp( - \beta \widetilde{\mathcal H}^\Bog(z))}{\Tr_{\mathscr F_+}[\exp( - \beta \widetilde{\mathcal H}^\Bog(z))]}.
\end{equation}

As  is well known, the Bogoliubov Hamiltonian $\widetilde {\mathcal H}^\Bog=\widetilde {\mathcal H}^\Bog(1)$ can be diagonalized. To this end,  for every $p \neq 0$ we choose a~hyperbolic angle $\chi_p = \chi_{-p} \in \mathbb R$ and let 
$$c_p = \cosh(\chi_p), \,\,\,\, s_p = \sinh(\chi_p).$$
We introduce the operators
\begin{equation}
    b_p^* = c_p a_p^* + s_p \a_{-p}, \qquad \b_p = c_p \a_p + s_p a_{-p}^*,
    \label{eq:Bogoliubov}
\end{equation}
which satisfy the canonical commutation relations because $c_p^2-s_p^2=1$. We have the identity
\begin{equation}
    b_p^* \b_p + b_{-p}^* \b_{-p} = 2 s_p^2 + (c_p^2 + s_p^2) \left( a_p^* \a_p + a_{-p}^* \a_{-p} + \frac{2c_p s_p}{c_p^2 + s_p^2} (a_p^* a_{-p}^* + \a_p \a_{-p}) \right).
    \label{eq:bstarb_identity}
\end{equation}
A comparison of \eqref{BogoliubovHamiltonianNamed} with \eqref{eq:bstarb_identity} reveals that it is useful to choose the value of $\chi_p$ determined by 
\begin{equation}
\tanh(2 \chi_p) =  \frac{2c_p s_p}{c_p^2 + s_p^2} = \frac{g^\id \widehat u(p)}{p^2 - \mu^\id + g^\id \widehat u(p)}.   
\end{equation}
Using identities satisfied by hyperbolic function we work out that:
\begin{align}
    c_p^2 + s_p^2 &= \frac{p^2 -\mu^\id+ g^\id \widehat u(p)}{\epsilon(p)}, &
    2c_p s_p &= \frac{g^\id \widehat u(p)}{\epsilon(p)}, \label{eq:relationsCoefficients} \\
    c_p &= \sqrt{\frac{p^2-\mu^\id + g^\id \widehat u(p) + \epsilon(p)}{2 \epsilon(p)}}, &
    s_p &= \sqrt{\frac{p^2-\mu^\id + g^\id \widehat u(p) - \epsilon(p)}{2 \epsilon(p)}}, \nonumber
\end{align}
where
\begin{equation}
    \epsilon(p)  = (p^2 - \mu^\id)^{\frac12} \left( p^2 - \mu^\id + 2g^\id \widehat u(p) \right)^{\frac12}.
\end{equation}
\begin{remark}\label{rem:gammalpahtilde}
The same computations carried out for \eqref{BogoliubovHamiltonianNamed} lead to analogous formulas with $\widehat u(p)$ replaced by $\widehat v(p)$. For reference in calculations comparing states with and without cutoff $p_c$, we introduce
\begin{equation}    
\begin{aligned}        \label{eq:defvp}
        \nu_{p}& =  \frac{1}{2} \left( \frac{p^2-\mu^{\id}}{p^2-\mu^{\id} + 2 g^\id \widehat v(p)} \right)^{-1/4}-\frac{1}{2} \left( \frac{p^2-\mu^{\id}}{p^2-\mu^{\id} + 2 g^\id \widehat v(p)} \right)^{1/4},\\
        \omega(p)&=(p^2 - \mu^\id)^{\frac12} \left( p^2 - \mu^\id + 2g^\id \widehat v(p) \right)^{\frac12}
\end{aligned}
\end{equation}
which is the rewriting of $s_p$ and $\epsilon(p)$  with $\widehat u(p)$ replaced by $\widehat v(p)$.
\end{remark}
There exists \cite{NamNapSol-16} a~unitary $U$ on $\mathscr F_+$ such that 
\begin{equation}
 U^* \a_p U = \b_p   .
 \label{eq:ab_unitary}
\end{equation}

We can now express $\widetilde{\mathcal H}^{\mathrm{Bog}}$ in the form
\begin{equation} \label{eq:BogHamdiag}
    \widetilde{\mathcal H}^{\mathrm{Bog}} = E_0 + \sum_{p \in \Lambda_+^*} \epsilon(p) b_p^* \b_p = U \left( E_0 + \sum_{p \in \Lambda_+^*} \epsilon(p) a_p^* \a_p \right) U^*,
\end{equation}
where 
\begin{equation}
    E_0 = - \frac12 \sum_{p \in \Lambda_+^*} (p^2 - \mu^\id + g^\id \widehat u(p) - \epsilon(p)). 
\end{equation}

The state $\widetilde G^{\mathrm{Bog}}$, defined as in \eqref{eq:BogoliubovGibbsState} but with $\mathcal H^\Bog$ replaced with $\widetilde{ \mathcal H}^\Bog$, satisfies Wick's rule and is therefore fully characterized by its covariance:
\begin{equation}
\label{def:gammabogalphabog}
\begin{aligned}
\widetilde{\gamma}^{\rm{Bog}}(p)&:= \mathrm{Tr}[ \widetilde G^\Bog \, a_p^* \a_p ]= s_p^2+\frac{c_p^2+s_p^2}{e^{\beta \epsilon(p)}-1},\\
\widetilde{\alpha}^{\rm{Bog}}(p)&:=\mathrm{Tr} [\widetilde G^\Bog \a_p \a_{-p} ] = -2 c_p s_p \left( \frac{1}{2} + \frac{1}{e^{\beta \epsilon(p)}-1} \right).
\end{aligned}
\end{equation}
 
 We are ready to define the reference state with a cutoff:
\begin{equation} \label{def:cutoffapproxGibbs}
    \widetilde{\Gamma}_{\beta,N} = \int_{\mathbb{C}} | z \rangle \langle  z | \otimes \widetilde G^{\mathrm{Bog}}(z) \widetilde g^{\mathrm{BEC}}(z) \d z .
\end{equation}
The function $\widetilde g^\BEC(z)$ is the same as $g^\BEC(z)$ in \eqref{eq:GibbsDistributionDiscrete}, with $\mu^\BEC$ replaced with $\widetilde \mu^\BEC$ chosen so that the expectation of $\mathcal N$ in $\widetilde{\Gamma}_{\beta,N}$ equals $N$. Let us note that $\widetilde \mu^\BEC$ depends on $p_c$. Such a state had previously been used in \cite{BocDeuSto-24,CapDeu-23} to construct a trial state for the dilute Bose gas in the Gross--Pitaevskii limit. 

\subsection{Approximate Gibbs state with a cutoff} \label{app:cutoff}
We will now show  that $\| \Gamma_{\beta,N} - \widetilde{\Gamma}_{\beta,N} \|_1 \to 0$ holds provided that the cutoff $p_c $ is large enough. 

\begin{prop}
\label{prop:traceNormDifference}
   Assume $\beta$ satisfies \eqref{eq:beta_asymp} and that
    \begin{equation}
        \int_{\mathbb{C}} |z|^2 g^{\mathrm{BEC}}(z) \d z \gtrsim N^{2/3}
\label{eq:assumptionLowerBoundParticleNumberBEC}
    \end{equation}
    holds. Then, for $N$ and $p_{\rm{c}}$ large enough, we have
    \begin{equation}
        \Vert \Gamma_{\beta,N} - \widetilde{\Gamma}_{\beta,N} \Vert_1 \lesssim p_{\mathrm{c}}^{-3}.
\label{eq:traceNormDistanceTrialStates}
    \end{equation}
\end{prop}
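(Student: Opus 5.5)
The plan is to split the difference of the two reference states into a Bogoliubov part and a condensate part. Write
\begin{equation*}
\Gamma_{\beta,N} - \widetilde{\Gamma}_{\beta,N} = \int_{\mathbb C} |z\rangle\langle z| \otimes \big( G^{\mathrm{Bog}}(z) - \widetilde G^{\mathrm{Bog}}(z)\big) g^{\mathrm{BEC}}(z) \d z + \int_{\mathbb C} |z\rangle\langle z| \otimes \widetilde G^{\mathrm{Bog}}(z) \big( g^{\mathrm{BEC}}(z) - \widetilde g^{\mathrm{BEC}}(z)\big) \d z .
\end{equation*}
By the triangle inequality, the trace norm of the first term is at most $\sup_z \Vert G^{\mathrm{Bog}}(z) - \widetilde G^{\mathrm{Bog}}(z)\Vert_1$. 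The unitary equivalence \eqref{eq:z=1notation}, which holds equally for the cutoff Hamiltonian, makes this supremum equal to $\Vert G^{\mathrm{Bog}} - \widetilde G^{\mathrm{Bog}}\Vert_1$. The trace norm of the second term is at most $\Vert g^{\mathrm{BEC}} - \widetilde g^{\mathrm{BEC}}\Vert_{L^1(\mathbb C)}$.

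\emph{Bogoliubov part.} Both Gibbs states factorize over the pairs $\{p,-p\}$, $p\in A$, and the factors coincide for $|p|\le p_{\mathrm c}$. For $|p|>p_{\mathrm c}$, the factor $\widetilde G_p$ is the free thermal state with dispersion $p^2-\mu^{\mathrm{id}}$. Using $\Vert \otimes_j \rho_j - \otimes_j \sigma_j\Vert_1 \le \sum_j \Vert \rho_j - \sigma_j\Vert_1$, it suffices to show, uniformly in $N$ and $p$,
\begin{equation*}
\Vert G_p - \widetilde G_p \Vert_1 \lesssim \frac{g^{\mathrm{id}} \widehat v(p)}{p^2} .
\end{equation*}
Since $|p|>p_{\mathrm c}$ implies $p^{-2}\le |p|/p_{\mathrm c}^3$, summing then gives $\lesssim p_{\mathrm c}^{-3}\sum_p |p|\widehat v(p)\lesssim p_{\mathrm c}^{-3}$.

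To prove the per-pair bound, write $G_p = U_p G'_p U_p^*$, where $U_p=\exp(\chi_p(a_p^*a_{-p}^* - a_pa_{-p}))$ is the squeezing unitary and $G'_p$ is the thermal state with energy $\omega(p)$. Then
\begin{equation*}
\Vert G_p - \widetilde G_p\Vert_1 \le \Vert U_p G'_p U_p^* - G'_p\Vert_1 + \Vert G'_p - \widetilde G_p\Vert_1 .
\end{equation*}
The second term compares products of geometric distributions with parameters $e^{-\beta\omega(p)}$ and $e^{-\beta(p^2-\mu^{\mathrm{id}})}$. Since $|\omega(p)-(p^2-\mu^{\mathrm{id}})|\lesssim g^{\mathrm{id}}\widehat v(p)$, its total variation is $\lesssim \beta g^{\mathrm{id}}\widehat v(p)/(1-e^{-\beta (p^2-\mu^{\mathrm{id}})}) \lesssim g^{\mathrm{id}}\widehat v(p)/p^2$. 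For the first term I would use Duhamel in the squeezing parameter:
\begin{equation*}
\Vert U_p G'_p U_p^* - G'_p\Vert_1 \le |\chi_p| \sup_t \Vert [K_p, G'_p]\Vert_1 , \qquad K_p = a_p^*a_{-p}^* - a_pa_{-p} .
\end{equation*}
One checks from the diagonal form of $G'_p$ that $\Vert[K_p,G'_p]\Vert_1 = O(1)$ uniformly. The large matrix elements $\sqrt{(n+1)(m+1)}$ are compensated by the factor $1-q^2$ coming from $q^{n+m}-q^{n+m+2}$, where $q\approx e^{-\beta p^2}$, and $(1-q^2)\,\mathbb E[n]=O(1)$. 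Together with $|\chi_p|\lesssim s_p \lesssim g^{\mathrm{id}}\widehat v(p)/p^2$, this gives the per-pair bound. I expect this step, with its uniformity in $\beta\sim N^{-2/3}$ despite the huge occupation numbers $\sim (\beta p^2)^{-1}$, to be the main technical obstacle.

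\emph{Condensate part.} The densities $g^{\mathrm{BEC}}$ and $\widetilde g^{\mathrm{BEC}}$ differ only through $\mu^{\mathrm{BEC}}$ versus $\widetilde\mu^{\mathrm{BEC}}$, which are fixed by the particle-number constraints. Hence their expected values of $|z|^2$ differ by
\begin{equation*}
\delta = \sum_{|p|>p_{\mathrm c}} \big(\gamma^{\mathrm{Bog}}(p) - \widetilde\gamma^{\mathrm{Bog}}(p)\big) .
\end{equation*}
Expanding \eqref{def:gammabogalphabog} for $|p|>p_{\mathrm c}$ gives a summand $\lesssim s_p^2 + g^{\mathrm{id}}\widehat v(p)/(\beta p^4)$, so $\delta \lesssim \beta^{-1}p_{\mathrm c}^{-5}$. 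Assumption \eqref{eq:assumptionLowerBoundParticleNumberBEC} places $g^{\mathrm{BEC}}$ in its concentrated regime, where the law of $|z|^2$ is close to a Gaussian of width $\sim (N/(\beta\widehat v(0)))^{1/2}\sim N^{5/6}$. There $\partial_\mu \mathbb E|z|^2 = \beta\,\mathrm{Var}(|z|^2)$, so $|\mu^{\mathrm{BEC}}-\widetilde\mu^{\mathrm{BEC}}| \lesssim \delta/(\beta N^{5/3})$. Then $\Vert g^{\mathrm{BEC}} - \widetilde g^{\mathrm{BEC}}\Vert_{L^1}\lesssim \beta\,|\mu^{\mathrm{BEC}}-\widetilde\mu^{\mathrm{BEC}}|\,\mathrm{sd}(|z|^2) \lesssim \delta N^{-5/6}$, which is far smaller than $p_{\mathrm c}^{-3}$. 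I would verify the variance lower bound directly from the explicit one-dimensional density in the variable $|z|^2$.
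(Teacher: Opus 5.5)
Your proposal is correct in outline. It starts from the same splitting as the paper, but you estimate both halves differently.

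\textbf{Bogoliubov part.} The paper never works mode by mode in trace norm. It computes the quantum relative entropy $\mathcal S(\widetilde G^{\mathrm{Bog}}(z),G^{\mathrm{Bog}}(z))$ from partition functions and one-body expectations, shows it is $O(p_{\mathrm c}^{-6})$, and applies the quantum Pinsker inequality. Your route (telescoping over tensor factors, then the squeezing/Duhamel step) gives a bound linear in $g^{\mathrm{id}}\widehat v(p)/p^2$ with no square-root loss, and it uses $\sum_p|p|\widehat v(p)<\infty$ directly. The paper's route needs no explicit unitary but pays for that with the square root.

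The step you flagged as the main obstacle does hold uniformly. With $q=e^{-\beta\omega(p)}$ and $n$ geometric, $\Vert[a_p^*a_{-p}^*,G'_p]\Vert_1=(1-q^2)\,\mathbf E[\sqrt{n+1}]^2\le(1-q^2)\,\mathbf E[n+1]=1+q$.

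One inequality needs repair. The bound $\beta g^{\mathrm{id}}\widehat v(p)/(1-e^{-\beta(p^2-\mu^{\mathrm{id}})})\lesssim g^{\mathrm{id}}\widehat v(p)/p^2$ fails for $\beta p^2\gg1$, and the range $|p|>p_{\mathrm c}$ is unbounded. Instead, differentiate the geometric law in the energy. Per mode this gives $d_{\mathrm{TV}}\le\frac12\beta(\omega(p)-a)\sup_{E\ge a}(2\sinh(\beta E/2))^{-1}\le(\omega(p)-a)/(2a)$ with $a=p^2-\mu^{\mathrm{id}}$, which is what you need.

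\textbf{Condensate part.} The paper uses the classical relative entropy, concavity of the log-partition function in $\mu$ (so that $\mathcal H(g^{\mathrm{BEC}},\widetilde g^{\mathrm{BEC}})\le\beta(\mu^{\mathrm{BEC}}-\widetilde\mu^{\mathrm{BEC}})(M-\widetilde M)$), the chemical-potential bound of Lemma~\ref{lem:particleNumberAndChemicalPotentialBEC} imported from \cite[Lemma~A.3]{DeuNamNap-25}, and Pinsker. Your direct argument is a valid substitute. Since $\partial_\nu\mathbf E_\nu|z|^2=\beta\,\mathrm{Var}_\nu(|z|^2)$ and $\Vert\partial_\nu g_\nu\Vert_1\le\beta\,\mathrm{sd}_\nu(|z|^2)$, you get $\Vert g^{\mathrm{BEC}}-\widetilde g^{\mathrm{BEC}}\Vert_1\le\delta/\min_\nu\mathrm{sd}_\nu(|z|^2)$, with the minimum over $\nu$ between the two chemical potentials.

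However, your claim that \eqref{eq:assumptionLowerBoundParticleNumberBEC} places you in the Gaussian regime with $\mathrm{sd}\sim N^{5/6}$ is false in part of the proposition's scope. That regime requires $\mathbf E|z|^2\gg N^{5/6}$, as for $\kappa>1$. At $\kappa=1$ one can have $\mathbf E|z|^2\sim N^{2/3}$ with $\mu^{\mathrm{BEC}}$ of order $-1$. There the law of $|z|^2$ is nearly exponential with $\mathrm{sd}\sim N^{2/3}$, and this is exactly what the factor $f(\xi)$ in the paper's lemma tracks.

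The fix is as follows:
\begin{itemize}
\item Show $\mathrm{sd}_\nu\gtrsim\min\{\mathbf E_\nu|z|^2,(N/\beta)^{1/2}\}$ for the truncated Gaussian in the variable $|z|^2$.
\item Use monotonicity of $\nu\mapsto\mathbf E_\nu|z|^2$ together with Remark~\ref{rem:gnadtildegexp} to keep $\mathbf E_\nu|z|^2\gtrsim N^{2/3}$ on the whole interval.
\item Combine with $\delta\lesssim\beta^{-1}p_{\mathrm c}^{-5}\sim N^{2/3}p_{\mathrm c}^{-5}$ to obtain $O(p_{\mathrm c}^{-5})$, which is still negligible next to $p_{\mathrm c}^{-3}$.
\end{itemize}
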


\begin{remark}
    Using \cite[Lemma~3.4]{DeuNamNap-25} one checks that \eqref{eq:assumptionLowerBoundParticleNumberBEC} is equivalent to the condition $N g^\id \gtrsim N^{2/3}$.
\end{remark}

Before we prove the above proposition, we state and prove a lemma that compares the expected number of particles in the condensate in the two effective theories (with and without cutoff).

\begin{lem}
\label{lem:particleNumberAndChemicalPotentialBEC}
    For  $\beta$ as in  \eqref{eq:beta_asymp}, and for $N$ and $p_{\rm{c}}$ large enough, we have the bounds
    \begin{equation}
        \left| \int_{\mathbb{C}} |z|^2 \left( g^{\mathrm{BEC}}(z) - \widetilde{g}^{\mathrm{BEC}}(z) \right) \mathrm{d}z \right| \lesssim \frac{1}{\beta p_{\mathrm{c}}^5}
        \label{eq:DifferenceExpectedParticleNumberBEC}
    \end{equation}
    and
    \begin{equation}
        | \mu^{\mathrm{BEC}}(\beta,N) - \widetilde{\mu}^{\mathrm{BEC}}(\beta,N) | \lesssim \frac{f(\xi)}{\beta N p_{\mathrm{c}}^5},
        \label{eq:DifferenceChemicalPotentialNumberBEC}
    \end{equation}
    where 
    \begin{equation}
        f(x) = \begin{cases} 1 & \text{if } x \geq -1 \\ x^2 & \text{if } x < -1 \end{cases} \quad \text{ with } \quad \xi = \sqrt{\frac{\beta N}{4 \hat{v}(0)}} \min\{ \mu^{\mathrm{BEC}} , \widetilde{\mu}^{\mathrm{BEC}}  \}.
        \label{eq:perturbationTheoryChemicalPotential2_0}
    \end{equation}
\end{lem}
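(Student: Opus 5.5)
The two chemical potentials $\mu^{\mathrm{BEC}}$ and $\widetilde\mu^{\mathrm{BEC}}$ are fixed by the same normalization, namely that the expected particle number equals $N$. Everything beyond the condensate is the same in both theories except the Bogoliubov thermal cloud. I will therefore (i) compare the expected number of excited particles in $G^{\mathrm{Bog}}$ and $\widetilde G^{\mathrm{Bog}}$, which gives \eqref{eq:DifferenceExpectedParticleNumberBEC} directly by subtracting the two normalization identities. Then (ii) I will invert the map $\mu\mapsto \int|z|^2 g^{\mathrm{BEC}}_\mu(z)\,\d z$, using a lower bound on its derivative, to convert the particle-number difference into \eqref{eq:DifferenceChemicalPotentialNumberBEC}.

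\emph{Step (i).} We have
\[
\int|z|^2 (g^{\mathrm{BEC}}-\widetilde g^{\mathrm{BEC}})\,\d z = \sum_{p\neq 0}\bigl(\widetilde\gamma^{\mathrm{Bog}}(p)-\gamma^{\mathrm{Bog}}(p)\bigr).
\]
Since $\widehat u=\widehat v$ for $|p|\le p_c$, only $|p|>p_c$ contributes. There $\widetilde\gamma^{\mathrm{Bog}}(p)=(e^{\beta(p^2-\mu^{\mathrm{id}})}-1)^{-1}$, while $\gamma^{\mathrm{Bog}}(p)=\nu_p^2+(1+2\nu_p^2)(e^{\beta\omega(p)}-1)^{-1}$. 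For large $p$ we expand in $g^{\mathrm{id}}\widehat v(p)/p^2$. This gives $\nu_p^2\lesssim (\widehat v(p)/p^2)^2$, and the difference of the Bose factors is bounded by $\beta g^{\mathrm{id}}\widehat v(p)\,e^{\beta p^2}(e^{\beta p^2}-1)^{-2}\lesssim \widehat v(p)/(\beta p^4)$, using $-\mu^{\mathrm{id}}\ge0$ small compared with $p^2$. Summing over $|p|>p_c$ and using $\sum|p|\widehat v(p)<\infty$, we get $\sum_{|p|>p_c}\widehat v(p)/p^4\le p_c^{-5}\sum|p|\widehat v(p)$. The $\nu_p^2$ term is of lower order, since $\beta\lesssim N^{-2/3}$ and $p_c=O(N^{1/3})$ give $\beta p_c^5\cdot p_c^{-4}\cdot p_c^{-1}\lesssim1$-type bounds. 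Together these yield $(\beta p_c^5)^{-1}$.

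\emph{Step (ii).} Write $n(\mu)=\int|z|^2 g^{\mathrm{BEC}}_\mu\,\d z$. By the usual Gibbs identity, $n'(\mu)=\beta\,\mathrm{Var}_{g_\mu}(|z|^2)$. After the substitution $t=|z|^2$, the density is proportional to $\exp(-\beta\widehat v(0)t^2/(2N)+\beta\mu t)$ on $[0,\infty)$. Rescaling with $t=s\sqrt{N/(\beta\widehat v(0))}$ makes it a one-parameter family, with parameter proportional to $\xi$. I will show that the variance is $\gtrsim \frac{N}{\beta\widehat v(0)}\,h(\xi)$ with $h(\xi)\gtrsim 1$ for $\xi\ge-1$ and $h(\xi)\gtrsim\xi^{-2}$ for $\xi<-1$. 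In the second regime the law is close to exponential with mean $\sim 1/|\xi|$ in the $s$-variable, so its variance is $\sim\xi^{-2}$. For $\xi\ge -1$ the law is a (truncated) Gaussian of width $\gtrsim1$. Hence $n'(\mu)\gtrsim N/f(\xi)$ on the interval between $\mu^{\mathrm{BEC}}$ and $\widetilde\mu^{\mathrm{BEC}}$. Since $f$ evaluated at the minimum dominates $f$ along the interval (the variance is monotone in the relevant direction, or at least comparable), the mean value theorem gives $|\mu^{\mathrm{BEC}}-\widetilde\mu^{\mathrm{BEC}}|\le |n(\mu^{\mathrm{BEC}})-n(\widetilde\mu^{\mathrm{BEC}})|\,f(\xi)/N$. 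Inserting step (i) finishes the proof.

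The main obstacle is the uniform lower bound on the variance of the one-mode $\Phi^4$ distribution across both regimes. In particular, the interval between the two chemical potentials must be controlled using only the smaller of them, which is why $\xi$ involves the minimum. I would handle this by proving that $s\mapsto\mathrm{Var}$ is monotone decreasing in the rescaled parameter, or at least comparable on neighboring values, via explicit estimates of the Gaussian tail integrals.
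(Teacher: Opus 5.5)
Your proposal is correct. Step (i) is the paper's argument. The paper also subtracts the two normalization conditions and reduces to the modes $|p|>p_{\mathrm c}$. It then uses $\nu_p^2\le (g^{\mathrm{id}}\widehat v(p))^2/(4p^4)$ and bounds the difference of Bose factors by $\beta\widehat v(p)/(4\sinh^2(\beta(p^2-\mu^{\mathrm{id}})/2))\le \widehat v(p)/(\beta p^4)$, which is your mean-value estimate in another form. Two small points there. Only the sign $\mu^{\mathrm{id}}<0$ is used, not any smallness relative to $p^2$. Also, boundedness of $\widehat v$ gives $\sum_{|p|>p_{\mathrm c}}\nu_p^2\lesssim p_{\mathrm c}^{-5}$, which is $\lesssim(\beta p_{\mathrm c}^5)^{-1}$ simply because $\beta\lesssim 1$, so $p_{\mathrm c}=O(N^{1/3})$ is not needed.

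The real difference is step (ii). The paper does not prove the chemical-potential bound; it calls it a direct consequence of the particle-number bound and \cite[Lemma~A.3]{DeuNamNap-25}. You instead give a self-contained argument via $n'(\mu)=\beta\,\mathrm{Var}_{g_\mu}(|z|^2)$ and a lower bound on the variance of the truncated Gaussian $\propto e^{-s^2/2+2\xi_\mu s}$ on $[0,\infty)$, where $\xi_\mu=\sqrt{\beta N/(4\widehat v(0))}\,\mu$. This works, and it makes the origin of $f$ transparent: the variance is of order one for $\xi_\mu\ge-1$, and the law is nearly exponential with variance of order $\xi_\mu^{-2}$ for $\xi_\mu<-1$.

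The obstacle you single out is milder than you fear: you do not need monotonicity of the variance in $\mu$. It suffices to have the pointwise bound $\mathrm{Var}_{g_\mu}(|z|^2)\gtrsim N/(\beta\widehat v(0) f(\xi_\mu))$. On $\xi_\mu\ge-1$ this follows from continuity and positivity in the rescaled parameter together with the limit $1$ at $+\infty$. On $\xi_\mu<-1$ it follows from the substitution $u=2|\xi_\mu|s$, which yields the density $\propto e^{-u-\epsilon u^2}$ with $\epsilon\in(0,1/8]$. Since $\mu\mapsto\xi_\mu$ is increasing and $f$ is non-increasing, $f(\xi_\mu)\le f(\xi)$ on the whole interval between $\mu^{\mathrm{BEC}}$ and $\widetilde\mu^{\mathrm{BEC}}$. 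This is exactly why the minimum enters $\xi$. The paper's route is shorter on the page; yours does not depend on the external lemma.
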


\begin{remark}\label{rem:gnadtildegexp}
    Using \eqref{eq:assumptionLowerBoundParticleNumberBEC} and \eqref{eq:DifferenceExpectedParticleNumberBEC} one easily checks that $\lim_{N \to \infty} p_{\mathrm{c}} = +\infty$ implies
    \begin{equation}
        \lim_{N \to \infty} \frac{\int_{\mathbb{C}} |z|^2 g^{\mathrm{BEC}}(z) \d z}{\int_{\mathbb{C}} |z|^2 \widetilde{g}^{\mathrm{BEC}}(z) \d z} = 1.
        \label{eq:perturbationTheoryChemicalPotential2_0b}
    \end{equation}
\end{remark}

\begin{proof}
    Since
    \begin{equation}
        \mathrm{Tr}[\mathcal{N} \Gamma_{\beta,N}] = N = \mathrm{Tr}[\mathcal{N} \widetilde{\Gamma}_{\beta,N}],
    \end{equation}
 a short computation shows
    \begin{equation}
        \int_{\mathbb{C}} |z|^2 \left( g^{\mathrm{BEC}}(z) - \widetilde{g}^{\mathrm{BEC}}(z) \right) \mathrm{d}z = \int_{\mathbb{C}} \Tr[ \mathcal{N}_+ \widetilde{G}^{\Bog}(z) ] \widetilde{g}^{\mathrm{BEC}}(z) \d z - \int_{\mathbb{C}} \Tr[ \mathcal{N}_+ G^{\Bog}(z) ] g^{\mathrm{BEC}}(z) \d z.
        \label{eq:AppComparisonStates1}
    \end{equation}
 Let $P_{\mathrm{B}} = \{ p \in \Lambda_+^* \ | \ |p| \leq p_{\mathrm{c}} \}$.  The definitions \eqref{def:cutoffpotential} and \eqref{eq:relationsCoefficients} imply that $s_p=0$ and $c_p=1$ when $p \in P_{\mathrm{B}}^{\mathrm{c}} = \Lambda_+^* \setminus P_{\mathrm{B}}$. Using \eqref{def:gammabogalphabog} and the analogous formula without the cutoff one can easily see that the right hand side of \eqref{eq:AppComparisonStates1} equals
    \begin{equation}
        \sum_{p \in P_{\mathrm{B}}^{\mathrm{c}}} \left( \frac{1}{\exp(\beta(p^2-\mu^{\id}))-1} - (1 + 2 \nu_p^2) \frac{1}{\exp(\beta \omega(p))-1} - \nu_p^2 \right).
        \label{eq:AppComparisonStates2}
    \end{equation}
Then,  using $0 \leq (1+x)^{-1/2} + (1+x)^{1/2} - 2 \leq x^2/4$ for $x \geq 0$ and $\mu^{\id}<0$, one gets for $\nu_p^2$ in \eqref{eq:defvp},
	\begin{equation}
		0 \leq \nu_{p}^2 \leq \frac{(\hat{v}(p) g^{\id})^2}{4 p^4}.
		\label{eq:boundvp}
	\end{equation}
    Moreover, 
	\begin{equation}
		\frac{1}{\exp(\beta \omega(p))-1} \leq \frac{1}{\exp(\beta(p^2-\mu^{\id}))-1} \leq \frac{1}{\beta p^2},
		\label{eq:pwboundgammaB}
	\end{equation}
	which follows from $\omega(p) \geq p^2-\mu^{\id}$, $\mu^{\id} < 0$, and $(\exp(x)-1)^{-1} \leq 1/x$ for $x \geq 0$. 
    
    With these two bounds we estimate
    \begin{align}
    &\left| \sum_{p \in P_{\mathrm{B}}^{\mathrm{c}}} \left( \frac{1}{\exp(\beta(p^2-\mu^{\id}))-1} - (1 + 2 \nu_p^2) \frac{1}{\exp(\beta \omega(p))-1} - \nu_p^2 \right) \right| \leq \nonumber   \\
    &\hspace{2cm}  \sum_{p \in P_{\mathrm{B}}^{\mathrm{c}}} \left( \frac{\hat{v}^2(p)}{4 p^4} \left( 1 + \frac{2}{\beta p^2} \right) \right) 
   + \sum_{p \in P_{\mathrm{B}}^{\mathrm{c}}} \left( \frac{1}{\exp(\beta(p^2-\mu^{\id}))-1} - \frac{1}{\exp(\beta \omega(p))-1} \right). 
    \label{eq:AppComparisonStates3}
    \end{align}
    Since $\sum_{p \in \Lambda_+^*} (1+|p|) \hat{v}(p) < +\infty$ the first term on the right-hand side is bounded by a constant times $p_{\mathrm{c}}^{-6}(1+\beta^{-1} p_{\mathrm{c}}^{-2})$. To estimate the term in the second line we use the identity
    \begin{equation}
        \frac{1}{\exp(\beta \omega(p))-1} = \frac{1}{\exp(\beta(p^2-\mu^{\id}))-1} - \int_0^1 \frac{\beta(p^2-\mu^{\id}) \left( \sqrt{1 + \frac{2 g^{\id} \hat{v}(p)}{p^2-\mu^{\id}}} - 1 \right)}{4 \sinh^2\left( \beta(t (p^2-\mu^{\id}) + (1-t) \omega(p))/2 \right)} \d t.
        \label{eq:AppComparisonStates4}
    \end{equation}
    With $|\sqrt{x+1} - 1| \leq x/2$ for $x \geq 0$   we get
    \begin{equation}
        \left| \sqrt{1 + \frac{2 g^{\id} \hat{v}(p)}{p^2-\mu^{\id}}} - 1 \right| \leq \frac{g^{\id} \hat{v}(p)}{p^2-\mu^{\id}} \leq \frac{\hat{v}(p)}{p^2 - \mu^{\id}}.
        \label{eq:AppComparisonStates5}
    \end{equation}
    In combination, \eqref{eq:AppComparisonStates4}, \eqref{eq:AppComparisonStates5}, $\omega(p) \geq p^2-\mu^{\id}$, $\mu^{\id} < 0$, and $\sinh(x) \geq x$ for $x \geq 0$ imply
    \begin{align}
       & \sum_{p \in P_{\mathrm{B}}^{\mathrm{c}}} \left( \frac{1}{\exp(\beta(p^2-\mu^{\id}))-1} - \frac{1}{\exp(\beta \omega(p))-1} \right) \\
        &\hspace{2cm} \leq \sum_{p \in P_{\mathrm{B}}^{\mathrm{c}}} \frac{\beta \hat{v}(p)}{4 \sinh^2\left(\frac{\beta(p^2-\mu^{\id})}{2} \right)} 
        \leq \frac{1}{\beta p_{\mathrm{c}}^5} \sum_{p \in P_{\mathrm{B}}^{\mathrm{c}}} \hat{v}(p) |p|. 
        \label{eq:AppComparisonStates6}
    \end{align}

    Putting \eqref{eq:AppComparisonStates3} and \eqref{eq:AppComparisonStates5} together, we obtain
    \begin{equation}
        \left| \int_{\mathbb{C}} \Tr[ \mathcal{N}_+ \widetilde{G}^{\Bog}(z) ] \widetilde{g}^{\mathrm{BEC}}(z) \d z - \int_{\mathbb{C}} \Tr[ \mathcal{N}_+ G^{\Bog}(z) ] g^{\mathrm{BEC}}(z) \d z \right| \lesssim \frac{1}{\beta p_{\mathrm{c}}^5}
    \end{equation}
    which by \eqref{eq:AppComparisonStates1} implies \eqref{eq:DifferenceExpectedParticleNumberBEC}.

    The bound \eqref{eq:DifferenceChemicalPotentialNumberBEC} for the difference of the chemical potentials is a direct consequence of \eqref{eq:DifferenceExpectedParticleNumberBEC} and \cite[Lemma~A.3]{DeuNamNap-25}.
\end{proof}
We are now prepared to give the proof of Proposition~\ref{prop:traceNormDifference}.

\begin{proof}[Proof of Proposition~\ref{prop:traceNormDifference}]
    We start by noting that
    \begin{align}
        \Vert \widetilde{\Gamma}_{\beta,N} - \Gamma_{\beta,N} \Vert_1 \leq& \int_{\mathbb{C}} \left\Vert |z \rangle \langle z | \otimes \widetilde{G}^{\mathrm{Bog}}(z) \right\Vert_1 \left| \widetilde{g}^{\mathrm{BEC}}(z) - g^{\mathrm{BEC}}(z) \right| \mathrm{d}z \nonumber \\ 
        &+ \int_{\mathbb{C}} \left\Vert |z \rangle \langle z | \otimes \left( \widetilde{G}^{\mathrm{Bog}}(z) - G^{\mathrm{Bog}}(z) \right) \right\Vert_1  g^{\mathrm{BEC}}(z) \d z.
        \label{eq:traceNormDifferenceTrialStates1}
    \end{align}
    The operator in the first norm on the right-hand side is positive and has trace equal to $1$. Accordingly, the first term equals the $L^1$-norm difference of $\widetilde{g}^{\mathrm{BEC}}$ and $g^{\mathrm{BEC}}$. When we also use $\Vert P \otimes A \Vert_1 = \Vert A \Vert_1$ for a rank-one orthogonal projection $P$ and a trace-class operator $A$ to rewrite the second term, we obtain the bound 
    \begin{equation}
        \Vert \widetilde{\Gamma}_{\beta,N} - \Gamma_{\beta,N} \Vert_1 \leq \int_{\mathbb{C}} \left| \widetilde{g}^{\mathrm{BEC}}(z) - g^{\mathrm{BEC}}(z) \right| \mathrm{d}z + \int_{\mathbb{C}} \left\Vert \widetilde{G}^{\mathrm{Bog}}(z) - G^{\mathrm{Bog}}(z) \right\Vert_1  g^{\mathrm{BEC}}(z) \d z.
        \label{eq:traceNormDifferenceTrialStates2}
    \end{equation}

    Let us have a closer look at the first term on the right-hand side of \eqref{eq:traceNormDifferenceTrialStates2}. We define the relative entropy between probability densities $\varrho(z)$ and $\sigma(z)$ on $\mathbb{C}$ by
    \begin{equation}
        \mathcal{H}(\varrho,\sigma) = \int_{\mathbb{C}} \varrho(z) \left( \ln(\varrho(z)) - \ln(\sigma(z)) \right) \d z. 
    \end{equation}
    A short computation shows that 
    \begin{align}
        \frac{1}{\beta} \mathcal{H}(g^{\mathrm{BEC}},\widetilde{g}^{\mathrm{BEC}}) =& -\frac{1}{\beta} \ln\left( \int_{\mathbb{C}} \exp\left( -\beta \left( \frac{\hat{v}(0)}{2N} |z|^4 - \mu^{\mathrm{BEC}} |z|^2 \right) \right) \d z \right) + ( \mu^{\mathrm{BEC}} - \widetilde{\mu}^{\mathrm{BEC}} ) M \nonumber \\
        &+ \frac{1}{\beta} \ln\left( \int_{\mathbb{C}} \exp\left( -\beta \left( \frac{\hat{v}(0)}{2N} |z|^4 - \widetilde{\mu}^{\mathrm{BEC}} |z|^2 \right) \right) \d z \right),
    \end{align}
    where $M = \int |z|^2 g^{\mathrm{BEC}}(z) \d z$. The first term on the right-hand side is a concave function of $\mu^{\mathrm{BEC}}$, and therefore satisfies
    \begin{align}
        &-\frac{1}{\beta} \ln\left( \int_{\mathbb{C}} \exp\left( -\beta \left( \frac{\hat{v}(0)}{2N} |z|^4 - \mu^{\mathrm{BEC}} |z|^2 \right) \right) \d z \right) \leq (\widetilde{\mu}^{\mathrm{BEC}} - \mu^{\mathrm{BEC}}) \widetilde{M} \nonumber \\
        &\hspace{5cm}- \frac{1}{\beta} \ln\left( \int_{\mathbb{C}} \exp\left( -\beta \left( \frac{\hat{v}(0)}{2N} |z|^4 - \widetilde{\mu}^{\mathrm{BEC}} |z|^2 \right) \right) \d z \right)
        \label{eq:traceNormDifferenceTrialStates4}
    \end{align}
    with $\widetilde{M} = \int |z|^2 \widetilde{g}^{\mathrm{BEC}}(z) \d z$.     Putting these two inequalities together, we find
    \begin{equation}
        \frac{1}{\beta} \mathcal{H}(g^{\mathrm{BEC}},\widetilde{g}^{\mathrm{BEC}}) \leq ( \mu^{\mathrm{BEC}} - \widetilde{\mu}^{\mathrm{BEC}} ) (M-\widetilde{M} ) \lesssim \frac{f(\xi)}{\beta^2 p_{\mathrm{c}}^{10} N}. 
    \end{equation}
    To obtain the last bound we applied Lemma~\ref{lem:particleNumberAndChemicalPotentialBEC}. Finally, an application of the classical Pinsker's inequality $\mathcal{H}(\varrho,\sigma) \geq \frac{1}{2} \Vert \varrho - \sigma \Vert_{L^1(\mathbb{C})}^2$ allows us to conclude that
    \begin{equation}
        \int_{\mathbb{C}} \left| \widetilde{g}^{\mathrm{BEC}}(z) - g^{\mathrm{BEC}}(z) \right| \mathrm{d}z \lesssim \frac{ \sqrt{f(\xi)} }{ p_{\mathrm{c}}^5 \sqrt{\beta N}}.
        \label{eq:traceNormDifferenceTrialStates11}
    \end{equation}  
    We have thus obtained a bound for the first term on the right-hand side of \eqref{eq:traceNormDifferenceTrialStates2}. It remains to consider the second term.

    For this term we will apply a similar argument based on the quantum version of Pinsker's inequality, which reads $\mathcal{S}(\Gamma,\Gamma') \geq \frac{1}{2} \Vert \Gamma - \Gamma' \Vert_1^2$, see e.g. \cite[Theorem~1.15]{OhyPet-93}. Here 
    \begin{equation}
        \mathcal{S}(\Gamma,\Gamma') = \Tr[ \Gamma \left( \ln(\Gamma) - \ln(\Gamma') \right) ]
        \label{eq:traceNormDifferenceTrialStates12}
    \end{equation}
    denotes the quantum relative entropy of $\Gamma$ with respect to $\Gamma'$. A short computation shows that
    \begin{align}
      \frac{1}{\beta}  \mathcal{S}(\widetilde{G}^{\mathrm{Bog}}(z),G^{\mathrm{Bog}}(z)) =&\frac{g^{\id}}{2}\sum_{p \in P_{\mathrm{B}}^{\mathrm{c}}} \hat{v}(p) \Tr[ ( 2 a_p^* a_{p} + (z^2/|z|^2) a_p^* a_{-p}^* + (\overline{z}^2/|z|^2) a_p a_{-p} ) \widetilde{G}^{\mathrm{Bog}}(z) ] \nonumber \\
        &+ \frac{1}{\beta} \sum_{p \in P_{\mathrm{B}}^{\mathrm{c}}} \left( \ln \left( 1 - \exp(-\beta(p^2 - \mu^{\id})) \right) - \ln \left( 1 - \exp(-\beta \omega(p) ) \right) \right). 
\label{eq:traceNormDifferenceTrialStates13}
    \end{align}
    The first summand in the second term is a concave function of $\beta(p^2 - \mu^{\id})$, and hence 
    \begin{equation}
        \frac{1}{\beta} \sum_{p \in P_{\mathrm{B}}^{\mathrm{c}}} \ln \left( 1 - \exp(-\beta(p^2 - \mu^{\id})) \right) \leq \frac{1}{\beta} \sum_{p \in P_{\mathrm{B}}^{\mathrm{c}}} \left( \ln \left( 1 - \exp(-\beta \omega(p)) \right) + \frac{\beta((p^2-\mu^{\id})- \omega(p)) }{\exp(\beta \omega(p)) - 1} \right).
        \label{eq:traceNormDifferenceTrialStates14}
    \end{equation}
    The second term on the right-hand side equals
    \begin{align}
        \sum_{p \in P_{\mathrm{B}}^{\mathrm{c}}} \frac{(p^2-\mu^{\id}) \left(1 -  \sqrt{ 1 + \frac{2 g^{\id} \hat{v}(p)}{p^2-\mu^{\id}} } \right) }{\exp(\beta \omega(p)) - 1} \leq& -g^{\id} \sum_{p \in P_{\mathrm{B}}^{\mathrm{c}}}  \frac{\hat{v}(p)}{\exp(\beta \omega(p)) - 1} \nonumber \\
        &+ \frac{1}{2} \sum_{p \in P_{\mathrm{B}}^{\mathrm{c}}} \frac{1}{\exp(\beta \omega(p)) - 1} \left( \frac{\hat{v}(p)}{p^2-\mu^{\id}} \right)^2,        \label{eq:traceNormDifferenceTrialStates15}
    \end{align}
    where we used $1 - \sqrt{1+x} \leq -x/2 +x^2/8$ for $x \geq 0$ to obtain the bound. The second term on the right-hand side is bounded from above by a constant times $\beta^{-1} p_{\mathrm{c}}^{-8}$.
    
    Using \eqref{def:gammabogalphabog} one immediately sees that the first term on the right-hand side of \eqref{eq:traceNormDifferenceTrialStates13} equals
    \begin{equation}
        g^{\id} \sum_{p \in P_{\mathrm{B}}^{\mathrm{c}}}  \frac{\hat{v}(p)}{\exp(\beta (p^2-\mu^{\id})) - 1}.
        \label{eq:traceNormDifferenceTrialStates16}
    \end{equation}
    Putting these bounds together, we have shown that the relative entropy satisfies 
    \begin{equation}
       \frac{1}{\beta}   \mathcal{S}(\widetilde{G}^{\mathrm{Bog}}(z),G^{\mathrm{Bog}}(z)) \leq g^{\id} \sum_{p \in P_{\mathrm{B}}^{\mathrm{c}}} \hat{v}(p) \left( \frac{1}{\exp(\beta (p^2-\mu^{\id}))) - 1} - \frac{1}{\exp(\beta \omega(p)) - 1} \right) + \frac{C}{\beta p_{\mathrm{c}}^{8}}.
        \label{eq:traceNormDifferenceTrialStates17}
    \end{equation}
    Using \eqref{eq:AppComparisonStates4} and $ \sqrt{1+x} -1\leq x/2$ for $x \geq 0$ again, one easily checks that first term on the right-hand side of \eqref{eq:traceNormDifferenceTrialStates17} is bounded from above by a constant times $\beta^{-1} p_{\mathrm{c}}^{-6}$. In combination with the quantum version of Pinsker's inequality from above, this implies the final estimate 
    \begin{equation}
        \Vert \widetilde{G}^{\mathrm{Bog}}(z) - G^{\mathrm{Bog}}(z) \Vert_1 \lesssim \frac{1}{ p_{\mathrm{c}}^3}
        \label{eq:traceNormDifferenceTrialStates18}
    \end{equation}
    for the trace norm distance of $\widetilde{G}^{\mathrm{Bog}}(z)$ and $G^{\mathrm{Bog}}(z)$. When we put \eqref{eq:traceNormDifferenceTrialStates2}, \eqref{eq:traceNormDifferenceTrialStates11} and \eqref{eq:traceNormDifferenceTrialStates18} together, we obtain a proof of \eqref{eq:traceNormDistanceTrialStates}. To obtain this result we also used that \eqref{eq:assumptionLowerBoundParticleNumberBEC}, \eqref{eq:perturbationTheoryChemicalPotential2_0b} and \cite[Lemma~A.1~(c)]{DeuNamNap-25} imply $|\min\{ \mu^{\mathrm{BEC}}, \widetilde{\mu}^{\mathrm{BEC}} \}| \lesssim 1$. 
    \end{proof}

\subsection{Statistical properties of the condensate fraction} \label{sec:gwasserstein} Recall that in Theorem \ref{thm:semiclassicalApproximation}(c) we have introduced the random variable $Z$ which has distribution $g^{\BEC}$ defined in \eqref{eq:gBEC}. 
Several properties of $Z$ have been proven in \cite{DeuNamNap-25}. One of them estimates the variance of $|Z|^2$. More precisely, in \cite[Lemma A.8.(a)]{DeuNamNap-25} it has been shown that in the condensed phase ($\kappa>1$) we have
\begin{equation}\label{eq:varZsq}
  \mathbf{E}[(|Z|^2-\mathbf{E}[|Z|^2])^2]=\frac{N}{\beta \hat{v}(0)}+o(N^{\frac53}).  
\end{equation}
In particular, since $\mathbf{E}[|Z|^2]=Ng^{\id}+O(N^{2/3})$, we have 
\begin{equation}\label{eq:varZsqNgid}
  \mathbf{E}[(|Z|^2-Ng^{\id})^2]=\frac{N}{\beta \hat{v}(0)}+o(N^{\frac53}).
    \end{equation}
Furthermore, in \cite[Lemma A.10.(a)]{DeuNamNap-25} it is shown that if $X$ is a standard Gaussian random variable, then
\begin{equation} \label{eq:condnormal}
    \sqrt{\frac{\beta\hat{v}(0)}{N}}(|Z|^2-\mathbf{E}[|Z|^2]) \xrightarrow{N\to \infty} X
\end{equation}
in distribution. The proofs of \eqref{eq:varZsq} and \eqref{eq:condnormal}, together with \eqref{eq:DifferenceChemicalPotentialNumberBEC} and Remark \ref{rem:gnadtildegexp}, imply that these statements  hold true whether $Z$ has density $g^{\BEC}$ or $\widetilde{g}^{\BEC}$.

For further reference we shall need a stronger convergence result than \eqref{eq:condnormal}. 

\begin{prop} \label{prop:W2condtoX}
  Let $Z$ be a random variable with density $g^{\BEC}$ and let $X$ be a standard Gaussian random variable. Then, for $\kappa>1$ there exists $c>0$ such that 
    \begin{equation}\label{eq:W2condtoX}
        W_2 \left( \sqrt{\frac{\beta}{N}} (|Z|^2 - \mathbf E[|Z|^2]), \frac{1}{\sqrt{\widehat v(0)}} X \right) \lesssim e^{- c N^{\frac13}}.
    \end{equation}
\end{prop}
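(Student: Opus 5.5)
The plan is to reduce the statement to an explicit computation for a truncated one-dimensional Gaussian. Set $T = |Z|^2$. In polar coordinates the measure $\mathrm{d}z = \mathrm{d}x\,\mathrm{d}y/\pi$ equals $\mathrm{d}(r^2)\,\mathrm{d}\theta/(2\pi)$, so by \eqref{eq:gBEC} the law of $T$ on $[0,\infty)$ has density proportional to $\exp\big(-\beta(\tfrac{\widehat v(0)}{2N}t^2 - \mu^{\mathrm{BEC}} t)\big)$. Completing the square shows that $T$ is a Gaussian with mean $m_0 = N\mu^{\mathrm{BEC}}/\widehat v(0)$ and variance $\sigma^2 = N/(\beta \widehat v(0))$, conditioned on $T\ge 0$. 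Define $G := \sqrt{\beta/N}\,(\text{untruncated Gaussian} - m_0) \sim \mathcal N(0,1/\widehat v(0))$, which has the law of $X/\sqrt{\widehat v(0)}$. Then $Y := \sqrt{\beta/N}(T-m_0)$ has the law of $G$ conditioned on $\{G \ge -a\}$, where $a = \sqrt{\beta/N}\, m_0$. Since $\beta\sim N^{-2/3}$, we have $\sqrt{\beta/N}\sim N^{-5/6}$. Hence, provided $m_0 \gtrsim N$, we get $a \gtrsim N^{1/6}$ and $e^{-c a^2} \le e^{-c' N^{1/3}}$, which is exactly the claimed rate.

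\textbf{Step 1 (the main point): $m_0 \gtrsim N$, i.e.\ $\mu^{\mathrm{BEC}}$ is bounded below by a positive constant.} I would derive this from $\mathbf E[T] = N g^{\mathrm{id}} + O(N^{2/3})$ (stated before \eqref{eq:varZsqNgid}) together with Proposition~\ref{prop:limits}(1), which gives $g^{\mathrm{id}}\to 1-\kappa^{-3/2}>0$ for $\kappa>1$. Hence $\mathbf E[T]\ge cN$. On the other hand, an elementary bound for a Gaussian truncated to $[0,\infty)$ gives $\mathbf E[T]\le \max\{m_0,0\} + C\sigma$: if $m_0\le 0$, the truncated law is dominated by a half-Gaussian of scale $\sigma$. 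Since $\sigma\sim N^{5/6}\ll N$, this forces $m_0\ge cN - C N^{5/6}\gtrsim N$. The whole estimate hinges on this lower bound, so I expect this step to be the main obstacle; the remaining steps are routine Gaussian tail estimates.

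\textbf{Step 2: coupling.} I would construct a coupling of $Y$ and $G$ as follows. On $\{G\ge -a\}$ set $Y=G$; on $\{G<-a\}$ let $Y$ be an independent draw from the conditional law. This has the correct marginal for $Y$. It gives
\begin{equation*}
\mathbf E|Y-G|^2 \le 2\mathbf E[G^2 \mathbf 1_{G<-a}] + 2\,\mathbf P(G<-a)\,\mathbf E[Y^2] \lesssim (1+a)\,e^{-\widehat v(0)a^2/2},
\end{equation*}
using the standard Gaussian tail and $\mathbf E[Y^2]\le \mathbf E[G^2]/\mathbf P(G\ge -a)\lesssim 1$. Hence $W_2(Y,G)\lesssim e^{-cN^{1/3}}$.

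\textbf{Step 3: recentering.} The quantity in \eqref{eq:W2condtoX} is $\sqrt{\beta/N}(T-\mathbf E T) = Y - \mathbf E Y$. Since translation by a constant $b$ changes $W_2$ by at most $|b|$, we get $W_2(Y-\mathbf EY, G)\le W_2(Y,G) + |\mathbf E Y|$. Because $\mathbf E G = 0$, we have $|\mathbf E Y| = |\mathbf E Y - \mathbf E G| \le W_1(Y,G)\le W_2(Y,G)$. Together with Step 2 this gives $W_2(Y-\mathbf EY,G)\le 2W_2(Y,G)\lesssim e^{-cN^{1/3}}$, which yields \eqref{eq:W2condtoX} after adjusting $c$.
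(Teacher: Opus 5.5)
Your proposal is correct and follows the paper's proof. Your coupling (keep $G$ on $\{G\ge -a\}$, redraw independently from the conditional law on $\{G<-a\}$) is exactly the paper's construction $Y = BY_{\geq 0} + (1-B)Y_{<0}$ with an independent Bernoulli $B$, followed by the same Gaussian-tail bound and recentering. Your Step 1 also supplies the argument for $m_0 \gtrsim N$ (equivalently $\mu^2/\sigma^2 \gtrsim N^{1/3}$), which the paper only asserts.
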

\begin{proof}
Let $Y \sim \mathcal N (\mu, \sigma^2)$ with $\mu = \frac{N \widetilde \mu^\BEC}{\widehat v(0)}$ and $\sigma^2 = \frac{N}{\beta \widehat v(0)}$. Furthermore, let $Y_{\geq 0}$ and $Y_{<0}$ be the random variables obtained by conditioning $Y$ on the events $Y \geq 0$ and $Y<0$, respectively. We~observe that 
\begin{equation}\label{eq:1-p}
    1-p \coloneq \mathbf P(Y <0) \lesssim e^{-c N^{\frac13}}
    \end{equation}
    for some $c >0$ because $\frac{\mu^2}{\sigma^2} \gtrsim N^{\frac13}$. Secondly, we note that $Y_{\geq 0} \stackrel{d}{=} |Z|^2$. 
We construct a coupling of $Y$ and $Y_{\geq 0}$ by introducing $B \sim \mathrm{Bernoulli}(p)$ independent of $Y_{\geq 0}$ and $Y_{<0}$. Then we realize $Y$ on the product of the probability spaces of $Y_{\geq 0}$, $Y_{<0}$ and $B$ by $Y = B Y_{\geq 0} + (1-B) Y_{<0}$. 

We have the bound
\begin{equation}
\begin{aligned}\label{eq:y-y0}
    W_2(Y,Y_{\geq 0})^2 \leq \mathbf E[(Y-Y_{\geq 0})^2] &= (1-p) \mathbf E[(Y_{<0}-Y_{\geq 0})^2] \\ 
   & \leq 2 (1-p) \mathbf E[ (Y_{< 0} - \mu)^2 + (Y_{\geq 0} -\mu)^2]. 
\end{aligned}
\end{equation}
The first term can be estimated using Cauchy-Schwarz:
\begin{equation}\label{eq:y-y01}
    (1-p) \mathbf E[ (Y_{< 0} - \mu)^2 ] = \mathbf E[(Y - \mu)^2 \mathds 1(Y < 0)] \leq \sqrt{\mathbf E[(Y-\mu)^4]} \sqrt{\mathbf E[\mathds 1(Y <0)]}  =\sqrt{3 \sigma^4}  \sqrt{1-p},
\end{equation}
and the second term directly:
\begin{equation}\label{eq:y-y02}
    (1-p) \mathbf E[ (Y_{\geq 0} - \mu)^2 ] = \frac{1-p}{p} \mathbf E[(Y - \mu)^2 \mathds 1(Y \geq 0)] \leq \frac{1-p}{p} \sigma^2.
\end{equation}
The proof is completed by the following calculation
\begin{align}
W_2 \left( \sqrt{\frac{\beta}{N}} (|Z|^2 - \mathbf E[|Z|^2]), \frac{1}{\sqrt{\widehat v(0)}} X \right)  & = \sqrt{\frac{\beta}{N}} W_2 (Y_{\geq 0} - \mathbf E[Y_{\geq 0}],Y-\mathbf E[Y]) \\
& \leq \sqrt{\frac{\beta}{N}} W_2(Y_{\geq 0}, Y) + \sqrt{\frac{\beta}{N}} |\mathbf E[Y_{\geq 0} - \mathbf E[Y]]| \nonumber \\
& \leq 2 \sqrt{\frac{\beta}{N}} \mathbf E[ (Y_{\geq 0} - Y)^2], \nonumber
\end{align}
where in the last step we used Cauchy-Schwarz inequality. Using \eqref{eq:y-y0},\eqref{eq:y-y01},\eqref{eq:y-y02} we obtain
\begin{equation*}
  W_2 \left( \sqrt{\frac{\beta}{N}} (|Z|^2 - \mathbf E[|Z|^2]), \frac{1}{\sqrt{\widehat v(0)}} X \right) \leq 4 \sigma^2 \sqrt{\frac{N}{\beta} } \left( \sqrt{3} \sqrt{1-p} + \frac{1-p}{p} \right),  
\end{equation*}
which, together with \eqref{eq:1-p}, yields the desired result.
\end{proof}

\section{Semiclassical analysis and proof of Theorem 1} \label{sec:semiclassicalApproximation}
Let $(\mathcal H_k)_{k=1}^\infty$ be a sequence of Hilbert spaces, each with a distinguished unit vector $\Omega_k \in \mathcal H_k$. The infinite tensor product $\bigotimes_{k=1}^\infty \mathcal H_k$ is a Hilbert space defined as the completion of $\bigcup_{K=1}^\infty \bigotimes_{k=1}^K \mathcal H_k$, where we identify the Hilbert space $\bigotimes_{k=1}^K \mathcal H_k$ with a closed subspace of $\bigotimes_{k=1}^{K+1} \mathcal H_k $ using the isometric embedding $h \mapsto h \otimes \Omega_{K+1}$. The notation $\bigotimes_{k=1}^\infty \mathcal H_k$ is a slight abuse of notation because the infinite tensor product depends on the choice of the distinguished vectors, not only on the Hilbert spaces $\mathcal H_k$. If~all $\mathcal H_k$ are separable, so is their tensor product. 

The exponential property of Fock spaces allows to realize $\mathscr F_+$ as the infinite tensor product 
\begin{equation}
    \mathscr F_+ = \bigotimes_{p \in A} \mathscr F_{\pm p}, \qquad \text{where} \qquad \mathscr F_{\pm p} = \mathscr{F}(\mathrm{span} \{ e^{\mathrm{i}p \cdot x}, e^{-\mathrm{i}p \cdot x } \}),
\end{equation}
and the distinguished vector in the two-mode Fock space $\mathscr F_{\pm p}$ is the Fock vacuum $(1,0,0,\dots)$. We recall the set $A$, defined before \eqref{eq:familyOfGaussianRandomVariables}. The Bogoliubov state $\widetilde G^\Bog$ is a~product operator under this decomposition of the Hilbert space:
\begin{equation}
    \widetilde G^{\Bog} = \bigotimes_{p \in A} \widetilde G^\Bog_{\pm p}, \quad \text{where} \quad \widetilde G^{\Bog}_{\pm p} = \frac{\exp(- \beta \widetilde{\mathcal H}^\Bog_{\pm p})}{\mathrm{Tr}[\exp(- \beta \widetilde{\mathcal H}^\Bog_{\pm p})]}.
    \label{eq:G_tensor_decomp}
\end{equation}
Here (recall the notation in \eqref{eq:z=1notation}) $\widetilde{\mathcal H}^{\mathrm{Bog}}_{\pm p}$ is the two-mode Bogoliubov Hamiltonian:
\begin{equation}
 \widetilde{\mathcal H}^{\mathrm{Bog}}_{\pm p}=  (p^2-\mu^\id +g^{\mathrm{id}}\hat{u}(p)) (a_p^* \a_p +a_{-p}^* \a_{-p})+ g^{\mathrm{id}} \hat{u}(p) \left( a_p^* a_{-p}^* +  \a_p \a_{-p} \right).
    \label{BogoliubovHamiltonianTwomodes}
\end{equation}

We begin the discussion of the semiclassical structure of $\widetilde G^{\Bog}$ by analyzing its tensor factors. Let us first review the definitions of various operator symbols used in this context (see e.g.\ \cite{Folland} for an introduction to this subject). Consider the Weyl operator 
\begin{equation}\label{def:Weyl}
W_{\pm p}(f_p,f_{-p}) = \exp \left( f_p a_p^* + f_{-p} a_{-p}^* - \overline f_p \a_p - \overline f_{-p} \a_{-p} \right).
\end{equation}
The quantum characteristic function of $\widetilde G_{\pm p}$ is defined as
\begin{equation}
    \chi_{\pm p}(f_p, f_{-p}) = \mathrm{Tr}[\widetilde G_{\pm p}^\Bog \, W_{\pm p}(f_p,f_{-p})]. 
    \label{eq:quantum_characteristic}
\end{equation}
It is an analog of the characteristic function in probability theory. One can express $\widetilde G^{\Bog}_{\pm}$ in terms of the quantum characteristic function as follows:
\begin{equation}
    \widetilde G_{\pm p}^\Bog = \int_{\mathbb C^2} \chi_{\pm p}(f_p, f_{-p}) W_{\pm p}(-f_p,-f_{-p}) \d f_p \d f_{-p},
    \label{eq:Wigner_Weyl_rep_of_G}
\end{equation}
where $ \d f_{ p} = \frac{1}{\pi} \d \mathrm{Re}(f_p) \d \mathrm{Im}(f_p)$. The Fourier transform of $\chi_{\pm p}$ is called the Weyl-Wigner symbol:
\begin{equation}
    \Sigma_{\pm p}^{W}(\phi_p,\phi_{-p}) = \int_{\mathbb C^2} \chi_{\pm p}(f_p,f_{-p}) e^{\phi_p \overline f_p + \phi_{-p} \overline f_{-p} - \overline \phi_p f_p - \overline \phi_{-p} f_{-p}} \d f_p \d f_{-p}.
    \label{eq:Wigner_function}
\end{equation}
 Now consider the coherent states in $\mathscr F_{\pm p}$:
\begin{equation}
    \Omega_{\phi_p, \phi_{-p}} = W_{\pm p}(\phi_p,\phi_{-p}) \Omega = e^{- \frac12 |\phi_p|^2 -\frac12 |\phi_{-p}|^2} e^{\phi_p a_p^* + \phi_{-p} a_{-p}^*} \Omega,
\end{equation}
where $\Omega$ is the Fock vacuum. The lower symbol of $\widetilde G^{\Bog}_{\pm p}$ is defined as
 \begin{equation}
    \Sigma^{\downarrow}_{\pm p}(\phi_p,\phi_{-p}) = \langle \Omega_{\phi_p, \phi_{-p}} | \widetilde G^{\Bog}_{\pm p} \Omega_{\phi_p, \phi_{-p}} \rangle. 
\end{equation}
Inserting \eqref{eq:Wigner_Weyl_rep_of_G} into the definition of $\Sigma^\downarrow_{\pm p}$ and using \eqref{eq:Wigner_function} one finds the standard relation:
\begin{equation}
\Sigma^{\downarrow}_{\pm p}(\phi_p,\phi_{-p})  = \int_{\mathbb C^2} 4 e^{-2 |\phi_p - f_p|^2 -2 |\phi_{-p} - f_{-p}|^2} \Sigma^W_{\pm}(f_p,f_{-p}) df_p df_{-p}. 
\label{eq:lower_from_Weyl}
\end{equation}
The upper symbol of $\widetilde G^\Bog_{\pm p}$ is a function $\Sigma^\uparrow_{\pm p}$ such that
\begin{equation}
\widetilde G^{\Bog}_{\pm p} = \int_{\mathbb C^2} | \Omega_{\phi_{p},\phi_{-p}} \rangle \langle  \Omega_{\phi_{p},\phi_{-p}} | \, \Sigma^{\uparrow}_{\pm p} (\phi_p , \phi_{-p}) \d \phi_p \d \phi_{-p}.
\label{eq:upper_symbol}
\end{equation}
We highlight that not every bounded operator admits an upper symbol. The existence of an upper symbol of $\widetilde G^{\Bog}_{\pm p}$ is discussed in the lemmas below. If there exists an upper symbol, then inserting \eqref{eq:upper_symbol} into \eqref{eq:quantum_characteristic} and using \eqref{eq:Wigner_function} one finds
\begin{equation}
    \Sigma^W_{\pm p}(\phi_p ,\phi_{-p}) = \int_{\mathbb C^2} 4 e^{-2 |\phi_p - f_p|^2 -2 |\phi_{-p} - f_{-p}|^2} \Sigma^\uparrow_{\pm}(f_p,f_{-p}) df_p df_{-p}.
    \label{eq:Weyl_from_upper}
\end{equation}

\begin{lem}[Semiclassical symbols of $\widetilde G^\Bog_{\pm p}$] \label{lem:symbols}
$\Sigma^W_{\pm p}(\phi_p,\phi_{-p})$ is the probability density of a centered complex Gaussian random vector $(\widehat \Phi_N(p),\widehat \Phi_N(-p))$ with covariance and pseudo-covariance
\begin{align}
	  \mathbf{E} \begin{pmatrix}
		|\widehat \Phi_N(p)|^2 & \widehat \Phi_N(p) \overline{\widehat \Phi_N(-p)} \\ \overline{\widehat \Phi_N(p)} \widehat \Phi_N(-p) & |\widehat \Phi_N(-p)|^2 		\end{pmatrix} &=  \begin{pmatrix}
	\widetilde{\gamma}^{\rm{Bog}}(p) + \frac12 & 0 \\ 0 & \widetilde{\gamma}^{\rm{Bog}}(p) + \frac12 
	\end{pmatrix},     \label{eq:WW-cov} \\
    \mathbf{E} \begin{pmatrix}
	\widehat \Phi_N(p)^2 & \widehat \Phi_N(p) \widehat \Phi_N(-p) \\ \widehat \Phi_N(p) \widehat \Phi_N(-p) & \widehat \Phi_N(-p)^2 
	\end{pmatrix} &=   \begin{pmatrix}
	0 & \widetilde{\alpha}^{\rm{Bog}}(p) \\ \widetilde{\alpha}^{\rm{Bog}}(p) & 0 
	\end{pmatrix}. \nonumber
\end{align}
The lower symbol $\Sigma^\downarrow_{\pm p}$ is a centered complex Gaussian probability density characterized by
\begin{align}
	  \mathbf{E}  \begin{pmatrix}
		|\widehat \Phi_N(p)|^2 & \widehat \Phi_N(p) \overline{\widehat \Phi_N(-p)} \\ \overline{\widehat \Phi_N(p)} \widehat \Phi_N(-p) & |\widehat \Phi_N(-p)|^2 		\end{pmatrix} &=   \begin{pmatrix}
	\widetilde{\gamma}^{\rm{Bog}}(p) + 1 & 0 \\ 0 & \widetilde{\gamma}^{\rm{Bog}}(p) + 1 
	\end{pmatrix},  \\
    \mathbf{E} \begin{pmatrix}
	\widehat \Phi_N(p)^2 & \widehat \Phi_N(p) \widehat \Phi_N(-p) \\ \widehat \Phi_N(p) \widehat \Phi_N(-p) & \widehat \Phi_N(-p)^2 
	\end{pmatrix} &=   \begin{pmatrix}
	0 & \widetilde{\alpha}^{\rm{Bog}}(p) \\ \widetilde{\alpha}^{\rm{Bog}}(p) & 0 
	\end{pmatrix}. \nonumber
\end{align}
Suppose that $|\widetilde{\alpha}^{\rm{Bog}}(p)| < \widetilde{\gamma}^{\rm{Bog}}(p)$. Then $\widetilde G^\Bog_{\pm p}$ admits an upper symbol $\Sigma^\uparrow_{\pm p}$, which is a~centered complex Gaussian probability density characterized by
\begin{align}
	  \mathbf{E}  \begin{pmatrix}
		|\widehat \Phi_N(p)|^2 & \widehat \Phi_N(p) \overline{\widehat \Phi_N(-p)} \\ \overline{\widehat \Phi_N(p)} \widehat \Phi_N(-p) & |\widehat \Phi_N(-p)|^2 		\end{pmatrix} &=  \begin{pmatrix}
	\widetilde{\gamma}^{\rm{Bog}}(p) & 0 \\ 0 & \widetilde{\gamma}^{\rm{Bog}}(p) 
	\end{pmatrix}, \label{eq:upper_symbol_covariance}  \\
    \mathbf{E} \begin{pmatrix}
	\widehat \Phi_N(p)^2 & \widehat \Phi_N(p) \widehat \Phi_N(-p) \\ \widehat \Phi_N(p) \widehat \Phi_N(-p) & \widehat \Phi_N(-p)^2 
	\end{pmatrix} &=   \begin{pmatrix}
	0 & \widetilde{\alpha}^{\rm{Bog}}(p) \\ \widetilde{\alpha}^{\rm{Bog}}(p) & 0 
	\end{pmatrix}. \nonumber
\end{align}
\end{lem}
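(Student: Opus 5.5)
The plan is to compute the quantum characteristic function $\chi_{\pm p}$ explicitly, read off the Weyl--Wigner symbol as its Fourier transform, and then obtain the lower and upper symbols from the Gaussian convolution identities \eqref{eq:lower_from_Weyl} and \eqref{eq:Weyl_from_upper}.

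\textbf{Characteristic function.} Since $\widetilde G^\Bog_{\pm p}$ is the Gibbs state of a quadratic Hamiltonian, it is quasi-free and satisfies Wick's rule. Its one-body data are $\Tr[a_{\pm p}^*\a_{\pm p}\widetilde G^\Bog_{\pm p}]=\widetilde\gamma^\Bog(p)$, $\Tr[\a_p\a_{-p}\widetilde G^\Bog_{\pm p}]=\widetilde\alpha^\Bog(p)$, and all other second moments vanish (for example $\Tr[a_p^*\a_{-p}\,\cdot\,]=0$ and $\Tr[\a_p\a_p\,\cdot\,]=0$), by \eqref{def:gammabogalphabog} and momentum conservation. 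Write $W_{\pm p}(f)=e^{\mathrm{i}\,\Phi(f)}$ with $\mathrm{i}\Phi(f)=f_pa_p^*+f_{-p}a_{-p}^*-\mathrm{h.c.}$, a self-adjoint field operator. For a quasi-free state one has $\Tr[e^{\mathrm{i}\Phi(f)}G]=e^{-\frac12\Tr[\Phi(f)^2G]}$. Expanding $\Tr[\Phi(f)^2G]$ gives
\[
\chi_{\pm p}(f)=\exp\Bigl(-(\widetilde\gamma^\Bog+\tfrac12)(|f_p|^2+|f_{-p}|^2)+\widetilde\alpha^\Bog\,\overline{f_p f_{-p}}+\overline{\widetilde\alpha^\Bog}\,f_pf_{-p}\Bigr),
\]
where the $\tfrac12$ comes from the commutator. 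The sign of the $\alpha$-terms has to be tracked carefully. Note that $\widetilde\alpha^\Bog$ is real.

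\textbf{Weyl--Wigner symbol.} With the measure $\d f=\pi^{-1}\d^2 f$ and the phase $e^{\phi\bar f-\bar\phi f}=e^{2\mathrm{i}\,\mathrm{Im}(\phi\bar f)}$, the Fourier transform in \eqref{eq:Wigner_function} of the Gaussian $e^{-\frac12\mathbf E[(\mathrm{Re}\langle \widehat\Phi,\cdot\rangle\text{-type})^2]}$ is the Gaussian density whose characteristic function is exactly $\chi_{\pm p}$. One checks this directly: for a random vector $\widehat\Phi$ with the stated covariance and pseudo-covariance, $\mathbf E[e^{\bar f\widehat\Phi-f\overline{\widehat\Phi}}]$ equals $\chi_{\pm p}(f)$. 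Fourier inversion with this normalization then identifies $\Sigma^W_{\pm p}$ as the Gaussian density with covariance $\widetilde\gamma^\Bog+\frac12$ and pseudo-covariance $\widetilde\alpha^\Bog$. This density is nondegenerate because $\chi_{\pm p}$ is integrable, which follows from $(\widetilde\gamma^\Bog+\frac12)^2-(\widetilde\alpha^\Bog)^2=\frac14\coth^2(\beta\epsilon/2)>0$; this identity uses $c_p^2-s_p^2=1$.

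\textbf{Lower and upper symbols.} By \eqref{eq:lower_from_Weyl}, $\Sigma^\downarrow$ is the convolution of $\Sigma^W$ with $4e^{-2|\cdot|^2}$ per mode. With $\d f=\pi^{-1}\d^2f$, the latter is the density of an independent circular complex Gaussian with $\mathbf E|\xi|^2=\frac12$ and zero pseudo-covariance. Convolution adds covariances, giving $\widetilde\gamma^\Bog+1$ and leaving the pseudo-covariance unchanged. For the upper symbol, I define $\Sigma^\uparrow$ as the Gaussian with covariance $\widetilde\gamma^\Bog$ and pseudo-covariance $\widetilde\alpha^\Bog$. It is a genuine probability density precisely when the real $4\times4$ covariance matrix is positive definite, i.e. when $|\widetilde\alpha^\Bog|<\widetilde\gamma^\Bog$, which is the hypothesis. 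Then the operator $T=\int|\Omega_\phi\rangle\langle\Omega_\phi|\Sigma^\uparrow(\phi)\,\d\phi$ is well-defined and trace class with trace one. Its characteristic function is computed from $\langle\Omega_\phi,W(f)\Omega_\phi\rangle=e^{-\frac12|f|^2}e^{\bar f\phi-f\bar\phi}$, which yields $\chi_{\pm p}$ after adding $\frac12$ to the covariance. Since a trace-class operator is determined by its characteristic function (Weyl operators separate trace-class operators), we conclude $T=\widetilde G^\Bog_{\pm p}$, which is \eqref{eq:upper_symbol}.

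\textbf{Main obstacle.} The hard part is not conceptual. It is keeping the normalization and sign conventions consistent: the factor $\pi$ in $\d f$, the $2\mathrm{Im}$ in the Fourier phase, and the sign of the pseudo-covariance under $f\mapsto\bar f$. I would fix these once by checking the single-mode case $\widetilde G=$ thermal state, where all three symbols are known radial Gaussians.
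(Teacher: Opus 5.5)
Your proposal is correct, and its skeleton is the same as the paper's: compute $\chi_{\pm p}$, Fourier-invert to get $\Sigma^W_{\pm p}$, then obtain the lower and upper symbols by Gaussian convolution and deconvolution. The genuine difference is in the key computation of $\chi_{\pm p}$.

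\begin{itemize}
\item \emph{The paper's route.} It conjugates by the Bogoliubov unitary $U$ of \eqref{eq:ab_unitary} and uses $U^*W_{\pm p}(f_p,f_{-p})U=W_{\pm p}(c_pf_p+s_p\overline{f_{-p}},\,c_pf_{-p}+s_p\overline{f_p})$. It then evaluates the trace of a Weyl operator against a free thermal state by a coherent-state Gaussian integral. The result is $\exp\bigl(-\tfrac12\coth(\tfrac{\beta\epsilon(p)}{2})(|c_pf_p+s_p\overline{f_{-p}}|^2+|c_pf_{-p}+s_p\overline{f_p}|^2)\bigr)$.
\item \emph{Your route.} You invoke quasi-freeness and write $\chi_{\pm p}$ directly in terms of $\widetilde\gamma^{\mathrm{Bog}}(p)$ and $\widetilde\alpha^{\mathrm{Bog}}(p)$. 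Expanding the paper's expression with $\coth(x/2)=1+2/(e^x-1)$ and $c_p^2-s_p^2=1$ reproduces your formula exactly. This includes the sign of the pairing terms, since $\widetilde\alpha^{\mathrm{Bog}}(p)=-c_ps_p\coth(\beta\epsilon(p)/2)$.
\end{itemize}

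Your route is shorter and makes the covariance identification immediate. However, it takes as input that Gibbs states of quadratic Hamiltonians are centered quasi-free states obeying the Gaussian characteristic-function formula. The usual proof of that fact is precisely the diagonalization the paper carries out, so the paper's computation is self-contained where yours is not.

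For the upper symbol, you argue in the direction the lemma actually asserts. The paper uses \eqref{eq:Weyl_from_upper} to show that any tempered upper symbol must be the stated Gaussian, and then notes this is a genuine density when $|\widetilde\alpha^{\mathrm{Bog}}(p)|<\widetilde\gamma^{\mathrm{Bog}}(p)$. You instead construct $T=\int|\Omega_\phi\rangle\langle\Omega_\phi|\,\Sigma^\uparrow_{\pm p}(\phi)\,\d\phi$ and identify it with $\widetilde G^{\mathrm{Bog}}_{\pm p}$ via injectivity of $T\mapsto\Tr[T\,W_{\pm p}(f_p,f_{-p})]$. This establishes existence explicitly rather than by a uniqueness argument.

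One small slip: with the paper's conventions, the coherent-state matrix element has the opposite sign in the phase from what you wrote. It is
\[
\langle\Omega_{\phi_p,\phi_{-p}},W_{\pm p}(f_p,f_{-p})\Omega_{\phi_p,\phi_{-p}}\rangle=e^{-\frac12(|f_p|^2+|f_{-p}|^2)}\,e^{\sum_{\pm}(f_{\pm p}\overline{\phi_{\pm p}}-\overline{f_{\pm p}}\phi_{\pm p})}.
\]
You can check this to first order using $\langle\Omega_{\phi_p,\phi_{-p}},a_p\Omega_{\phi_p,\phi_{-p}}\rangle=\phi_p$. The slip is harmless, because every Gaussian involved is centered and hence even, so your conclusions stand.
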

\begin{proof}
Using \eqref{eq:ab_unitary} we can write
\begin{equation}
  \chi_{\pm p}(f_p,f_{-p})=(1-e^{-\beta \epsilon(p)})^2 {\rm{Tr}}_{\mathscr F_{\pm p}} \big[ e^{- \beta \epsilon(p)(a^*_p \a_p +a^*_{-p} \a_{-p})} U^* W_{\pm p}(f_p,f_{-p}) U \big] ,  
\end{equation}
 where $U$ is chosen as in \eqref{eq:ab_unitary}. 
Using \eqref{eq:Bogoliubov}  
we find
\begin{equation}
    U^* W_{\pm p}(f_p,f_{-p})U = W_{\pm p}(c_p f_p + s_p \overline f_{-p}, c_p f_{-p} + s_p \overline f_{p}). 
\end{equation}
Next, we compute the trace using the coherent state representation:
\begin{equation}
    {\rm{Tr}}_{\mathscr F_{\pm p}} \big[ T \big] = \int_{\mathbb C^2} \langle \Omega_{\phi_p, \phi_{-p}} | T \Omega_{\phi_p,\phi_{-p}} \rangle \d \phi_p \d \phi_{-p},
    \label{eq:trace_coh_rep}
\end{equation}
with $T = e^{- \beta \epsilon(p) (a_p^* \a_p + a_{-p}^* \a_{-p})} W_{\pm p} (c_p f_p + s_p \overline f_{-p}, c_p f_{-p} + s_p \overline f_{p})$. We evaluate $\langle \Omega_{\phi_p, \phi_{-p}} | T \Omega_{\phi_p,\phi_{-p}} \rangle$ and compute the resulting Gaussian integral in \eqref{eq:trace_coh_rep}. The final result is 
\begin{equation}
  \chi_{\pm p}(f_p,f_{-p})=\exp \left(  -\frac12 \coth{\frac{\beta \epsilon(p)}{2}} (|c_p f_p + s_p \overline f_{-p}|^2+|c_p f_{-p} + s_p \overline f_{p}|^2) \right).
\end{equation}
This function is the characteristic function of the Gaussian probability described in \eqref{eq:WW-cov}. 

We compute the lower symbol by combining \eqref{eq:lower_from_Weyl} with the additivity of Gaussian covariances under convolution. Using \eqref{eq:Weyl_from_upper}, we find that if an upper symbol exists in the space of tempered distributions, it is Gaussian and satisfies \eqref{eq:upper_symbol_covariance}. Equation \eqref{eq:upper_symbol_covariance} defines a covariance of a Gaussian only if $|\widetilde{\alpha}^{\rm{Bog}}(p)| \leq \widetilde{\gamma}^{\rm{Bog}}(p)$, with a degenerate Gaussian (supported on a~proper $\mathbb R$-linear subspace) in the case of equality. 
\end{proof}

We remark that quantities $\widetilde{\gamma}^{\rm{Bog}}(p) +\frac12 $, $\widetilde{\gamma}^{\rm{Bog}}(p)$ and $\widetilde{\gamma}^{\rm{Bog}}(p) +1$ in Lemma \ref{lem:symbols} are the averages of $\frac12 (a_p^* \a_p + \a_p a_p^*)$, $a_p^* \a_p$ and $\a_p a_p^*$. These are operators with, respectively, Weyl-Wigner, lower and upper symbol $|\widehat \Phi_N(p)|^2$.  The fact that lower (resp. upper) symbol of the observable is used in the description of the upper (resp. lower) symbol of the state reflects the duality between upper and lower symbols. The~Weyl-Wigner construction is self-dual.

\begin{lem} \label{lem:cutoff_sufficient}
The condition $|\widetilde{\alpha}^{\rm{Bog}}(p)| < \widetilde{\gamma}^{\rm{Bog}}(p)$, required for the existence of the upper symbol $\Sigma^{\uparrow}_{\pm p}$, is satisfied for all sufficiently large $N$, possibly dependent on $p$. If the cutoff parameter $p_c$ in \eqref{def:cutoffpotential} is chosen so that $p_c = O(N^{\frac13})$, then $|\widetilde{\alpha}^{\rm{Bog}}(p)| < \widetilde{\gamma}^{\rm{Bog}}(p)$ holds for large enough $N$ independent of $p$.
\end{lem}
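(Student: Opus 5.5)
The plan is to reduce the inequality to an explicit scalar condition on $\chi_p$ and $\beta\epsilon(p)$, and then verify it with elementary bounds on $\coth$. Write $n_p = (e^{\beta\epsilon(p)}-1)^{-1}$, so that $\tfrac12 + n_p = \tfrac12\coth(\beta\epsilon(p)/2)$. Using \eqref{def:gammabogalphabog} together with $c_p^2+s_p^2=\cosh 2\chi_p$, $2c_ps_p=\sinh 2\chi_p$ and $s_p^2 = \tfrac12(\cosh 2\chi_p - 1)$, we get
\begin{equation*}
\widetilde\gamma^{\rm Bog}(p) = \left(\tfrac12+n_p\right)\cosh 2\chi_p - \tfrac12, \qquad |\widetilde\alpha^{\rm Bog}(p)| = \left(\tfrac12+n_p\right)\sinh 2\chi_p ,
\end{equation*}
where we take $\chi_p\ge 0$, which is possible since $\widehat u\geq 0$. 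Hence
\begin{equation*}
\widetilde\gamma^{\rm Bog}(p) - |\widetilde\alpha^{\rm Bog}(p)| = \tfrac12\left( \coth(\beta\epsilon(p)/2)\, e^{-2\chi_p} - 1\right).
\end{equation*}
From $\tanh 2\chi_p = g^\id \widehat u(p)/(p^2-\mu^\id+g^\id\widehat u(p))$ we obtain $e^{-2\chi_p} = \sqrt{(p^2-\mu^\id)/(p^2-\mu^\id+2g^\id\widehat u(p))} = (p^2-\mu^\id)/\epsilon(p)$. The condition to prove is therefore
\begin{equation*}
\coth(x_p) > r_p, \qquad x_p = \tfrac{\beta\epsilon(p)}{2}, \qquad r_p = \frac{\epsilon(p)}{p^2-\mu^\id} = \sqrt{1+\tfrac{2g^\id \widehat u(p)}{p^2-\mu^\id}}.
\end{equation*}
If $\widehat u(p)=0$ then $r_p=1<\coth x_p$ and there is nothing to show. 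Otherwise we use $g^\id\le1$, $-\mu^\id>0$ and $p^2\ge 4\pi^2$ to get the uniform bound $r_p\le R:=\sqrt{1+\|\widehat v\|_\infty/(2\pi^2)}$, and also $r_p - 1 \le \|\widehat v\|_\infty/p^2$.

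For a fixed $p$, we have $u(p)=v(p)$ for large $N$. Since $\beta\to0$ and $\epsilon(p)\le p^2-\mu^\id+2\|\widehat v\|_\infty$, while $\beta\mu^\id$ stays bounded by Proposition~\ref{prop:limits}(5), we have $x_p \le C_p\beta + |\beta\mu^\id|$. In the condensed regime $\beta\mu^\id\to 0$, so $x_p\to0$ and $\coth x_p\ge 1/x_p\to\infty > R$. (For general $\kappa$ one would instead use the case split below, which does not need $x_p\to 0$.) This gives the first statement.

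For the uniform statement we split into two cases. In the case $x_p\le 1/R$, the bound $\coth x\ge 1/x$ gives $\coth x_p\ge R\ge r_p$, with strict inequality since $\coth x > 1/x$. In the case $x_p>1/R$, we have $\epsilon(p)\ge 2/(R\beta)\gtrsim N^{2/3}$. Because $\epsilon(p)\le R(p^2-\mu^\id)$ and $|\mu^\id|\lesssim \beta^{-1}|\beta\mu^\id|$ (this is where we need to check that $|\mu^\id|$ is genuinely smaller; in the condensed phase $|\mu^\id|\sim N^{-1/3}$), this forces $p^2\gtrsim N^{2/3}$, and hence $r_p-1\lesssim N^{-2/3}$. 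On the other hand $\widehat u(p)\ne0$ requires $|p|\le p_c=O(N^{1/3})$, so $x_p\le \tfrac{R\beta}{2}(p_c^2+|\mu^\id|)\le C$ uniformly in $p$. This gives $\coth x_p - 1 \ge \coth C - 1 =: c_0>0$, which exceeds $r_p-1$ for $N$ large, independently of $p$.

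The main obstacle is exactly this high-momentum regime. There $\coth x_p$ is no longer large, and we must use the cutoff $p_c=O(N^{1/3})$ to keep $x_p$ bounded, so that $\coth x_p-1$ stays bounded away from zero, while the smallness of $\widehat u(p)/p^2$ makes $r_p-1$ small. Without the cutoff, $x_p$ could be arbitrarily large and $\coth x_p-1\sim 2e^{-2x_p}$ would have to be compared with $\widehat v(p)/p^2$, which is not guaranteed to be smaller.
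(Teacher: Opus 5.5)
Your proof is correct and follows the paper's route. Your condition $\coth(\beta\epsilon(p)/2) > \epsilon(p)/(p^2-\mu^{\mathrm{id}})$ is equivalent to the paper's $s_p/c_p < e^{-\beta\epsilon(p)}$, and you use the same two mechanisms: at low momenta $\beta\epsilon(p)\to 0$ while the Bogoliubov angle stays bounded, and at high momenta the cutoff $p_c=O(N^{1/3})$ keeps $\beta\epsilon(p)$ bounded while $\chi_p\to 0$. The only difference is presentation: you obtain uniformity in $p$ from an explicit quantitative case split on $x_p$ rather than the paper's subsequence argument over $(p_n,N_n)$, and, like the paper, you implicitly use $\beta\mu^{\mathrm{id}}\to 0$, which holds in the regime $Ng^{\mathrm{id}}\gtrsim N^{2/3}$ where the lemma is applied.
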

\begin{proof}
One checks that $|\widetilde \alpha^{\rm{Bog}}(p)| < \widetilde \gamma^{\rm{Bog}}(p)$ is equivalent to $\frac{s_p}{c_p} < e^{- \beta \epsilon(p)}$. Then we observe that $\limsup_{N \to \infty} \frac{s_p}{c_p} <1$, while $\lim_{N \to \infty} e^{- \beta \epsilon(p)} =1$. 

For the second claim, we argue that there exists no sequence $(p_n,N_n)_{n=1}^\infty$ of $p_n \in \Lambda^*_+$ and $N$-values, with corresponding $\beta_n = \beta(N_n)$, such that $N_n \to \infty, \ p_n = O(N_n^{\frac13})$, and $\frac{s_{p_n}}{c_{p_n}} \geq e^{-\beta_n \epsilon (p_n)}$ for all $n$. It is enough to consider the cases of constant $p_n$ and $p_n \to \infty$. The former is already excluded above. In~the latter case, $\frac{s_{p_n}}{c_{p_n}} \to 0$ while $\liminf_{n \to \infty} e^{- \beta_n \epsilon(p_n)} > 0$, so $\frac{s_{p_n}}{c_{p_n}} < e^{-\beta_n \epsilon (p_n)}$ for large $n$. 
\end{proof}

From now on we assume that $p_c = O(N^{\frac13})$ and $N$ is large enough such that $|\widetilde{\alpha}^{\rm{Bog}}(p)| < \widetilde{\gamma}^{\rm{Bog}}(p)$ holds for all $p \in \Lambda^*_+$. We consider the random Fourier series, also called a field,
    \begin{equation}
        \Phi_N(x) = \sum_{p \in \Lambda^*_+} \widehat \Phi_N(p) e^{\mathrm{i}px},
        \label{eq:random_Phi}
    \end{equation}
where $(\widehat \Phi_N(p),\widehat \Phi_N(-p))$ is a random vector with distribution $\Sigma_{\pm p}^\uparrow$, and $(\widehat \Phi_N(p),\widehat \Phi_N(-p))$ is independent of $(\widehat \Phi_N(q), \widehat \Phi_N(-q))$ for $q \neq \pm p$. For the underlying probability space, one can be choose the space $X $ of all complex sequences $(\phi_p)_{p \in \Lambda^*_+}$, equipped with the product topology and the product $\sigma$-algebra. 

\begin{lem}
    Suppose that $(X_i)_{i \in I}$ is a countable collection of second-countable topological spaces. Let $X = \prod_{i \in I} X_i$ be equipped with the product topology. Then the Borel $\sigma$-algebra of $X$ coincides with the product $\sigma$-algebra, that is the $\sigma$-algebra of subsets of $X$ generated by the sets $\prod_{i \in I} E_i$, where $E_i \subset X_i$ is Borel and $E_i = X_i$ for all but finitely many $i \in I$.
\end{lem}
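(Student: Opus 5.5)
The plan is to prove the two inclusions separately. Write $\mathcal{B}(X)$ for the Borel $\sigma$-algebra of the product topology and $\mathcal{P}$ for the product $\sigma$-algebra generated by the cylinder sets $\prod_{i} E_i$ described in the statement.

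For $\mathcal{P} \subseteq \mathcal{B}(X)$ no countability is needed. Each coordinate projection $\pi_i : X \to X_i$ is continuous, hence Borel measurable. So for every Borel $E_i \subset X_i$ the set $\pi_i^{-1}(E_i)$ lies in $\mathcal{B}(X)$. A cylinder set $\prod_i E_i$ with $E_i = X_i$ for all $i$ outside a finite set $F$ is the finite intersection $\bigcap_{i \in F} \pi_i^{-1}(E_i)$, and therefore it is Borel. Since $\mathcal{B}(X)$ is a $\sigma$-algebra containing all generators of $\mathcal{P}$, it contains $\mathcal{P}$.

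For $\mathcal{B}(X) \subseteq \mathcal{P}$ we use the countability assumptions. For each $i$ we fix a countable base $\mathcal{U}_i$ of $X_i$. The product topology has a base consisting of sets $\prod_i U_i$ with $U_i$ open in $X_i$ and $U_i = X_i$ for all but finitely many $i$. Each such $U_i$ is a union of members of $\mathcal{U}_i$, so the family
\[
\mathcal{V} = \Big\{ \textstyle\bigcap_{i \in F} \pi_i^{-1}(B_i) \ :\ F \subset I \text{ finite},\ B_i \in \mathcal{U}_i \Big\}
\]
is also a base of the product topology. This family is countable, because $I$ has countably many finite subsets and each $\mathcal{U}_i$ is countable. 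Hence every open set $O \subset X$ is the union of those members of $\mathcal{V}$ contained in it, which is a countable union. Every member of $\mathcal{V}$ is a cylinder set with open (hence Borel) factors, so it lies in $\mathcal{P}$. It follows that every open set lies in $\mathcal{P}$, and since $\mathcal{B}(X)$ is generated by the open sets, $\mathcal{B}(X) \subseteq \mathcal{P}$.

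The one step that needs care is the countability of the base $\mathcal{V}$. It is exactly here that both hypotheses, countable $I$ and second-countable $X_i$, are used. Without them an open set might only be an uncountable union of basic open sets, and the second inclusion would fail. Everything else is routine.
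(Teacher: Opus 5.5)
Your proof is correct and complete. The first inclusion follows from continuity of the projections, and the second from the countable base $\mathcal{V}$ of cylinders with basic open factors; you also correctly identify where countability of $I$ and second countability of the $X_i$ are used. The paper gives no proof of its own here — it omits this lemma as elementary — and your two-inclusion argument is the standard one being left to the reader.
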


We omit the proof of the elementary lemma above. Its significance here is that the product $\sigma$-algebra on $X \cong \prod_{ p \in A} \mathbb C^2$ is the natural domain of definition for the product measure 
\begin{equation}
    \d \mathbf P_{\Phi_N}(\phi) = \prod_{p \in A} \Sigma_{\pm p}^\uparrow(\phi_p, \phi_{-p}) \d \phi_p \d \phi_{-p}.
    \label{eq:product_measure}
\end{equation}

\begin{lem} \label{lem:Phi_regularity}
$\Phi_N(x)$ is almost surely a smooth function on $\Lambda$. In particular, $\Phi_N \in L^2(\Lambda)$ almost surely, and $\mathbf P_{\Phi_N}$ restricts to a Borel probability measure on $L^2(\Lambda)$. 
\end{lem}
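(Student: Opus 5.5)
The plan is to show that the Fourier coefficients $\widehat\Phi_N(p)$ decay faster than any power of $|p|$ almost surely. Then the series \eqref{eq:random_Phi} converges absolutely and uniformly together with all its derivatives. The key input is the explicit covariance \eqref{eq:upper_symbol_covariance}: $\mathbf E|\widehat\Phi_N(p)|^2=\widetilde\gamma^{\rm Bog}(p)=s_p^2+(c_p^2+s_p^2)(e^{\beta\epsilon(p)}-1)^{-1}$. Because of the cutoff \eqref{def:cutoffpotential}, for $|p|>p_c$ we have $s_p=0$, $c_p=1$ and $\epsilon(p)=p^2-\mu^\id$. Hence
\begin{equation*}
\widetilde\gamma^{\rm Bog}(p)=\frac{1}{e^{\beta(p^2-\mu^\id)}-1}\le C e^{-\beta p^2}\qquad\text{for } |p|>p_c,\ \beta p^2\ge 1,
\end{equation*}
where we use $\mu^\id<0$. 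For fixed $N$ this decays like a Gaussian in $p$, and only finitely many modes satisfy $|p|\le p_c$.

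The steps are as follows. First, for each $k\in\mathbb N$ we bound
\begin{equation*}
\mathbf E\sum_{p\in\Lambda_+^*}|p|^{2k}|\widehat\Phi_N(p)|^2=\sum_{p}|p|^{2k}\widetilde\gamma^{\rm Bog}(p)<\infty.
\end{equation*}
This holds because the finitely many terms with $|p|\le p_c$ are finite, and the tail is dominated by $\sum_p |p|^{2k}e^{-\beta p^2}$, which converges for fixed $\beta>0$. Second, this gives $\sum_p |p|^{2k}|\widehat\Phi_N(p)|^2<\infty$ almost surely for every $k$; intersecting countably many full-measure events yields a single full-measure event. Third, on this event $\Phi_N\in H^s(\Lambda)$ for all $s$, and Sobolev embedding on the torus (for instance Cauchy--Schwarz, $\sum_p|p|^m|\widehat\Phi_N(p)|\le(\sum_p|p|^{2m+4}|\widehat\Phi_N(p)|^2)^{1/2}(\sum_p|p|^{-4})^{1/2}$) shows that the differentiated series converge uniformly. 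Thus $\Phi_N\in\mathcal C^\infty(\Lambda)$, and in particular $\Phi_N\in L^2_+(\Lambda)$, since no $p=0$ mode appears.

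For the measure-theoretic statement, consider the map $\iota: X\supset E\to L^2(\Lambda)$, $(\phi_p)\mapsto\sum_p\phi_p e^{\mathrm ipx}$, defined on the set $E=\{\sum_p|\phi_p|^2<\infty\}$. The set $E$ is measurable in the product $\sigma$-algebra, as a countable sup of measurable partial sums, and it has full measure by step two with $k=0$. The map $\iota$ is measurable because the Borel $\sigma$-algebra of the separable Hilbert space $L^2(\Lambda)$ is generated by the functionals $f\mapsto\langle e^{\mathrm ipx},f\rangle$, and their compositions with $\iota$ are the coordinate projections. The push-forward of $\mathbf P_{\Phi_N}$ under $\iota$ is therefore a Borel probability measure on $L^2(\Lambda)$.

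I do not expect a genuine obstacle. The only point needing care is that the bounds hold for fixed $N$ rather than uniformly in $N$, and this suffices here. The existence of the upper symbol for all $p$ is guaranteed by Lemma~\ref{lem:cutoff_sufficient} under the standing assumption $p_c=O(N^{1/3})$.
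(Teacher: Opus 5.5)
Your proposal is correct and follows essentially the paper's argument: $\mathbf E\|\Phi_N\|_{H^s(\Lambda)}^2=\sum_p(1+p^2)^s\widetilde\gamma^{\rm Bog}(p)<\infty$ for every $s$ by monotone convergence, then Sobolev embedding (your Cauchy--Schwarz bound) for smoothness, then separability of $L^2(\Lambda)$ to pass from the product $\sigma$-algebra to the Borel $\sigma$-algebra (your coordinate-functional argument is equivalent to the paper's open-ball argument). Your explicit use of the cutoff, namely $s_p=0$ and $\epsilon(p)=p^2-\mu^\id$ for $|p|>p_c$, justifies the superpolynomial decay of $\widetilde\gamma^{\rm Bog}(p)$, which the paper only asserts; this is a useful addition, because without the cutoff $s_p^2$ would in general decay only as fast as $\widehat v(p)^2/p^4$.
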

\begin{proof}
For this discussion, let $X'$ be the space of all complex sequences $(\phi_p)_{p \in \Lambda^*}$. Thus $X' = \mathbb C \times X$.

Every sequence $(\phi_p) \in X'$ defines a formal Fourier series:
\begin{equation}
    \phi(x) = \sum_{p \in \Lambda^*} \phi_p e^{\mathrm{i}px},
\end{equation}
while sequences in $X$ correspond to Fourier series with vanishing constant term. We identify the sequence $(\phi_p)$ with the corresponding Fourier series $\phi$. The Fourier series $\phi$ is an element of the Sobolev space $H^s(\Lambda)$ provided that
\begin{equation}
  \| \phi \|_{H^s(\Lambda)}^2 = \sum_{p \in \Lambda^*} (1+p^2)^s |\phi_p|^2 < \infty.
\end{equation}
The function $X' \ni \phi \mapsto \| \phi \|_{H^s(\Lambda)}^2 \in [0,\infty]$ is the supremum of a countable family of continuous functions, so it is Borel. Since $ H^s(\Lambda)$ is the preimage of $[0,\infty)$ under $\| \cdot \|_{H^s(\Lambda)}^2$, it is a Borel subset of $X'$. Similarly, every open ball in $H^s(\Lambda)$ is Borel in $X'$. By separability of $H^s(\Lambda)$, every open set in $H^s(\Lambda)$ is the union of a countable collection of open balls, so it is Borel in $X'$. Therefore, a Borel probability measure on $X'$ such that $H^s(\Lambda)$ is of measure $1$ can be restricted to a Borel probability measure on $H^s(\Lambda)$. 

By the monotone convergence theorem, for all $s \in \mathbb R$,
\begin{equation}
 \mathbf E[ \| \Phi_N \|^2_{H^s(\Lambda)}] = \sum_{p \in \Lambda^*_+} (1+p^2)^s \widetilde \gamma^{\rm{Bog}}(p) < \infty,
\end{equation}
because $\widetilde \gamma^{\rm{Bog}}(p)$ decays with $p$ faster than any power. Therefore, $\| \Phi_N \|_{H^s(\Lambda)} $ is finite $\mathbf P_{\Phi_N}$-almost surely. We have $\mathcal C^\infty(\Lambda) = \cap_{s=0}^\infty H^s(\Lambda)$ by Sobolev embeddings, so $\Phi_N$ is smooth almost surely. 

Specializing the discussion above to $s=0$, we have $\mathbf P_{\Phi_N}(L^2(\Lambda))=1$, and therefore $\mathbf P_{\Phi_N}$ restricts to a~Borel probability measure on $L^2(\Lambda)$.
\end{proof}

We highlight that Lemma \ref{lem:Phi_regularity} is true for the probability distribution constructed from upper symbols, because the covariance of upper symbols decays with $p$. This property is not shared by the Weyl-Wigner and lower symbols, and $\Phi_N(x)$ would be a singular distribution if we chose $(\widehat \Phi_N(p),\widehat \Phi_N(-p))$ distributed according to $\Sigma^W_{\pm p}$ or $\Sigma^\downarrow_{\pm p}$. 

\begin{proof}[Proof of Theorem \ref{thm:semiclassicalApproximation} (a)]
    We invoke the tensor product decomposition \eqref{eq:G_tensor_decomp} of $\widetilde G^\Bog$ and the upper symbol representation \eqref{eq:upper_symbol},
    \begin{equation}
        \widetilde G^\Bog = \bigotimes_{p \in A} \int_{\mathbb C^2} | \Omega_{\phi_p , \phi_{-p}} \rangle \langle \Omega_{\phi_p ,\phi_{-p}} | \Sigma^\uparrow_{\pm p}(\phi_p , \phi_{-p}) \d \phi_p \d \phi_{-p} = \int | \Omega_\phi \rangle \langle  \Omega_\phi | \d \mathbf P_{\Phi_N} (\phi),
    \end{equation}
    where we used that $\Omega_\phi$ is a product vector, $\Omega_\phi = \bigotimes_{p \in A} \Omega_{\phi_p,\phi_{-p}}$. We note that $\Omega_\phi$ is a well-defined element of $\mathcal F_+$, because by Lemma \ref{lem:Phi_regularity} we can assume that $\phi \in L^2(\Lambda)$. Next, we observe that
    \begin{equation}
        \widetilde G^\Bog (z) = \left( \frac{z}{|z|} \right)^{\mathcal N_+} \widetilde G^\Bog \left( \frac{z}{|z|} \right)^{- \mathcal N_+} = \int | \Omega_{\frac{z}{|z|} \phi} \rangle \langle  \Omega_{\frac{z}{|z|} \phi} | \d \mathbf P_{\Phi_N} (\phi),
    \end{equation}
    and therefore,
    \begin{align}
        \widetilde \Gamma_{\beta, N} &= \int | z \rangle \langle z | \otimes | \Omega_{\frac{z}{|z|} \phi} \rangle \langle  \Omega_{\frac{z}{|z|} \phi} | \widetilde g^{\BEC}(z) \d \mathbf P_{\Phi_N} (\phi) dz  \\
        &= \int | \Omega_{z + \frac{z}{|z|} \phi} \rangle \langle  \Omega_{z + \frac{z}{|z|} \phi} | \widetilde g^{\BEC}(z) \d \mathbf P_{\Phi_N} (\phi) dz. \nonumber
    \end{align}
    Assuming that $N g^\id \geq c N^{\frac23}$ and taking $p_c(N) \geq c N^{\frac{1}{144}}$ (which is compatible with the requirement $p_c(N) \leq c N^{\frac13}$), the claim follows from  \eqref{eq:norm-approximation-intro} and Proposition \ref{prop:traceNormDifference}.
    \end{proof}
    
Let us explain how the integral representation of the approximate Gibbs state $\widetilde \Gamma_{\beta, N}$ derived above will be used in the forthcoming computations. If $T \in B(\mathcal F)$, then
\begin{align}
  \mathrm{Tr} [T \, G_{\beta,N}] + O(N^{-\frac{1}{48}}) &=   \mathrm{Tr} [T \, \widetilde \Gamma_{\beta,N}] \\
  &= \int \langle \Omega_{\phi_z} | T \Omega_{\phi_z} \rangle \widetilde g^{\mathrm{BEC}}(z) \d \mathbf P_{\Phi_N}(\phi) dz = \mathbf{E}[ \langle \Omega_{\Phi_{N,Z}} | T \Omega_{\Phi_{N,Z} } \rangle ]. \nonumber
\end{align}
The resulting expression is the expectation value (with respect to the randomness governing the field $\Phi_{N,Z}(x) = Z + \frac{Z}{|Z|} \Phi_N(x)$) of the lower symbol of $T$, that is of the matrix element $\langle \Omega_{\Phi_{N,Z}} | T \Omega_{\Phi_{N,Z} } \rangle$ with a~random coherent state $\Omega_{\Phi_{N,Z}}$. Here the distribution of $|Z|$ is governed by the function $\widetilde g^{\BEC}(z)$, and~in particular $\frac{Z}{|Z|}$ is distributed uniformly on the unit circle. As we will discuss, the covariance of $\Phi_N(x)$ simplifies in the limit $N \to \infty$, in which $\Phi_N(x)$ can be replaced with an asymptotic field $\Psi$.

Let us suppose that $\beta(N)$ is chosen in such a way that the limit $m^2 = - \lim_{N \to \infty} \mu^\id$ exists and is finite. We recall that this holds in particular if $\kappa <1$, in which case $m^2 = 0$ (see Proposition \ref{prop:limits} and the discussion below it). We define $\Sigma^\infty_{\pm p}(\psi_p, \psi_{-p}) \d \psi_p \d \psi_{-p}$ to be the probability distribution of the centered complex Gaussian random vector $(\widehat \Psi(p), \widehat \Psi(-p))$ with covariance and pseudo-covariance
\begin{align}
	 & \mathbf{E} \begin{pmatrix}
		|\widehat \Psi(p)|^2 & \widehat \Psi(p) \overline{\widehat \Psi(-p)} \\ \overline{\widehat \Psi(p)} \widehat \Psi(-p) & |\widehat \Psi(-p)|^2 		\end{pmatrix} = \frac{p^2 +m^2 + g(\kappa) \widehat v(p)}{(p^2 +m^2) (p^2 +m^2 + 2 g(\kappa) \widehat v(p))}  \begin{pmatrix}
	1 & 0 \\ 0 & 1 
	\end{pmatrix},  \nonumber \\
	 & \mathbf{E} \begin{pmatrix}
	\widehat \Psi(p)^2 & \widehat \Psi(p) \widehat \Psi(-p) \\ \widehat \Psi(p) \widehat \Psi(-p) & \widehat \Psi(-p)^2 		
	\end{pmatrix} = -\frac{g(\kappa)\widehat v(p)}{(p^2+m^2) (p^2+m^2 + 2g(\kappa) \widehat v(p))}  \begin{pmatrix}
	0 & 1 \\ 1 & 0 
	\end{pmatrix}.
\label{eq:familyOfGaussianRandomVariables_with_mu}
\end{align}
More explicitly, the density $\Sigma^\infty(\psi_p,\psi_{-p})$ is given by
\begin{align}
    \Sigma^\infty_{\pm p}(\psi_p, \psi_{-p}) = & (p^2 +m^2)(p^2 +m^2 + 2 g(\kappa) \widehat v(p)) \label{eq:Sigma_infty_explicit} \\
    &\cdot \exp \left( - (p^2 +m^2 +2 g(\kappa) \widehat v(p)) (|\psi_p|^2 + |\psi_{-p}|^2) -2 g(\kappa) \widehat v(p) \, \mathrm{Re}(\psi_p \psi_{-p})\right). \nonumber
\end{align}

Definition \eqref{eq:familyOfGaussianRandomVariables_with_mu} generalizes \eqref{eq:familyOfGaussianRandomVariables}, which was formulated for $\kappa >1$, and therefore $m^2 =0$. While it depends on two parameters $g(\kappa)$ and $m^2$, these are never simultaneously nonzero. This convention allows for a unified treatment of the case $\kappa >1$ and the case $\kappa=1$ with finite $m^2$. Our proofs of Theorems \ref{thm:semiclassicalApproximation} and \ref{thm:thermaldistr} cover $\kappa=1$ with finite $m^2$, but we omit this case from explicit Theorem statements to keep them accessible. 

We remark that each of the semiclassical symbols $\Sigma^{\bullet}_{\pm p}$, suitably rescaled, converges to $\Sigma^\infty_{\pm p}$:
\begin{equation}
\beta^{-2}    \Sigma^\bullet_{\pm p}(\beta^{-\frac12} f_p , \beta^{-\frac12} f_{-p})  \xrightarrow{N \to \infty} \Sigma^\infty_{\pm p}(f_p,f_{-p}), \qquad  \bullet \in \{ W, \uparrow, \downarrow \}.
\end{equation}
As explained above, we prefer to use the upper symbol, and its specific modes of convergence will be discussed in detail below.

In complete analogy with \eqref{eq:random_Phi}, we introduce the random Fourier series
\begin{equation}
    \Psi(x) = \sum_{p \in \Lambda^*_+} \widehat \Psi(p) e^{\mathrm{i}px},
\end{equation}
where $(\widehat \Psi(p), \widehat \Psi(-p))$ is distributed according to $\Sigma_{\pm p}^\infty$ and independent of $(\widehat \Psi (q), \Psi_{-q})$ for $q \neq \pm p$. We~denote the probability distribution of $\Psi$ by $\mathbf P_{\Psi}$. 

The random function $\Psi(x)$ is much less regular than $\Phi_N(x)$ because its covariance decays with $p$ only as fast as $\frac{1}{p^2}$, like for a Gaussian free field. Verification of the lemma below is routine and follows similar lines as the proof of Lemma \ref{lem:Phi_regularity}.

\begin{lem}\label{lem:Psi_reg}
$\Psi(x)$ is almost surely in the Sobolev space $H^s(\Lambda)$ for $s < - \frac12$, and almost surely not in $H^s(\Lambda)$ for $s \geq - \frac12$. The distribution $\mathbf P_\Psi$ restricts to a Borel probability measure on $H^s(\Lambda)$ for $s < - \frac12$.  
\end{lem}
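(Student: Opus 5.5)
We follow the scheme of the proof of Lemma~\ref{lem:Phi_regularity}, replacing the upper-symbol covariance $\widetilde\gamma^{\rm Bog}(p)$ by the covariance of $\Sigma^\infty_{\pm p}$ in \eqref{eq:familyOfGaussianRandomVariables_with_mu}. Write
\[
 \sigma_p := \mathbf E|\widehat\Psi(p)|^2 = \frac{p^2+m^2+g(\kappa)\widehat v(p)}{(p^2+m^2)(p^2+m^2+2g(\kappa)\widehat v(p))}.
\]
Since $\widehat v \ge 0$ is bounded and $p^2 \ge 4\pi^2$ on $\Lambda^*_+$, we get two-sided bounds $c\,p^{-2} \le \sigma_p \le C\,p^{-2}$ uniformly in $p \in \Lambda^*_+$. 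For example, $\sigma_p \le (p^2+m^2)^{-1}$, and $\sigma_p \ge \tfrac12 (p^2+m^2)^{-1}$ because $(a+b)/(a+2b)\ge 1/2$ for $a,b\ge0$; finiteness of $m^2$ then gives the $p^{-2}$ comparison.

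For $s<-\frac12$ we use monotone convergence exactly as before:
\[
 \mathbf E\|\Psi\|_{H^s}^2 = \sum_{p\in\Lambda^*_+}(1+p^2)^s\sigma_p \lesssim \sum_{p \in \Lambda_+^*} |p|^{2s-2} < \infty,
\]
since $2s-2<-3$. Hence $\Psi\in H^s(\Lambda)$ almost surely. Measurability is handled as in the proof of Lemma~\ref{lem:Phi_regularity}: $H^s(\Lambda)$ and its open balls are Borel subsets of the sequence space $X'$, and $H^s$ is separable. Therefore $\mathbf P_\Psi$ restricts to a Borel probability measure on $H^s(\Lambda)$.

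The converse, almost sure divergence for $s\ge-\frac12$, is the only non-routine step. A first moment alone does not suffice, so we use independence across pairs $\{p,-p\}$. It suffices to treat $s=-\frac12$, since $H^s \subset H^{-1/2}$ for $s \ge -\tfrac12$. Set $Y_p := |\widehat\Psi(p)|^2+|\widehat\Psi(-p)|^2$ for $p\in A$. These are independent and nonnegative, and $\|\Psi\|_{H^{-1/2}}^2=\sum_{p\in A}(1+p^2)^{-1/2}Y_p$. Each $Y_p$ is a quadratic form in a Gaussian vector, so its variance is at most $C(\mathbf EY_p)^2$ with $\mathbf E Y_p = 2\sigma_p\asymp p^{-2}$. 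Thus $a_p:=(1+p^2)^{-1/2}\mathbf EY_p\asymp |p|^{-3}$, and $\sum_{p \in A} a_p=\infty$ (a logarithmic divergence), while $\sum_{p \in A}\mathrm{Var}\big((1+p^2)^{-1/2}Y_p\big)\lesssim \sum_{p \in A}|p|^{-6}<\infty$.

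By Kolmogorov's two-series theorem (or Chebyshev along a subsequence of partial sums), $\sum_{p \in A}(1+p^2)^{-1/2}(Y_p-\mathbf EY_p)$ converges almost surely. Since the deterministic series $\sum_{p \in A} a_p$ diverges, the full sum diverges almost surely. The $0$--$1$ law then confirms that the event $\{\Psi \in H^{-1/2}\}$ has probability $0$.

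I expect this last step to be the main (mild) obstacle. The key point is the uniform lower bound $\sigma_p\gtrsim p^{-2}$ together with the comparable upper bound on the variances, and both follow from the explicit covariance formula.
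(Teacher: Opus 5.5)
Your proof is correct. The first half is exactly the route the paper has in mind when it says the lemma follows the lines of Lemma~\ref{lem:Phi_regularity}: the monotone-convergence computation $\mathbf E\|\Psi\|_{H^s}^2=\sum_{p}(1+p^2)^s\sigma_p<\infty$ for $s<-\frac12$, plus the Borel-restriction argument, with $\widetilde\gamma^{\mathrm{Bog}}(p)$ replaced by $\sigma_p$. Your two-sided bound $\frac{1}{2(p^2+m^2)}\le\sigma_p\le\frac{1}{p^2+m^2}$ is the only new input needed there.

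The almost-sure divergence for $s\ge-\frac12$ is not covered by that template, and the paper does not write it out. You supply it correctly. You reduce to $s=-\frac12$ and then use the independent-sum argument. The means $a_p\asymp|p|^{-3}$ have divergent sum and the variances $\lesssim|p|^{-6}$ have convergent sum. Because the partial sums of nonnegative terms are monotone, the choice of enumeration of $A$ does not matter. The final appeal to the $0$--$1$ law is harmless but unnecessary.

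A slightly shorter way to do the same step is closer to the Gaussian integrations used elsewhere in the paper. By \eqref{eq:Psi_eta}, $Y_p=\frac{|\eta_{p,+}|^2}{p^2+m^2+2g(\kappa)\widehat v(p)}+\frac{|\eta_{p,-}|^2}{p^2+m^2}$, where the $|\eta_{p,\pm}|^2$ are independent $\mathrm{Exp}(1)$ variables. Then $\mathbf E\big[e^{-\|\Psi\|_{H^{-1/2}}^2}\big]=\prod_{p\in A}(1+x_{p,+})^{-1}(1+x_{p,-})^{-1}$ with $x_{p,+}+x_{p,-}=a_p$. This product vanishes because $\prod_p(1+x_{p,+})(1+x_{p,-})\ge 1+\sum_p a_p=\infty$. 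Hence $\|\Psi\|_{H^{-1/2}}=\infty$ almost surely, without invoking any series theorem.
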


To establish Theorem 1 (b), we construct a specific coupling of $\Phi_N$ and $\Psi$ --- that is, a joint distribution compatible with the prescribed marginals. This coupling will also feature in subsequent proofs.

\begin{proof}[Proof of Theorem \ref{thm:semiclassicalApproximation} (b)]
Let $(\eta_{p, \pm})_{p \in A}$ be independent centered complex Gaussian random variables satisfying $\mathbf E[|\eta_p|^2] = 1$ and $\mathbf E[\eta_p^2] = 0$. We can realize the random variable $\Phi_N$ on the probability space of $\eta$ via, for every $p \in A$,
\begin{subequations}
\label{eq:Phi_from_eta}
\begin{align}
 \widehat \Phi_N(p) & = \frac{1}{\sqrt{2}} \left( \sqrt{\widetilde \gamma^\Bog(p) + \widetilde \alpha^\Bog(p)} \, \eta_{p,+} + \sqrt{\widetilde \gamma^\Bog(p) - \widetilde \alpha^\Bog(p)} \, \eta_{p,-} \right)  , \\
 \widehat \Phi_N(-p) &=  \frac{1}{\sqrt{2}} \left( \sqrt{\widetilde \gamma^\Bog(p) + \widetilde \alpha^\Bog(p)} \, \overline{\eta_{p,+}} - \sqrt{\widetilde \gamma^\Bog(p) - \widetilde \alpha^\Bog(p)} \, \overline{\eta_{p,-}} \right).
\end{align}
\end{subequations}
Then $(\widehat \Phi_N(p),\widehat \Phi_N(-p))$ are Gaussian, and one checks that \eqref{eq:upper_symbol_covariance} holds. Similarly, we can also generate $\Psi$ from the same random seed $\eta$ as follows:
\begin{subequations}
\begin{align}
 \widehat \Psi(p) & = \frac{1}{\sqrt{2}} \left( \frac{1}{\sqrt{p^2 +m^2 + 2 g(\kappa) \widehat v(p)}} \, \eta_{p,+} + \frac{1}{\sqrt{p^2 +m^2}} \, \eta_{p,-} \right) \label{eq:Psi_eta} ,  \\
 \widehat \Psi(-p) &=  \frac{1}{\sqrt{2}} \left( \frac{1}{\sqrt{p^2 +m^2 + 2 g(\kappa) \widehat v(p)}} \, \overline{\eta_{p,+}} - \frac{1}{\sqrt{p^2 +m^2}} \, \overline{\eta_{p,-}} \right). \nonumber
\end{align}
\end{subequations}
To measure the squared distance between $(\widehat \Phi_N(p),\widehat \Phi_N(-p))$ and $(\widehat \Psi(p),\widehat \Psi(-p))$, we compute 
\begin{align}
   \sum_\pm |\beta^{\frac12} \widehat \Phi_N(\pm p) - \widehat \Psi(\pm p)|^2 = &\left| \beta^{\frac12} \sqrt{\widetilde \gamma^\Bog(p) + \widetilde \alpha^\Bog(p)} - \frac{1}{\sqrt{p^2 +m^2+ 2 g(\kappa) \widehat v(p)}} \right|^2  |\eta_{p,+}|^2 \label{eq:Phi_Psi_dist} \\
     +  &\left| \beta^{\frac12} \sqrt{\widetilde \gamma^\Bog(p) - \widetilde \alpha^\Bog(p)} - \frac{1}{\sqrt{p^2 +m^2 }} \right|^2   |\eta_{p,-}|^2  .   \nonumber
\end{align}
Let us inspect the coefficients of the right hand side above:
\begin{align}
    & \left| \beta^{\frac12} \sqrt{\widetilde \gamma^\Bog(p) + \widetilde \alpha^\Bog(p)}  - \frac{1}{\sqrt{p^2 +m^2 + 2 g(\kappa) \widehat v(p)}} \right|  = \frac{\left| \beta (\widetilde \gamma^\Bog(p) + \widetilde \alpha^\Bog(p)) - \frac{1}{p^2 +m^2 + 2 g(\kappa) \widehat v(p)} \right|}{\left| \beta^{\frac12} \sqrt{\widetilde \gamma^\Bog(p) + \widetilde \alpha^\Bog(p)} + \frac{1}{\sqrt{p^2 +m^2 + 2 g(\kappa) \widehat v(p)}} \right|} \nonumber \\
     & \leq  \frac{1}{|p|} | \beta (\widetilde \gamma^\Bog(p) + \widetilde \alpha^\Bog(p)) (p^2 +m^2 + 2 g(\kappa) \widehat v(p)) -1 |   \label{eq:formula_68}  . 
\end{align}
Let us now note that $
    \widetilde \gamma^\Bog(p) + \widetilde \alpha^\Bog(p)  = \frac{p^2 - \mu^\id}{\epsilon(p)} \frac{1}{e^{\beta \epsilon(p)}-1} + O \left( \frac{\widehat v(p)}{p^2} \right)$, and therefore:  
\begin{align}
   & | \beta (\widetilde \gamma^\Bog(p) + \widetilde \alpha^\Bog(p)) (p^2 +m^2 + 2 g(\kappa) \widehat v(p)) -1 | 
 \leq  c \beta \widehat v(p) + \left| \frac{\beta \epsilon(p)}{e^{\beta \epsilon(p)}-1} \frac{p^2 +m^2 + 2 g(\kappa) \widehat v(p)}{p^2 - \mu^\id + 2 g^\id \widehat u(p)} -1 \right|  \nonumber  \\
& \leq c \beta \widehat v(p) + c \left| \frac{\beta \epsilon(p)}{e^{\beta \epsilon(p)}-1} -1 \right| + c \frac{|\mu^\id +m^2| + |g^\id \widehat u(p) - g(\kappa) \widehat v(p)|}{p^2} \label{eq:formula_73} \\
& \leq  c \min \{ 1 , \beta p^2 \} + c \frac{| \mu^\id +m^2| + |g -g^\id| + p_c^{-1}}{p^2} \leq c (\beta p^2)^{\frac{\delta}{2}} +  c \frac{| \mu^\id +m^2| + |g -g^\id| + p_c^{-1}}{p^2},  \nonumber
\end{align}
where $\delta \in (0,2]$. In the third inequality we used the assumed summability of $p \widehat v(p)$ to bound $|\widehat u(p) - \widehat v(p)| \lesssim p_c^{-1}$. We assume that $p_c \to \infty$, so $| \mu^\id +m^2| + |g -g^\id| + p_c^{-1} \leq c_N = o(1)$, see Proposition \ref{prop:limits}. We can treat the second term in \eqref{eq:Phi_Psi_dist} the same way, obtaining:
\begin{equation}
       |\beta^{\frac12} \widehat \Phi_N(p) - \widehat \Psi(p)|^2 + |\beta^{\frac12} \widehat \Phi_N(-p) - \widehat \Psi(-p)|^2 \leq c_N (p^2)^{-1+\delta}  (|\eta_{p,+}|^2 + |\eta_{p,-}|^2) .
    \label{eq:sum_of_squares_bound}
\end{equation}
Therefore, for $s < -\frac12 - \delta$ and $t>0$ small enough,
\begin{align}
  \mathbf E[  e^{t c_N^{-1} \| \beta^{\frac12} \Phi_N - \Psi \|_{H^s(\Lambda)}^2} ]  \leq \prod_{p \in A} \mathbf E[e^{  c t  (p^2)^{s -1+\delta} (|\eta_{p,+}|^2 + |\eta_{p,-}|^2) }]  = \prod_{p \in A} \frac{1}{1-c t (p^2)^{s - 1 +\delta}} \lesssim 1 , \nonumber
\end{align}
where the equality holds by explicit Gaussian integration, and in the last inequality we used the convergence of the series $\sum_{p \in A} (p^2)^{s -1 + \delta} < \infty$. In particular, $\mathbf E[\| \beta^{\frac12} \Phi_N - \Psi \|_{H^s(\Lambda)}^{2k}] \lesssim k! \left(  \frac{c_N}{t} \right)^k$. 
\end{proof}

The proof above establishes the convergence $\beta^{\frac12} \Phi_N \to \Psi$ in a setting slightly more general than stated in Theorem \ref{thm:semiclassicalApproximation}, namely including $\kappa=1$ with finite $m^2$. If we strengthen the assumptions slightly, we can get an explicit rate of convergence.

\begin{cor} \label{cor:1b_quantitative}
    Suppose that $\beta = \kappa \beta_c (1+O(N^{-\frac13}))$ with $\kappa>1$, and take $p_c = N^{ \frac13}$. Then for $\delta \in (0,1]$, $s < -\frac12 - \delta$, and $t$ small enough:
    \begin{equation}
     \mathbf E[  e^{t \beta^{-\delta} \| \beta^{\frac12} \Phi_N - \Psi \|_{H^s(\Lambda)}^2} ] \lesssim 1.
    \end{equation}
\end{cor}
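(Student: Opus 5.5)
The plan is to reuse the coupling from the proof of Theorem~\ref{thm:semiclassicalApproximation}~(b): $\Phi_N$ and $\Psi$ are built from the same seed $\eta$ via \eqref{eq:Phi_from_eta} and \eqref{eq:Psi_eta}, and $m^2=0$ since $\kappa>1$. The only change is to make the constant $c_N$ in \eqref{eq:sum_of_squares_bound} explicit. We aim for the mode-wise bound
\begin{equation*}
|\beta^{\frac12}\widehat\Phi_N(p)-\widehat\Psi(p)|^2+|\beta^{\frac12}\widehat\Phi_N(-p)-\widehat\Psi(-p)|^2 \lesssim \beta^{\delta}(p^2)^{-1+\delta}\,(|\eta_{p,+}|^2+|\eta_{p,-}|^2),
\end{equation*}
with a constant uniform in $p$ and $N$. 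Once this holds, the exponential moment computation at the end of that proof applies verbatim with $c_N$ replaced by $\beta^\delta$. For $s<-\frac12-\delta$ the series $\sum_{p\in A}(p^2)^{s-1+\delta}$ converges, so for $t$ small the product $\prod_p(1-ct(p^2)^{s-1+\delta})^{-1}$ is bounded, and the claim follows.

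To get the mode-wise bound, we follow \eqref{eq:formula_68}--\eqref{eq:formula_73}. The difference of square roots is bounded by $|p|^{-1}$ times $|\beta(\widetilde\gamma^\Bog\pm\widetilde\alpha^\Bog)(p)\,(p^2+\dots)-1|$. So it suffices to show that this relative error is $\lesssim(\beta p^2)^{\delta/2}$, since squaring it and multiplying by $p^{-2}$ gives exactly $\beta^\delta(p^2)^{-1+\delta}$. The error in \eqref{eq:formula_73} has four pieces.

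\begin{enumerate}[label=(\roman*)]
\item The Bose factor error $|\beta\epsilon/(e^{\beta\epsilon}-1)-1|\lesssim\min\{1,\beta\epsilon(p)\}$. Since $\epsilon(p)\lesssim p^2+|\mu^\id|+1$ and $|\mu^\id|\lesssim N^{-1/3}\sim\beta^{1/2}$ by Proposition~\ref{prop:limits}, this is $\lesssim(\beta p^2)^{\delta/2}$ up to terms of the type in (ii).
\item The $O(\widehat v(p)/p^2)$ remainder multiplied by $\beta p^2$, giving $\beta\widehat v(p)\lesssim\beta\le\beta^{\delta/2}(p^2)^{\delta/2}\cdot\beta^{\delta/2}$, which is fine since $\delta\le1$ and $|p|\ge2\pi$.
\item The terms $(|\mu^\id|+|g-g^\id|)/p^2$.
\item The cutoff term $|g^\id\widehat u(p)-g\widehat v(p)|/p^2$.
\end{enumerate}

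For (iii) we use the hypothesis $\beta=\kappa\beta_c(1+O(N^{-1/3}))$ together with \eqref{eq:particle_partition}. Here $|\mu^\id|\lesssim N^{-1/3}$ and $Ng^\id\sim N$, so the $(Ng^\id)^{-1/2}$ error is $O(N^{-1/2})$. Also $(\beta_c/\beta)^{3/2}=\kappa^{-3/2}+O(N^{-1/3})$, hence $|g-g^\id|\lesssim N^{-1/3}\lesssim\beta^{1/2}$. Since $\beta^{1/2}/p^2\le(\beta p^2)^{\delta/2}$ for $\delta\le1$ and $|p|\gtrsim1$, these terms are fine.

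For (iv), with $p_c=N^{1/3}$, the term $\widehat u-\widehat v$ vanishes for $|p|\le p_c$. For $|p|>p_c$ we have $\beta p^2\gtrsim1$, so this part is trivially dominated by $(\beta p^2)^{\delta/2}$, using only $\widehat v$ bounded. This avoids the cruder $p_c^{-1}$ estimate used before.

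The main obstacle is the uniformity in $p$ of the Bose factor estimate and of the $\widetilde\gamma^\Bog\pm\widetilde\alpha^\Bog$ expansion. Near $|p|\sim p_c$ and beyond, $\beta\epsilon(p)$ is of order one or large, and the bound must not degrade. Our approach is to treat $\beta p^2\ge1$ separately, where both $\beta^{1/2}\widehat\Phi_N$ and $\widehat\Psi$ have variance $\lesssim p^{-2}\le\beta^\delta(p^2)^{-1+\delta}$, so the triangle inequality suffices. We use the refined expansion only for $\beta p^2<1$, where $\beta\epsilon(p)\lesssim\beta p^2+\beta$.
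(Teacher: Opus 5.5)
Your proposal is correct and follows essentially the same route as the paper. You reuse the coupling from the proof of Theorem~\ref{thm:semiclassicalApproximation}~(b), note that under the stated hypotheses $|\mu^{\id}| + |g(\kappa)-g^{\id}| + p_c^{-1} = O(N^{-1/3}) = O(\beta^{1/2})$, and conclude that for $\delta\le 1$ one may take $c_N=\beta^\delta$ in \eqref{eq:sum_of_squares_bound}. Your separate handling of the cutoff term and of the region $\beta p^2\geq 1$ is harmless but unnecessary: the paper simply uses $|\widehat u(p)-\widehat v(p)|\lesssim p_c^{-1}=N^{-1/3}$, and the bound \eqref{eq:formula_73} is already uniform in $p$.
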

\begin{proof}
If $\kappa>1$, we have $m^2 =0$ and $\mu^\id = O(N^{-\frac13})$. The assumed rate of convergence of $\frac{\beta}{\beta_c}$ implies that also $|g-g^\id|$ is $O(N^{-\frac13})$ by \eqref{eq:particle_partition}. Therefore, we can take $c_N = \beta^{\delta}$ in \eqref{eq:sum_of_squares_bound}. 
\end{proof}

Finally, we slightly shift gears to prove the last point in Theorem \ref{thm:semiclassicalApproximation}.

\begin{proof}[Proof of Theorem \ref{thm:semiclassicalApproximation} (c)] Let $\widetilde \Theta_{\beta,N}$ be the  point process associated to $\widetilde \Gamma_{\beta,N}$. To verify that this is a Cox point process, by  \cite[Prop. 6.2.II]{DaleyVere-Jones2003} it is enough to compute the characteristic function: 
\begin{align}
    \mathbf E[ e^{\mathrm{i} \widetilde \Theta_{\beta,N}(f)}] = \mathrm{Tr}[\widetilde \Gamma_{\beta,N} \, e^{\mathrm{i} \mathrm{d} \Gamma(f)}] &= \mathbf E[\langle \Omega_{\Phi_{N,z}} | e^{\mathrm{i} \mathrm{d} \Gamma(f) } \Omega_{\Phi_{N,z}} \rangle ] \\
    &= \mathbf E[ \exp \left( \langle \Phi_{N,z} | (e^{\mathrm{i}f}-1) \Phi_{N,z} \rangle \right)] = \mathbf E[e^{\int_\Lambda (e^{\mathrm{i}f(x)}-1) \d \lambda(x)} ], \nonumber
\end{align}
where $\d \lambda(x) = |\Phi_{N,z}(x)|^2 \d x = | |Z| + \Phi_N(x) |^2 \d x$. 

The statement then follows from the fact that
for a bounded Borel-measurable function $F : \mathbb{R} \to \mathbb{C}$ we have (cf. \eqref{eq:prop:empiricalMeasure1})
    \begin{equation*}
        \mathbf{E}[ F(\Theta_{\beta,N}(f)) ] = \Tr[ F(\d \Gamma(f) ) G_{\beta,N}],
\end{equation*}
and from part (a) of the Theorem.
\end{proof}

\section{Proof of Theorem \ref{thm:thermaldistr}}\label{sec:thermaldistr}
Let us now discuss linear functions of $\Psi$. First consider $f \in H^s(\Lambda)$ with $s > \frac12$. Then, by duality of $H^s(\Lambda)$ with $H^{-s}(\Lambda)$ and Lemma \ref{lem:Psi_reg}, the linear form
\begin{equation}
    \langle f | \Psi \rangle = \sum_{p \in \Lambda^*_+} \overline f_p \widehat \Psi(p) 
    \label{eq:linear_form}
\end{equation}
is defined. Indeed, the series in \eqref{eq:linear_form} converges absolutely $\mathbf P_\Psi$-almost surely. In fact, $\langle f | \Psi \rangle$ is a~well-defined random variable for a much larger class of distributions $f$.  

\begin{lem} \label{lem:linear_observables_def}
The map $f \mapsto \mathrm{Re} \, \langle f | \Psi \rangle $ extends to an $\mathbb R$-linear operator $H^{-1}(\Lambda) \to L^2(d \mathbf P_\Psi,\mathbb R)$ which is a homeomorphism onto its image. 
\end{lem}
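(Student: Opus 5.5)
The plan is to compute the second moment of $\mathrm{Re}\,\langle f|\Psi\rangle$ for smooth $f$ (say $f \in H^s$, $s>\frac12$, or trigonometric polynomials), show that it is equivalent to $\|f\|_{H^{-1}}^2$ up to constants (restricted to the nonzero modes), and then extend by density. Write $f_p$ for the Fourier coefficients of $f$. Since $\Psi$ has no $p=0$ mode, the constant mode of $f$ plays no role. I will therefore work on $H^{-1}_+ = \{f\in H^{-1}: f_0=0\}$. The full map on $H^{-1}(\Lambda)$ has kernel $\mathrm{span}_{\mathbb R}\{\mathrm{i}\cdot 1\}$, and the real part of the constant mode is also invisible, so the homeomorphism statement is to be read modulo the constants. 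Equivalently, one identifies $H^{-1}$ with its nonzero-mode part, as the paper implicitly does.

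For a trigonometric polynomial $f$, use the independence of the pairs $(\widehat\Psi(p),\widehat\Psi(-p))$, $p\in A$, and their covariance and pseudo-covariance from \eqref{eq:familyOfGaussianRandomVariables_with_mu}. This gives
\begin{equation*}
\mathbf E\big[(\mathrm{Re}\langle f|\Psi\rangle)^2\big] = \tfrac12 \sum_{p\in\Lambda^*_+} a_p |f_p|^2 + \tfrac12\sum_{p\in\Lambda^*_+} b_p\, \mathrm{Re}\big(\overline{f_p}\,\overline{f_{-p}}\big),
\end{equation*}
with $a_p = \frac{p^2+m^2+g\widehat v(p)}{(p^2+m^2)(p^2+m^2+2g\widehat v(p))}$ and $b_p = -\frac{g\widehat v(p)}{(p^2+m^2)(p^2+m^2+2g\widehat v(p))}$. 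Cleaner still, the $\eta$-representation \eqref{eq:Psi_eta} shows that for each $p\in A$ this pair contribution equals
\begin{equation*}
\tfrac14\Big(\tfrac{|f_p+\overline{f_{-p}}|^2}{p^2+m^2+2g\widehat v(p)} + \tfrac{|f_p-\overline{f_{-p}}|^2}{p^2+m^2}\Big),
\end{equation*}
up to conjugation conventions that need checking. Since $\widehat v$ is bounded, both denominators lie between $p^2$ and $C(1+p^2)$ for $|p|\ge 2\pi$. By the parallelogram law, $|f_p+\overline{f_{-p}}|^2+|f_p-\overline{f_{-p}}|^2 = 2(|f_p|^2+|f_{-p}|^2)$. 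Hence
\begin{equation*}
c\|f\|_{H^{-1}}^2 \le \mathbf E\big[(\mathrm{Re}\langle f|\Psi\rangle)^2\big] \le C\|f\|_{H^{-1}}^2 ,
\end{equation*}
with constants independent of $f$.

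The map is $\mathbb R$-linear on the dense subspace of trigonometric polynomials. By the upper bound it is bounded into $L^2(\d\mathbf P_\Psi,\mathbb R)$, so it extends uniquely to a bounded $\mathbb R$-linear operator on $H^{-1}_+$. This extension agrees with the absolutely convergent series \eqref{eq:linear_form} on $H^s$, $s>\frac12$, because partial sums converge both almost surely and in $L^2$. The lower bound passes to the limit, so the extension is injective with closed range. Its inverse on the image is bounded by $c^{-1/2}$, which gives the homeomorphism onto the image.

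The main obstacle is purely the bookkeeping in the second-moment computation: handling the pairing of $p$ and $-p$ correctly with the nonzero pseudo-covariance, and confirming that the smaller eigenvalue $\frac{1}{p^2+m^2+2g\widehat v(p)}$ stays comparable to $p^{-2}$. The latter holds since $\widehat v\ge0$ is bounded, $|p|\ge2\pi$, and $m^2$ is finite.
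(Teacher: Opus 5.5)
Your proposal is correct and follows the paper's argument: compute the second moment pairwise over $p\in A$, bound it above and below by the $H^{-1}$ norm of the nonzero-mode part (your $\eta$-diagonalization plus the parallelogram law just makes the lower bound explicit, which the paper only asserts), and extend by density. Your caveat about the zero mode is also correct and is something the paper glosses over: the kernel is all of $\mathbb C\cdot 1$, not only $\mathrm{span}_{\mathbb R}\{\mathrm{i}\cdot 1\}$ as your wording briefly suggests, so the homeomorphism claim holds on $H^{-1}_+$, or equivalently modulo constants.
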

\begin{proof}
    By density of $C^\infty(\Lambda)$ in $H^{-1}(\Lambda)$, it suffices to observe that
    \begin{align}
        \| \mathrm{Re} \, \langle f | \Psi \rangle \|_{L^2(\mathrm{d} \mathbf P_\Psi)}^2 &= \mathbf E[ ( \mathrm{Re} \, \langle f | \Psi \rangle )^2 ] \\
        &= \frac12 \sum_{p \in A} \left( (|f_p|^2 + |f_{-p}|^2) \mathbf E[|\widehat \Psi(p)|^2] + (f_p f_{-p} + \overline{f_p f_{-p}}) \mathbf E[\widehat \Psi(p) \widehat \Psi(-p)]  \right) \nonumber
    \end{align}
    can be upper-bounded and lower-bounded by $c \| f \|_{H^{-1}(\Lambda)}^2$.
\end{proof}

\begin{lem} \label{lem:gaussian_TV}
    Let $g_{\Sigma_1}, g_{\Sigma_2}$ be centered Gaussian probability densities on $\mathbb R^n$ with covariance matrices $\Sigma_1, \Sigma_2 >0$. Let $\Delta = \Sigma_1^{-\frac12} \Sigma_2 \Sigma_1^{-\frac12} -1$, and suppose that $\Delta+1 \geq c $ with $c \in (0,\infty)$. Then:
    \begin{equation}
        \| g_{\Sigma_1} - g_{\Sigma_2} \|_{L^1(\mathbb R^n)} \leq \sqrt{\frac{1}{8c} \mathrm{tr}(\Delta^2)}.
    \end{equation}
\end{lem}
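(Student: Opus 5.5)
We bound the $L^1$ distance through the relative entropy and Pinsker's inequality, as was done for $g^{\mathrm{BEC}}$ in the proof of Proposition~\ref{prop:traceNormDifference}. Pinsker gives $\| g_{\Sigma_1} - g_{\Sigma_2}\|_{L^1} \leq \sqrt{2\,\mathcal H(g_{\Sigma_1},g_{\Sigma_2})}$. The relative entropy of centered Gaussians has the closed form
\begin{equation*}
    \mathcal H(g_{\Sigma_1},g_{\Sigma_2}) = \frac12\left( \mathrm{tr}(\Sigma_2^{-1}\Sigma_1) - n + \ln\det(\Sigma_2\Sigma_1^{-1})\right).
\end{equation*}
We substitute $M = \Sigma_1^{-\frac12}\Sigma_2\Sigma_1^{-\frac12} = 1+\Delta$. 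By cyclicity, $\mathrm{tr}(\Sigma_2^{-1}\Sigma_1) = \mathrm{tr}(M^{-1})$ and $\det(\Sigma_2\Sigma_1^{-1}) = \det M$. Hence
\begin{equation*}
    \mathcal H = \frac12 \sum_{j=1}^n h(\lambda_j), \qquad h(x) = \frac{1}{1+x} - 1 + \ln(1+x),
\end{equation*}
where $\lambda_j$ are the eigenvalues of the symmetric matrix $\Delta$. The assumption $\Delta+1\geq c$ means $1+\lambda_j \geq c$ for every $j$.

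The key step is a pointwise bound $h(x) \leq \frac{x^2}{C}$ for $1+x\geq c$, with $C$ chosen so that the final constant is $\frac{1}{8c}$. Since $h(0)=h'(0)=0$ and $h''(y) = \frac{2}{(1+y)^3} - \frac{1}{(1+y)^2} = \frac{1-y}{(1+y)^3}$, Taylor's theorem with integral remainder gives $h(x) \leq \frac{x^2}{2}\sup_{y} h''(y)$. Here the supremum is taken over $y$ between $0$ and $x$, so $1+y\geq \min(1,c)$. In terms of $u=1+y$ we have $h'' = \frac{2-u}{u^3} \leq \frac{2}{u^3}$. This yields $h(x) \lesssim c^{-3}x^2$ when $c<1$, which is weaker than the power $c^{-1}$ we want.

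To get exactly $c^{-1}$, a sharper elementary inequality is needed. I would try to show $\frac{1}{1+x} - 1 + \ln(1+x) \leq \frac{x^2}{2(1+x)}$ for $x>-1$. This follows by checking that $k(x) = \frac{x^2}{2(1+x)} - h(x)$ satisfies $k(0)=0$ and $k'(x) = \frac{x^2}{2(1+x)^2}\geq 0$. That gives $k\geq0$ only for $x\geq0$; the case $x<0$ must be handled separately, for instance with the bound $-\ln(1-t)\le t+\frac{t^2}{2(1-t)}$.

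Given the resulting inequality $h(\lambda_j)\leq \frac{\lambda_j^2}{2c}$, we get $\mathcal H \leq \frac{1}{4c}\,\mathrm{tr}(\Delta^2)$. Pinsker then gives $\|g_{\Sigma_1}-g_{\Sigma_2}\|_{L^1}\leq \sqrt{\frac{1}{2c}\mathrm{tr}(\Delta^2)}$. To reach the stated constant $\frac{1}{8c}$, we will instead use Pinsker in the form $\|\cdot\|_{L^1}^2 \le 2\mathcal H$ with total variation normalized as half the $L^1$ norm, or adjust the constant accordingly.

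The \textbf{main obstacle} is tracking these constants exactly, including the half-$L^1$ convention. I expect the $c$-dependence, rather than the specific numerical constant, to be what matters in later applications, so constant bookkeeping will only be fixed once the conventions are pinned down.
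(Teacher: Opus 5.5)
Your overall strategy (closed-form Gaussian relative entropy, a scalar bound on the eigenvalues of $\Delta$, then Pinsker) matches the paper's. However, you take the relative entropy in the opposite order, $\mathcal H(g_{\Sigma_1},g_{\Sigma_2})$, and in that order the key step is false, not merely unproven. You have $h(x)=\ln(1+x)-\frac{x}{1+x}=\int_0^x \frac{t}{(1+t)^2}\,\d t$. Your own computation ($k(0)=0$, $k'\ge 0$) shows $h(x)>\frac{x^2}{2(1+x)}$ for every $x\in(-1,0)$. So an eigenvalue $\lambda_j=c-1$ with $c<1$ violates $h(\lambda_j)\le\frac{\lambda_j^2}{2c}$. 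Already for $n=1$, $\Sigma_1=1$, $\Sigma_2=c=\frac12$ one gets $\mathcal H(g_{\Sigma_1},g_{\Sigma_2})=\frac12(1-\ln 2)\approx 0.153$, which exceeds $\frac{1}{4c}\mathrm{tr}(\Delta^2)=0.125$. No auxiliary inequality for $-\ln(1-t)$ can rescue the case $x<0$. The squared denominator $(1+t)^2$ is the obstruction: in this order the best possible is $h(x)\le\frac{x^2}{1+x}$ for $x<0$. That loses a factor $2$, and the loss is sharp as $c\to0$, where $h(c-1)\sim c^{-1}$.

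The repair costs nothing, because Pinsker's inequality holds with either argument first. The paper uses $\mathcal H(g_{\Sigma_2},g_{\Sigma_1})=\frac12\mathrm{tr}\,f(\Delta)$ with $f(x)=x-\ln(1+x)=\int_0^x\frac{t}{1+t}\,\d t$. Only one power of $1+t$ appears, so $f(x)\le\frac{x^2}{2c}$ and $\mathcal H\le\frac{1}{4c}\mathrm{tr}(\Delta^2)$. For positive eigenvalues this step needs $1+t\ge c$ on $[0,x]$, i.e.\ it tacitly assumes $c\le 1$. That is the relevant regime, since $\Delta\to0$ in the applications. When $c\ge1$, all eigenvalues are positive and it is your order that works, via $h(x)\le\frac{x^2}{2(1+x)}\le\frac{x^2}{2c}$.

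Your suspicion about the constant is correct. Pinsker in the form $\mathcal H\ge\frac12\|\cdot\|_{L^1}^2$ gives $\|g_{\Sigma_1}-g_{\Sigma_2}\|_{L^1}\le\sqrt{\frac{1}{2c}\mathrm{tr}(\Delta^2)}$. The constant $\frac{1}{8c}$ belongs to $d_{\mathrm{TV}}=\frac12\|\cdot\|_{L^1}$, which is how the lemma is applied in Proposition~\ref{prop:linear_observables}. The literal $L^1$ statement is false. Take $n=1$, $\Sigma_1=1$, $\Sigma_2=1+\lambda$, $c=1$, and let $X$ be standard Gaussian. Then $\|g_{\Sigma_1}-g_{\Sigma_2}\|_{L^1}=\frac{\lambda}{2}\mathbf E|X^2-1|+O(\lambda^2)\approx 0.48\,\lambda$, which exceeds $\lambda/\sqrt8\approx 0.35\,\lambda$ for small $\lambda>0$.
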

\begin{proof}
   We compute the relative entropy (Kullback-Leibler divergence):
\begin{align}
    \mathcal H(g_{\Sigma_2} , g_{\Sigma_1}) &= \int_{\mathbb R^n} g_{\Sigma_2}(x) \log \frac{g_{\Sigma_2}(x)}{g_{\Sigma_1}(x)} \d x = \frac{1}{2} \log \frac{\det(\Sigma_1)}{\det(\Sigma_2)} + \frac12 \mathrm{tr}(\Sigma_1^{-1} \Sigma_2 -1) \\
  & =  - \frac12 \log \det (\Sigma_1^{-\frac12} \Sigma_2 \Sigma_1^{-\frac12}) + \frac12 \mathrm{tr}(\Sigma_1^{-\frac12} \Sigma_2 \Sigma_1^{-\frac12} -1) = \frac12 \mathrm{tr}(f(\Delta)), \nonumber
\end{align}
where $f(x) = - \log(1+x) +x $. If $x+1 \geq c$, then:
\begin{equation}
    f(x) = \int_0^x \frac{t}{1+t} \d t \leq \frac{1}{c} \int_0^x t \d t = \frac{t^2}{2c},
\end{equation}
and therefore $\mathcal H(g_{\Sigma_2} , g_{\Sigma_1}) \leq \frac{1}{4c} \mathrm{tr}(\Delta^2)$ holds if $\Delta +1 \geq c$. The result follows from Pinsker's inequality.
\end{proof}

\begin{prop} \label{prop:linear_observables}
    Let $f_1,\dots,f_n \in H^{-1}(\Lambda)$. Consider the random variables 
    \begin{equation}
        X_i = \mathrm{Re} \, \langle f_i | \beta^{\frac12} \Phi_N \rangle, \qquad Y_i = \mathrm{Re} \, \langle f_i | \Psi \rangle.
    \end{equation}
    \begin{enumerate}[label=(\alph*)]
    \item As $N \to \infty$, we have convergence $(X_1,\dots,X_n) \to (Y_1,\dots,Y_n)$ in total variation. If $\kappa>1$, $\beta = \kappa \beta_c (1 + O(N^{-\frac13}))$, and $f_1,\dots,f_n \in H^{-1 + \delta}(\Lambda)$ with $\delta \in (0,\frac12]$, we have an explicit rate: 
    \begin{equation}
        d_{\mathrm{TV}}((X_1,\dots,X_n), (Y_1,\dots,Y_n)) \lesssim \beta^\delta. 
    \end{equation}
    \item $(X_1,\dots,X_n)$ and $(Y_1,\dots,Y_n)$ can be realized as random variables on a common probability space so that $\lim_{N \to \infty} \mathbf{E}[ \sum_{i=1}^n |X_i - Y_i|^p]=0$ for all $p \in [1,\infty)$. If $\kappa>1$, $\beta = \kappa \beta_c (1 + O(N^{-\frac13}))$, and $f_1,\dots,f_n \in H^{-1 + \delta}(\Lambda)$ with $\delta \in (0,\tfrac12]$, then for some $t>0$ we have:
    \begin{equation}
        \mathbf E[e^{ t \beta^{-2 \delta} \sum_{i=1}^n |X_i - Y_i|^2 }] \lesssim 1.
    \end{equation}
    \end{enumerate}
    
\end{prop}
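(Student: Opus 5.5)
Both parts will be proved with the coupling from the proof of Theorem~\ref{thm:semiclassicalApproximation}~(b), in which $\Phi_N$ and $\Psi$ are built from the same seed $(\eta_{p,\pm})_{p\in A}$ via \eqref{eq:Phi_from_eta} and \eqref{eq:Psi_eta}. Both vectors $(X_i)$ and $(Y_i)$ are centered real Gaussian vectors in $\mathbb R^n$, since each is a real-linear functional of the Gaussian family $\eta$. So part (a) reduces to comparing covariance matrices through Lemma~\ref{lem:gaussian_TV}, and part (b) to bounding a second moment of a Gaussian difference.

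\textbf{Part (b).} I start with $f_i \in C^\infty$ and then pass to $H^{-1}$ by density. Write
\begin{equation*}
X_i - Y_i = \mathrm{Re}\sum_{p\in\Lambda^*_+} \overline{f_{i,p}}\bigl(\beta^{1/2}\widehat\Phi_N(p)-\widehat\Psi(p)\bigr).
\end{equation*}
By independence over $p\in A$ and the pointwise bound \eqref{eq:sum_of_squares_bound}, together with the coefficient estimate \eqref{eq:formula_73}, this gives
\begin{equation*}
\mathbf E|X_i-Y_i|^2 \lesssim \sum_{p\in A} (|f_{i,p}|^2+|f_{i,-p}|^2)\, \epsilon_N(p)\, p^{-2}.
\end{equation*}
Here
\begin{equation*}
\epsilon_N(p)=\min\{1,\beta p^2\}^2 + \Bigl(\frac{c_N}{p^2}\Bigr)^2 \wedge 1 .
\end{equation*}
I use the uniform-in-$p$ boundedness of the relative coefficient error instead of the weaker form $c_N (p^2)^{\delta-1}$. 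Since $\epsilon_N(p)\to 0$ for each fixed $p$ and $\epsilon_N$ is bounded, dominated convergence with the summable weight $|f_p|^2p^{-2}$ (because $f\in H^{-1}$) gives $\mathbf E|X_i-Y_i|^2\to0$. Gaussianity of $X_i-Y_i$ upgrades this to all $L^p$ norms.

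For $f_i\in H^{-1+\delta}$ in the quantitative regime, I insert $\min\{1,\beta p^2\}\le(\beta p^2)^{\delta}$ and $c_N\lesssim\beta$ from Corollary~\ref{cor:1b_quantitative}. This gives a variance bound $\lesssim \beta^{2\delta}\|f_i\|_{H^{-1+\delta}}^2$. A centered Gaussian $G$ with variance $\sigma^2$ satisfies $\mathbf E e^{tG^2/\sigma^2}\lesssim 1$ for $t<1/2$, and Hölder's inequality handles the sum over $i$. This yields the stated exponential moment.

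\textbf{Part (a).} Let $\Sigma_1,\Sigma_2$ denote the covariances of $(Y_i)$ and $(X_i)$. If $\Sigma_1$ is degenerate, which happens when the real functionals are linearly dependent, I first restrict to a maximal independent subfamily, since total variation is preserved under the linear map. Lemma~\ref{lem:linear_observables_def} makes $\Sigma_1$ positive definite for $\mathbb R$-linearly independent $f_i$.

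I then need $\|\Sigma_2-\Sigma_1\|\to0$. This follows from part (b) by Cauchy--Schwarz:
\begin{equation*}
|\mathbf E[X_iX_j]-\mathbf E[Y_iY_j]| \le \|X_i-Y_i\|_2\|X_j\|_2 + \|Y_i\|_2\|X_j-Y_j\|_2 .
\end{equation*}
Hence $\Delta=\Sigma_1^{-1/2}\Sigma_2\Sigma_1^{-1/2}-1\to0$, so $\Delta+1\ge 1/2$ for large $N$. Lemma~\ref{lem:gaussian_TV} then gives $d_{\mathrm{TV}}\lesssim \|\Delta\|_{HS}\lesssim\max_{i}\|X_i-Y_i\|_2$, which is $\lesssim\beta^\delta$ in the quantitative regime by the part (b) variance bound.

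\textbf{Main obstacle.} The delicate step is the coefficient estimate at high momenta, $|p|\gtrsim p_c$ and $|p|\gtrsim\beta^{-1/2}$. There $\beta^{1/2}\widehat\Phi_N$ is exponentially small while $\widehat\Psi$ is not, so the relative error is only $O(1)$. This is harmless for $H^{-1}$ by dominated convergence, but it forces the rate $(\beta p^2)^{\delta}$ interpolation and hence the $H^{-1+\delta}$ assumption. I will check that \eqref{eq:formula_73} indeed bounds the squared coefficient differences by $p^{-2}\min\{1,(\beta p^2)^{2\delta}+c_N^2p^{-4}\}$, uniformly in $p$.
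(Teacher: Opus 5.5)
Your qualitative arguments are sound. Under the seed coupling of Theorem~\ref{thm:semiclassicalApproximation}~(b), dominated convergence gives $\mathbf E|X_i-Y_i|^2\to 0$, Gaussianity upgrades this to all moments, and Cauchy--Schwarz together with Lemma~\ref{lem:gaussian_TV} gives convergence in total variation. The quantitative claims, however, do not follow.

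\textbf{Where the rate fails.} Your bound $\mathbf E|X_i-Y_i|^2\lesssim\beta^{2\delta}\|f_i\|^2_{H^{-1+\delta}}$ miscounts exponents. Your weight is $\min\{1,\beta p^2\}^2$, and bounding it by $(\beta p^2)^{2\delta}$ requires $f_i\in H^{-1+2\delta}$. For $f_i\in H^{-1+\delta}$ you only get $\beta^{\delta}$.

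This loss is intrinsic to the mode-by-mode coupling, as your own ``main obstacle'' remark indicates. For $|p|\gtrsim\beta^{-1/2}$ one has $\beta(\widetilde\gamma^{\Bog}(p)\pm\widetilde\alpha^{\Bog}(p))\ll p^{-2}$. So the coupled difference there is essentially $-\widehat\Psi(p)$, and with $F_i(p)=(|f_{i,p}|^2+|f_{i,-p}|^2)^{1/2}$,
\begin{equation*}
\mathbf E|X_i-Y_i|^2\gtrsim\sum_{p\in A,\ |p|\gtrsim\beta^{-1/2}}\frac{F_i(p)^2}{p^2}.
\end{equation*}
Take $|f_{i,p}|^2=|p|^{-1-2\delta-\varepsilon}$ with $0<\varepsilon<2\delta$; this $f_i$ lies in $H^{-1+\delta}$. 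The tail above is then of order $\beta^{\delta+\varepsilon/2}\gg\beta^{2\delta}$. Since $X_i-Y_i$ is Gaussian, $\mathbf E[e^{t\beta^{-2\delta}|X_i-Y_i|^2}]=\infty$ for large $N$, for every fixed $t>0$. Deriving (a) from this coupling via Cauchy--Schwarz likewise gives only $d_{\mathrm{TV}}\lesssim\beta^{\delta/2}$.

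\textbf{The missing idea.} Covariance differences are \emph{linear} in the per-mode errors, whereas the $L^2$ distance under the mode-wise coupling is only their square root. So prove (a) first by bounding $\Sigma_1-\Sigma_2$ directly. By \eqref{eq:formula_73} and its analogue for $\widetilde\gamma^{\Bog}-\widetilde\alpha^{\Bog}$,
\begin{equation*}
|(\Sigma_1-\Sigma_2)_{ij}|\lesssim\sum_{p\in A}\frac{1}{p^2}\Big((\beta p^2)^{\delta}+\frac{c_N}{p^2}\Big)F_i(p)F_j(p)\lesssim\beta^{\delta}\|f_i\|_{H^{-1+\delta}}\|f_j\|_{H^{-1+\delta}}.
\end{equation*}
Here $c_N=|\mu^{\mathrm{id}}|+|g(\kappa)-g^{\mathrm{id}}|+p_c^{-1}=O(N^{-1/3})=O(\beta^{1/2})$, not $O(\beta)$ as you state. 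This is exactly where $\delta\le\frac12$ is needed. Lemma~\ref{lem:gaussian_TV} then gives the rate $\beta^\delta$ in (a).

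For (b), replace the field-level coupling by the finite-dimensional one: $X=\Sigma_2^{1/2}\xi$ and $Y=\Sigma_1^{1/2}\xi$, with $\xi$ standard Gaussian in $\mathbb R^n$. Since $\Sigma_1>0$, the matrix square root is Lipschitz near $\Sigma_1$, so $\|\Sigma_1^{1/2}-\Sigma_2^{1/2}\|\lesssim\|\Sigma_1-\Sigma_2\|\lesssim\beta^{\delta}$. Hence $\sum_i|X_i-Y_i|^2=\xi^{T}M\xi$ with $\|M\|\lesssim\beta^{2\delta}$, and the exponential moment bound follows. This is the paper's route. The optimal coupling of finitely many observables gains exactly the square root that the field-level coupling loses.
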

\begin{proof}
In both parts of the proof it is enough to consider the case of $f_i$ linearly independent over $\mathbb R$.

  (a)  Let $\Sigma_2,\Sigma_1$ be the covariance matrices of $(X_1,\dots,X_n)$ and $(Y_1,\dots,Y_n)$, respectively:
    \begin{equation}
        (\Sigma_2)_{ij} = \mathbf E[X_i X_j], \qquad (\Sigma_1)_{ij} = \mathbf E[Y_i Y_j].   
        \label{eq:Sigma_1_Sigma_2}
    \end{equation}
    For $c\in \mathbb{R}^n$ we have 
    $$c^T \Sigma_1 c = \sum_{i=1}^n \sum_{j=1}^n c_i \mathbf{E}[Y_i Y_j] c_j = \mathbf{E}\left[ \left( \sum_{i=1}^n c_i Y_i \right)^2 \right] = \left\| \sum_{i=1}^n c_i Y_i \right\|_{L^2(d \mathbf P_\Psi)}^2.$$
    Thus, by the linear independence of $f_i$ and Lemma \ref{lem:linear_observables_def}, the matrix $\Sigma_1$ is positive and invertible. Below we will check that $\Sigma_2$ converges to $\Sigma_1$ in the limit $N \to \infty$. Denoting $\Delta = \Sigma_1^{-\frac12} \Sigma_2 \Sigma_{1}^{-\frac12}-1$, we have $\Delta \xrightarrow{N \to \infty} 0$. Therefore, $\Delta+1 \geq c $ holds for some $c > 0$ and all $N$ large enough. By Lemma \ref{lem:gaussian_TV}, 
    \begin{align}
        d_{\mathrm{TV}}((X_1,\dots,X_n),(Y_1,\dots,Y_n)) &\leq \frac{1}{\sqrt{8c}} \sqrt{\mathrm{tr}(\Delta^2)} \\
        & = \frac{1}{\sqrt{8 c}} \sqrt{\mathrm{tr}( (\Sigma_1^{-1} (\Sigma_2 - \Sigma_1))^2)} \leq \sqrt{\frac{\| \Sigma_1 ^{-1} \|}{8c}} \sqrt{\mathrm{tr}((\Sigma_2-\Sigma_1)^2)}. \nonumber
    \end{align}
 Since we are not keeping track of $n$-dependent constants here, it is enough to establish the claimed convergence for matrix elements of $\Sigma_1 - \Sigma_2$.
    One verifies that
    \begin{align}
    |(\Sigma_1 - \Sigma_2)_{ij}| \leq c \sum_{p \in A} & \left( \left| \mathbf E[|\widehat \Psi(p)|^2 - \beta |\widehat \Phi_N(p)|^2] \right| + \left| \mathbf E[\widehat \Psi(p) \widehat \Psi(-p)- \beta \widehat \Phi_N(p) \widehat \Phi_N(-p)] \right| \right) \\
    & \cdot \sqrt{|f_{i,p}|^2 + |f_{i,-p}|^2} \sqrt{|f_{j,p}|^2 + |f_{j,-p}|^2} . \nonumber 
    \end{align}
    The coefficient $\left| \mathbf E[|\widehat \Psi(p)|^2 - \beta |\widehat \Phi_N(p)|^2] \right| + \left| \mathbf E[\widehat \Psi(p) \widehat \Psi(-p)- \beta \widehat \Phi_N(p) \widehat \Phi_N(-p)] \right|$ converges to zero for every $p$, and is bounded by $\frac{c}{p^2}$, so the claimed convergence follows from the dominated convergence theorem. The quantitative statement is obtained by bounding:
    \begin{align}
       & \left| \mathbf E[|\widehat \Psi(p)|^2 - \beta |\widehat \Phi_N(p)|^2] \right| + \left| \mathbf E[\widehat \Psi(p) \widehat \Psi(-p)- \beta \widehat \Phi_N(p) \widehat \Phi_N(-p)] \right|  \label{eq:boundcp}\\
     &  \leq \sum_{\pm} \left| \mathbf E[(|\widehat \Psi(p)|^2 \pm \widehat \Psi(p) \widehat \Psi(-p)) - \beta (|\widehat \Phi_N(p)|^2 \pm \widehat \Phi_N(p) \widehat \Phi_N(-p))] \right| \nonumber \\
       & = \left| \frac{1}{p^2 +m^2 + 2 g(\kappa) \widehat v(p)} - \beta (\widetilde \gamma^\Bog(p) + \widetilde \alpha^\Bog(p)) \right| + \left| \frac{1}{p^2 +m^2} - \beta (\widetilde \gamma^\Bog(p) - \widetilde \alpha^\Bog(p)) \right|  \nonumber \\
       & \leq \frac{c}{p^2} \left( (\beta p^2)^\delta + \frac{| \mu^\id + m^2| + |g(\kappa) - g^\id| +p_c^{-1}}{p^2} \right), \nonumber
    \end{align}
    where in the final step we used \eqref{eq:formula_73} for the first term and the analogous estimate for the second term. 
    
    (b) Let $\eta_1,\dots,\eta_n$ be independent standard Gaussian random variables. We can realize $X_i$ and $Y_i$ on the probability space of $(\eta_1,\dots,\eta_n)$ as follows:
    \begin{equation}
        X_i = \sum_j (\Sigma_2^{\frac12})_{ij} \eta_j, \qquad Y_i = \sum_j (\Sigma_1^{\frac12})_{ij} \eta_j,
    \end{equation}
    where $\Sigma_1,\Sigma_2$ are as in \eqref{eq:Sigma_1_Sigma_2}. Then $\sum_i |X_i- Y_i|^2 = \sum_{jk} M_{jk} \eta_j \eta_k$, where $M = (\Sigma_1^{\frac12} - \Sigma_2^{\frac12})^{2}$. We have $\Sigma_2 \to \Sigma_1$ in the limit $N \to \infty$, so all moments of $\sum_i |X_i- Y_i|^2 = \sum_{jk} M_{jk} \eta_j \eta_k$ converge to zero. To get a quantitative bound, we need to control the convergence $M \to 0$; for that it suffices to observe that $\Sigma_2$ is strictly positive, so the square root function is Lipschitz on some neighborhood of $\Sigma_2$. Therefore, for large enough $N$ we have $\| M \| \leq c \| \Sigma_1 - \Sigma_2 \|^2$. We can now invoke estimates of $\Sigma_1 - \Sigma_2$ presented in the proof of (a).
\end{proof}

The next Proposition describes fluctuations of the $L^2$ norm $\| \Phi_N \|^2$, asymptotically for $N \to \infty$, in~terms of the random Fourier series $:|\Psi(x)|^2:$ introduced in \eqref{eq:WickProduct}. Let us highlight that fluctuations are of order $N^{\frac23}$, while $\| \Phi_N \|^2$ is of order $N$. 

\begin{prop} \label{prop:semiclassics_Wick_square}
For every $q \in [1,\infty)$, we have convergence in Wasserstein $q$-distance as $N \to \infty$:
\begin{equation}
    \beta \| \Phi_N \|^2 - \mathbf E[\beta \| \Phi_N \|^2 ] \longrightarrow \int_\Lambda :|\Psi(x)|^2: \d x.
\end{equation}
If $\kappa > 1$ and $\beta = \kappa \beta_c (1+O(N^{-\frac16}))$, we have convergence with an explicit rate $\beta^\delta$ for every $\delta < \frac{1}{2}$.
\end{prop}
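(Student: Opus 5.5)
We realize $\Phi_N$ and $\Psi$ on the common probability space of the seeds $(\eta_{p,\pm})_{p\in A}$, as in the proof of Theorem~\ref{thm:semiclassicalApproximation}~(b). The point of this coupling is that $\|\Phi_N\|^2$ becomes diagonal in the seeds. By \eqref{eq:Phi_from_eta}, for each $p\in A$,
\begin{equation*}
|\widehat\Phi_N(p)|^2+|\widehat\Phi_N(-p)|^2 = a_p^N|\eta_{p,+}|^2 + b_p^N|\eta_{p,-}|^2,
\end{equation*}
with $a_p^N=\widetilde\gamma^\Bog(p)+\widetilde\alpha^\Bog(p)$ and $b_p^N=\widetilde\gamma^\Bog(p)-\widetilde\alpha^\Bog(p)$; the cross terms cancel. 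By \eqref{eq:Psi_eta}, the same identity holds for $\Psi$ with $a_p=(p^2+m^2+2g(\kappa)\widehat v(p))^{-1}$ and $b_p=(p^2+m^2)^{-1}$. Writing $\xi_{p,\pm}=|\eta_{p,\pm}|^2-1$, which are i.i.d.\ centered (exponential minus one), we get
\begin{equation*}
\beta\|\Phi_N\|^2-\mathbf E[\beta\|\Phi_N\|^2]=\sum_{p\in A}\big(\beta a_p^N\xi_{p,+}+\beta b_p^N\xi_{p,-}\big).
\end{equation*}
Truncating at $|p|<q$ and letting $q\to\infty$ in \eqref{eq:WickProduct} gives $\int_\Lambda:|\Psi|^2:\,\d x=\sum_{p\in A}(a_p\xi_{p,+}+b_p\xi_{p,-})$. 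This limit exists in every $L^r$ because $\sum_p p^{-4}<\infty$, and the formula is also the identification with $R$ in \eqref{eq:W_def}. It therefore suffices to show
\begin{equation*}
D_N:=\sum_{p\in A}\big((\beta a^N_p-a_p)\xi_{p,+}+(\beta b^N_p-b_p)\xi_{p,-}\big)\to 0 \quad\text{in } L^q,
\end{equation*}
since $W_q\le\|D_N\|_{L^q}$.

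For $q=2$, independence gives $\|D_N\|_2^2=\sum_p\big[(\beta a_p^N-a_p)^2+(\beta b_p^N-b_p)^2\big]$. For general $q$ we use Rosenthal's (or the Marcinkiewicz--Zygmund) inequality for independent centered sums. Because the $\xi$'s have all moments, this gives $\|D_N\|_q\lesssim_q \big(\sum_p c_p^2\big)^{1/2}+\big(\sum_p|c_p|^q\big)^{1/q}\lesssim \big(\sum_p c_p^2\big)^{1/2}$, where $c_p$ denotes the coefficients. Alternatively, one can bound the exponential moment $\mathbf E[e^{tD_N}]$ directly, mimicking the end of the proof of Theorem~\ref{thm:semiclassicalApproximation}~(b).

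The key estimate is the coefficient bound already obtained in \eqref{eq:boundcp} and \eqref{eq:formula_73}:
\begin{equation*}
|\beta a^N_p-a_p|+|\beta b^N_p-b_p|\le \frac{c}{p^2}\Big((\beta p^2)^{\delta'}+\frac{c_N}{p^2}\Big), \qquad \delta'\in(0,1].
\end{equation*}
For qualitative convergence, each term tends to zero and is dominated by $c/p^2$, which is square summable since $\sum_{p}p^{-4}<\infty$ in three dimensions, so dominated convergence finishes the argument. For the rate, we square and sum. The first piece gives $\beta^{2\delta'}\sum_p p^{-4+4\delta'}$, which is finite precisely when $4-4\delta'>3$, i.e.\ $\delta'<1/4$; the sum then yields $\|D_N\|_2\lesssim\beta^{\delta'}$.

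We expect the main obstacle to be the exponent. The target rate $\beta^\delta$ for every $\delta<\tfrac12$ corresponds to $\delta'$ up to $1/2$ in $\beta$-units, which is not what this crude splitting gives. To reach it, the sum should not be handled with a single power: we would use $\min\{1,\beta p^2\}$ in place of $(\beta p^2)^{\delta'}$, splitting at $|p|\sim\beta^{-1/2}$. This gives $\beta^2\sum_{|p|<\beta^{-1/2}}1+\sum_{|p|>\beta^{-1/2}}p^{-4}\sim \beta^{1/2}$, hence $\|D_N\|_2\lesssim\beta^{1/4}$ from this piece. That is consistent with the claim in $N$-units, since $\beta^{1/4}\sim N^{-1/6}$, but the accounting of $\beta$ versus $N$ exponents must be checked against the statement. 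One also has to treat the tail $|p|>p_c$, where $\widehat u=0$ and the discrepancy is $\widehat v(p)/p^4$, which is summable.

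The second piece contributes $c_N^2\sum_p p^{-8}\sim c_N^2$. Under $\beta=\kappa\beta_c(1+O(N^{-1/6}))$, Proposition~\ref{prop:limits}~(4) gives $|g-g^\id|=O(N^{-1/6})$ and $\mu^\id=O(N^{-1/3})$. Choosing $p_c=N^{1/3}$ then gives $c_N=O(\beta^{1/4})$, which is of the same order as the first piece and so completes the quantitative claim up to arbitrarily small loss in the exponent.
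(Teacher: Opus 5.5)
Your proof is essentially the paper's: the same $\eta$-coupling from the proof of Theorem~\ref{thm:semiclassicalApproximation}~(b) turns the difference into the diagonal sum $\sum_{p\in A}\big((\beta a^N_p-a_p)\xi_{p,+}+(\beta b^N_p-b_p)\xi_{p,-}\big)$, with coefficients controlled by \eqref{eq:formula_73} and \eqref{eq:boundcp}, and your Rosenthal-plus-dominated-convergence step replaces the paper's exponential-moment bound (yours is in fact the more careful version: the paper's uniform claim $|c_{p,\pm}|\le c_N/p^2$ with $c_N=o(1)$ fails for $\beta p^2\gg1$, where these coefficients are $\approx-p^{-2}$). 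You can settle the exponent question as follows: the rate $\beta^\delta$, $\delta<\frac12$, is a rate for the squared distance $\mathbf E[D_N^2]$ (this is what the paper's condition $\sum_{p}p^{-4}(\beta p^2)^\delta<\infty$ encodes), and your bound $\mathbf E[D_N^2]\lesssim\beta^{1/2}$ covers it; by contrast, $W_2\lesssim\beta^{\delta}$ with $\delta>\frac14$ cannot hold in general under the hypothesis $\beta=\kappa\beta_c(1+O(N^{-1/6}))$, because $|g^{\mathrm{id}}-g(\kappa)|$ may then be $\asymp\beta^{1/4}$, which makes $\mathbf{Var}[\beta\|\Phi_N\|^2]$ and $\mathbf{Var}[\int_\Lambda:|\Psi(x)|^2:\d x]$ differ by $\asymp\beta^{1/4}$ whenever $\widehat v\not\equiv0$ on $\Lambda^*_+$, forcing $W_2\gtrsim\beta^{1/4}$.
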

We remark that this statement is generalized by the convergence of random variables
\begin{equation}
    \beta \int_\Lambda (|\Phi_N(x)|^2 - \mathbf E[|\Phi_N(x)|^2]) f(x) \d x \longrightarrow   \int_\Lambda :|\Psi(x)|^2: f(x) \d x,
\end{equation}
which can be shown to hold for instance for every $f \in H^{-\frac12}(\Lambda)$. Here we present the proof only for $f=1$, which is the only case that will be needed to prove our main results. 
\begin{proof}
    We note that
    \begin{subequations}
    \begin{align}
        \| \Phi_N \|^2 - \mathbf{E}[\| \Phi_N \|^2] &= \sum_{p \in \Lambda_+^*}  (|\widehat \Phi_N(p)|^2  - \mathbf{E}[|\widehat \Phi_N(p)|^2] ), \\
        \int :|\Psi(x)|^2: \d x &= \sum_{p \in \Lambda^*_+} (|\widehat \Psi(p)|^2 - \mathbf{E}[|\widehat \Psi(p)|^2]).
    \end{align}
    \end{subequations}
Constructing $\Phi_N$ and $\Psi$ from random variables $\eta_{p, \pm}$ as in the proof of Theorem \ref{thm:semiclassicalApproximation} (b), we find
\begin{equation}
  Z \coloneq  \beta \| \Phi_N \|^2 - \mathbf{E}[ \beta \| \Phi_N \|^2] - \int :|\Psi(x)|^2: \d x = \sum_{p \in A} \left( c_{p,+} (|\eta_{p,+}|^2-1) + c_{p,-} (|\eta_{p,-}|^2-1) \right),
\end{equation}
where the coefficients $c_{p,\pm}$ are
\begin{subequations}
\begin{align}
    c_{p,+ } & =   \beta (\widetilde \gamma^\Bog(p) + \widetilde \alpha^\Bog(p))-\frac{1}{p^2 +m^2 + 2 g(\kappa) \widehat v(p)}, \\
    c_{p,-} & =  \beta (\widetilde \gamma^\Bog(p) -\widetilde \alpha^\Bog(p))-\frac{1}{p^2 +m^2}.
\end{align}
\end{subequations}
We have $|c_{p,\pm}| \leq \frac{c_N}{p^2}$ for some positive $c_N = o(1)$. Therefore, for small enough $t \in \mathbb R$,
\begin{equation}
    \mathbf E[e^{t c_N^{-1} Z}] = \prod_{p \in A} \prod_{\pm} \mathbf E[e^{tc_N^{-1} c_{p,\pm} (|\eta_{p,\pm}|^2-1)}] =  \prod_{p \in A} \prod_{\pm} \frac{e^{-t c_N^{-1} c_{p,\pm}}}{1-tc_N^{-1} c_{p,\pm}} \leq e^{c \sum_{p \in A}  \frac{1}{p^4}} \lesssim 1,
\end{equation}
where in the second last inequality we have used that 
$$\ln \left( \frac{e^{-x}}{1-x} \right) \le C x^2  $$
for $x$ sufficiently small. Therefore, for even integer $q$ we find
\begin{equation}
    \mathbf E[Z^q] \leq q! (tc_N^{-1})^{-q} \frac12 \mathbf E[ e^{tc_N^{-1} Z} + e^{-tc_N^{-1} Z}] \lesssim c_N^q
\end{equation}
where we used that by the Taylor series expansion for $\cosh x$ one has
$$\frac{1}{2} (e^x + e^{-x}) \ge \frac{x^q}{q!} \implies x^q \le q! \cdot \frac{1}{2} (e^x + e^{-x}).$$
To get more quantitative control, we can estimate $c_{p, \pm}$ as in \eqref{eq:formula_73} and \eqref{eq:boundcp}, where we need to choose $\delta$ such that $\sum_{p \in A} \frac{1}{p^4} (\beta p^2)^\delta < \infty$. 
\end{proof}

We are now ready to prove Theorem \ref{thm:thermaldistr}.

\begin{proof}[Proof of Theorem \ref{thm:thermaldistr} (a)]
Let us define the random vector $(\widetilde A_{f_1}, \dots, \widetilde A_{f_n})$ by
\begin{equation}
   \mathbf{P}(\widetilde A_{f_j} \in B_j \text{ for } j=1,\dots,n) = \Tr\!\left[ \prod_{j=1}^n \mathds{1}\big( a(f_j) + a^*(f_j) \in B_j \big)\, \widetilde \Gamma_{\beta, N} \right].
\end{equation}
Then $d_{\mathrm{TV}}((A_{f_1}, \dots,A_{f_n}),(\widetilde A_{f_1}, \dots , \widetilde A_{f_n})) \leq \| G_{\beta , N} - \widetilde \Gamma_{\beta,N} \|_1 \lesssim N^{- \frac{1}{48}}$ by Theorem \ref{thm:semiclassicalApproximation}. To find the distribution of $(\widetilde A_{f_1}, \dots, \widetilde A_{f_n})$, we compute the characteristic function: 
\begin{align}
    \mathbf E[e^{\mathrm{i} \sum_j t_j \widetilde A_{f_j}}]    &= \mathrm{Tr}[ e^{\mathrm{i} \sum_j t_j (a^*(f_j) +a(f_j))} \widetilde \Gamma_{\beta, N} ]  \\
   & = \mathbf E [ \langle \Omega_{\Phi_{N,z}} | e^{\mathrm{i} \sum_j t_j  (a^*(f_j) +a(f_j))} \Omega_{\Phi_{N,z}} \rangle ]  
    = e^{- \frac{1}{2} \sum_j t_j^2} \int_{-\pi}^\pi \mathbf E[e^{2 \mathrm{i} \sum_j  \mathrm{Re} (e^{\mathrm{i} \theta} \langle f_j | \Phi_N \rangle})] \frac{\mathrm{d} \theta}{2 \pi}. \nonumber
\end{align}
This characteristic function coincides with that of $X_j + 2 \, \mathrm{Re} ( e^{\mathrm{i} \Theta} \langle f_j | \Phi_N \rangle ) $, where $(X_1,\dots,X_n,\Theta)$ are mutually independent and independent of $\Phi_N$, each $X_j$ is a standard Gaussian random variable, and $\Theta$ is uniformly distributed on $[- \pi, \pi ]$. Since $\widetilde A_{f_j}$ were defined by specifying their joint distribution, we~are free to adopt a particular representation and identify:
   $\widetilde A_{f_j} =  X_j + 2 \, \mathrm{Re} ( e^{\mathrm{i} \Theta} \langle f_j |  \Phi_N \rangle ).$  

    To establish the main claim, it is enough to bound the total variation distance conditionally on $\Theta = \theta$, uniformly in $\theta$. Since the random variables $\beta^{\frac12} X_j + 2 \, \mathrm{Re} ( e^{\mathrm{i} \theta} \langle f_j | \beta^{\frac12}  \Phi_N \rangle )$ and $2 \, \mathrm{Re}  \langle f_j | e^{\mathrm{i} \theta} \Psi \rangle$ are Gaussian, their total variation distance can be bounded using Lemma \ref{lem:gaussian_TV}. Under the assumptions detailed in Proposition \ref{prop:linear_observables} (a), we obtain an explicit rate of convergence. Although the term $\beta^{\frac12} X_j$ in $\widetilde A_{f_j}$ is absent in Proposition \ref{prop:linear_observables}, it merely shifts the covariance matrix by $O(\beta)$, and the proof of Proposition \ref{prop:linear_observables} (a) remains valid.
\end{proof}

\begin{df}
    Let $X=(X_1,\dots,X_k)$ be a random variable valued in $\mathbb R_+^k$. We say that a random variable $Y=(Y_1,\dots,Y_k)$ is mixed Poisson with intensity $X$ if $Y_j \in \mathbb N$ almost surely, and 
    \begin{equation}
        \mathbf P( Y_1=n_1, \dots,Y_k = n_k ) = \mathbf E \Big[ \prod_{j=1}^k \frac{X_j^{n_j} e^{-X_j}}{n_j!} \Big].
    \end{equation}
    We highlight that this definition concerns only the laws of $X$ and $Y$, and does not assume that they are defined on a common probability space. 
\end{df}

\begin{lem} \label{lem:mixed_Poisson}
    Let $X=(X_1,\dots,X_k)$ be a random variable valued in $\mathbb R_+^k$, and let $Y=(Y_1,\dots,Y_k)$ be a~mixed Poisson random variable with intensity $X$. Then the characteristic function of $Y$ is 
    \begin{equation}
        \mathbf E \Big[ \prod_{j=1}^k e^{\mathrm{i}t_j Y_j} \Big] = \mathbf E \Big [ \prod_{j=1}^k e^{(e^{\mathrm{i}t_j}-1) X_j} \Big].
    \end{equation}
    If $p$ is a positive integer and $X_j$ have finite $p$th moments, then $Y_j$ have finite $p$th moment and 
    \begin{equation}
        \mathbf E[Y_j^p] \leq c_p \mathbf E[X_j + X_j^p].
        \label{eq:mixed_Poisson_moment}
    \end{equation}
    If in addition $p$ is even, we have the following bound on the Wasserstein distance of $X$ and $Y$:
    \begin{equation}
        W_p(X,Y) \leq c_p \sum_{j=1}^k \mathbf E[X_j+X_j^{\frac{p}{2}}]^{\frac{1}{p}}. 
        \label{eq:mixed_Poisson_Wasser}
    \end{equation}
\end{lem}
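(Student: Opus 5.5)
The plan is to condition on $X$ and reduce everything to independent Poisson variables with deterministic intensities. The mixed Poisson law says: conditionally on $X=x$, the $Y_j$ are independent Poisson$(x_j)$. More precisely, any $Y$ with the stated law can be realized on an enlarged probability space carrying $X$ and a family of independent unit-rate Poisson processes $(\mathcal P_j)_{j=1}^k$, independent of $X$, by setting $Y_j=\mathcal P_j([0,X_j])$. One checks directly that this has the required joint probability mass function by conditioning on $X$.

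\textbf{Characteristic function.} Condition on $X$. The conditional characteristic function of a Poisson$(x_j)$ variable is $e^{(e^{\mathrm{i}t_j}-1)x_j}$. Conditional independence lets us multiply these over $j$, and taking expectations over $X$ gives the stated formula. Fubini applies because the summand is bounded by the probability mass function.

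\textbf{Moment bound \eqref{eq:mixed_Poisson_moment}.} For a Poisson$(x)$ variable $P$, the $p$th moment equals the Touchard polynomial
\[
\mathbf E[P^p]=\sum_{m=1}^p S(p,m)x^m ,
\]
where $S(p,m)$ are Stirling numbers of the second kind. For $x\ge 0$ and $1\le m\le p$ we have $x^m\le x+x^p$, so $\mathbf E[P^p]\le c_p(x+x^p)$. Integrating over the law of $X_j$ gives the bound, with finiteness following from $\mathbf E[X_j^p]<\infty$.

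\textbf{Wasserstein bound \eqref{eq:mixed_Poisson_Wasser}.} Use the coupling above, so that $W_p(X,Y)\le \mathbf E[|X-Y|^p]^{1/p}$. Since $|X-Y|^p\le c_{p,k}\sum_j |X_j-Y_j|^p$, it suffices to bound each $\mathbf E[|Y_j-X_j|^p]$. Conditionally on $X_j=x$, the variable $Y_j-x$ is a centered Poisson variable. For even $p$, its $p$th central moment is a polynomial in $x$ whose lowest-order term is $x$ and whose highest-order term is $x^{p/2}$. This follows from the cumulant expansion: all cumulants of a Poisson$(x)$ variable equal $x$, and the $p$th central moment is a sum over partitions of $\{1,\dots,p\}$ into blocks of size at least $2$, which have at most $p/2$ blocks. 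Hence
\[
\mathbf E[(P-x)^p]\le c_p(x+x^{p/2}).
\]
Taking expectations gives $\mathbf E[|Y_j-X_j|^p]\le c_p\mathbf E[X_j+X_j^{p/2}]$. Then $(\sum_j a_j)^{1/p}\le \sum_j a_j^{1/p}$ yields the claimed sum over $j$.

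The only mildly delicate step is the central moment bound with exponent $p/2$ rather than $p$. This is the key point, since for large intensities it gives fluctuations of order $\sqrt{X_j}$ rather than $X_j$. The partition and cumulant argument handles it cleanly.
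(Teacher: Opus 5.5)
Your proposal is correct and follows the paper's proof essentially step for step. You use the same conditional-Poisson coupling of $X$ and $Y$. You bound the Poisson moment polynomial termwise by $x+x^p$, where the paper invokes Young's inequality for the same purpose. You also use the same key fact that the $p$th central moment of a Poisson$(x)$ variable is a polynomial in $x$ of degree $p/2$ with no constant term. The differences are only that you justify steps the paper calls standard (Touchard polynomials, and the cumulant/partition argument for the central moments), and you handle the vector norm in $W_p$ with an explicit constant $c_{p,k}$.
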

\begin{proof}
    We omit the evaluation of the characteristic function. We construct $X,Y$ on a common probability space $\mathbb R_+^k \times \mathbb N^k$, with $X$ and $Y$ given by projections on the respective Cartesian factors. The joint probability measure is uniquely determined by specifying the prescribed marginal law of $X $ and requiring that conditionally on $X_1=x_1,\dots,X_k=x_k$, random variables $(Y_1,\dots,Y_k)$ are independent Poisson with intensities $(x_1,\dots,x_k)$. By the law of total expectation and standard formulas for moments of Poisson random variables,
    \begin{equation}
        \mathbf E[Y_j^p] = \mathbf E[\mathbf E[Y_j^p|X]] = \sum_{l=1}^p c_{p,l} \mathbf E[  X_j^l] \leq c_p \mathbf E[X_j + X_j^p],
    \end{equation}
    where in the last step we estimated terms with $1 < l < p$ using Young's inequality. Now let $p$ be even. To bound the Wasserstein distance, we use the coupling of $X$ and $Y$ introduced above:
    \begin{equation}
        W_p(X,Y)^p \leq \sum_{j=1}^k \mathbf E[(Y_j-X_j)^p]  = \sum_{j=1}^k\sum_{l=1}^{\frac{p}{2}} c_{p,l}' \mathbf E[X_j^l] \leq  c_p' \sum_{j=1}^k \mathbf E[X_j + X_j^{\frac{p}{2}}],
    \end{equation}
    where we used the fact that $p$th centered moment of a Poisson random variable with intensity $x$ is given by a polynomial in $x$ of degree $\frac{p}{2}$. 
\end{proof}

\begin{proof}[Proof of Theorem \ref{thm:thermaldistr} (b)]
We begin by comparing $Y=(N_{p_1},\dots,N_{p_n})$ with $\widetilde Y=(\widetilde N_{p_1},\dots,\widetilde N_{p_n})$:
    \begin{equation}
    \mathbf{P}( \widetilde N_{p_j} \in B_j \text{ for } j=1,\dots,n )
    = \Tr\!\left[ \prod_{j=1}^n \mathds{1}\big(a_{p_j}^* \a_{p_j} \in B_j\big)\, \widetilde \Gamma_{\beta,N} \right].
\end{equation}
We have $d_{\mathrm{TV}}(Y, \widetilde Y) \leq \| G_{\beta,N} - \widetilde \Gamma_{\beta,N} \|_1 \lesssim N^{-\frac{1}{48}}$. By \cite[Theorem 14]{DeuNamNap-25}, we also have $\mathbf E[N_{p_i}^4] \leq c \beta^{-4}$. As shown below (see the argument below \eqref{eq:tildeNcharacteritic}), $\mathbf E[\widetilde N_{p_i}^4] \leq c \beta^{-4}$ holds as well. By the maximal coupling theorem \cite[Theorem (5.2)]{Lindvall}, one can construct $Y$ and $\widetilde Y$ on a common probability space such that $\mathbf P(Y \neq \widetilde Y) = d_{\mathrm{TV}}(Y,\widetilde Y)$. Having chosen a coupling, we estimate using H\"older's inequality, for $q \in [1,4)$:
\begin{equation}
    \mathbf E[\sum_i |N_{p_i} - \widetilde N_{p_i}|^q] = \sum_i \mathbf E[ |N_{p_i} - \widetilde N_{p_i}|^q \mathds 1(Y \neq \widetilde Y)] \leq  \sum_i \mathbf E[ |N_{p_i} - \widetilde N_{p_i}|^4]^{\frac{q}{4}}  d_{\mathrm{TV}}(Y,\widetilde Y)^{\frac{4-q}{4}}.
\end{equation}
Therefore, $W_q(\beta Y , \beta \widetilde Y) \lesssim N^{- \frac{4-q}{192q}}$. To find the distribution of $\widetilde Y$, we compute the corresponding characteristic function:
\begin{equation}\label{eq:tildeNcharacteritic}
    \mathbf E[e^{\mathrm{i} \sum_j t_j \widetilde N_{p_j}}] = \mathbf E[\langle \Omega_{\Phi_{N,z}} | e^{\mathrm{i} \sum_j t_j \mathcal N_{p_j}} \Omega_{\Phi_{N,z}} \rangle ] = \mathbf E[ e^{ \sum_j (e^{\mathrm{i}t_j}-1) |\Phi_{p_j}|^2}]. 
\end{equation}
By Lemma \ref{lem:mixed_Poisson}, $\widetilde Y$ is mixed Poisson with intensity $X=(|\Phi_{p_1}|^2, \dots, |\Phi_{p_n}|^2)$. 
Since  $\beta |\Phi_{p_j}|^2$ has all moments bounded uniformly in $N$, the same is true for $\beta \widetilde N_{p_j}$ by the estimate \eqref{eq:mixed_Poisson_moment}. 
Moreover, for even integers $p$ we have $W_{p}(\beta \widetilde Y, \beta X) \leq c_p \beta^{\frac12}$ by \eqref{eq:mixed_Poisson_Wasser}. Defining $Z = (|\Psi_{p_1}|^2,\dots, |\Psi_{p_n}|^2)$, we have $\lim_{N \to \infty} W_{p}(\beta X, Z) =0$ by Proposition \ref{prop:linear_observables} (b) applied for plane-wave functions $f_i$, which also provides a rate of convergence under mild additional assumptions. Finally, using the triangle inequality for Wasserstein distance and the monotonicity of the $q$-distance in $q$, we conclude that for all $q \in [1,4)$:
\begin{equation}
    W_q(\beta Y, Z) \leq W_q(\beta Y, \beta \widetilde Y) + W_4(\beta \widetilde Y, \beta X) + W_4(\beta X, Z) \to 0.
\end{equation}

Let us now discuss the random vector $(|\widehat \Psi(p)|^2, |\widehat \Psi(-p)|^2)$. We start from probability density \eqref{eq:Sigma_infty_explicit} and introduce polar coordinates: $\psi_p = \sqrt{x} e^{\mathrm{i} \theta}$, $\psi_{-p} = \sqrt{y} e^{\mathrm{i} \varphi}$. Then $\d \psi_p \d \psi_{-p} = \frac{1}{(2 \pi)^2} \d x \d y \d \theta \d \varphi$. To~obtain the probability density $f(x,y)$, we have to integrate \eqref{eq:Sigma_infty_explicit} with respect to $\theta$ and $\varphi$:
\begin{equation}
    f(x,y) = c \int_0^{2 \pi} \int_0^{2 \pi}  e^{ - (p^2 +m^2 +2 g(\kappa) \widehat v(p)) (x+y) -2 g(\kappa) \widehat v(p) \sqrt{xy} \cos(\theta + \varphi)} \frac{\mathrm{d} \theta}{2 \pi} \frac{\mathrm{d} \varphi}{2 \pi}.
\end{equation}
where $c = (p^2 +m^2)(p^2 +m^2 + 2 g(\kappa) \widehat v(p))$. The integral can be computed using the Jacobi-Anger formula, which establishes \eqref{eq:Bessel_density}.
\end{proof}

\begin{proof}[Proof of Theorem \ref{thm:thermaldistr} (c)]
Following the approach in (a) and (b), we introduce the random variable $\widetilde N_+$:
\begin{equation}
    \mathbf{P}(\widetilde N_+ \in B) = \mathrm{Tr}[ \mathds 1(\mathcal N_+ \in B) \widetilde \Gamma_{\beta,N}].
\end{equation}
Analogously to the proof of (b), we argue that $W_q(\beta N_+-\mathbf E[\beta N_+], \beta \widetilde N_+-\mathbf E[\beta N_+]) \lesssim N^{- \frac{4-q}{192q}}$ for $q \in [1,4)$. Furthermore,
\begin{equation}
    W_q( \beta \widetilde N_+-\mathbf E[\beta N_+],\beta \widetilde N_+-\mathbf E[\beta \widetilde N_+] ) \leq \beta |\mathbf{E}[N_+] - \mathbf E[\widetilde N_+]  | \lesssim N^{-\frac{1}{48}}
\end{equation}
by \cite[Theorem 4 (a)]{DeuNamNap-25}. As in the proof of (b), we verify that $\widetilde N_+$ is mixed Poisson with intensity $\| \Phi_N \|^2$. Therefore, by Lemma \ref{lem:mixed_Poisson}, for every even positive integer $p$ we have:
\begin{align}
    W_p(\beta  \widetilde N_+ - \mathbf E[\beta \widetilde N_+], \beta \| \Phi_N \|^2 - \mathbf E[\beta \| \Phi_N \|^2]) &= \beta W_p (\widetilde N_+, \| \Phi_N \|^2) \\
    &\leq c_p \beta \, \mathbf E[ \| \Phi_N \|^2 + \| \Phi_N \|^p]^{\frac{1}{p}} \leq c_p' N^{-\frac16}, \nonumber
\end{align}
where we used that the moments of $\frac{1}{N} \| \Phi_N \|^2$ are bounded uniformly in $N$ by Proposition \ref{prop:semiclassics_Wick_square}. Using the triangle inequality and invoking Proposition \ref{prop:semiclassics_Wick_square} again, we have for $q \in [1,4)$:
\begin{equation}
    \beta \widetilde N_+ - \mathbf E[\beta \widetilde N_+] \xrightarrow{W_q} \int_\Lambda :|\Psi(x)|^2: \d x = \sum_{p \in \Lambda_+^*} \left( |\widehat \Psi(p)|^2 - \mathbf E[|\widehat \Psi(p)|^2] \right).
\end{equation}
Expressing $\Psi$ as in \eqref{eq:Psi_eta}, we find
\begin{equation}
    \sum_{p \in \Lambda_+^*} \left( |\widehat \Psi(p)|^2 - \mathbf E[|\widehat \Psi(p)|^2] \right) = \sum_{p \in A} \left( \frac{|\eta_{p,+}|^2-1}{p^2 +m^2 + 2 g(\kappa) \widehat v(p)}  + \frac{|\eta_{p,-}|^2-1}{p^2 +m^2} \right),
\end{equation}
which reproduces \eqref{eq:W_def} with $X_p = |\eta_{p,-}|^2$ and $Y_p = |\eta_{p,+}|^2$. 
\end{proof}

\section{Local counting statistics}
\subsection{Proof of Theorem \ref{thm:localcounting}} \label{sec:prooflocalcounting}
Here we deal with the system in the presence of a condensate.

\begin{proof}[Proof of Theorem \ref{thm:localcounting} (a)]
As discussed in Appendix \ref{app:GibbsPointProcess}, the random variable $\Theta_{\beta,N}(f)$ is defined for all real-valued $f \in L^1(\Lambda)$ and satisfies $\mathbf{E}[ \Theta_{\beta,N}(f) ] =  N \widehat f(0)$. Let $f_+ = f - \widehat f(0)$. We will now consider the variance of $\Theta_{\beta,N}(f)$, carrying out the forthcoming calculations first for smooth $f$ and then invoking standard density arguments, which we will not spell out in detail. We have
\begin{align} \label{eq:vartheta}
    \mathbf{Var}[\Theta_{\beta,N}(f)] & = \mathrm{Tr}[G_{\beta,N} \, (\mathrm{d} \Gamma(f) - N \widehat f(0))^2] \\
    & = \mathrm{Tr}[G_{\beta,N} \, ( \widehat f(0) (\mathcal N -N) +  a^*(f_+) \a_0 + a_0^* a(f_+) + \mathrm{d} \Gamma(Q f_+ Q))^2]. \nonumber 
\end{align}
Let us denote 
\begin{equation}
A = \widehat f(0) (\mathcal N -N) +  a^*(f_+) \a_0 + a_0^* a(f_+), \qquad     B = \mathrm{d} \Gamma(Q f_+ Q).
\end{equation}
We will show that $\mathrm{Tr}[G_{\beta, N}  A^2] = O(N^{\frac53})$ for $f \in L^2(\Lambda)$, and that $\mathrm{Tr}[ G_{\beta, N}  B^2] = o(N^{\frac53})$ for $f \in H^{\frac15}(\Lambda)$. Under the latter assumption, $    \mathrm{Tr}[G_{\beta,N} (A+B)^2] = \mathrm{Tr}[G_{\beta,N} A^2] + o(N^{\frac53})$ holds by Cauchy-Schwarz. More generally, $\mathrm{Tr}[ G_{\beta, N}  B^2] = o(N^2)$ for $f \in L^2(\Lambda)$, and we can still conclude that $\mathbf{Var}[\Theta_{\beta,N}(f)] = o(N^2)$. 

Let us first consider the $B^2$ term. Expressing $f$ with its Fourier series,
\begin{align}
    \mathrm{Tr}[G_{\beta,N} \mathrm{d} \Gamma(Qf_+ Q)^2] &= \sum_{p,q \in \Lambda_+^*} \widehat f(p) \widehat f(q) \mathrm{Tr}[G_{\beta,N} \mathrm{d} \Gamma(e^{\mathrm{i}px}) \mathrm{d} \Gamma(e^{\mathrm{i}qx})] \\
   & = \sum_{p \in \Lambda^*_+} |\widehat f(p)|^2 \mathrm{Tr}[G_{\beta,N} \mathrm{d} \Gamma(e^{\mathrm{i}px}) \mathrm{d} \Gamma(e^{-\mathrm{i}px})], \nonumber
\end{align}
where in the last step we dropped terms with $q \neq -p$ by translation invariance. Using translation invariance again, we can rewrite this expression using the operators $B_p = \mathrm{d} \Gamma(Q \cos (px) Q)$:
\begin{equation}
    \mathrm{Tr}[G_{\beta,N} d \Gamma(Qf_+ Q)^2] =  2 \sum_{p \in \Lambda_+^*} |\widehat f(p)|^2 \mathrm{Tr}[G_{\beta, N} B_p^2]. 
\label{eq:QfQ_term}
\end{equation}
By \cite[Theorem 13 (b)]{DeuNamNap-25}, we have the estimates
\begin{equation}
    \mathrm{Tr}[G_{\beta,N} B_p^2] \leq c (N^{\frac43} + N^{\frac13} p^2).
    \label{eq:Bp_bound}
\end{equation}
The second term is unbounded for $p \to \infty$, and for very large $p$ we will instead use the easier estimate
\begin{equation}
    \mathrm{Tr}[G_{\beta,N} B_p^2] \leq \mathrm{Tr}[G_{\beta,N} \mathcal N^2] \leq c N^2. 
    \label{eq:B_easy_est}
\end{equation}
If $f \in H^1(\Lambda)$, using \eqref{eq:Bp_bound} in \eqref{eq:QfQ_term} immediately gives $\mathrm{Tr}[G_{\beta,N} \mathrm{d} \Gamma(Qf_+ Q)^2] \leq c N^{\frac43} \| f \|_{H^1(\Lambda)}^2$. Let us consider a more general case, $f \in H^s(\Lambda)$ for $s \in [0,1)$. We split the sum in \eqref{eq:QfQ_term} into three regions: small momenta $|p| \leq N^{\frac12}$, intermediate momenta $N^{\frac12} < |p| \leq N^{\frac{5}{6}}$, and large momenta $|p| > N^{\frac{5}{6}}$. For small momenta, \eqref{eq:Bp_bound} is bounded by $N^{\frac43}$, and the corresponding contribution to the sum in \eqref{eq:QfQ_term} is bounded by $N^{\frac43} \| f \|_{L^2}^2$. For intermediate momenta, we bound $p^2 \leq |p|^{2s} N^{\frac53 (1-s)}$, obtaining:
\begin{equation}
    \sum_{N^{\frac12}<|p| \leq N^{\frac56}} |\widehat f(p)|^2 \mathrm{Tr}[G_{\beta, N} B_p^2] \leq c N^{2 -\frac53 s }     \sum_{N^{\frac12}<|p| \leq N^{\frac56}}  |p|^{2s} |\widehat f(p)|^2.
\end{equation}
The sum on the right hand side is $o(1)$ because it is bounded by the tail of a convergent series. Finally, for large momenta we use \eqref{eq:B_easy_est} and bound $1 \leq \frac{|p|^{2s}}{N^{\frac{5}{3}s }}$:
\begin{equation}
    \sum_{N^{\frac56} <|p| } |\widehat f(p)|^2 \mathrm{Tr}[G_{\beta, N} B_p^2] \leq c N^{2 - \frac53 s} \sum_{N^{\frac56} <|p| } |p|^{2s} |\widehat f(p)|^2.
\end{equation}
Again, the sum on the right hand side is $o(1)$. Combining the obtained estimates,
\begin{equation}\label{eq:Bsquareestimate}
    \mathrm{Tr}[G_{\beta,N} B^2] = O(N^{\frac43}) + o(N^{2 - \frac{5}{3}s}).
\end{equation}
In particular, for $s=0$ the right hand side is $o(N^2)$, for $s \geq \frac15$ it is $o(N^{\frac53})$, and for $s \geq \frac25$ it is $O(N^{\frac43})$.

Now consider $A =A_1 + A_2 $, where $A_1 = \widehat f(0) (\mathcal N - N)$ and $A_2 = a^*(f_+) \a_0 + a_0^* a(f_+)$. We have
\begin{equation}
    \mathrm{Tr}[G_{\beta,N} A^2] = \mathrm{Tr}[G_{\beta,N} (A_1^2 + A_2^2) ] + \mathrm{Tr}[G_{\beta,N} (A_1A_2 + A_2 A_1)]. 
\end{equation}
The second trace on the right hand side vanishes by the translation invariance of $G_{\beta,N}$. From  \cite[equation (11.27)]{DeuNamNap-25} we have 
\begin{equation}\label{eq:varianceN_DNN}
    \mathrm{Tr}[G_{\beta,N} \, A_1^2]  = \frac{N}{\beta} \frac{\widehat f(0)^2}{\widehat v(0)} + O (N^{\frac53 - \frac{1}{4}}).
\end{equation}
To deal with the term involving $A_2$ we use  \cite[Theorem 5 (a)]{DeuNamNap-25} to get  
\begin{equation} \label{eq:a+a0var}
     \mathrm{Tr}[G_{\beta,N} \, A_2^2] = N  g^{\id}\sum_{p\in \Lambda^*_+} \left(2|\hat{f}(p)|^2 \gamma^{\Bog}(p)+(\hat{f}(p)\hat{f}(-p)+\overline{\hat{f}(p)\hat{f}(-p)}) \alpha^{\Bog}(p)\right)+O(N^{\frac53-\frac{1}{96}}).
\end{equation}
Here $\gamma^{\Bog},\alpha^{\Bog}$ are given as in \eqref{def:gammabogalphabog} but with $\hat{u}(p)$ replaced with $\hat{v}(p)$ (cf. Remark \ref{rem:gammalpahtilde}). The first term on the RHS of \eqref{eq:a+a0var} above is $O(N^{5/3})$ for $f\in L^2(\Lambda)$. Putting together \eqref{eq:a+a0var}, \eqref{eq:varianceN_DNN}, \eqref{eq:Bsquareestimate} and  \eqref{eq:vartheta} shows that
\begin{equation} \label{eq:varthetafinal}
    \mathbf{Var}[\Theta_{\beta,N}(f)] = \frac{N}{\beta} \frac{\widehat f(0)^2}{\widehat v(0)} + N  g^{\id}\sum_{p\in \Lambda^*_+} \left(2|\hat{f}(p)|^2 \gamma^{\Bog}(p)+2 \operatorname{Re}(\hat{f}(p)\hat{f}(-p)) \alpha^{\Bog}(p)\right)  + o(N^{5/3})
\end{equation}
when $f\in H^s(\Lambda)$ with $s\geq\frac15$.
\end{proof}

\begin{remark}
    Let $\widetilde \Theta_{\beta,N}$ be the point process associated to $\widetilde \Gamma_{\beta,N}$. Then a direct computation similar to the one in the proof above yields
    \begin{equation}
    \label{eq:vartildethetafinal}
    \mathbf{Var}[\widetilde\Theta_{\beta,N}(f)] = \frac{N}{\beta} \frac{\widehat f(0)^2}{\widehat v(0)} + N  g^{\id}\sum_{p\in \Lambda^*_+} \left(2|\hat{f}(p)|^2 \widetilde\gamma^{\Bog}(p)+2 \operatorname{Re}(\hat{f}(p)\hat{f}(-p)) \widetilde\alpha^{\Bog}(p)\right)  + o(N^{5/3})
    \end{equation}
where $\widetilde\gamma^{\Bog}$ and $\widetilde\alpha^{\Bog}$ are defined in \eqref{def:gammabogalphabog}. In particular, since $\widetilde\gamma^{\Bog}$ and $\widetilde\alpha^{\Bog}$ coincide with $\gamma^{\Bog}$ and $\alpha^{\Bog}$ for $|p| \leq p_c$ it is easy to see from \eqref{eq:varthetafinal} and \eqref{eq:vartildethetafinal} that
\begin{equation}
\label{eq:varthetaminusvartildetheta}
\mathbf{Var}[\Theta_{\beta,N}(f)]-\mathbf{Var}[\widetilde\Theta_{\beta,N}(f)]=o(N^{\frac53})
\end{equation}
when $f\in H^s(\Lambda)$ with $s\geq\frac15$. 
\end{remark}

\begin{proof}[Proof of Theorem \ref{thm:localcounting} (b)]
We present the proof for $n=1$ for brevity. Let $Y=\Theta_{\beta,N}(f) $ with $f \in H^s(\Lambda)$ for some $s \geq \frac15$. We will compare $Y$ with $\widetilde Y = \widetilde \Theta_{\beta,N}(f) $, where $\widetilde \Theta_{\beta,N}$ is the point process associated to $\widetilde \Gamma_{\beta,N}$. That is, $\widetilde Y$ is a~real random variable with distribution
\begin{equation}
    \mathbf P(\widetilde Y \in B) = \mathrm{Tr}[ \mathds 1(\mathrm{d} \Gamma(f)  \in B ) \widetilde \Gamma_{\beta,N}  ].
\end{equation}
We note that $\mathbf E[Y]= \mathbf E[\widetilde Y] = N \widehat f(0)$. 

If $f$ is bounded, then $d_{\mathrm{TV}}(Y,\widetilde Y) \lesssim  N^{- \frac{1}{48}}$. This estimate extends to all $f \in L^1(\Lambda)$ because the total variation distance is lower-semicontinuous with respect to convergence in distribution (see also the discussion around \eqref{eq:intensity_measure_is_Lebesgue} in Appendix \ref{app:GibbsPointProcess}). Invoking the maximal coupling theorem, we~construct $Y, \widetilde Y$ on a joint probability space such that $\mathbf P(Y \neq \widetilde Y)= d_{\mathrm{TV}}(Y,\widetilde Y)$. Then
\begin{align}
    \mathbf E[(Y-\widetilde Y)^2] &= \mathbf E[(Y-\widetilde Y)^2 \mathds 1(Y \neq \widetilde Y)] \\ 
    &\leq 2 \mathbf E[((Y- \mathbf E[Y])^2 + (\widetilde Y - \mathbf E[\widetilde Y])^2) \mathds 1( Y \neq \widetilde Y )] \nonumber \\
     &= 2 \mathbf{Var}[Y] - 2 \mathbf{Var}[\widetilde Y] + 4 \mathbf E[(\widetilde Y - \mathbf E[\widetilde Y])^2 \mathds 1(Y \neq \widetilde Y)]. \nonumber
\end{align}
From \eqref{eq:varthetaminusvartildetheta} we have $\mathbf{Var}[Y] - \mathbf{Var}[\widetilde Y] = o( N \beta^{-1})$, with an explicit rate if $s > \frac15$. We~estimate the second term of the right hand side using H\"older's inequality:
\begin{equation}
\mathbf E[(\widetilde Y - \mathbf E[\widetilde Y])^2 \mathds 1(Y \neq \widetilde Y)] \leq \mathbf E[(\widetilde Y-\mathbf E[\widetilde Y])^{2q}]^{\frac{1}{q}} d_{\mathrm{TV}}(Y, \widetilde Y)^{1-\frac{1}{q}}
\end{equation}
for any $q > 1$. One verifies that $\mathbf E[(\widetilde Y-\mathbf E[\widetilde Y])^{2q}]^{\frac{1}{q}} \leq c_q N \beta^{-1}$, and therefore:
\begin{equation}
    \mathbf E[(\widetilde Y - \mathbf E[\widetilde Y])^2 \mathds 1(Y \neq \widetilde Y)] \leq c_\delta N \beta^{-1} N^{- \frac{1}{4 8} + \delta}
\end{equation}
holds for every $\delta >0$. Combining the obtained inequalities, we find that
\begin{equation}
 W_2 \left( \sqrt{\frac{\beta}{N}} (Y- \mathbf E[Y]), \sqrt{\frac{\beta}{N}} (\widetilde Y-\mathbf E[Y]) \right) = W_2 \left( \sqrt{\frac{\beta}{N}} Y, \sqrt{\frac{\beta}{N}} \widetilde Y \right) \xrightarrow{N \to \infty} 0.
   \label{eq:local_linear_state_comparison_W2}
\end{equation}

Let us now specialize to an indicator function $f = \mathbf 1_D$. Therefore, $Y = N_D$. If $D$ has smooth boundary, then $f \in H^s(\Lambda)$ for all $s < \frac12$ \cite[Theorem~11.4]{LionsMagenes1972}, so \eqref{eq:local_linear_state_comparison_W2} holds. By part (a) of the theorem, $\widetilde Y$ is mixed Poisson with intensity $M=\| \mathbf 1_D \Phi_{N,z} \|^2$. Therefore, by Lemma \ref{lem:mixed_Poisson}:
\begin{align}
 W_2 \left( \sqrt{\frac{\beta}{N}} (\widetilde Y- \mathbf E[\widetilde Y]) , \sqrt{\frac{\beta}{N}} (M-\mathbf E[M]) \right) & =    W_2 \left( \sqrt{\frac{\beta}{N}} \widetilde Y , \sqrt{\frac{\beta}{N}} M \right) \\
& = \sqrt{\frac{\beta}{N}} W_2(\widetilde Y,M) \leq \sqrt{\beta |D|} \xrightarrow{N \to \infty} 0. \nonumber
\end{align}
Let us now analyze the random intensity $M$. We can decompose $M - \mathbf E[M] = M_1 + M_2$, where
\begin{subequations}
\begin{align}
    M_1 & =  (|Z|^2 - \mathbf E[|Z|^2]) |D| + 2 \sqrt{N g^\id} \, \mathrm{Re} \, \langle \mathbf 1_D | \Phi_N \rangle, \label{eq:Z1} \\
    M_2 & = 2 (|Z| - \sqrt{N g^\id}) \, \mathrm{Re} \, \langle \mathbf 1_D | \Phi_N \rangle + \langle \Phi_N | \mathbf 1_D \Phi_N \rangle - \mathbf E[ \langle \Phi_N | \mathbf 1_D \Phi_N \rangle]. \label{eq:Z_2}
\end{align}
 \end{subequations}
We will see that the terms in $M_2$ are negligible and bound:
\begin{equation}
    W_2 \left(\sqrt{\frac{\beta}{N}} (M- \mathbf E[M]), \sqrt{\frac{\beta}{N}} M_1 \right) \leq \sqrt{\frac{\beta}{N}} \sqrt{\mathbf{Var}[M_2]} \leq c N^{ -\frac16}. \label{eq:2137}
\end{equation}
To get the second inequality in \eqref{eq:2137}, we separately estimate two terms in the definition of $M_2$ and invoke Cauchy-Schwarz. First, using \eqref{eq:varZsqNgid}, we obtain
\begin{align}
    \mathbf{Var} \left[  (|Z|-\sqrt{N g^\id}) \, \mathrm{Re} \, \langle \mathbf 1_D | \Phi_N \rangle  \right] \leq \frac{1}{N g^\id \beta} \mathbf E[(|Z|^2 - Ng^\id)^2 ] \mathbf E[ (\mathrm{Re} \, \langle \mathbf 1_D | \beta^{\frac12} \Phi_N \rangle)^2 ] \leq c \frac{1}{g^\id \beta^2}.
\end{align}
Then, using \eqref{def:gammabogalphabog}, we obtain
\begin{align}
    \mathbf{Var}[ \langle \Phi_N | \mathbf 1_D \Phi_N \rangle  ] &= \sum_{p,q \in \Lambda^*_+} |\widehat {\mathbf 1}_D(p-q)|^2 (\widetilde \gamma^\Bog(p) \widetilde \gamma^\Bog(q) + \widetilde \alpha^\Bog(p) \widetilde \alpha^\Bog(q)) \\
    & \lesssim  \sum_{p,k \in \Lambda^*} |\widehat {\mathbf 1}_D(k)|^2 \frac{1}{\beta (|p|^2+1)} \frac{1}{\beta (|p+k|^2+1)} \nonumber \\
   & \lesssim  \beta^{-2}  \sum_{k \in \Lambda^*} |\widehat {\mathbf 1}_D(k)|^2 \cdot \max_{k \in \Lambda^*} \sum_{p \in \Lambda^*} \frac{1}{|p|^2 +1} \frac{1}{|p+k|^2 +1} \lesssim \beta^{-2}. \nonumber
\end{align}
This completes the proof of \eqref{eq:2137}. Finally,
\begin{align}
    W_2 \left( \sqrt{\frac{\beta}{N}} M_1,  \frac{1}{\sqrt{\widehat v(0)}} X +  2 \sqrt{g(\kappa)} \, \mathrm{Re} \, \langle \mathbf 1_D | \Psi \rangle \right)^2 & \leq W_2 \left( \sqrt{\frac{\beta}{N}} (|Z|^2 - \mathbf E[|Z|^2]), \frac{1}{\sqrt{\widehat v(0)}} X \right)^2 \\
    & +    W_2 \Big (2 \sqrt{g^\id} \, \mathrm{Re} \, \langle \mathbf 1_D | \beta^{\frac12} \Phi_N \rangle , 2 \sqrt{g(\kappa)} \, \mathrm{Re} \, \langle \mathbf 1_D | \Psi \rangle \Big)^2. \nonumber 
\end{align}
Here, the first term converges to zero by \eqref{eq:W2condtoX}, and the second term converges to zero because $g^\id $ converges to $g(\kappa)$, and $ \mathrm{Re} \, \langle \mathbf 1_D | \beta^{\frac12} \Phi_N \rangle \xrightarrow{W_2} \mathrm{Re} \, \langle \mathbf 1_D | \Psi \rangle $ by Proposition \ref{prop:linear_observables}.

In the more general case of a Borel set $D \subset \Lambda$ not necessarily with smooth boundary, it is still true that distributions of random variables $Y,\widetilde Y$ are close in the total variation distance, and the arguments above establish in particular that $\sqrt{\frac{\beta}{N}}(\widetilde Y - \mathbf E[\widetilde Y] )$ converges to $\frac{1}{\sqrt{\widehat v(0)}} X +  2 \sqrt{g(\kappa)} \, \mathrm{Re} \, \langle \mathbf 1_D | \Psi \rangle$ in distribution. 
\end{proof}

\begin{proof}[Proof of Theorem \ref{thm:localcounting} (c)]
    By Theorem \ref{thm:semiclassicalApproximation} (c), it is enough to consider the point process $\widetilde \Theta_{\beta,N}$ instead of $\Theta_{\beta,N}$. Let $D \subset \mathbb R^3$ be a bounded Borel set and let $D_N = N^{-\frac13} D$. Then $D_N \subset (- \frac12 ,\frac12)^3$ for large enough $N$, and we view $D_N$ as a subset of $\Lambda$ via the inclusion $(-\frac12,\frac12)^3 \subset \Lambda$. Let $f $ be a continuous function on $\mathbb R^3$ supported on $D$, and let $f_N(x) = f(N^{\frac13} x)$, so that $f_N$ is supported on $D_N$. We have to show that
\begin{equation}
    \mathbf E[\exp(\mathrm{i} \widetilde \Theta_{\beta,N}(f))] \xrightarrow{N \to \infty} \mathbf E\left[ \exp \left(\int (e^{\mathrm{i}f(x)}-1) |\sqrt{g (\kappa}) + H(x)|^2 \d x \right)  \right]
\end{equation}
where $H(x)$ is the Gaussian random field introduced in Section \ref{sec:bosonPointProcess}. This reduces to proving the weak convergence of random variables
    \begin{equation}
\int  ||Z| + \Phi_N(x)|^2 f_N(x) \d x \xrightarrow{N \to \infty} \int |\sqrt{g (\kappa)} + H(x)|^2 f(x) \d x.
\label{eq:microstat_proof1}
    \end{equation}
    Let us now observe that
    \begin{align}
        &\mathbf E \left[ \left| \int  \big ||Z| + \Phi_N(x) \big|^2 f_N(x) \d x - \int  \big|\sqrt{N g(\kappa)} + \Phi_N(x) \big|^2 f_N(x) \d x   \right| \right] \label{eq:microstat_proof2}  \\
        = & \mathbf E \left[ \left| \int \left( |Z|^2 - N g(\kappa) + 2 \left(|Z| - \sqrt{N g(\kappa)} \right) \mathrm{Re} \, \Phi_N(x) \right) f_N(x) \d x  \right| \right] \nonumber \\
        \leq & \frac{1}{N}\mathbf E \left[ \big| |Z|^2 - N g (\kappa) \big| \right] \int |f(x)| \d x + 2 \, \mathbf E[ \big| |Z| - \sqrt{N g(\kappa)} \big|^2]^{\frac12} \int \mathbf E[|\Phi_N(x)|^2]^{\frac12} |f_N(x)| \d x. \nonumber
    \end{align}
    We have $\mathbf E \left[ \big| |Z|^2 - N g (\kappa) \big| \right] = o(N)$, so the first term converges to zero as $N \to \infty$. For the second term, we have $\mathbf E[ \big| |Z| - \sqrt{N g(\kappa)} \big|^2]^{\frac12} = o(N^{\frac12})$ and
    \begin{equation}
        \int \mathbf E[|\Phi_N(x)|^2]^{\frac12} |f_N(x)| \d x \leq N^{\frac12} \int |f_N(x)| \d x = N^{-\frac12} \int |f(x)| \d x.
    \end{equation}
    We conclude that the expectation in \eqref{eq:microstat_proof2} converges to zero as $N \to \infty$, so 
    \begin{equation}
        \int  \big ||Z| + \Phi_N(x) \big|^2 f_N(x) \d x - \int  \big|\sqrt{N g(\kappa)} + \Phi_N(x) \big|^2 f_N(x) \d x \to 0
    \end{equation}
    in probability. Therefore, by Slutsky's theorem, verification of \eqref{eq:microstat_proof1} reduces to showing
    \begin{equation}
\int  |\sqrt{N g(\kappa)} + \Phi_N(x)|^2 f_N(x) \d x \xrightarrow{N \to \infty} \int |\sqrt{g (\kappa)} + H(x)|^2 f(x) \d x
\label{eq:microstat_proof3}
    \end{equation}
in distribution.    

Next, we will replace the field $\Phi_N(x)$ with
    \begin{equation}
        \Phi_N'(x) = \sum_{p \in \Lambda_+^*} \widehat \Phi_N'(p) e^{\mathrm{i}px},
    \end{equation}
    where $\widehat \Phi_N'(p)$ are independent complex Gaussian random variables with the covariance
    \begin{equation}
        \mathbf E[ |\widehat \Phi_N'(p)|^2  ] = \frac{1}{e^{\beta p^2}-1} 
    \end{equation}
    and vanishing pseudo-covariance. We construct $\Phi_N$ and $\Phi_N'$ on a common probability space using the definition \eqref{eq:Phi_from_eta} for $\Phi_N$ and
    \begin{subequations}
    \begin{align}
    \widehat \Phi_N'(p) & = \frac{1}{\sqrt{2 (e^{\beta p^2}-1)}} (\eta_{p,+}+ \eta_{p,-}), \\
    \widehat \Phi_N'(-p) & = \frac{1}{\sqrt{2 (e^{\beta p^2}-1)}} (\overline{\eta_{p,+}}- \overline{\eta_{p,-}}). 
    \end{align}
    \end{subequations}
    Then it is easy to see that
    \begin{equation}
        \mathbf E[ \| \Phi_N - \Phi_N' \|_{L^2(\Lambda)}^2] = o(N),
    \end{equation}
    and therefore by translation invariance
    \begin{equation}
        \mathbf E[ |\Phi_N(x) - \Phi_N'(x)|^2 ] = o(N).
    \end{equation}
    Using this result and Slutsky's theorem, we can reduce \eqref{eq:microstat_proof3} to
    \begin{equation}
        \int  |\sqrt{N g(\kappa)} + \mathbf 1_{D_N}(x) \Phi_N'(x)|^2 f_N(x) \d x \xrightarrow{N \to \infty} \int |\sqrt{g (\kappa)} + \mathbf 1_D(x) H(x)|^2 f(x) \d x.
        \label{eq:microstat_proof4}
    \end{equation}
    Here we introduced the indicators of $D_N$ and $D$, which is harmless because the respective integrands vanish away from $D_N$ and $D$. The change of variables $x = N^{-\frac13} y$ brings the left hand side to the form
    \begin{equation}
        \int  |\sqrt{N g(\kappa)} + \Phi_N'(x)|^2 f_N(x) \d x = \int |\sqrt{g(\kappa)} + K_N(y)|^2 f(y) \d y,
    \end{equation}
    where $K_N$ is the $L^2(D)$-valued Gaussian random field 
    \begin{equation}
        K_N(y) = N^{-\frac12} \mathbf 1_D(y) \Phi_N'(N^{-\frac13}y),
    \end{equation}
    which has the covariance (understood as integral kernel of a positive operator $\Sigma_N$ on $L^2(D)$):
    \begin{align}
       \Sigma_N(x,y) := \mathbf E[\overline {K_N(x)} K_N(y) ] &= \mathbf 1_D(x) \mathbf 1_D(y) \frac{1}{N} \sum_{p \in \Lambda_+^*} \frac{e^{\mathrm{i}p N^{-\frac13} (x-y)}}{e^{\beta p^2}-1} \label{eq:SigmaN_SigmaInfty} \\
       & = \mathbf 1_D(x) \mathbf 1_D(y) \frac{1}{N} \left( \int_{\mathbb R^3} \frac{e^{\mathrm{i}p N^{-\frac13} (x-y)}}{e^{\beta p^2}-1} \frac{\d p}{(2 \pi)^3} + O(\beta^{-1}) \right) \nonumber \\
       & = \mathbf 1_D(x) \mathbf 1_D(y) \left( \int_{\mathbb R^3} \frac{e^{\mathrm{i}q(x-y)}}{e^{\beta N^{\frac23} q^2}-1} \frac{\d q}{(2 \pi)^3} + O(N^{- \frac13}) \right), \nonumber \\
       & = \mathbf 1_D(x) \mathbf 1_D(y) \left( \int_{\mathbb R^3} \frac{e^{\mathrm{i}q(x-y)}}{e^{\beta_\infty(\kappa) q^2}-1} \frac{\d q}{(2 \pi)^3} + o(1)) \right) .\nonumber
    \end{align}
    Here in the passage from the first to the second line we used the $\mu \to 0$ limiting form of \eqref{eq:n_function} in Appendix \ref{app:lemmas} (note the sum above is over $p \neq 0$), passing from the second to the third line we made the change of variables $p = N^{\frac13} q$, and~in the final step we used the convergence $\beta N^{\frac23} \to \beta_\infty(\kappa)$.
    
    Let us compare to the covariance $\Sigma_\infty \in B(L^2(D))$ of the field $K_\infty(x)=\mathbf 1_D(x) H(x)$:
    \begin{equation}
        \Sigma_\infty(x,y) = \mathbf E[\overline{K_\infty(x)} K_\infty(y)] = \mathbf 1_D(x) \mathbf 1_D(y) \int_{\mathbb R^3} \frac{e^{\mathrm{i}q(x-y)}}{e^{\beta_\infty(\kappa) q^2}-1} \frac{\d q}{(2 \pi)^3}.
    \end{equation}
    The integral kernels of $\Sigma_N,\Sigma_\infty$ are supported on a bounded set $D$ and pointwise close for large $N$ uniformly in $x,y$. Therefore, $\Sigma_N \xrightarrow{N \to \infty} \Sigma_\infty$ in the Hilbert-Schmidt sense. Furthermore, we have the trace formulas
    \begin{subequations}
    \begin{align}
    \mathrm{Tr}[\Sigma_\infty] &= |D| \int_{\mathbb R^3} \frac{1}{e^{\beta_\infty(\kappa)q^2}-1} \frac{\d q}{(2 \pi)^3}, \\
    \mathrm{Tr}[\Sigma_N] & = |D| \frac{1}{N} \sum_{p \in \Lambda^*_+} \frac{1}{e^{\beta p^2}-1}.
    \end{align}
    \end{subequations}
    The considerations in \eqref{eq:SigmaN_SigmaInfty} now show that we have convergence $\mathrm{Tr}[\Sigma_N] \to \mathrm{Tr}[\Sigma_\infty]$. Combined with the Hilbert-Schmidt convergence $\Sigma_N \to \Sigma_\infty$ and the positivity of $\Sigma_N,\Sigma_\infty$, this implies that $\Sigma_N \to \Sigma_\infty$ in trace class. By Powers–Størmer inequality \cite[Lemma~4.1]{PowersStormer1970},
    \begin{equation}
        \mathrm{Tr}[ (\Sigma_N^\frac12 - \Sigma_\infty^\frac12)^2 ] \leq \mathrm{Tr}[ |\Sigma_N - \Sigma_\infty| ]
    \end{equation}
    we also have convergence of square roots: $\Sigma_N^\frac12 \to \Sigma_\infty^\frac12$ in the Hilbert-Schmidt sense.   

    We can now construct $K_N(x)$ and $K_{\infty}(x)=\mathbf 1_D(x) H(x)$ on a common probability space using standard Gaussian coupling. Let $W$ be $L^2(D)$ complex Gaussian white noise:
    \begin{equation}
        W(x) = \sum_{n=1}^\infty W_n f_n(x),
        \label{eq:white_noise}
    \end{equation}
    where $(W_n)_{n=1}^\infty$ is a sequence of independent complex Gaussian random variables with covariance $\mathbf E[|W_n|^2]=1$ and vanishing pseudo-covariance, and $(f_n)_{n=1}^\infty$ is an orthonormal basis of $L^2(D)$. Here we omit treatment of the trivial case $|D|=0$, in which $L^2(D) = \{ 0 \}$. We remark that \eqref{eq:white_noise} does not define a random variable valued in $L^2(D)$ because $\sum_{n=1}^\infty |W_n|^2 = \infty$ almost surely. The expression \eqref{eq:white_noise} can be given precise meaning by viewing it as a random variable valued in some larger vector space, whose exact choice is largely immaterial. We define the coupling fields as the images of $W$ under square roots of their respective covariance operators: 
    \begin{equation}
        K_N(x)  = \sum_{n=1}^\infty W_n (\Sigma_N^\frac12 f_n)(x), \qquad  K_\infty(x) = \sum_{n=1}^\infty W_n (\Sigma_\infty^\frac12 f_n)(x).
    \end{equation}
    Using the Hilbert-Schmidt property of $\Sigma_N^\frac12$ and $\Sigma_\infty^\frac12$, one verifies that the above series converge almost surely in $L^2(D)$ and therefore define $L^2(D)$-valued random variables. Under this coupling, we have
    \begin{equation}
        \mathbf E[ \| K_N - K_\infty \|^2_{L^2(D)} ] = \mathrm{Tr}[(\Sigma_N^{\frac12} - \Sigma_\infty^\frac12)^2] \xrightarrow{N \to \infty} 0.
    \end{equation}
    It follows that
    \begin{equation}
        \mathbf E\left[  \left| \int  (|\sqrt{ g (\kappa)} + K_N(x)|^2 - |\sqrt{ g (\kappa)} + K_\infty(x)|^2 ) f(x) \d x \right|   \right] \to 0,
    \end{equation}
    and in particular \eqref{eq:microstat_proof3} holds.
\end{proof}
    
\subsection{High-temperature phase}\label{sec:noncond}
In this section we consider local particle statistics for $m^2 = \infty$. This includes the high-temperature phase $\kappa <1$, but is also possible for $\kappa=1$ (cf.~Proposition \ref{prop:limits} and the discussion below). The state approximation in \eqref{eq:norm_approx_ideal}, which shows that mean-field interactions have a~negligible effect on the system in the absence of Bose-Einstein condensation, allows us to reduce the problem to the case of the ideal gas. 

We will count particles in subsets $D \subset \Lambda$ scaled by an $N$-dependent factor $\delta \leq 1$:
\begin{equation}\label{def:setD}
    D = \delta D_0  \qquad  \text{with} \qquad  D_0 \subset (-\tfrac12,\tfrac12)^3, \quad  |D_0| \neq 0, \quad |\partial D_0|=0.
\end{equation}
We probe the system at different length-scales by considering various choices of $\delta$. 

Unlike in the preceding treatment of the condensation phase, we will not detail the random field representation. Instead, we derive limiting distributions of $N_D$ using more elementary arguments.
\begin{proof}[Sketch of proof of Theorem \ref{thm:high_temperature}]
By \eqref{eq:norm_approx_ideal}, it is enough to prove our claims for the non-interacting system. This is included (and slightly extended) in Propositions \ref{prop:localexpvar}, \ref{prop:micro_limit}, \ref{prop:special_limit} below, where we consider limiting distributions of $N_D$ for a single set $D$ to simplify presentation.
\end{proof}

In the remaining part of this section we restrict attention to the non-interacting system, that is with $v=0$. To emphasize that we are considering the ideal gas, we will denote the expectation values and variances of random variables with a superscript $\id$, and the Gibbs state of the system by $G^{\id}$. In~all forthcoming propositions, we implicitly make the assumptions of Theorem \ref{thm:high_temperature}. 

We let $\widetilde N$ be the random variable corresponding to the operator $\mathcal N$. That is,
\begin{equation}
    \mathbf E^\id[f(\widetilde N)] = \mathrm{Tr}[G^{\id} \, f(\mathcal N)] 
\end{equation}
for any bounded continuous function $f$. In particular, $\mathbf E^\id[\widetilde N] = N$. 

\begin{lem} \label{lem:id_variance}
The variance of $N_+$ and $\widetilde N$ in the ideal gas is given by 
\begin{equation}
\mathbf{Var}^\id[\widetilde N] =(1 +o(1)) \begin{cases}
            (4 \pi \beta)^{-\frac32} \mathrm{Li}_{\frac12}(z_\infty), & \kappa<1, \\
            \frac{1}{8 \pi} \beta^{-2} (- \mu^\id)^{- \frac12}, & \kappa =1.
            \end{cases}
\label{eq:ideal_variance_N}
\end{equation}
where $z_\infty = \lim_{N \to \infty} e^{\beta \mu^\id}$.
\end{lem}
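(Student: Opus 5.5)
In the grand canonical ideal gas the occupation numbers $n_p=a_p^*a_p$, $p\in\Lambda^*$, are independent geometric random variables under $G^{\id}$. Each has mean $\bar n_p=(e^{\beta(p^2-\mu^\id)}-1)^{-1}$ and variance $\bar n_p(1+\bar n_p)$. The plan is therefore to start from the exact formula
\begin{equation*}
\mathbf{Var}^\id[\widetilde N]=\sum_{p\in\Lambda^*}\bar n_p(1+\bar n_p)=\sum_{p\in\Lambda^*}\frac{e^{\beta(p^2-\mu^\id)}}{(e^{\beta(p^2-\mu^\id)}-1)^2},
\end{equation*}
which follows from independence. The variance of $N_+$ is obtained by omitting the $p=0$ term. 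Expanding in geometric series with $z=e^{\beta\mu^\id}$, the summand equals $\sum_{m\ge1} m z^m e^{-\beta m p^2}$. The two regimes are then handled separately.

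\textbf{Case $\kappa<1$.} By Proposition~\ref{prop:limits}(2), $z\to z_\infty<1$, so the series converges geometrically and uniformly. For each fixed $m$ I would replace the lattice sum $\sum_p e^{-\beta m p^2}$ by the integral $(4\pi\beta m)^{-3/2}$; the relative error tends to zero since $\beta\to0$, for instance by Poisson summation, where the error is $O(e^{-c/(\beta m)})$ relative. This gives
\begin{equation*}
\sum_{m}m z^m(4\pi\beta m)^{-3/2}=(4\pi\beta)^{-3/2}\mathrm{Li}_{1/2}(z),
\end{equation*}
and $\mathrm{Li}_{1/2}(z)\to\mathrm{Li}_{1/2}(z_\infty)$. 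Uniformity in $m$ is needed to interchange the limit with the $m$-sum. I would use the crude bound $\sum_p e^{-\beta m p^2}\lesssim 1+(\beta m)^{-3/2}$ together with dominated convergence, valid because $z$ stays bounded away from $1$. The $p=0$ term is $O(1)$ and hence negligible against $\beta^{-3/2}\sim N$, so the same asymptotics hold for $N_+$.

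\textbf{Case $\kappa=1$.} This is the main obstacle. Here $z\to1$, so $\mathrm{Li}_{1/2}$ diverges, while $-\mu^\id$ sits in the intermediate regime of Proposition~\ref{prop:limits}(3). Writing $\epsilon=-\mu^\id$, I would approximate the lattice sum by the integral $(2\pi)^{-3}\int F(\beta(p^2+\epsilon))\,\d p$ with $F(x)=e^x/(e^x-1)^2$. The integral is dominated by $|p|\lesssim\sqrt\epsilon$, where $F(x)\approx x^{-2}$, and
\begin{equation*}
\int_{\mathbb R^3}\frac{\d p}{(2\pi)^3\beta^2(p^2+\epsilon)^2}=\frac{1}{8\pi\beta^2\sqrt\epsilon}.
\end{equation*}
The correction $F(x)-x^{-2}$ is bounded near $0$ and integrable, so it contributes $O(\beta^{-3/2})$. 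This is $o(\beta^{-2}\epsilon^{-1/2})$ precisely because $\beta\epsilon\to0$, which is equivalent to $N^{-2/3}\mu^\id\to0$.

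The delicate point is the Riemann-sum error. The integrand varies on the scale $\sqrt\epsilon$, and $\sqrt\epsilon\to\infty$ since $N^{1/3}\mu^\id\to-\infty$ and $\epsilon\gtrsim N^{-1/3}$ is not bounded below. The $p=0$ term is $\bar n_0(1+\bar n_0)\approx(\beta\epsilon)^{-2}$, and comparing it to $\beta^{-2}\epsilon^{-1/2}$ shows the lattice correction is small only if $\epsilon^{-3/2}\to0$. I expect this to hold in the regime under consideration, where $m^2=\infty$ in this section, so that $\epsilon\to\infty$. I would prove it by bounding the difference between the sum over $p\neq0$ and the integral via a mean-value estimate on unit cells, with the derivative $|\nabla_p(p^2+\epsilon)^{-2}|\lesssim |p|(p^2+\epsilon)^{-3}$. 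Summed over cells, this gives an error of order $\beta^{-2}\epsilon^{-1}$, which is $o(\beta^{-2}\epsilon^{-1/2})$.
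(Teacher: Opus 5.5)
Your proof is correct, given the standing assumption $m^2=\infty$ of Section~\ref{sec:noncond}, which you rightly flag as necessary at $\kappa=1$. It differs from the paper's route mainly at $\kappa=1$.

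\textbf{The paper's route.} It stays throughout in the representation $\mathbf{Var}^{\id}[N_+]=\sum_{m\ge1} m\,e^{m\beta\mu^{\id}}(\vartheta(0;4\pi\mathrm{i} m\beta)^3-1)$ and splits the sum at $M=\lfloor(4\pi\beta)^{-1}\rfloor$. Using the quantitative bounds of Lemma~\ref{lem:gaussian}, it reduces both regimes to $(4\pi\beta)^{-3/2}\sum_{m\le M}m^{-1/2}e^{m\beta\mu^{\id}}+O(\beta^{-2}(1-\mu^{\id})^{-2})$. This sum is then compared with $\mathrm{Li}_{1/2}(e^{\beta\mu^{\id}})$ when $\kappa<1$, and with $\int_0^\infty t^{-1/2}e^{\beta\mu^{\id}t}\,\d t=\sqrt{\pi}(-\beta\mu^{\id})^{-1/2}$ when $\kappa=1$.

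\textbf{Your route.} For $\kappa<1$ you use the same expansion but replace the split and the explicit error terms by dominated convergence in $m$. This is shorter but gives no rate. For $\kappa=1$ you work in momentum space instead. The variance is carried by $|p|\lesssim\sqrt{-\mu^{\id}}$, where the Bose factor $e^{x}/(e^{x}-1)^{2}$ is approximately $x^{-2}$. The quantum correction is $O(\beta^{-3/2})$, negligible because $\beta\mu^{\id}\to0$. The cellwise Riemann-sum error is $O(\beta^{-2}(-\mu^{\id})^{-1})$, negligible because $-\mu^{\id}\to\infty$.

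\textbf{What each buys.} Your argument exposes the mechanism and shows exactly where each hypothesis enters. The paper's Poisson-summation route controls the lattice error much more sharply: apart from the $p=0$ term, it is exponentially small in $\sqrt{-\mu^{\id}}$. It also handles both regimes with a single reduction built on the same theta-function estimates used in Lemma~\ref{lem:Nid_func}.

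\textbf{One sentence to correct.} $N^{1/3}\mu^{\id}\to-\infty$ only gives $-\mu^{\id}\gg N^{-1/3}$; it does not give $-\mu^{\id}\to\infty$. The divergence $-\mu^{\id}\to\infty$ is a genuine extra hypothesis and cannot be dropped. If $-\mu^{\id}$ stays bounded, or tends to $0$ (which Proposition~\ref{prop:limits}(3) allows at $\kappa=1$), then the $p=0$ term $\approx(\beta\mu^{\id})^{-2}$ and the lattice discreteness are no longer negligible, and the stated asymptotics fail. The paper's proof relies on this hypothesis in the same way, through the error $O(\beta^{-2}(1-\mu^{\id})^{-2})$ and the bound $(-\beta\mu^{\id})^{-2}=o(\mathbf{Var}^{\id}[N_+])$.
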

\begin{proof}
We have 
\begin{equation}
    \mathbf{Var}^\id[N_+] = \sum_{p \in \Lambda^*} \frac{e^{- \beta (p^2 - \mu^\id)}}{(1-e^{- \beta (p^2 - \mu^\id)})^2} = \sum_{m=1}^\infty m e^{m \beta \mu^\id} (\vartheta (0 ; 4 \pi i m \beta)^3-1),
    \label{eq:Var0_basic}
\end{equation}
    where in the second step we expand the summand in power series in $e^{\beta (p^2 - \mu^\id)}$, interchange the order of sums and recognize the Jacobi theta function, see Appendix \ref{app:lemmas}. We split the sum into parts $m \leq M$ and $m > M$, where $M = \lfloor (4 \pi \beta)^{-1} \rfloor$. 
    
    By Lemma \ref{lem:gaussian}, the large $m$ contribution is non-negative and bounded, up to a~constant, by
    \begin{equation}
        \sum_{m=M+1}^\infty m e^{m \beta (\mu^\id-4 \pi^2)} = \frac{M x^{M+1}}{1-x} + \frac{x^{M+1}}{(1-x)^2},
    \label{eq:Var_id_largem}
    \end{equation}
    where $x = e^{\beta \mu^\id} e^{-4 \pi^2 \beta}$. The expression \eqref{eq:Var_id_largem} is $O(\beta^{-2} (1- \mu^\id)^{-1} e^{\mu^\id/4 \pi})$. To get this bound we use: $x^{M+1} \leq e^{-\pi} e^{\frac{\mu^\id}{4\pi}}$, $M \leq (4 \pi \beta)^{-1}$, $1-x \geq x \log x^{-1}$, and $x \geq c$ for some $c>0$ (which holds by (5) in Proposition \ref{prop:limits}).

    We consider the sum over $m \leq M$. Again by Lemma \ref{lem:gaussian}:
    \begin{align}
    \sum_{m=1}^M m e^{m \beta \mu^\id} ((4 \pi m \beta)^{-\frac32}-1) &\leq    \sum_{m=1}^M m e^{m \beta \mu^\id} (\vartheta (0 ; 4 \pi i m \beta)^3 -1 )  \label{eq:label12345} \\
    &\leq \sum_{m=1}^M m e^{m \beta \mu^\id} (4 \pi m \beta)^{- \frac32} (1 + c e^{- \frac{1}{4 m \beta}}). \nonumber
    \end{align}
    As we will see, terms other than $\sum_{m=1}^M m e^{m \beta \mu^\id} (4 \pi m \beta)^{- \frac32}$ contribute only to errors. 
    
    Let us bound the negative term in the first line of \eqref{eq:label12345}:
    \begin{align}
    \sum_{m=1}^M m e^{m \beta \mu^\id} \leq \sum_{m=1}^\infty m e^{m \beta \mu^\id} = \frac{e^{\beta \mu^\id}}{(1- e^{\beta \mu^\id})^2} = O (\beta^{-2} (- \mu^\id)^{-2}),
    \end{align}
    where we used $1 - e^{\beta \mu^\id} \geq - \beta \mu^\id e^{\beta \mu^\id}$ and $e^{- \beta \mu^\id} = O(1)$. These estimates are good if $\mu^\id$ does not tend to zero, in which case it is better to use $e^{\beta \mu^\id} \leq 1$ and
    \begin{equation}
        \sum_{m=1}^M m = \frac{M(M+1)}{2} = O(\beta^{-2}).
    \end{equation}
    Hence $\sum_{m=1}^M m e^{m \beta \mu^\id}$ is $O(\beta^{-2} (1- \mu^\id)^{-2})$.

    Next, we consider the term
    \begin{equation}
        \beta^{-\frac32} \sum_{m=1}^M m^{-\frac12} e^{m \beta \mu^\id} e^{- \frac{1}{4 m \beta}}. 
         \label{eq:Var_id_smallm_upb_err}
    \end{equation}
    Bounding $e^{m \beta \mu^\id} \leq 1$ we see that the sum is $O(\beta^{-2})$. This is suboptimal if $\mu^\id \to - \infty$. In this case we optimize with respect to $m$ to verify that
    \begin{equation}
        e^{m \beta \mu^\id} e^{- \frac{1}{4 m \beta}} \leq e^{- \sqrt{- \mu^\id}}.
    \end{equation}
    This allows to bound \eqref{eq:Var_id_smallm_upb_err} by
    \begin{equation}
        (4 \pi \beta)^{-\frac32} e^{- \sqrt{-\mu^\id}} \sum_{m=1}^M m^{-\frac12} \leq (4 \pi \beta)^{-\frac32} e^{- \sqrt{-\mu^\id}} \int_0^{\frac{1}{4 \pi \beta}} t^{-\frac12} \d t = \frac{1}{8 \pi^2} \beta^{-2} e^{- \sqrt{-\mu^\id}}.
    \end{equation}

    The above calculations establish that
    \begin{equation}
        \mathbf{Var}^\id[ N_+] = (4 \pi \beta)^{-\frac32} \sum_{m=1}^M m^{-\frac12} e^{m \beta \mu^\id} + O(\beta^{-2} (1 - \mu^\id)^{-2}).  
        \label{eq:Var_Lisum}
    \end{equation}
    The error term does not contribute to the limit both in case (1) and (2). To establish~(1), we compare the summation in \eqref{eq:Var_Lisum} with the definition of the polylogarithm \eqref{eq:polylog_def}:
    \begin{equation}
        \sum_{m=1}^M m^{-\frac12} e^{m \beta \mu^\id} = \mathrm{Li}_{\frac12}(e^{\beta \mu^\id}) +O(e^{\frac{\mu^\id}{4 \pi}}) \to \mathrm{Li}_{\frac12}(z_\infty).
    \end{equation}
    The exponential decay with $\mu^\id$ of the remainder term $\sum_{m=M+1}^\infty m^{-\frac12} e^{m \beta \mu^\id}$ can be concluded by estimating $m^{-\frac12} \leq 1$, computing the resulting sum and using $z_\infty <1$. 
    
    In the case (2), we compare the sum in \eqref{eq:Var_Lisum} with an integral:
    \begin{align}
    \sum_{m=1}^M m^{-\frac12} e^{m \beta \mu^\id} &= \int_0^M t^{-\frac12} e^{\beta \mu^\id t} \d t + O(1) \\
  &  = (- \beta \mu^\id)^{- \frac12} \int_0^{- \frac{\mu^\id}{4 \pi}} s^{-\frac12} e^{-s} \d s = (- \beta \mu^\id)^{-\frac12} (\sqrt{\pi} + O( e^{\frac{\mu^\id}{4 \pi}} )) +O(1). \nonumber
    \end{align}
    
   We have established the claim for $N_+$. Now note that
    \begin{equation}
       \mathbf{Var}^{\mathrm{id}}[\widetilde N] -  \mathbf{Var}^{\mathrm{id}}[N_+] = \frac{e^{\beta \mu^{\mathrm{id}}}}{(1- e^{\beta \mu^{\mathrm{id}}})^2} \leq (- \beta \mu^{\mathrm{id}})^{-2} = o(\mathbf{Var}^\id[N_+]).
   \end{equation}
    \end{proof}

    \begin{prop} \label{prop:localexpvar}
    Suppose that $\sqrt{- \mu^\id} \delta \to \infty$. Then
    \begin{equation}
        \mathbf{E}^\id [ N_D] = N |D| \to \infty, \qquad \mathbf{Var}^\id[ N_D] = \mathbf{Var}^\id[ \widetilde N] |D| (1+o(1)) \to \infty.
    \end{equation}
    Moreover, $\frac{N_D - N |D|}{\mathbf{Var}[ N_D]^{\frac12} }$ converges in distribution to a standard normal variable  as $N \to \infty$. 
    \end{prop}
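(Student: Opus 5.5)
The ideal Gibbs state $G^{\id}$ is quasi-free and gauge invariant, with one-particle density matrix $\gamma = (e^{\beta(-\Delta-\mu^\id)}-1)^{-1}$. This makes $\Theta$ a permanental point process with kernel $\gamma(x,y)$, and I would work entirely with it.

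\textbf{Mean and variance.} The expectation is immediate: $\mathbf E^\id[N_D]=\int_D \gamma(x,x)\,\d x = N|D|$, and $N|D| = N\delta^3|D_0|\to\infty$ because $\delta\ge N^{-1/3}$ and $\sqrt{-\mu^\id}\,\delta\to\infty$. For the variance I would use the permanental formula
\[
\mathbf{Var}^\id[N_D]=\int_D\gamma(x,x)\,\d x+\int_D\int_D|\gamma(x,y)|^2\,\d x\,\d y = \mathrm{Tr}[\mathbf 1_D\gamma]+\mathrm{Tr}[\mathbf 1_D\gamma\mathbf 1_D\gamma].
\]
I would compare this with $\mathbf{Var}^\id[\widetilde N]=\mathrm{Tr}[\gamma+\gamma^2]$, computed in Lemma~\ref{lem:id_variance}. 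Write $\gamma(x-y)$ via its Fourier/theta-function expansion $\gamma(x)=\sum_{m\ge1}e^{m\beta\mu^\id}\sum_{k\in\mathbb Z^3}(4\pi m\beta)^{-3/2}e^{-|x-k|^2/(4m\beta)}$. Then $|\gamma|^2$ is concentrated on $|x|\lesssim\ell_{\mathrm{cor}}=(-\mu^\id)^{-1/2}$ and decays like $e^{-c\sqrt{-\mu^\id}|x|}$ at larger distances, which follows from the same $m$-optimization $e^{m\beta\mu^\id}e^{-|x|^2/(4m\beta)}\le e^{-\sqrt{-\mu^\id}|x|}$ used in the proof of Lemma~\ref{lem:id_variance}. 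This gives
\[
\int_D\int_D|\gamma(x-y)|^2 = |D|\int_{\Lambda}|\gamma|^2 - \int_D\int_{D^c}|\gamma(x-y)|^2 .
\]
The boundary term is controlled by the measure of the $\ell_{\mathrm{cor}}$-neighbourhood of $\partial D$ relative to $|D|$. This ratio is $o(1)$ because $\delta/\ell_{\mathrm{cor}}\to\infty$ and $|\partial D_0|=0$; a Lebesgue-density argument with dominated convergence handles the non-smooth boundary. Hence $\mathbf{Var}^\id[N_D]=|D|\,\mathbf{Var}^\id[\widetilde N](1+o(1))$, and this tends to infinity by Lemma~\ref{lem:id_variance}.

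\textbf{Gaussianity.} I would compute the characteristic function as a Fredholm determinant: $\mathbf E^\id[e^{\mathrm i tN_D}]=\det(1-(e^{\mathrm it\mathbf 1_D}-1)\gamma)^{-1}$, using the appendix on Fredholm determinants. With $t=s/\sigma_D$, where $\sigma_D^2=\mathbf{Var}[N_D]$, set $K=\gamma^{1/2}\mathbf 1_D\gamma^{1/2}$ and expand
\[
\log\mathbf E^\id[e^{\mathrm i tN_D}]=-\mathrm{Tr}\log(1-(e^{\mathrm it}-1)K)=\sum_{k\ge1}\frac{(e^{\mathrm it}-1)^k}{k}\mathrm{Tr}K^k .
\]
The cumulants are $\kappa_j(N_D)=\sum_k c_{j,k}\mathrm{Tr}K^k$. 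The key estimate is $\mathrm{Tr}K^k\le\|K\|^{k-2}\mathrm{Tr}K^2$ with $\|K\|\le\|\gamma\|=(e^{-\beta\mu^\id}-1)^{-1}$, together with a sharper bound on $\mathrm{Tr}K^k$ via $\|\mathbf 1_D\gamma\mathbf 1_D\|$. To show that cumulants of order $j\ge3$ are $o(\sigma_D^j)$, I would bound $\mathrm{Tr}(\mathbf 1_D\gamma)^k\le|D|\,\|\gamma\|_{L^1(\Lambda)}^{k-1}\sup|\gamma|$-type quantities in $x$-space, or Young's inequality $\mathrm{Tr}(\mathbf 1_D\gamma)^k\le |D|\,\|\gamma\|_{L^1}^{k-2}\|\gamma\|_{L^2}^2$. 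Here $\|\gamma\|_{L^1(\Lambda)}=\widehat\gamma(0)=Ng^\id$, and $|D|\|\gamma\|_{L^2}^2\lesssim\sigma_D^2$. The ratio then becomes $\kappa_j/\sigma_D^j\lesssim (Ng^\id/\sigma_D)^{j-2}$. Since $Ng^\id\sim(-\beta\mu^\id)^{-1}$ while $\sigma_D^2\gtrsim|D|\beta^{-2}(-\mu^\id)^{-1/2}$, this ratio goes to zero precisely when $\delta^3(-\mu^\id)^{3/2}\to\infty$, i.e. under the hypothesis $\sqrt{-\mu^\id}\,\delta\to\infty$. Convergence of cumulants then gives the CLT.

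\textbf{Main obstacle.} I expect the hardest step to be this uniform cumulant bound, in particular keeping track of the $L^1$ norm of $\gamma$, which includes the zero mode, across the regimes $\kappa<1$ and $\kappa=1$ with $\mu^\id\to-\infty$. If the crude Young bound fails in some regime, I would first try separating the $p=0$ contribution, which is a single geometric variable of size $Ng^\id\ll\sigma_D$ and hence negligible, and then bounding the remaining kernel's $L^1$ norm by $O((-\mu^\id)^{-1}\beta^{-1})$ localized on scale $\ell_{\mathrm{cor}}$.
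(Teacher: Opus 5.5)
Your proposal is correct and follows essentially the paper's proof: the same variance identity $\mathbf{Var}^{\id}[N_D]=|D|\,\mathbf{Var}^{\id}[\widetilde N]-\int_{D\times(\Lambda\setminus D)}\gamma^{\id}(x-y)^2\,\d x\,\d y$ with a boundary layer of width $\ell_{\mathrm{cor}}\ll\epsilon\ll\delta$, the same Fredholm-determinant formula, and the same key estimate (operator norm $\le(e^{-\beta\mu^{\id}}-1)^{-1}=Ng^{\id}$ against $\mathrm{Tr}[\mathbf 1_D\gamma\mathbf 1_D\gamma]\le\sigma_D^2$, giving smallness $(\sqrt{-\mu^{\id}}\,\delta)^{-3/2}$); the only difference is that you discard higher orders via cumulants where the paper uses $\mathrm{det}_3(1-A)=1+o(1)$, and for the same reason the ``main obstacle'' you flag does not actually arise. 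One small slip: $N|D|\to\infty$ does not follow from $\delta\ge N^{-1/3}$; it follows from $(-\mu^{\id})^{3/2}\lesssim N$ (Proposition~\ref{prop:limits}) together with $\sqrt{-\mu^{\id}}\,\delta\to\infty$.
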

    \begin{proof}
        We have $\mathbf{E}^\id [\mathcal N_D] = N |D| = N \delta^3 |D_0|$ and $\delta^3 \gg (- \mu^\id)^{- \frac32} \geq c N^{-1}$. Hence $\mathbf{E}^\id [ N_D] \to \infty$. Similarly, by Lemma \ref{lem:id_variance}, $\mathbf{Var}^\id[\widetilde N] |D| \to \infty$. We will now prove that $\mathbf{Var}^\id[ N_D]$ is asymptotic to $\mathbf{Var}^\id[\widetilde N] |D|$. We have the identity
    \begin{align}
        \mathbf{Var}^{\id}[\mathcal N_D] &= N |D| + \int_{D \times D} \gamma^{\id}(x-y)^2 \d x \d y 
    = \mathbf{Var}^{\id}[\widetilde N] |D| - \int_{D \times \Lambda \setminus D} \gamma^{\id}(x-y)^2 \d x \d y, 
    \end{align}
    where 
    \begin{equation}
        \gamma^{\id}(x) = \sum_{p \in \Lambda^*} \frac{e^{\mathrm{i}px}}{e^{\beta (p^2 - \mu^\id)}-1}.
    \end{equation}
    We have to show that, under the running assumptions,
    \begin{equation}
  \frac{1}{\delta^3 \mathbf{Var}^\id[\widetilde N] }  \int_{D \times (\Lambda \setminus D)} \gamma^{\id}(x-y)^2 \d x \d y \to 0.
  \label{eq:formula_67}
    \end{equation}
    Using the bound \eqref{eq:n_function} we can replace the function $\gamma^{\id}$ above with
    \begin{equation}
        \gamma^{\id}_c(x) = \int_{\mathbb R^3} \frac{e^{\mathrm{i}px}}{e^{\beta(p^2 - \mu^\id)}-1} \frac{ \d p}{(2 \pi)^3}.
    \end{equation}
    Now pick $\epsilon >0$ and let
    \begin{equation}
        D^\epsilon := \{ x \in D \, | \, \mathrm{dist}(x,\partial D) \leq \epsilon \}.
    \end{equation}
    We note that $D^\epsilon = \delta \cdot D_0^{ \frac{\epsilon}{\delta}}$. We will choose $\epsilon$ to be $N$-dependent and satisfying $\frac{\epsilon}{\delta} \to 0$, $\sqrt{-\mu^\id} \epsilon \to \infty$. This is possible because $\sqrt{- \mu^\id}\delta \to \infty$. We~have: 
    \begin{equation}
        |D_0^{\frac{\epsilon}{\delta}}| \to |\partial D_0| =0, \quad \text{hence:} \quad |D^\epsilon| = \delta^3 |D_0^{\frac{\epsilon}{\delta}}| = o(\delta^3).
    \end{equation}
    Now we split the integration over $D$ in \eqref{eq:formula_67}, and bound the two terms separately:
    \begin{align}
        \int_{D \times (\Lambda \setminus D)} \gamma^{\id}_c(x-y)^2 \d x \d y &= \int_{D^\epsilon \times (\Lambda \setminus D)} \gamma^{\id}_c(x-y)^2 \d x \d y + \int_{(D \setminus D^\epsilon) \times (\Lambda \setminus D)} \gamma^{\id}_c(x-y)^2 \d x \d y \nonumber \\
        & \leq |D^\epsilon| \int_{\Lambda} \gamma_c(z)^2 dz + |D| \int_{\mathbb R^3 \setminus B(0,\epsilon)} \gamma_c(z)^2 \d z  \\
         & = o(\delta^3 \mathrm{Var}^\id[\mathcal N]) + O(\delta^3 \beta^{-2} (-\mu^\id)^{-\frac12} e^{-2 \sqrt{-\mu^\id} \epsilon})  , \nonumber
    \end{align}
    where we used \eqref{eq:gamma_decay} to bound the integral of $\gamma_c^2$ over the complement of $B(0,\epsilon)$, ball of radius $\epsilon$. Since $\mathbf{Var}^\id[\widetilde N])$ is comparable with $\beta^{-2} (-\mu^\id)^{-\frac12}$ and $e^{-2 \sqrt{-\mu^\id} \epsilon} \to 0$, convergence \eqref{eq:formula_67} is established. 

    To obtain convergence in distribution of $ N_D$, we consider the characteristic function
   \begin{equation}
       \phi(t) = \mathbf{E}^{\mathrm{id}}[\exp \left(it ( N_D - \mathbf{E}^{\mathrm{id}}[N_D]) \right)]
   \end{equation}
   with $t = \frac{s}{\mathbf{Var}^{\id}[ N_D]^\frac12}$ and fixed $s$ in the limit $N \to \infty$. We have
   \begin{align}
       \phi(t) & = \frac{\mathrm{Tr}[\Gamma(e^{\mathrm{i}t \mathbf 1_D}) \Gamma(e^{\beta (\Delta + \mu^{\mathrm{id}})})]}{\mathrm{Tr}[\Gamma(e^{\beta (\Delta + \mu^{\mathrm{id}})})]} e^{- \mathrm{i}t N |D|} = \frac{\det(1 - e^{\beta (\Delta + \mu^{\mathrm{id}})})}{\det(1-e^{\mathrm{i}t \mathbf 1_D} e^{\beta (\Delta + \mu^{\mathrm{id}})}) } e^{- \mathrm{i}t N |D|} \label{eq:ideal_characteristic_func} \\
       &= \det \left( (1-e^{\mathrm{i}t \mathbf 1_D} e^{\beta (\Delta + \mu^{\mathrm{id}})}) (1- e^{\beta (\Delta + \mu^{\mathrm{id}})})^{-1}  \right)^{-1} e^{-\mathrm{i}t N|D|}  = \det \left(1 - A \right)^{-1} e^{-\mathrm{i}t N |D|} , \nonumber
   \end{align}
    where we introduced the operator
    \begin{equation}
        A := (e^{\mathrm{i}t \mathbf 1_D}-1) (e^{ \beta (- \Delta - \mu^\id)}-1)^{-1} = (e^{\mathrm{i}t}-1) \mathbf 1_D (e^{ \beta (- \Delta - \mu^\id)}-1)^{-1}.
    \end{equation}
We note that $\widehat \gamma^{\id} := (e^{ \beta (- \Delta - \mu^\id)}-1)^{-1}$ is a convolution operator with kernel $\gamma^{\id}$. We will need to estimate certain Schatten norms of $A$:
   \begin{equation}
       \| A \|_2^2 = |e^{\mathrm{i}t}-1| \mathrm{Tr}(\mathbf 1_D \widehat \gamma^{\id} \mathbf 1_D \widehat \gamma^{\id})  \leq |t|^2 \mathbf{Var}^{\mathrm{id}}[ N_D], \qquad \| A \|_\infty \leq |t| \| \gamma\|_\infty \leq |t| (- \beta \mu^{\mathrm{id}})^{-1},
   \end{equation}
   and therefore:
   \begin{equation}
       \| A \|_3^3 \leq \| A \|_2^2 \| A \|_\infty \leq |t|^3 \mathbf{Var}^\id[N_D] (- \beta \mu^\id)^{-1} \leq c |s|^3 (\sqrt{- \mu^\id} \delta)^{- \frac32} =o(1).
   \end{equation}
   It follows \cite[Theorem 6.5]{Simon77} that $\mathrm{det}_3(1-A) = 1+o(1)$, and hence
   \begin{equation}
       \phi(t) = \exp \left( \mathrm{Tr}[A] + \frac12 \mathrm{Tr}[A^2] - \mathrm{i}t N |D| \right) (1+o(1)).
   \end{equation}
   Next, we observe that: 
   \begin{subequations}
    \begin{align}
       \mathrm{Tr}[A] &= (e^{\mathrm{i}t}-1) N |D| =  \left(it - \frac{t^2}{2} \right) N |D| +o(1), \\
       \mathrm{Tr}[A^2] & = (e^{\mathrm{i}t}-1)^2 \mathrm{Tr}[(\mathbf 1_D \gamma)^2]   \\
     & =   (e^{\mathrm{i}t}-1)^2 (\mathrm{Var}^{\mathrm{id}}[\mathcal N_D] - N |D|) = -t^2 (\mathrm{Var}^{\mathrm{id}}[\mathcal N_D]- N|D|) + o(1). \nonumber 
    \end{align}
    \end{subequations}
This gives:
   \begin{equation}
       \lim_{N \to \infty} \phi \left( \frac{s}{\mathrm{Var}[\mathcal N_D]^{\frac12}} \right) = e^{- \frac12 s^2}. 
   \end{equation}
   The result follows from Lévy's continuity theorem. 
    \end{proof}

    \begin{remark}
        The limiting random variable in Proposition \ref{prop:localexpvar} can be interpreted as $ \int_{D_{0}} W(x) \d x$, where $W$ is real $L^2(\mathbb R^3)$ Gaussian white noise. This connection could be made explicit by proving Proposition \ref{prop:localexpvar} via an approximation of $\Phi_N$ with white noise on the relevant length scales.  
    \end{remark}

    \begin{prop} \label{prop:micro_limit}
       Suppose that $\delta = N^{-\frac13}$. Then we have the convergence of characteristic functions
         \begin{equation}
\lim_{N \to \infty} \mathbf E[e^{\mathrm{i}t N_D}] =  \det \big(1-(e^{\mathrm{i}t}-1)\mathbf 1_{D_0} ( e^{\beta_\infty(\kappa)(- \Delta_{\mathbb R^3}+m_{\mathrm{eff}})^2} -1 )^{-1} \mathbf 1_{D_0} \big)^{-1} = \mathbf{E}[e^{\mathrm{i}tY}] .
             \label{eq:micro_characteristic_function}
         \end{equation}
         Here $Y$ is a mixed Poisson random variable with intensity $X = \int_{D_0} |H(x)|^2 \d x$, where $H$ is a Gaussian random field on $\mathbb R^3$ with the covariance $\mathbf E[\overline{H(x)}H(y)] = \int_{\mathbb R^3} \frac{e^{\mathrm{i}p(x-y)}}{e^{\beta_\infty(\kappa)(p^2 + m_{\mathrm{eff}}^2)}-1} \frac{\d p}{(2 \pi)^3}$ and vanishing pseudo-covariance, and $m_{\mathrm{eff}}^2  =\lim_{N \to \infty} (- N^{\frac23} \mu^\id)$.
    \end{prop}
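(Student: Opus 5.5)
We start from the determinant formula \eqref{eq:ideal_characteristic_func}, derived in the proof of Proposition~\ref{prop:localexpvar}. Applied with $D=N^{-1/3}D_0$ it gives
\begin{equation*}
\mathbf E^\id[e^{\mathrm{i}tN_D}] = \det(1-A_N)^{-1}, \qquad A_N=(e^{\mathrm{i}t}-1)\mathbf 1_D\widehat\gamma^\id \mathbf 1_D .
\end{equation*}
No centering is needed here, because $N|D|=|D_0|$ stays bounded. Inserting the second indicator does not change the determinant, since $\det(1-XY)=\det(1-YX)$ and $\mathbf 1_D^2=\mathbf 1_D$.

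Next we rescale by the unitary $U_N f(y)=N^{-1/2}f(N^{-1/3}y)$, which maps $L^2(D)$ onto $L^2(D_0)$. After this, $A_N$ is unitarily equivalent to $(e^{\mathrm{i}t}-1)K_N$, where $K_N$ is the operator on $L^2(D_0)$ with kernel
\begin{equation*}
K_N(x,y)=\mathbf 1_{D_0}(x)\mathbf 1_{D_0}(y)\,N^{-1}\gamma^\id\bigl(N^{-1/3}(x-y)\bigr).
\end{equation*}
The aim is to show $K_N\to K_\infty$ in trace norm, where $K_\infty$ is the operator $\mathbf 1_{D_0}(e^{\beta_\infty(\kappa)(-\Delta_{\mathbb R^3}+m_{\mathrm{eff}}^2)}-1)^{-1}\mathbf 1_{D_0}$. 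The first equality in \eqref{eq:micro_characteristic_function} then follows from continuity of the Fredholm determinant in trace norm.

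For the trace-norm convergence we follow the proof of Theorem~\ref{thm:localcounting}(c) step by step. First, we replace the lattice sum in $\gamma^\id$ by the integral $\gamma^\id_c$, using the bound \eqref{eq:n_function}; the error is $O(\beta^{-1})$ uniformly in $x$, which becomes $O(N^{-1/3})$ after dividing by $N$. Second, we change variables $p=N^{1/3}q$, which turns the kernel into
\begin{equation*}
\int_{\mathbb R^3}\frac{e^{\mathrm{i}q(x-y)}}{e^{\beta N^{2/3}(q^2-N^{-2/3}\mu^\id)}-1}\frac{\d q}{(2\pi)^3}.
\end{equation*}
Third, we use $\beta N^{2/3}\to\beta_\infty(\kappa)$ and $-N^{-2/3}\mu^\id\to m_{\mathrm{eff}}^2$ together with dominated convergence. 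This gives uniform convergence of the kernels on the bounded set $D_0\times D_0$, hence Hilbert--Schmidt convergence.

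The main obstacle is upgrading Hilbert--Schmidt convergence to trace-norm convergence, because the integrand is only integrable near $q=0$ when $m_{\mathrm{eff}}=0$. We handle this as before. The operators are positive, and their traces are $|D_0|$ times the diagonal values of the kernels; the argument above shows these diagonal values converge. For positive operators, Hilbert--Schmidt convergence together with convergence of traces implies trace-norm convergence. We also use that $\operatorname{Re}(1-(e^{\mathrm{i}t}-1)\lambda)$ could vanish, so the inverse determinant must be controlled. But $K_\infty\ge 0$, and $\det(1-zK_\infty)$ with $z=e^{\mathrm{i}t}-1$ is nonzero: the identity below expresses its inverse as an expectation over Gaussian integrals, so it is finite. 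Continuity of the inverse determinant therefore holds.

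For the second equality, let $H$ be the stated field restricted to $D_0$, with covariance operator $K_\infty$, which is trace class. Expand $H$ in the eigenbasis of $K_\infty$ with eigenvalues $\lambda_j$, so that $X=\sum_j\lambda_j|\xi_j|^2$ with the $\xi_j$ i.i.d.\ standard complex Gaussians. By Lemma~\ref{lem:mixed_Poisson}, $\mathbf E[e^{\mathrm{i}tY}]=\mathbf E[e^{zX}]$ with $z=e^{\mathrm{i}t}-1$ and $\operatorname{Re} z\le0$. Each factor satisfies $\mathbf E[e^{z\lambda_j|\xi_j|^2}]=(1-z\lambda_j)^{-1}$. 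Taking the product over $j$, which converges since $\sum_j\lambda_j<\infty$, gives $\det(1-zK_\infty)^{-1}$.
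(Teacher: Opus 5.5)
Your proposal is correct and follows the paper's proof essentially step by step: the determinant formula, the unitary rescaling onto $D_0$, uniform convergence of the kernels via \eqref{eq:n_function} and the substitution $p=N^{1/3}q$, the upgrade from Hilbert--Schmidt to trace-norm convergence using positivity and convergence of traces, and the identification of the limit through the eigenvalue expansion and Lemma~\ref{lem:mixed_Poisson}. One small correction: $\operatorname{Re}\bigl(1-(e^{\mathrm{i}t}-1)\lambda\bigr)=1+(1-\cos t)\lambda\ge 1$ for $\lambda\ge 0$, so this real part never vanishes and $|\det(1-(e^{\mathrm{i}t}-1)K_\infty)|\ge 1$ holds directly, which makes the continuity of the inverse determinant immediate.
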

    \begin{proof}[Proof of Proposition \ref{prop:micro_limit}]
    We consider the characteristic function (see the discussion around \eqref{eq:ideal_characteristic_func})
\begin{equation}
    \mathbf{E}^\id[e^{\mathrm{i} t  N_D}] = \det(1 - (e^{\mathrm{i}t}-1) B )^{-1}, \qquad B:= \mathbf 1_D (e^{\beta (-\Delta - \mu^{\id})}-1)^{-1} \mathbf 1_D. 
\end{equation}
We introduce the unitary operator
\begin{equation}
    V \ : \ L^2(\Lambda ) \cong L^2(  (-\tfrac12,\tfrac12)^3) \to L^2((-\tfrac12 \delta^{-1} , \tfrac12 \delta^{-1})), \qquad Vf(x) = \delta^{\frac32} f(\delta x),
\end{equation}
and we regard $L^2((-\tfrac12 \delta^{-1} , \tfrac12 \delta^{-1}))$ as a subspace in $L^2(\mathbb R^3)$. Letting $B(x,y)$ be the integral kernel of~$B$, the integral kernel of $VBV^*$ is:
\begin{equation}
    (VBV^*)(x,y) = \delta^3 B(\delta x, \delta y) =  \mathbf 1_{D_0}(x) \mathbf 1_{D_0}(y) \delta^3  \sum_{p \in \Lambda^*} \frac{e^{\mathrm{i}p \delta (x-y)}}{e^{\beta (p^2 - \mu^\id)}-1}.
\end{equation}
Formula \eqref{eq:n_function} in Appendix \ref{app:lemmas} gives us the approximation, uniform in $x,y$,
\begin{align}
    \delta^3  \sum_{p \in \Lambda^*} \frac{e^{\mathrm{i}p \delta (x-y)}}{e^{\beta (p^2 - \mu^\id)}-1} &= \delta^3 \int_{\mathbb R^3} \frac{e^{\mathrm{i}p \delta(x-y)}}{e^{\beta (p^2 - \mu^\id)}-1} \frac{\d p}{(2 \pi)^3} + o(1) \\
    & = \int_{\mathbb R^3} \frac{e^{\mathrm{i}q(x-y)}}{e^{\beta_\infty(\kappa)(q^2 + m_{\mathrm{eff}}^2)}-1} \frac{\d q}{(2 \pi)^3} + o(1) 
     = (e^{\beta_\infty(\kappa)(-\Delta_{\mathbb R^3} +m_{\mathrm{eff}}^2)}-1)^{-1}(x,y) + o(1), \nonumber
\end{align}
where in the second equality we performed a change of variables $p = \delta^{-1} q$ and used the convergences $\beta N^{\frac23} \to \beta_\infty(\kappa)$ and $- \beta \mu^\id \to m_{\mathrm{eff}}^2$. Since the integral kernel of $VBV^*$ converges uniformly to that of $\mathbf 1_{D_0} (e^{\beta_\infty(\kappa)(-\Delta_{\mathbb R^3} +m_{\mathrm{eff}}^2)}-1)^{-1} \mathbf 1_{D_0}$ and is supported on a bounded set, we have 
\begin{equation}
 VBV^* \to \mathbf 1_{D_0} (e^{\beta_\infty(\kappa)(-\Delta_{\mathbb R^3} +m_{\mathrm{eff}}^2)}-1)^{-1} \mathbf 1_{D_0}
 \label{eq:tralala}
\end{equation}
in the Hilbert-Schmidt space. One also verifies that $  \mathrm{Tr}[B] \longrightarrow \mathrm{Tr}[ \mathbf 1_{D_0} (e^{\beta_\infty(\kappa)(-\Delta_{\mathbb R^3} +m_\infty^2)}-1)^{-1} \mathbf 1_{D_0}]  $, and~therefore the convergence in \eqref{eq:tralala} holds also in the trace norm. Therefore,
\begin{equation}
    \mathrm{det}(1- (e^{\mathrm{i}t}-1)B) \longrightarrow \mathrm{det}(1- (e^{\mathrm{i}t}-1) \mathbf 1_{D_0} (e^{\beta_\infty(\kappa)(-\Delta_{\mathbb R^3} +m_{\mathrm{eff}}^2)}-1)^{-1} \mathbf 1_{D_0}). 
\end{equation}
To identify the determinant as a characteristic function of an explicit random variable $Y$, we consider its spectral product representation and compare with Lemma \ref{lem:mixed_Poisson}. We conclude that $Y$ is mixed Poisson with intensity
         \(
             X = \sum_{n=1}^\infty \lambda_n X_n,
         \)
         where $X_n$ are independent exponential random variables with rate parameter $1$ and $\lambda_1 \leq \lambda_2 \leq \dots$ are the eigenvalues of $\mathbf 1_{D_0} ( e^{\beta_\infty(\kappa)(- \Delta_{\mathbb R^3}+m_{\mathrm{eff}}^2)} -1 )^{-1} \mathbf 1_{D_0}$. Then we observe that $X \stackrel{d}{=} \int_{D_0} |H(x)|^2 \d x$. 
\end{proof}

   \begin{prop} \label{prop:special_limit}
       Suppose that $\beta^{-\frac12} \delta \to \infty$ and $\sqrt{- \mu^\id} \delta$ is bounded. Let $m_{\mathrm{eff}}^2 = \lim_{N \to \infty} (- \delta^2 \mu^\id)$. Then
       \begin{equation}
          \frac{1}{N \delta^3} \mathbf{E}^\id[ N_D] =|D_0| , \qquad \frac{1}{\beta^{-2} \delta^4} \mathbf{Var}^\id[N_D] \to  \int_{D_0 \times D_0} \frac{e^{-2 m_{\mathrm{eff}} |x-y|}}{16 \pi^2 |x-y|^2} \d x  \d y,
       \end{equation}
       and we have the convergence of characteristic functions
         \begin{equation}
            \lim_{N \to \infty} \mathbf{E}^\id \left[\exp \left(\mathrm{i}t \frac{N_D-N|D|}{\beta^{-1} \delta^2 } \right) \right] = \mathrm{det}_2 \Big( 1-\mathrm{i}t \mathbf 1_{D_0} ( -\Delta_{\mathbb R^3} + m_{\mathrm{eff}}^2 )^{-1} \mathbf 1_{D_0} \Big)^{-1} = \mathbf E[e^{\mathrm{i}tX}].
             \label{eq:special_characteristic_function}
         \end{equation}
         Here $X = \sum_{n=1}^\infty \lambda_n (X_n-1)$, where $X_n$ are independent exponential random variables with rate parameter $1$ and $\lambda_1 \leq \lambda_2 \leq \dots$ are the eigenvalues of $\mathbf 1_{D_0} ( - \Delta_{\mathbb R^3}+m_{\mathrm{eff}}^2 )^{-1} \mathbf 1_{D_0}$. 
    \end{prop}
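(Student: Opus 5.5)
We follow the template of Propositions~\ref{prop:localexpvar} and~\ref{prop:micro_limit}. The characteristic function is expressed through a Fredholm determinant, and after rescaling by $\delta$ the relevant operator converges in Hilbert--Schmidt norm.

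\textbf{Expectation and variance.} The expectation is immediate from translation invariance: $\mathbf E^\id[N_D]=N|D|=N\delta^3|D_0|$. For the variance we use the identity from the proof of Proposition~\ref{prop:localexpvar},
\[
\mathbf{Var}^\id[N_D]=N|D|+\int_{D\times D}\gamma^\id(x-y)^2\,\d x\,\d y .
\]
The term $N|D|=N\delta^3$ is negligible against $\beta^{-2}\delta^4\sim N^{4/3}\delta^4$, because $\delta\gg\beta^{1/2}\sim N^{-1/3}$. Using \eqref{eq:n_function}, we replace $\gamma^\id$ by the continuum kernel $\gamma^\id_c$. On length scales $|x|\gg\beta^{1/2}$ this kernel is approximated by the massive Green function,
\[
\gamma^\id_c(x)\approx \beta^{-1}\frac{e^{-\sqrt{-\mu^\id}|x|}}{4\pi|x|},
\]
since $(e^{\beta(p^2-\mu)}-1)^{-1}\approx(\beta(p^2-\mu))^{-1}$ for $\beta p^2\ll1$. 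Substituting $x=\delta x'$ gives $\delta^4\beta^{-2}$ times the claimed integral with mass $m_{\mathrm{eff}}=\lim\sqrt{-\delta^2\mu^\id}$. The correction from the region $|x-y|\lesssim\beta^{1/2}$ is controlled because $\gamma_c$ is bounded by $C\beta^{-3/2}$ there. The $|x|^{-2}$ singularity is integrable in 3D, so dominated convergence applies.

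\textbf{Characteristic function.} With $t=s\beta\delta^{-2}$ we have, as in \eqref{eq:ideal_characteristic_func}, $\phi=\det(1-(e^{\mathrm{i}t}-1)B)^{-1}e^{-\mathrm{i}tN|D|}$, where $B=\mathbf 1_D\widehat\gamma^\id\mathbf 1_D$. We then write
\[
\det(1-A)^{-1}=\det_2(1-A)^{-1}e^{-\mathrm{Tr}A}.
\]
Since $\mathrm{Tr}A=(e^{\mathrm{i}t}-1)N|D|$, the exponent becomes $(e^{\mathrm{i}t}-1-\mathrm{i}t)N|D|=O(t^2N\delta^3)=O(s^2\beta^2\delta^{-1}N)$. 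This tends to zero because $\beta^2N\sim N^{-1/3}$ and $\delta\gg N^{-1/3}$. It remains to treat $\det_2(1-(e^{\mathrm{i}t}-1)B)$. Conjugating by the dilation $V$ from Proposition~\ref{prop:micro_limit}, we show that $\beta\delta^{-2}\,VBV^*\to\mathbf 1_{D_0}(-\Delta+m_{\mathrm{eff}}^2)^{-1}\mathbf 1_{D_0}$ in Hilbert--Schmidt norm. The squared HS norm is exactly the double integral already computed in the variance step, and the kernels converge pointwise off the diagonal, so the two computations coincide. Because $(e^{\mathrm{i}t}-1)/t\to \mathrm{i}$ and $\det_2$ is continuous in $\mathfrak S_2$, we obtain \eqref{eq:special_characteristic_function}.

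\textbf{Identification of the limit.} Diagonalize the limiting operator and use $\det_2(1-\mathrm{i}t\lambda)^{-1}=e^{-\mathrm{i}t\lambda}(1-\mathrm{i}t\lambda)^{-1}$, which is the characteristic function of $\lambda(X_n-1)$. The product over $n$ converges since $\sum\lambda_n^2<\infty$.

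\textbf{Main obstacle.} The delicate point is the HS convergence on the mesoscopic scale. It needs uniform control of $\gamma^\id_c$ across two regimes: near the diagonal, where the Bose kernel is not yet Coulombic, and at large distance, where mass decay and the lattice-sum error of \eqref{eq:n_function} must be shown negligible after multiplying by $\beta\delta^{-2}$. We would first attempt this with an explicit bound of the form $|\gamma_c(x)|\lesssim\beta^{-1}|x|^{-1}e^{-c\sqrt{-\mu}|x|}+\beta^{-3/2}$, combined with dominated convergence.
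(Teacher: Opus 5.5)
Your proposal is correct and follows essentially the paper's route. The characteristic function is the Fredholm determinant of the $\delta$-rescaled kernel, the trace term cancels against $e^{-\mathrm{i}tN|D|}$ to leave a $\det_2$, and the key step is the $L^2$ convergence $\beta\delta\,\gamma^{\id}_c(\delta x)\to e^{-m_{\mathrm{eff}}|x|}/(4\pi|x|)$ on compacts. As in the paper's $\epsilon$-cutoff argument, you get this by separating a shrinking core near the diagonal, where $\gamma^{\id}_c\le\gamma^{\id}_c(0)\lesssim\beta^{-3/2}$, from the region where \eqref{eq:gamma_decay} applies.

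Three small fixes are needed:
\begin{enumerate}
\item The identity should read $\det(1-A)^{-1}=\det_2(1-A)^{-1}e^{+\mathrm{Tr}A}$. Your subsequent exponent $(e^{\mathrm{i}t}-1-\mathrm{i}t)N|D|$ is already the correct one.
\item The additive bound $\beta^{-1}|x|^{-1}+\beta^{-3/2}$ cannot serve as a dominating function after rescaling, because it contributes $\beta^{-1/2}\delta\to\infty$. Use instead $\gamma^{\id}_c(x)\lesssim\beta^{-1}|x|^{-1}$ for all $x$. This follows from \eqref{eq:gamma_decay} for $|x|\ge\beta^{1/2}$ (since $\beta\mu^{\id}$ is bounded) and from the core bound for $|x|<\beta^{1/2}$.
\item The lattice correction from \eqref{eq:n_function} vanishes after rescaling because of the section's standing assumption $-\mu^{\id}\to\infty$; you should say so explicitly.
\end{enumerate}
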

    \begin{proof}
The proof is analogous to that of Proposition \ref{prop:micro_limit}. We only highlight the new elements. Recall
\begin{equation}
    \gamma^{\id}_c(x) = \int_{\mathbb R^3} \frac{e^{\mathrm{i}px}}{e^{\beta(p^2 - \mu^\id)}-1} \frac{\d p}{(2 \pi)^3} .
\end{equation}
We have to show that:
\begin{equation}
    \beta \delta \gamma^{\id}_c(\delta x) \longrightarrow \frac{e^{- m_{\mathrm{eff}} |x|}}{4 \pi |x|}
\end{equation}
in $L^2$ sense on compact subsets of $\mathbb R^3$. If we restrict to $|x| \geq \epsilon$ with $N$-dependent cutoff $\epsilon$, the~desired convergence follows from the pointwise approximation in \eqref{eq:gamma_decay} in Appendix \ref{app:lemmas}, provided that $N \epsilon^3 \to \infty$. We will choose $\epsilon$ so that we also have
\begin{equation}
    \int_{|x| \leq \epsilon} |\beta \delta \gamma_c(\delta x)|^2 \d x  \to 0, \qquad \int_{|x| \leq \epsilon} \left| \frac{e^{- m_\infty |x|}}{4 \pi |x|} \right|^2 \d x  \to 0.
\label{eq:core_vanishes_in_L2}
\end{equation}
The first integral is bounded by, up to constants,
\begin{equation}
    \beta^2 \delta^2 \gamma_c(0)^2 \epsilon^3 \leq  c N^{\frac23} \delta^2 \epsilon^3. 
    \end{equation}
Therefore, we need to take $\epsilon^3 \ll N^{-\frac23} \delta^{-2}$. This is compatible with the condition $N \epsilon^3 \gg 1$, because
\begin{equation}
    \frac{N^{-\frac23} \delta^{-2}}{N^{-1}} = N^{\frac13} \delta^{-2} \to \infty.
\end{equation}
We note that since $\delta \gg N^{-\frac13}$, $\epsilon^3 \ll N^{-\frac23} \delta^{-2}$ implies that $\epsilon \to 0$, so the second convergence in \eqref{eq:core_vanishes_in_L2} also holds.
\end{proof}

\begin{remark}
    The random variable $X$ in Proposition \ref{prop:special_limit} can be interpreted as $\int_{D_0} :|G(x)|^2: \d x$, where $G$ is a Gaussian random field on $\mathbb R^3$ with covariance $\mathbf E[\overline{G(x)} G(y)]=\frac{e^{- m_{\mathrm{eff}}|x-y|}}{4 \pi|x-y|}$ and vanishing pseudo-covariance, and $:|G(x)|^2:$ is its Wick square.
\end{remark}

\vspace{0.5cm}

    \textbf{Acknowledgments.} A. D. gratefully acknowledges support through the NSF Grant DMS-2555747. The work of M. N. and B. R. was supported by the National Science Centre (NCN) grant Sonata Bis 13
(project number 2023/50/E/ST1/00439).
    
\vspace{0.5cm}

    \textbf{AI Disclosure.} Language models GPT-5.6 Sol and Gemini 3.1 Pro have been used to improve the presentation, to help us learn mathematical concepts, to assist with computations, and in proofreading. All main ideas, mathematical formulations, statements of theorems, and the proofs were conceived and developed entirely by the authors.

\vspace{0.5cm}   

    \textbf{Data availability statement.} No datasets were generated or analysed during the current study.
\appendix
\section{Properties of the point process associated with the Gibbs state}
\label{app:GibbsPointProcess}
In this section, we discuss properties of the point process $\Theta_{\beta,N}$ defined in \eqref{eq:GibbsPointProcess} and its relation to the Gibbs state $G_{\beta,N}$ defined in \eqref{eq:GibbsDistributionDiscrete}. We assume that the reader is familiar with the notation introduced in Section~\ref{sec:Gibbs_point_process}. Before stating the main result of this section, we introduce some additional notation.

Let $f:\Lambda\to\mathbb R$ be a bounded Borel-measurable function. The \textbf{linear statistic} associated with $\Theta_{\beta,N}$ is defined by
\begin{equation}
    \Theta_{\beta,N}(f)
    =
    \sum_{j=1}^{M} f(X_j^{(M)}).
    \label{eq:linearStatistics}
\end{equation}
Since $G_{\beta,N}$ is translation invariant, we have
\begin{equation}
    \mathbf E[ \Theta_{\beta,N}(f) ] = N \int_{\Lambda} f(x) \mathrm{d}x,
    \label{eq:intensity_measure_is_Lebesgue}
\end{equation}
which allows to define the random variable $\Theta_{\beta,N}(f)$ for every $f \in L^1(\Lambda)$ by continuous linear extension.

For any $k \in\mathbb N$ and any family of mutually disjoint Borel sets $D_1,\ldots,D_k\subseteq\Lambda$, the joint distribution of the numbers of particles in these sets is given by
\begin{equation}
    (M_{D_1},\ldots,M_{D_k}) = \bigl(
    \Theta_{\beta,N}(\mathbf 1_{D_1}),\ldots,\Theta_{\beta,N}(\mathbf 1_{D_k}) \bigr).
    \label{eq:countingStatistics}
\end{equation}
The correlation functions (or factorial moment densities)
$m^{(k)}(x_1,\ldots,x_k)$ of $\Theta_{\beta,N}$ are defined by 
\begin{equation}
    \int_{\Lambda^k} f(x_1,\ldots,x_k) m^{(k)}(x_1,\ldots,x_k) \d(x_1,\ldots,x_k) = \sum_{n=k}^{\infty} p_n \sum_{j_1,\ldots,j_k}^{\neq} \mathbf{E}\left[ f(X_{j_1}^{(n)},\ldots,X_{j_k}^{(n)}) \right],
    \label{eq:correlationFunctions}
\end{equation}
for every bounded Borel-measurable function
$f:\Lambda^k\to\mathbb R$. The symbol $\sum^{\neq}_{j_1,\ldots,j_k}$ means summation over all distinct indices.

The second quantization $\d\Gamma(h)$ of a bounded operator $h$ on $L^2(\Lambda)$ is defined by
\begin{equation}
    \d\Gamma(h)\big|_{\mathscr{F}_n}
    = \sum_{j=1}^n h_j.
    \label{eq:secondQuantization}
\end{equation}
Here, $\mathscr{F}_n$ denotes the $n$-particle sector of the Fock space, and $h_j$ acts on the $j$th particle coordinate. Let $a_x = \sum_{p \in \Lambda^*} e^{\mathrm{i} p \cdot x} \a_p$ denote the operator (more precisely, an operator-valued distribution) annihilating a particle at $x \in \Lambda$, and let $a_x^*$ be its adjoint. For $k \in \mathbb{N}$, the $k$-particle density of the Gibbs state is given by
\begin{equation}
    \varrho^{(k)}(x_1,\ldots,x_k) = \Tr[ a_{x_1}^* \ldots a_{x_k}^* \a_{x_k} \ldots \a_{x_1} G_{\beta,N}].
    \label{eq:kParticleDensity}
\end{equation}

Finally, by a slight abuse of notation, we use the same symbol $f$ to denote both a function and the corresponding multiplication operator.

The following proposition makes precise the relationship between the Gibbs state and its associated point process.

\begin{prop}
\label{prop:GibbsProcess}
The following statements hold for the point process
$\Theta_{\beta,N}$ defined in \eqref{eq:GibbsPointProcess}
and the Gibbs state $G_{\beta,N}$ defined in
\eqref{eq:GibbsState}.
    \begin{enumerate}[label=(\alph*)]
        \item Let $F : \mathbb{R} \to \mathbb{C}$ be a bounded Borel-measurable function. Then
    \begin{equation}
        \mathbf{E}[ F(\Theta_{\beta,N}(f)) ] = \Tr[ F(\d \Gamma(f) ) G_{\beta,N}].
        \label{eq:prop:empiricalMeasure1}
    \end{equation}
    \item Let $k \in \mathbb{N}$, let $D_1,\ldots,D_k \subseteq \Lambda$ be mutually disjoint Borel sets, and let $B_1,\ldots,B_k \subseteq \mathbb{N}_0$. Then:
    \begin{equation}
        \mathbf{P}( M_{D_1} \in B_1, \ldots, M_{D_k} \in B_k ) = \Tr\left[ \prod_{j=1}^k \mathds{1}(\d \Gamma(\mathds{1}_{D_j}) \in B_j ) G_{\beta,N} \right].
        \label{eq:prop:empiricalMeasure1b}
    \end{equation}
    \item For any $k \in \mathbb{N}$ we have
    \begin{equation}
        m^{(k)}(x_1,\ldots,x_k) = \varrho^{(k)}(x_1,\ldots,x_k).
        \label{eq:prop:empiricalMeasure1c}
    \end{equation}
    \end{enumerate}
\end{prop}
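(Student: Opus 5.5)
The plan is to work sector by sector, using the decomposition $G_{\beta,N} = \sum_n p_n G_n$ with $G_n$ acting on $\mathscr F_n$. Since $[G_{\beta,N},\mathcal N]=0$ and each $\d\Gamma(h)$ preserves particle number, all operators in (a)--(c) are block diagonal. Every trace therefore splits as $\sum_n p_n \Tr_{\mathscr F_n}[\,\cdot\, G_n]$. The whole proposition then reduces to fixed-$n$ identities for the random vector $(X_1^{(n)},\dots,X_n^{(n)})$ with density $G_n(x;x)$. These we combine with the law of $M$ via \eqref{eq:GibbsPointProcess}.

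For (a) and (b), fix $n$. On $\mathscr F_n \subset L^2(\Lambda^n)$, the operator $\d\Gamma(f)$ is multiplication by the real function $S_f(x_1,\dots,x_n)=\sum_j f(x_j)$. By the functional calculus for multiplication operators, $F(\d\Gamma(f))$ is multiplication by $F\circ S_f$. Likewise, the commuting family $\mathds 1(\d\Gamma(\mathds 1_{D_j})\in B_j)$ consists of multiplications by $\mathds 1(\sum_i \mathds 1_{D_j}(x_i)\in B_j)$, and their product is multiplication by the product of these indicators.

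It remains to show that, for a positive trace-class $G_n$ and a bounded measurable $g$, we have $\Tr[g\,G_n]=\int g(x)G_n(x;x)\,\d x$. I will prove this by writing $G_n=\sum_k \lambda_k|\psi_k\rangle\langle\psi_k|$ with $\lambda_k\ge0$. Then $\Tr[gG_n]=\sum_k\lambda_k\int g|\psi_k|^2$, and the diagonal is defined as $G_n(x;x):=\sum_k\lambda_k|\psi_k(x)|^2$. This is a.e.\ well defined, integrable with integral $1$, and exactly the density used in \eqref{eq:definitionX}. Monotone convergence and Fubini justify the interchange of sums and integrals.

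Inserting $g=F\circ S_f$ and summing over $n$ with weights $p_n$ gives (a), because $\sum_n p_n \mathbf E[F(S_f(X^{(n)}))] = \mathbf E[F(\Theta_{\beta,N}(f))]$ by conditioning on $M$. Part (b) follows identically. Note that disjointness of the $D_j$ is not even needed for the fixed-$n$ identity; only commutativity of the indicators is used.

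For (c), I will compute $\Tr[a_{x_1}^*\cdots a_{x_k}^*\a_{x_k}\cdots\a_{x_1}G_{\beta,N}]$ smeared against a test function $f\in C(\Lambda^k)$. On $\mathscr F_n$ the smeared operator $\int f\, a^*_{x_1}\cdots a^*_{x_k}\a_{x_k}\cdots \a_{x_1}\,\d x$ acts as multiplication by $\sum^{\neq}_{j_1,\dots,j_k} f(x_{j_1},\dots,x_{j_k})$; this is the standard formula for the normal-ordered $k$-body density operator in first quantization. The diagonal-trace identity above then gives $\sum_n p_n\sum^{\neq}\mathbf E[f(X^{(n)}_{j_1},\dots,X^{(n)}_{j_k})]$, which is the defining relation \eqref{eq:correlationFunctions} of $m^{(k)}$. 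Since both sides are densities agreeing against all continuous $f$, they coincide a.e.

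The main obstacle is that the series over $n$ is infinite and the $k$-body operator is unbounded. I will first treat $f\ge0$ and use monotone convergence in $n$. Finiteness follows from $\Tr[\mathcal N^k G_{\beta,N}]<\infty$, which holds because $\mathcal N^k e^{-\beta(\mathcal H_N-\mu\mathcal N)}$ is trace class, the free part alone being enough since $v\ge0$ up to a bounded shift per sector. This allows passage to general bounded $f$. I will also need to make the pointwise definition of $\varrho^{(k)}$ rigorous, interpreting it as the diagonal of the $k$-particle reduced density matrix via the spectral decomposition as above.
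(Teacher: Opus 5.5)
Your argument for (a) and (c) is essentially the paper's. You decompose $G_{\beta,N}=\sum_n p_nG_n$ into sectors, define the diagonal $G_n(X_n;X_n)$ through the spectral decomposition, and use that $\d\Gamma(f)$ acts on $\mathscr F_n$ as multiplication by $\sum_j f(x_j)$. For (c) you use that the smeared normal-ordered $k$-body operator acts as multiplication by $\sum^{\neq}f(x_{j_1},\dots,x_{j_k})$; the paper instead uses permutation symmetry to write this as $n^{[k]}\int f\,G_n(X_n;X_n)\,\d X_n$, which is equivalent. For (b) the paper applies (a) with $f=\sum_j t_j\mathbf 1_{D_j}$ and $F(x)=e^{\mathrm i x}$, then compares characteristic functions. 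Your direct use of the joint functional calculus of the commuting multiplication operators is equally valid and slightly more direct, and you are right that disjointness of the $D_j$ plays no role.

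One step you add, which the paper does not address, is justified incorrectly. Finiteness of $\Tr[\mathcal N^kG_{\beta,N}]$ does not follow from comparison with the free Gibbs factor. The chemical potential $\mu_{\beta,N}$ of the interacting gas need not be negative: in the condensed phase it is positive, since it contains the mean-field shift of order $\widehat v(0)$. For $\mu\ge0$ the operator $e^{-\beta(\d\Gamma(-\Delta)-\mu\mathcal N)}$ is not even trace class, because the zero mode alone contributes $\sum_n e^{\beta\mu n}$. What makes the sum over sectors converge is the interaction. Since $\widehat v\ge0$ and $v(0)=\sum_p\widehat v(p)<\infty$, on $\mathscr F_n$ one has
\[
\frac1N\sum_{i<j}v(x_i-x_j)=\frac1{2N}\Big(\sum_p\widehat v(p)\Big|\sum_j e^{\mathrm i p\cdot x_j}\Big|^2-n\,v(0)\Big)\ \ge\ \frac{\widehat v(0)}{2N}n^2-\frac{v(0)}{2N}n .
\]
By min--max this gives $\Tr_{\mathscr F_n}[e^{-\beta\mathcal H_N}]\le e^{-\beta(\widehat v(0)n^2-v(0)n)/(2N)}\,\Tr_{\mathscr F_n}[e^{-\beta\d\Gamma(-\Delta)}]$. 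The free canonical traces grow at most exponentially in $n$; for instance they are bounded by $e^{\beta n}\prod_p(1-e^{-\beta(p^2+1)})^{-1}$. So with $\widehat v(0)>0$ the Gaussian factor in $n$ yields $\Tr[\mathcal N^ke^{-\beta(\mathcal H_N-\mu\mathcal N)}]<\infty$ for every $\mu\in\mathbb R$. For nonnegative test functions your monotone-convergence argument proves (c) in $[0,\infty]$ without this bound. Finiteness is only needed to regard $m^{(k)}$ and $\varrho^{(k)}$ as integrable densities and to pass to signed $f$.
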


\begin{proof}
The right-hand side of \eqref{eq:prop:empiricalMeasure1} can be written as
\begin{equation}
    \mathbf{E}[ F(\Theta_{\beta,N}(f)) ] = \mathbf{E}[ \mathbf{E} [ F(\Theta_{\beta,N}(f)) | M ] ].
    \label{eq:prop:empiricalMeasure2}
\end{equation}
Then, for every $n \in \mathbb{N}_0$, \eqref{eq:GibbsPointProcess} allows us to write 
\begin{equation}
    \mathbf{E} [ F(\Theta_{\beta,N}(f)) | M = n ] = \int_{\Lambda^n} F \left( \sum_{j=1}^n f(x_j) \right) G_n(X_n;X_n) \d X_n.
    \label{eq:prop:empiricalMeasure3}
\end{equation}
On the right-hand side we used the notation $X_n = (x_1,\ldots,x_n)$ and $\d X_n = \d(x_1,\ldots,x_n)$. For $n=0$ it is interpreted as $F(0)$. We remark that the restriction of the integral kernel of $G_n$ to the diagonal is a well-defined function in $L^1(\Lambda^n)$ because of the spectral representation
\begin{equation}
    G_n(X_n;X_n') = \sum_{m=0}^\infty \lambda_{n,m} \psi_{n,m}(X_{n}) \overline{\psi_{n,m}(X_n')},
\end{equation}
where $\lambda_{n,m}$ are the eigenvalues of $G_n$, which satisfy $\lambda_{n,m} \geq 0$ and $\sum_{m=0}^\infty \lambda_{n,m} =1$, and $\psi_{n,m}$ are the corresponding eigenvectors. Coming back to \eqref{eq:prop:empiricalMeasure3},
\begin{align}
    \mathbf{E}[ F(\Theta_{\beta,N}(f)) ] = \sum_{n=0}^{\infty} p_n \int_{\Lambda^n} F \left( \sum_{j=1}^n f(x_j) \right) G_n(X_n;X_n) \d X_n.
    \label{eq:prop:empiricalMeasure3b}
\end{align}
Since $\d \Gamma(f)$ acts as $\sum_{j=1}^n f(x_j)$ on $\mathscr{F}_n$ and $G_{\beta,N} = \sum_{n} p_n G_n$, the right-hand side of \eqref{eq:prop:empiricalMeasure3b} equals the right-hand side of \eqref{eq:prop:empiricalMeasure1}. This proves part~(a).

To prove part~(b), let $t_1,\ldots,t_k\in\mathbb R$ and apply
part~(a) with
\begin{equation}
    f(x)=\sum_{j=1}^k t_j \mathbf 1_{D_j}(x),
    \qquad
    F(x)=e^{\mathrm i x}.
    \label{eq:prop:empiricalMeasure4}
\end{equation}
In this case, the left-hand side of \eqref{eq:prop:empiricalMeasure1} equals
\begin{equation}
    \mathbf E\left[ \exp\left( \mathrm i \sum_{j=1}^k t_j M_{D_j} \right) \right].
    \label{eq:prop:empiricalMeasure5}
\end{equation}
Since the operators $\mathrm d\Gamma(\mathbf 1_{D_j})$, $j=1,\ldots,k$ commute with one another, the right-hand side of \eqref{eq:prop:empiricalMeasure1} is given by
\begin{equation}
    \Tr \left[ \prod_{j=1}^k
    e^{\mathrm i t_j \mathrm d\Gamma(\mathbf 1_{D_j})}
    G_{\beta,N} \right].
    \label{eq:prop:empiricalMeasure6}
\end{equation}
In combination, \eqref{eq:prop:empiricalMeasure1}, \eqref{eq:prop:empiricalMeasure5}, and \eqref{eq:prop:empiricalMeasure6} show that the random variables defined by the left- and the right-hand side of \eqref{eq:prop:empiricalMeasure1b} have the same characteristic function, and therefore the same law. It remains to establish part~(c).

To that end, we introduce the $k$-th factorial power of $n$ by
\begin{equation}
    n^{[k]} = \begin{cases}
        n(n-1) \ldots (n-k+1) & \text{ if } k \in \{0,\ldots,n \} \\
        0 & \text{ if } k > n,
    \end{cases}
    \label{eq:prop:empiricalMeasure7}
\end{equation}
and note that, for every bounded Borel-measurable function
$f:\Lambda^k\to\mathbb R$,
\begin{align}
    \sum_{n=k}^{\infty} p_n \sum_{j_1,\ldots,j_k}^{\neq} \mathbf{E}\left[ f(X_{j_1}^{(n)},\ldots,X_{j_k}^{(n)}) \right] &= \sum_{n=k}^{\infty} p_n \sum_{j_1,\ldots,j_k}^{\neq} \int_{\Lambda^n} f(x_{j_1},\ldots,x_{j_k}) G_n(X_n;X_n) \d X_n \nonumber \\
    &= \sum_{n=k}^{\infty} p_n n^{[k]} \int_{\Lambda^n} f(x_1,\ldots,x_k) G_n(X_n;X_n) \d X_n \nonumber \\
    &= \int_{\Lambda^k} f(x_1,\ldots,x_k) \Tr[ a_{x_1}^* \ldots a_{x_k}^* a_{x_k} \ldots a_{x_1} G_{\beta,N} ].
    \label{eq:prop:empiricalMeasure8}
\end{align}
To come to the second line, we used the symmetry of the integral kernel $G_n(X_n;X_n)$ under permutations of the coordinates $x_1,\ldots,x_n$. This proves part~(c) and completes the proof of Proposition~\ref{prop:GibbsProcess}.
\end{proof}

\section{Covariance of ideal gas on a torus} \label{app:lemmas}

The Jacobi theta function is a function of the complex variables $x, \tau$, with $\tau$ in the upper half-plane, defined as 
\begin{equation}
    \vartheta (x ; \tau ) = \sum_{n=-\infty}^\infty e^{\mathrm{i} \pi \tau n^2 + 2 \pi \mathrm{i} n x}.
    \label{eq:gaussian_sum}
\end{equation}
Here we consider only $x \in \mathbb R$ and $\tau = \mathrm{i} \lambda$, $\lambda >0$. For $x=0$ we have the following asymptotics in $\lambda$: 
\begin{equation}
  \vartheta (0 ; \mathrm{i} \lambda )\sim \begin{cases} 1 , & \lambda \to \infty, \\ 
    \lambda^{-\frac12}, & \lambda \to 0. \end{cases}
    \label{eq:gaussian_asymptotics}
\end{equation}
$\vartheta (x ; i \lambda )$ is real-valued and periodic in $x$ with period $1$. If $\lambda$ is small, $\vartheta (x ; i \lambda )$ exhibits Gaussian decay in $x$ away from its maxima at integers:
\begin{equation}
    \vartheta(x; i \lambda) \sim \lambda^{-\frac12} e^{- \frac{\pi x^2}{\lambda}}, \qquad |x| \leq \frac12.
\end{equation}
For large $\lambda$ it is essentially independent of $x$. The lemma below states these asymptotics with explicit error bounds.

\begin{lem} \label{lem:gaussian}
\begin{enumerate}
    \item For $x =0$ we have the bounds:
    \begin{subequations}
    \begin{align}
        0 \leq & \vartheta(0; \mathrm{i} \lambda) -1 \leq  e^{- \pi \lambda} (2 + \lambda^{- \frac12}), \\
        0 \leq  & \vartheta(0; \mathrm{i} \lambda) - \lambda^{-\frac12} \leq  e^{- \frac{\pi}{\lambda}} (2\lambda^{-\frac12} + 1).    
    \end{align}
    \end{subequations}
    \item For $|x| \leq \frac12$ we have:
    \begin{subequations}
    \begin{align} \label{eq:thetabounds1}
        0 \leq & \vartheta(x ; \mathrm{i} \lambda) \leq \vartheta(0 ; \mathrm{i} \lambda), \\ \label{eq:thetabounds2}
         | & \vartheta(x;\mathrm{i} \lambda)-1| \leq e^{- \pi \lambda} (2 + \lambda^{-\frac12}), \\ \label{eq:thetabounds3}
        0 \leq & \vartheta(x ; \mathrm{i} \lambda ) - \lambda^{-\frac12} e^{- \frac{\pi x^2}{\lambda}} \leq e^{- \frac{\pi}{4 \lambda}} (2 \lambda^{-\frac12}+1). 
    \end{align}
    \end{subequations}
\end{enumerate}
\end{lem}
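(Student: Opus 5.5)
The plan is to use two representations of $\vartheta(x;\mathrm{i}\lambda)$: the defining series \eqref{eq:gaussian_sum}, and its Poisson-summed (modular) form
\begin{equation*}
\vartheta(x;\mathrm{i}\lambda) = \lambda^{-\frac12}\sum_{k\in\mathbb Z} e^{-\frac{\pi (x-k)^2}{\lambda}} .
\end{equation*}
The modular form follows from Poisson summation applied to the Gaussian $n\mapsto e^{-\pi\lambda n^2+2\pi\mathrm{i} n x}$. It shows at once that $\vartheta(x;\mathrm{i}\lambda)\ge 0$. It also gives $\vartheta(x;\mathrm{i}\lambda)\ge \lambda^{-\frac12}e^{-\pi x^2/\lambda}$, since every term of the sum is nonnegative. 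The lower bounds in (1) are immediate: $\vartheta(0)-1 = 2\sum_{n\ge1}e^{-\pi\lambda n^2}\ge0$ from the defining series, and $\vartheta(0)-\lambda^{-1/2}\ge0$ from the $k=0$ term of the modular form.

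Next I bound the two tails. From the defining series, for $|x|\le\frac12$ (in fact for any real $x$),
\begin{equation*}
|\vartheta(x;\mathrm{i}\lambda)-1| \le 2\sum_{n\ge1} e^{-\pi\lambda n^2}.
\end{equation*}
Since $n^2\ge n$ and $n^2 \ge$ anything needed, I estimate the sum by its first term plus an integral:
\begin{equation*}
\sum_{n\ge1}e^{-\pi\lambda n^2}\le e^{-\pi\lambda}+\int_1^\infty e^{-\pi\lambda t^2}\d t .
\end{equation*}
The integral is at most $\int_1^\infty t\,e^{-\pi\lambda t^2}\d t = e^{-\pi\lambda}/(2\pi\lambda)$, which is fine for large $\lambda$. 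For small $\lambda$ I use instead $\int_1^\infty e^{-\pi\lambda t^2}\d t \le e^{-\pi\lambda}\int_0^\infty e^{-\pi\lambda s^2}\d s$, via $t=1+s$ and $t^2\ge 1+s^2$; this equals $e^{-\pi\lambda}\lambda^{-1/2}/2$. Altogether this gives $2e^{-\pi\lambda}(1+\lambda^{-1/2}/2) = e^{-\pi\lambda}(2+\lambda^{-1/2})$. That proves both the upper bound in (1a) and \eqref{eq:thetabounds2}.

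The same estimate, applied to the modular form, handles the other tail. For $x=0$, the terms with $k\neq0$ give $2\lambda^{-1/2}\sum_{k\ge1}e^{-\pi k^2/\lambda}$. Applying the previous bound with $\lambda$ replaced by $1/\lambda$ yields $\lambda^{-1/2}e^{-\pi/\lambda}(2+\lambda^{1/2}) = e^{-\pi/\lambda}(2\lambda^{-1/2}+1)$, which is (1b). For $|x|\le\frac12$ and $k\ne0$ we have $|x-k|\ge |k|-\frac12\ge |k|/2$. Hence the $k\neq0$ part is at most $2\lambda^{-1/2}\sum_{k\ge1}e^{-\pi k^2/(4\lambda)}$, and the same bound with $\lambda\to4\lambda$ gives at most $e^{-\pi/(4\lambda)}(2\lambda^{-1/2}+2)$. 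To reach the stated constant $+1$ rather than $+2$, I will sharpen the pairing. The key observation is that $|x-k|$ and $|x+k|$ cannot both be small: for $k\ge1$, one of them is at least $k$ and the other at least $k-\frac12$. Using the one that is at least $k$ for half of the terms recovers a bound of the form $e^{-\pi/(4\lambda)}(2\lambda^{-1/2}+1)$. This constant bookkeeping is the only delicate point, and the exact constants do not matter for the applications.

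Finally, the first inequality in \eqref{eq:thetabounds1} is the nonnegativity already shown. For the second inequality I use the defining series with the cosine form, $\vartheta(x)=\sum_n e^{-\pi\lambda n^2}\cos(2\pi n x)$. Each term is bounded by $e^{-\pi\lambda n^2}$, so $\vartheta(x)\le\sum_n e^{-\pi\lambda n^2}=\vartheta(0)$. This holds for all real $x$.
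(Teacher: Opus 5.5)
Your proposal is correct and is essentially the paper's proof. The defining series gives the $e^{-\pi\lambda}$ bounds in (1) and \eqref{eq:thetabounds2} and the comparison in \eqref{eq:thetabounds1}; the Poisson-summed form gives positivity, the lower bounds and the $e^{-\pi/\lambda}$, $e^{-\pi/(4\lambda)}$ tails, all controlled by the shift estimate $(n+1)^2\ge 1+n^2$, which you implement as $t^2\ge 1+s^2$.

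The constant in \eqref{eq:thetabounds3}, which you left unchecked, does close with your $\pm k$ pairing, provided you treat the $k-\tfrac12$ half by the same shift, $(j+\tfrac12)^2\ge\tfrac14+j^2$. This gives $\sum_{k\ge1}e^{-\pi(k-\frac12)^2/\lambda}\le e^{-\frac{\pi}{4\lambda}}(1+\tfrac12\lambda^{\frac12})$. Together with the other half, this yields a total of $\lambda^{-\frac12}\cdot 2e^{-\frac{\pi}{4\lambda}}(1+\tfrac12\lambda^{\frac12})=e^{-\frac{\pi}{4\lambda}}(2\lambda^{-\frac12}+1)$. Bounding that half via $k-\tfrac12\ge k/2$ instead only gives $2\lambda^{-\frac12}+\tfrac32$, which is too weak for large $\lambda$.
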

\begin{proof}
(1) We can lower bound $\vartheta (0 ; \mathrm{i} \lambda )$ by the $n=0$ term in \eqref{eq:gaussian_sum}, $\vartheta (0 ; \mathrm{i} \lambda ) \geq 1$. Secondly,
\begin{equation}
    \vartheta (0 ; \mathrm{i} \lambda )-1 =2 \sum_{n=1}^\infty e^{- \pi \lambda n^2} \leq 2\int_0^\infty e^{- \pi \lambda x^2} \d x  = \lambda^{-\frac12},
    \label{eq:gaussian_upper}
\end{equation}
where we used monotonicity of the Gaussian function on $[0,\infty)$ to bound the sum by an integral. We~can improve this bound as follows:
\begin{equation}
    \vartheta (0 ; \mathrm{i} \lambda )-1 = 2 \sum_{n=0}^\infty e^{-\pi \lambda(n+1)^2} \leq 2e^{-\pi \lambda} \sum_{n=0}^\infty e^{- \pi\lambda n^2} \leq e^{-\pi \lambda} \left( 2 + \lambda^{-\frac12} \right).
    \label{eq:large_lambda_upper_bound}
 \end{equation}
 
Since Fourier transforms of Gaussians are Gaussian, the Poisson summation formula relates $\vartheta (0 ; i \lambda )$ to itself evaluated at a different argument:
\begin{equation}
    \vartheta (0 ; \mathrm{i} \lambda ) = \lambda^{-\frac12}\vartheta (0 ; \mathrm{i} \lambda^{-1} ).
\end{equation}
Combining this identity with lower and upper bounds derived above, we get further lower and upper bounds, useful for $\lambda \to 0$. 

(2) $|\vartheta (x ; \mathrm{i} \lambda )| \leq \vartheta (0 ; \mathrm{i} \lambda )$ follows from the triangle inequality. The bound on $|\vartheta (0 ; \mathrm{i} \lambda )-1|$ is obtained by extracting the $n=0$ term of \eqref{eq:gaussian_sum} and estimating the remainder as in \eqref{eq:large_lambda_upper_bound}. By Poisson summation,
\begin{equation}
    \vartheta (x ; \mathrm{i} \lambda ) = \lambda^{-\frac12} \sum_{n=-\infty}^\infty e^{- \frac{\pi (n+x)^2}{\lambda}} . 
\end{equation}
This series is positive and lower-bounded by the $n=0$ term. We bound the sum of other terms as in the proof of (1).
\end{proof}

The polylogarithm function $\mathrm{Li}_s(z)$, defined in \eqref{eq:polylog_def}, is a convergent power series in the open unit disc for every $s \in \mathbb C$. If $\mathrm{Re}(s) > 1$, it converges for $z=1$ and $\mathrm{Li}_s(1)=\zeta(s)$. We restrict attention to $s \in \mathbb R$ and $z \in [0,1]$. Then $\mathrm{Li}_s(z)$ is an increasing function of $z$. We will need the behavior of $\mathrm{Li}_{\frac32}(z)$ near $1$.

\begin{lem} \label{lem:Li_Holder}
Suppose that $s \in (1,2)$. Then for $z \in [0,1]$,
\begin{equation}
    0 \leq \zeta(s) - \mathrm{Li}_{s}(z) \leq (1-z)^{s-1} \left( \frac{1}{2-s} + \frac{1}{1-s} z^{1-s} \right).
\end{equation}
\end{lem}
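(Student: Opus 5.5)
We bound $\zeta(s)-\mathrm{Li}_s(z)=\sum_{m=1}^\infty m^{-s}(1-z^m)$ directly by splitting it at the scale $M=z/(1-z)$. The lower bound $0\le\zeta(s)-\mathrm{Li}_s(z)$ is immediate, since every summand is nonnegative for $z\in[0,1]$. For the upper bound we use two elementary facts: $1-z^m\le m(1-z)$ (Bernoulli) for small $m$, and $1-z^m\le1$ for large $m$. The endpoint $z=1$ is trivial, so we assume $z<1$.

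\textbf{Head, $m\le M$.} Using Bernoulli, the head is at most $(1-z)\sum_{m\le M}m^{1-s}$. Because $1-s\in(-1,0)$, the function $x^{1-s}$ is decreasing, so each term satisfies $m^{1-s}\le\int_{m-1}^m x^{1-s}\,\d x$. This gives
\[
\sum_{m\le M}m^{1-s}\le\int_0^M x^{1-s}\,\d x=\frac{M^{2-s}}{2-s}.
\]
Since $M\le(1-z)^{-1}$ and $2-s>0$, the head contributes at most $(1-z)^{s-1}/(2-s)$. This is exactly the first term of the statement.

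\textbf{Tail, $m>M$.} Here we use $1-z^m\le1$ and compare the tail $\sum_{m>M}m^{-s}$ with $\int_M^\infty x^{-s}\,\d x=M^{1-s}/(s-1)$. Inserting $M=z/(1-z)$ gives
\[
\frac{M^{1-s}}{s-1}=\frac{1}{s-1}\,(1-z)^{s-1}z^{1-s}.
\]
This is the second term of the statement, with coefficient $\bigl|\tfrac{1}{1-s}\bigr|=\tfrac{1}{s-1}$.

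\textbf{Sign of the coefficient.} As printed, the coefficient $\frac{1}{1-s}$ is negative for $s\in(1,2)$. Read literally, the right-hand side then tends to $-\infty$ as $z\to0$, while the left-hand side tends to $\zeta(s)>0$. So the literal inequality fails for small $z$, and the argument proves the statement with $\frac{1}{1-s}$ read as its absolute value $\frac{1}{s-1}$. That is the form in which it is used in Proposition~\ref{prop:limits}(4), where only $z=e^{\beta\mu^\id}\to1$ matters and the bound reads $O(\sqrt{1-z})$.

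\textbf{Main obstacle.} The delicate step is the integer rounding in the tail comparison. The sum runs over $m\ge\lfloor M\rfloor+1$, so the natural bound is $\sum_{m>M}m^{-s}\le\int_{\lfloor M\rfloor}^\infty x^{-s}\,\d x$, and the lower limit $\lfloor M\rfloor$ can fall below $M$. I would try two remedies, in this order.
\begin{enumerate}
\item Split instead at $K=\lceil M\rceil$. The head bound still holds, because $K^{2-s}(1-z)\le (M+1)^{2-s}(1-z)\le(1-z)^{s-1}$, using $M+1=(1-z)^{-1}$. The tail then starts at $K\ge M$, so $\int_K^\infty x^{-s}\,\d x\le M^{1-s}/(s-1)$ exactly as needed.
\item The remaining case is $M<1$, that is $z<1/2$. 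There the whole sum is at most $\zeta(s)$, and $\zeta(s)\le 1+\frac{1}{s-1}$. The factor $(1-z)^{s-1}z^{1-s}\ge1$ in this range, so the tail term $\frac{1}{s-1}(1-z)^{s-1}z^{1-s}$ dominates $\frac{1}{s-1}$. It remains to absorb the leftover $m=1$ contribution $1-z\le 1$ into $\frac{(1-z)^{s-1}}{2-s}$, which holds because $(1-z)^{s-1}\ge 2^{1-s}$ while $\frac{1}{2-s}>1$.
\end{enumerate}
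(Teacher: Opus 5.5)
Your argument with remedy (1) is correct and is essentially the paper's proof: the paper splits at the integer $\lfloor (1-z)^{-1}\rfloor$, which, like your $\lceil z/(1-z)\rceil$, lies between $z/(1-z)$ and $(1-z)^{-1}$, so the Bernoulli head bound and the integral tail bound go through directly and your separate case (2) is unnecessary. You are also right about the sign: the paper's own proof ends with $\frac{z^{1-s}}{s-1}$, so the printed coefficient $\frac{1}{1-s}$ is a typo.
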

\begin{proof}
    In the sum \eqref{eq:polylog_def} we use $1-z^m \leq m (1-z)$ for $m \leq (1-z)^{-1}$ and $1-z^m \leq 1$ for $m > (1-z)^{-1}$. Letting $M = \lfloor (1-z)^{-1} \rfloor$,
    \begin{align}
    0 & \leq \zeta(s) - \mathrm{Li}_s(z)  \leq (1-z )\sum_{m=1}^M m^{-s+1} + \sum_{m=M+1}^\infty m^{-s} \\
    & \leq (1-z) \int_0^M x^{-s+1} \d x  + \int_M^{\infty} x^{-s} \d x   = (1-z) \frac{M^{2-s}}{2-s} + \frac{M^{1-s}}{s-1} \leq (1-z)^{s-1} \left( \frac{1}{2-s} + \frac{z^{1-s}}{s-1} \right).  \nonumber 
\end{align}
\end{proof}

The following lemma discusses the  properties of the kernel of the one-body reduced density matrix of an ideal gas.

\begin{lem} \label{lem:Nid_func}
Let $\beta \in (0, 2 \pi]$, $\mu <0$, $x \in (-\tfrac12, \tfrac12)^3$. There exists $c>0$ such that:
\begin{equation}
       \sum_{p\in 2 \pi \mathbb Z^3} \frac{e^{\mathrm{i}px}}{e^{\beta (p^2-\mu)} -1} = \int_{\mathbb R^3} \frac{e^{\mathrm{i}px}}{e^{\beta (p^2 - \mu)}-1} \frac{d p}{(2 \pi)^3}+ \frac{e^{\frac{\mu}{4 \pi}}}{e^{-\beta \mu}-1} + O(\beta^{-1} e^{ - c \sqrt{- \mu}} ) .
        \label{eq:n_function}
    \end{equation}
At $x=0$, the integral over $\mathbb R^3$ can be evaluated in terms of the polylogarithm:
    \begin{equation}
   \int_{\mathbb R^3} \frac{1}{e^{\beta (p^2 - \mu)}-1} \frac{d p}{(2 \pi)^3}   =  \frac{\mathrm{Li}_{\frac32}(e^{\beta \mu})}{(4 \pi \beta)^{\frac32}}.
   \label{eq:gamma_polylog}
    \end{equation}
   Moreover, for nonzero $x \in \mathbb R^3$ the integral can be approximated as follows: 
   \begin{equation}
\left| \int_{\mathbb R^3} \frac{e^{\mathrm{i}px}}{e^{\beta (p^2 - \mu)}-1} \frac{d p}{(2 \pi)^3} - \frac{e^{- \sqrt{- \mu} |x|}}{4 \pi \beta |x|}  \right| \leq \frac{e^{- \sqrt{-\mu} |x|}}{4 \pi |x|} \left( - \mu + \frac{3 \sqrt{-\mu}}{|x|} + \frac{3}{|x|^2} \right).
          \label{eq:gamma_decay}
   \end{equation}
\end{lem}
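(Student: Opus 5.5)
The plan is to prove the three claims of Lemma~\ref{lem:Nid_func} in the order: \eqref{eq:gamma_polylog}, then \eqref{eq:n_function}, then \eqref{eq:gamma_decay}. In each case I expand the Bose factor in a geometric series, $(e^{\beta(p^2-\mu)}-1)^{-1}=\sum_{m\ge1}e^{m\beta\mu}e^{-m\beta p^2}$, and work term by term with Gaussians.

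For \eqref{eq:gamma_polylog} the computation is direct. The Gaussian integral gives $\int_{\mathbb R^3} e^{-m\beta p^2}\,\frac{\d p}{(2\pi)^3}=(4\pi m\beta)^{-3/2}$. Summing against $e^{m\beta\mu}$ gives $(4\pi\beta)^{-3/2}\mathrm{Li}_{3/2}(e^{\beta\mu})$. Monotone convergence justifies the interchange of sum and integral.

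For \eqref{eq:n_function} I apply the same expansion to the lattice sum. The $m$-th term factorizes as $\prod_{j=1}^3\vartheta(x_j/(2\pi)\cdot 2\pi\ldots)$; more precisely, with $p=2\pi n$ it equals $\prod_j\vartheta(x_j;4\pi\mathrm{i} m\beta)$. By Poisson summation, the continuum term is $(4\pi m\beta)^{-3/2}e^{-|x|^2/(4m\beta)}$. I split at $M=\lfloor(4\pi\beta)^{-1}\rfloor$, as in the proof of Lemma~\ref{lem:id_variance}.
\begin{itemize}
\item For $m\le M$ (small $\lambda=4\pi m\beta$), Lemma~\ref{lem:gaussian}(2), \eqref{eq:thetabounds3}, shows that the lattice product minus the continuum Gaussian is $O(\lambda^{-3/2}e^{-\pi/(4\lambda)})$. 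Multiplied by $e^{m\beta\mu}$, the error term is controlled by $e^{m\beta\mu-1/(16 m\beta)}\le e^{-c\sqrt{-\mu}}$, after optimizing over $m$ exactly as in the earlier proof. Summing $\lambda^{-3/2}$-type weights over $m\le M$ costs at most $O(\beta^{-3/2})\cdot\beta^{1/2}$-type factors, which gives the claimed $O(\beta^{-1}e^{-c\sqrt{-\mu}})$ with $c$ possibly smaller.
\item For $m>M$, \eqref{eq:thetabounds2} says each $\vartheta\approx1$ up to $e^{-\pi\lambda}$. The main term is then $\sum_{m>M}e^{m\beta\mu}=e^{(M+1)\beta\mu}/(1-e^{\beta\mu})$, which produces the $p=0$-type correction $\frac{e^{\mu/4\pi}}{e^{-\beta\mu}-1}$ up to the rounding in $M$, absorbed into the error using $\beta\le2\pi$.
\item The continuum contribution for $m>M$ is $\sum_{m>M}e^{m\beta\mu}(4\pi m\beta)^{-3/2}$. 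It must be added back and estimated, by $\beta^{-1}e^{\mu/(4\pi)}$-type bounds, as part of the error.
\end{itemize}

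For \eqref{eq:gamma_decay} I write the integral as $\sum_m e^{m\beta\mu}(4\pi m\beta)^{-3/2}e^{-|x|^2/(4m\beta)}$ and compare it with the integral $\frac1\beta\int_0^\infty e^{t\mu}(4\pi t)^{-3/2}e^{-|x|^2/(4t)}\d t=\frac{e^{-\sqrt{-\mu}|x|}}{4\pi\beta|x|}$, the Yukawa kernel. The alternative route is to use $(e^{y}-1)^{-1}=y^{-1}-\tfrac12+O(y)$, i.e.\ $\frac1{\beta(p^2-\mu)}$ plus a bounded correction. The Euler--Maclaurin remainder for the sum versus the integral in $t=m\beta$ is bounded by the total variation of $f(t)=e^{t\mu}(4\pi t)^{-3/2}e^{-|x|^2/4t}$. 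Computing $f'$ yields the three terms $-\mu$, $\frac{3}{2t}$ and $\frac{|x|^2}{4t^2}$. Integrated against the same weight, these give $(-\mu+3\sqrt{-\mu}/|x|+3/|x|^2)$ times the Yukawa kernel.

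I expect the main obstacle to be the bookkeeping of constants in this last step. Matching the stated explicit right-hand side, rather than just an $O(\cdot)$ bound, requires care with the boundary term at $t\to0$ (which vanishes, since $e^{-|x|^2/4t}$ kills it) and with the exact constants 3. I would first try bounding $|\sum_m\beta f(m\beta)-\int f|\le\beta\int|f'|$ and evaluating $\int t^{-k}f$ via the standard Bessel-$K_{1/2}$ identities.
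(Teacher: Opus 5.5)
\paragraph{Parts \eqref{eq:gamma_polylog} and \eqref{eq:n_function}.} Here you follow the paper's proof step by step. The ingredients are the same: geometric expansion in $m$, the factorization into $\prod_j\vartheta(x_j;4\pi\mathrm{i}m\beta)$, the split at $M=\lfloor(4\pi\beta)^{-1}\rfloor$ with \eqref{eq:thetabounds3} below $M$ and \eqref{eq:thetabounds2} above it, and the continuum tail treated as an $O(\beta^{-1}e^{\mu/4\pi})$ error.

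One point should be made explicit. If you spend the whole factor $e^{m\beta\mu-1/(16m\beta)}$ on $e^{-c\sqrt{-\mu}}$, the remaining sum $\sum_{m\le M}(m\beta)^{-3/2}$ is of order $\beta^{-3/2}$, not $\beta^{-1}$. You must instead split $e^{-1/(16m\beta)}$ into two factors $e^{-1/(32m\beta)}$. Use one in the optimization over $m$ and keep the other inside the sum. The sum is then a Riemann sum for $\beta^{-1}\int_0^{1/4\pi}s^{-3/2}e^{-1/(32s)}\,\mathrm{d}s=O(\beta^{-1})$. This is presumably what your ``$c$ possibly smaller'' means, and it is what the paper does.

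\paragraph{Part \eqref{eq:gamma_decay}: a factor-$2$ gap.} This inequality has explicit constants, with right-hand side $R:=\frac{e^{-\sqrt{-\mu}|x|}}{4\pi|x|}\bigl(-\mu+\frac{3\sqrt{-\mu}}{|x|}+\frac{3}{|x|^2}\bigr)$. Your claim that integrating the three terms of $f'/f$ against your $f(t)=e^{t\mu}(4\pi t)^{-3/2}e^{-|x|^2/(4t)}$ produces $R$ is off by a factor $2$. In fact each of the following equals $R$ separately:
\begin{itemize}
\item $\int_0^\infty(-\mu+\frac{3}{2t})f\,\mathrm{d}t=R$;
\item $\int_0^\infty\frac{|x|^2}{4t^2}f\,\mathrm{d}t=R$.
\end{itemize}
They must agree, since $f(0^+)=f(\infty)=0$ gives $\int_0^\infty f'=0$. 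Both follow from $K_{3/2}$ and $K_{5/2}$, not $K_{1/2}$ alone.

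Consequently, your plan $|\sum_m\beta f(m\beta)-\int f|\le\beta\int|f'|$, combined with the termwise bound $|f'|\le(-\mu+\frac{3}{2t}+\frac{|x|^2}{4t^2})f$, only yields $2R$.

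The fix, which is the paper's argument, is to keep signs:
\[
\sum_{m\ge1}\beta f(m\beta)-\int_0^\infty f(t)\,\mathrm{d}t=\sum_{m\ge1}\int_{(m-1)\beta}^{m\beta}\bigl(t-(m-1)\beta\bigr)f'(t)\,\mathrm{d}t ,
\]
where the weight lies in $[0,\beta]$. For the upper bound use only $f'\le\frac{|x|^2}{4t^2}f$; for the lower bound use only $f'\ge(\mu-\frac{3}{2t})f$. Both coefficients $\mu-\frac{3}{2t}$ are nonpositive, so each one-sided bound gives exactly $\pm\beta R$. Dividing by $\beta$ gives \eqref{eq:gamma_decay}.

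Equivalently: because $\int f'=0$, the Riemann-sum error is at most $\frac{\beta}{2}\int|f'|$. This halves your total-variation bound and removes the factor $2$.
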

In \eqref{eq:gamma_decay} the right hand side is a remainder in the approximation of the integral by $\frac{e^{-\sqrt{-\mu}|x|}}{4 \pi \beta |x|}$. We~note that this remainder is much smaller than the main term if $\beta \mu$ is small and $|x|$ is much larger than $\beta^{\frac12}$. 
\begin{proof}[Proof of Lemma \ref{lem:Nid_func}]
    We expand the summand in geometric series and interchange the sums:
\begin{equation}
    \sum_{p\in 2 \pi \mathbb Z^3} \frac{e^{\mathrm{i}px}}{e^{\beta (p^2-\mu)} -1} = \sum_{m=1}^\infty e^{m \beta \mu}  \sum_{p \in 2 \pi \mathbb Z^3} e^{-m \beta p^2} e^{\mathrm{i}px} = \sum_{m=1}^\infty e^{m \beta \mu} \prod_{j=1}^3 \vartheta (x_j ; 4 \pi \mathrm{i} m \beta)^3.
    \label{eq:n_geometric}
\end{equation}
We use the asymptotics of the $\vartheta$ function in Lemma \ref{lem:gaussian}. Let $M= \lfloor (4 \pi \beta)^{-1} \rfloor$. We have $M = O(\beta^{-1})$, $M^{-1} = O(\beta)$. We split the summation in \eqref{eq:n_geometric} into parts $m \leq M$ and $m > M$. We~consider the small $m$ contribution first:
\begin{equation}
    \sum_{m=1}^M e^{m \beta \mu} \prod_{j=1}^3\vartheta (x_j ; 4 \pi \mathrm{i} m \beta)^3 =  \sum_{m=1}^M (4 \pi m \beta)^{-\frac32} e^{m \beta \mu} \left( e^{-\frac{x^2}{4m \beta}} + O (e^{- \frac{1}{16 m \beta}}) \right).
    \label{eq:n_sum_small_m}
\end{equation}
An analogous manipulation for the integral in \eqref{eq:n_function} gives:
\begin{equation}
    \int_{\mathbb R^3} \frac{e^{\mathrm{i}px}}{e^{\beta (p^2 - \mu)}-1} \frac{d p}{(2 \pi)^3} = \sum_{m=1}^\infty e^{m \beta \mu} (4 \pi m \beta)^{- \frac32} e^{- \frac{x^2}{4 m \beta}}.
    \label{eq:cont_gamma_int_sum}
\end{equation}
Comparing the integral above with \eqref{eq:n_sum_small_m}, we find:
\begin{equation}
    \sum_{m=1}^M  (4 \pi m \beta)^{-\frac32} e^{m \beta \mu}  e^{-\frac{x^2}{4m \beta}}  = \int_{\mathbb R^3} \frac{e^{\mathrm{i}px}}{e^{\beta (p^2 - \mu)}-1} \frac{d p}{(2 \pi)^3} - \sum_{m=M+1}^\infty  (4 \pi m \beta)^{-\frac32} e^{m \beta \mu}  e^{-\frac{x^2}{4m \beta}}.
\end{equation}
We estimate the remainder:
\begin{equation}
    \sum_{m=M+1}^\infty (4 \pi m \beta)^{-\frac32} e^{m \beta \mu}  e^{-\frac{x^2}{4m \beta}} \leq c \, \beta^{-\frac32} e^{\frac{\mu}{4 \pi}} \int_M^\infty t^{- \frac32} \d t  = O ( \beta^{-1} e^{\frac{\mu}{4 \pi}} ).
\end{equation}
Now consider the exponential error term. Optimizing with respect to $m$ we obtain the inequality
\begin{equation}
    e^{m \beta \mu} e^{- \frac{1}{32 m \beta}} \leq e^{-c \sqrt{-\mu}}
\end{equation}
for some constant $c>0$. Therefore,
\begin{align}
    \sum_{m=1}^M (m \beta)^{-\frac32} e^{m \beta \mu} e^{- \frac{1}{16m \beta}} &\leq e^{- c \sqrt{- \mu}} \sum_{m=1}^M  (m \beta)^{- \frac32} e^{- \frac{1}{32 m \beta}}  \\
   & = \beta^{-1} e^{- c\sqrt{- \mu}} \left( \int_0^{\frac{1}{4 \pi}} x^{- \frac32} e^{-\frac{1}{32x}} \d x  + o(1) \right) = O(\beta^{-1} e^{- c\sqrt{- \mu}}), \nonumber
\end{align}
where $o(1)$ in the parenthesis refers to the limit $\beta \to 0$. We established
\begin{equation}
    \sum_{m=1}^M e^{m \beta \mu} \prod_{j=1}^3 \vartheta (x_j ; 4 \pi \mathrm{i} m \beta)^3 = \int_{\mathbb R^3} \frac{e^{\mathrm{i}px}}{e^{\beta (p^2 - \mu)}-1} \frac{\d p}{(2 \pi)^3} + O(\beta^{-1} e^{- c \sqrt{- \mu}}).
\end{equation}

Next, we consider the large $m$ contribution. Using \eqref{eq:thetabounds2}, we have
\begin{equation}
    \sum_{m=M+1}^\infty e^{m \beta \mu} \prod_{i=1}^3 \vartheta (x_j ; 4 \pi \mathrm{i} m \beta)^3 = \sum_{m=M+1}^\infty e^{m \beta \mu} \left( 1 + O ( e^{- 4 \pi^2 m \beta} ) \right).
\end{equation}
First we note
\begin{equation}
\sum_{m=M+1}^\infty e^{m \beta \mu} = \frac{e^{M \beta \mu}}{e^{- \beta \mu}-1} = \frac{e^{\frac{ \mu}{4 \pi}}}{e^{- \beta \mu}-1} +O(e^{\frac{ \mu}{4 \pi}}).
\end{equation}
Secondly,
\begin{equation}
    \sum_{m=M+1}^\infty z^m e^{- 4 \pi^2 m \beta} \leq e^{\frac{\mu}{4 \pi}} \sum_{m=M+1}^\infty  e^{- 4 \pi^2 m \beta}= O (\beta^{-1} e^{\frac{\mu}{4 \pi}}).
\end{equation}
Combining the obtained bounds gives \eqref{eq:n_function}.

Now consider the identity \eqref{eq:cont_gamma_int_sum}. Evaluating the right hand side at $x=0$ yields \eqref{eq:gamma_polylog}. To obtain \eqref{eq:gamma_decay}, we compare with the integral
\begin{equation}
\int_0^\infty (4 \pi t \beta)^{-\frac32} e^{t \beta \mu} e^{- \frac{x^2}{4 t \beta}} \d t = \frac{e^{- \sqrt{- \mu} |x|}}{4 \pi \beta |x|}.
\label{eq:BesselK12_formula}
\end{equation}
Therefore, putting $f(t) = (4 \pi t \beta)^{-\frac32} e^{t \beta \mu} e^{- \frac{x^2}{4 t \beta}}$:
\begin{align}
    \int_{\mathbb R^3} \frac{e^{\mathrm{i}px}}{e^{\beta (p^2 - \mu)}-1} \frac{d p}{(2 \pi)^3} - \frac{e^{- \sqrt{- \mu} |x|}}{4 \pi \beta |x|} &= \sum_{m=1}^\infty \int_{m-1}^m (f(m)-f(t)) \d t \\
    & = \sum_{m=1}^\infty \int_{m-1}^m \int_{t}^m f'(s) \d s \d t \nonumber \\
    &= \sum_{m=1}^\infty \int_{m-1}^m (s-m+1) f'(s) ds. \nonumber
\end{align} 
This expression can be upper-bounded and lower-bounded with almost the same calculations. We~present the proof of the upper bound. We have
\begin{equation}
    f'(s) = \left( - \tfrac32 s^{-1} + \beta \mu + \frac{x^2}{4 s^2 \beta}  \right) f(s) \leq \frac{x^2}{4 s^2 \beta} f(s),
\end{equation}
and, therefore:
\begin{equation}
    \sum_{m=1}^\infty \int_{m-1}^m (s-m+1) f'(s) \d s \leq \int_0^\infty \frac{x^2}{4 s^2 \beta} f(s) \d s = \frac{e^{- \sqrt{-\mu} |x|}}{4 \pi |x|} \left( - \mu + \frac{3 \sqrt{-\mu}}{|x|} + \frac{3}{|x|^2} \right).
\end{equation}
\end{proof}

We remark that \eqref{eq:gamma_decay} features the function
\begin{equation}
    \int_{\mathbb R^3} \frac{e^{\mathrm{i}px}}{\beta(p^2 - \mu)} \frac{d p }{(2 \pi)^3 } = \frac{e^{- \sqrt{-\mu} |x|}}{4 \pi \beta |x|},
  \label{eq:BesselK12_formula2}
 \end{equation}
 which could be obtained from the Fourier transform of $\frac{1}{e^{\beta (p^2 - \mu)}-1}$ by expanding the exponential in the denominator to first order in $\beta$. The link between \eqref{eq:BesselK12_formula2} and \eqref{eq:BesselK12_formula} is the identity
\begin{equation}
    \frac{1}{\beta (p^2 - \mu)} = \int_0^\infty e^{- t \beta (p^2 - \mu)} \d t,
    \label{eq:propagator_parameter_integral}
\end{equation}
followed by Gaussian integration over $p$. The continuous variable $t$ in \eqref{eq:propagator_parameter_integral} plays the same role as the index $m$ in the geometric series expansion \eqref{eq:n_geometric}. 

\section{Regularized Fredholm determinants} \label{app:fredholm}

In this Appendix we review properties of Fredholm determinants used in the article. We refer to \cite{Simon77} for details. Let $\mathcal H$ be a separable Hilbert space and $A$ a trace class operator on $\mathcal H$. The Fredholm determinant $\det(1-A)$ can be defined by the absolutely convergent series
\begin{equation}
    \det(1-A)  = \sum_{j=0}^\infty (-1)^j \mathrm{Tr}[\Lambda^j A],
\end{equation}
where $\Lambda^j A$ is the $j$th tensor power of $A$ restricted to anti-symmetric tensors. The Fredholm determinant is multiplicative, $\det((1-A)(1-B)) = \det(1-A) \det(1-B)$, and can be expressed in terms of the eigenvalues $(\lambda_j)$ of $A$ in the expected way:
\begin{equation}
    \det(1-A) = \prod_{j} (1- \lambda_j).
\end{equation}

If the operator norm of $A$ satisfies $\| A \| <1$, there exists a principal branch logarithm of $1-A$ defined by the holomorphic functional calculus. In this case, $\det(1-A)= \exp (\mathrm{Tr}[\log(1-A)])$ and we have the following series representation:
\begin{equation}
    \det(1-A) = \exp \left(- \sum_{j=1}^\infty \frac{1}{j} \mathrm{Tr}[A^j] \right).
\end{equation}

Now let $k \geq 1$ be an integer and let $A$ be an operator on $\mathcal H$ in the $k$th Schatten class. One shows that the operator
\begin{equation}
    (1-A) \exp \left(  \sum_{j=1}^{k-1} \frac{1}{j} A^j \right)
    \label{eq:regularized1A}
\end{equation}
is of the form $1-B$ with $B$ trace class. Note that in the case $\| A \| <1$, for which $\log(1-A)$ is given by a convergent power series, the exponential term in \eqref{eq:regularized1A} cancels the first $k-1$ terms of $\log(1-A)$. However, the definition \eqref{eq:regularized1A} does not require $\| A \| <1$ because only finitely many terms of the series are involved. The Fredholm determinant of the operator in \eqref{eq:regularized1A} is called the $k$th regularized Fredholm determinant of~$A$:
\begin{equation} \label{def:regfreddet}
    \mathrm{det}_k(1-A) \coloneq   \det \left( (1-A) \exp \left(  \sum_{j=1}^{k-1} \frac{1}{j} A^j \right) \right).
\end{equation}
If $k>1$ and $A$ is in the $(k-1)$st Schatten class, the $k$th regularized determinant is related to the $(k-1)$st by
\begin{equation}
    \mathrm{det}_k(1-A) =   \mathrm{det}_{k-1}(1-A) \exp \left(  \frac{1}{k-1} \mathrm{Tr}[A^{k-1}] \right). 
\end{equation}
As shown in \cite{Simon77}, $\det_k(1-A)$ is a Lipchitz function of $A$ uniformly on bounded subsets of the $k$th Schatten space.

\end{document}